\definecolor{darkred}  {rgb}{0.5,0,0}
\definecolor{darkblue} {rgb}{0,0,0.5}
\definecolor{darkgreen}{rgb}{0,0.5,0}
\newtheorem{theorem}{Theorem}
\newtheorem{proposition}[theorem]{Proposition}
\newtheorem{definition}[theorem]{Definition}
\newtheorem{algorithm}[theorem]{Algorithm}
\newtheorem{lemma}[theorem]{Lemma}
\newtheorem{conjecture}[theorem]{Conjecture}
\newtheorem{corollary}[theorem]{Corollary}
\newtheorem{problem}[theorem]{Problem}
\newcommand{\be}{\begin{eqnarray}}
\newcommand{\ee}{\end{eqnarray}}
\newcommand{\semigeq}{\succeq}
\newcommand{\knote}[1]{\textcolor{red}{({\bf Kevin:} #1)}}
\newcommand{\onote}[1]{\textcolor{blue}{({\bf Ojas:} #1)}}
\renewcommand{\knote}[1]{}
\renewcommand{\onote}[1]{}
\begin{document}
\title{ Beating Random Assignment for Approximating Quantum $2$-Local Hamiltonian Problems }
\author{
  Ojas Parekh\footnote{Sandia National Laboratories, {\it email:} odparek@sandia.gov} \,\,and  
  Kevin Thompson\footnote{Sandia National Laboratories, {\it email:} kevthom@sandia.gov}
}

\date{}
\maketitle

\begin{abstract}
The quantum $k$-Local Hamiltonian problem is a natural generalization of classical constraint satisfaction problems ($k$-CSP) and is complete for QMA, a quantum analog of NP. Although the complexity of $k$-Local Hamiltonian problems has been well studied, only a handful of approximation results are known. For Max $2$-Local Hamiltonian where each term is a rank 3 projector, a natural quantum generalization of classical Max $2$-SAT, the best known approximation algorithm was the trivial random assignment, yielding a $0.75$-approximation. We present the first approximation algorithm beating this bound, a classical polynomial-time $0.764$-approximation. For strictly quadratic instances, which are maximally entangled instances, we provide a $0.801$ approximation algorithm, and numerically demonstrate that our algorithm is likely a $0.821$-approximation. We conjecture these are the hardest instances to approximate. We also give improved approximations for quantum generalizations of other related classical $2$-CSPs. Finally, we exploit quantum connections to a generalization of the Grothendieck problem to obtain a classical constant-factor approximation for the physically relevant special case of strictly quadratic traceless $2$-Local Hamiltonians on bipartite interaction graphs, where a inverse logarithmic approximation was the best previously known (for general interaction graphs).  Our work employs recently developed techniques for analyzing classical approximations of CSPs and is intended to be accessible to both quantum information scientists and classical computer scientists.
\end{abstract}

\knote{General Notes:}

\knote{Put table of contents}

\knote{ worry about punctuations on equations}

\knote{Make sure vector notation is $[..]$}

\knote{Delete et al.'s.  Probably easier to just remove references last names}

\knote{Worry about transposing vectors or put a note in the notations section}

\onote{Put table comparing classical/quantum CSP approximation results}

\onote{Include section showing how classical and quantum CSPs, and the corresponding rounding algorithms, relate}

\onote{Read over Kevin's changes}

\onote{Break up long sections with \\paragraphs}

\onote{Do content-preserving compression on the paper to make it shorter and easier to read}
\tableofcontents

\section{Introduction}

The design and analysis of approximation algorithms \cite{W11, V13} is an extensively studied area in theoretical computer science.  In this setting, we are given some (generally NP-hard) optimization problem, and we are tasked with producing a valid (feasible) solution with objective within some provable factor of the optimal objective value.  To understand this formally imagine we are given some optimization problem $\mathcal{P}$, which corresponds to an infinite set of problem instances $\{P_i\}$.  Each problem instance corresponds to a triple $P_i=(f_i, \mathcal{T}_i, i_n)$ where $\mathcal{T}_i \subseteq \{0, 1\}^{i_n}$ and each triple corresponds to an optimization problem of the form:
\begin{equation*}
OPT_i=\max_{v\in \mathcal{T}_i} f_i(v).
\end{equation*}
\noindent An approximation algorithm $\mathcal{A}$ acts on an efficient description of the instance to produce some feasible solution to the problem: $\mathcal{A}(f_i, \mathcal{T}_i, i_n)=v^*_i \in \mathcal{T}_i$ in time polynomial in the instance size (polynomial in $i_n$).  It is said that algorithm has approximation factor $\alpha$ for $0 < \alpha \leq 1$ if in the worst-case (over all instances), the solution produced by the algorithm is a factor of $\alpha$ off of the optimal answer:
\begin{equation*}
\min_i \frac{f_i(v_i^*)}{OPT_i} \geq \alpha.
\end{equation*}

Since we should not expect to solve NP-hard problems, the interesting question is then the approximability of NP-hard optimization problems, or the study of which approximation factors $\alpha$ are obtainable for different problems.  As one might expect, approximability is highly problem sensitive and there are many classes of natural problems with very different attainable approximation factors \cite{H97, D05, V13}.  So, in general approximation algorithms are developed in the context of specific problems, and many such algorithms are known \cite{V13, W11}.   

\paragraph{$2$-Local Hamiltonian.} In stark contrast, although QMA-hard quantum optimization problems arise naturally through well-known physically motivated problems \cite{Be07, S09}, they have very few known approximation algorithms with provable approximation factors \cite{B07, G12, B16, H17, B19, G19, H20, A20}\footnote{Here and throughout this paper we mean a {\it classical } algorithm which takes as input a classical description of a quantum problem and produces a classical description of a quantum state.  An approximation algorithm for a QMA-hard problem can have several natural meanings distinct from this (quantum input, quantum algorithm which produces classical output, etc.)}. The QMA-hard optimization studied in these works, as well as the problem we sill study here, is the {\it $2$-Local Hamiltonian problem}\cite{K06, K02}.  An instance of this problem is specified by a problem size, $n$, as well as a set of $2$-local interactions, $\{H_e\}$.  Each $H_e$ is some {\it local} Hamiltonian which can be written as the tensor product of $n-2$ identity terms with some nontrivial operator that acts on at most $2$ qubits, i.e. $H_e=\mathcal{O}_{ij}\otimes (\mathbb{I}_2)^{\otimes (n-2)}$.  The optimization problem corresponding to a particular instance is to find the smallest or largest eigenvalue, $\lambda_{min}$ or $\lambda_{max}$, of $H=\sum_e H_e$. Ideally, an algorithm solving this problem would also produce a description of or access to a corresponding eigenvector.  An approximation algorithm, $\mathcal{A}$, acts on the size of the problem ($n$) and a description of the local Hamiltonians $\{H_e\}$ to produce a classical description of a valid quantum state.  Once again we say that the algorithm achieves approximation factor $\alpha$ if:
\begin{equation*}
\frac{\text{Tr}[\left(\sum_e H_e \right)\mathcal{A}(n, \{H_e\})]}{\lambda_{max}\left( \sum_e H_e\right)}\geq \alpha \text{     for all instances}.
\end{equation*}
\noindent Generally we assume some property of the Hamiltonian which forces $\lambda_{max}\left( \sum_e H_e\right) > 0$ so that this is a sensible definition.  A common assumption \cite{K06,G12,H20} is that the terms $H_e$ are positive semi-definite (PSD) and nonzero. We note that when all of the terms $H_e$ are taken to be diagonal projectors (in say, the standard computational basis), the corresponding instance of $2$-Local Hamiltonian corresponds precisely to an instance of the classical $2$-Constraint-Satisfaction problem ($2$-CSP).  In this case, the 4 diagonal entries of $\mathcal{O}_{ij}$ correspond to the $\{0,1\}$ output values of a Boolean function on Boolean variables $x_i$ and $x_j$ corresponding to $i$ and $j$. See \Cref{sec:classical-quantum} for more details as well as a classical motivation for $2$-Local Hamiltonian. In addition \Cref{table:results} highlights classical $2$-CSP specializations of quantum $2$-local Hamiltonian problems for which approximation algorithms are known.

The $2$-Local Hamiltonian problem is interesting in many different contexts of physics and quantum information \cite{K06, K02, O12}.  This problem is manifestly interesting to physicists because the $2$-local nature of the problem matches the local nature of many physical systems (spin chains, Ising model, etc.).  Hence, the study of eigenstates and energies is of utmost importance, and has been since the beginnings of quantum mechanics itself \cite{B31}.  From a theoretical computer science perspective, the $2$-Local Hamiltonian problem is interesting for the same reasons that classical approximation algorithms are interesting.  Under standard complexity theoretic assumptions, we should not expect to be able to solve the problem, so the interesting direction is the study of the approximability of the problem.  Can we find rigorous approximation algorithms, and how well can we expect to be able to approximate the answer?  Moreover, which classes of instances admit constant-factor approximation algorithms?  

The generic $2$-Local Hamiltonian problem is a generalization of several classical optimization problems \cite{W03, B19} with very different approximability.  Maximum independent set is one such example \cite{W03}, and it is well known that such a problem cannot be approximated to within a constant factor unless P=NP, so we should not expect the $2$-Local Hamiltonian problem to have a constant-factor approximation algorithm which holds uniformly for all instances.  In light of this fact researchers make specific assumptions on the terms $H_e$, and attempt to find approximation algorithms under these assumptions.  The $2$-Local Hamiltonian instances we consider generalize a variety of classical optimization problems, including Max $2$-SAT, Max Cut, general Max $2$-CSP, and the Grothendieck problem (see Table~\ref{table:results} for our results). 
%Note that Max 2-SAT is another example, indeed the problem we study here generalizes Max 2-SAT, which can see by mapping the clauses in the instance to diagonal local Hamiltonians (``classical'' Hamiltonians).  A solution to the corresponding instance of Max 2-QSAT yields an optimal solution for the instance of Max 2-SAT (see \Cref{thm:6} in the appendix).  
\subsection{Previous Work}
In the interest of describing classical approximation algorithms for $2$-Local Hamiltonian, let $OPT=\lambda_{max}(\sum_{e \in E} H_e)$ be the largest eigenvalue of an instance of $2$-Local Hamiltonian, $H=\sum_{e \in E} H_e$, and let 
\begin{equation*}
OPT_{prod}=\max_{\substack{\ket{\phi_1}, ..., \ket{\phi_n}\\ \in \mathbb{C}^2}} \bra{\phi_1} \otimes ... \otimes \bra{\phi_n} H \ket{\phi_1} \otimes ... \otimes \ket{\phi_n}
\end{equation*}
\noindent be the product state\footnote{As is clear from the expression, a product state is a quantum state which factors according to tensor product of individual quantum states.  Such states have no entanglement and are considered ``classical'' states.  } with the largest objective value or \emph{energy}.

One common assumption is on the {\it geometry} of the interactions in $E$.  Bansal, Bravyi, and Terhal show that $2$-Local Hamiltonian on bounded-degree planar graphs admits a polynomial-time approximation scheme\footnote{This is an approximation algorithm that allows an arbitrarily good, but constant, approximation factor at the expense of an increase in runtime.} (PTAS) \cite{B07}, and Brand\~ao and Harrow generalize this to arbitrary planar graphs~\cite{B16}.  On the other end, for $k$-Local Hamiltonian on dense graphs, Gharibian and Kempe give a PTAS with respect to $OPT_{prod}$~\cite{G12}, and Brand\~ao and Harrow extend this result to obtain a PTAS for dense graphs with respect to $OPT$~\cite{B16}. Brand\~ao and Harrow also show the existence of good product-states or give product-state approximations for a variety of graph classes~\cite{B16}.

Many authors make assumptions on the form of the terms $H_e$.  One common assumption is that each $H_e$ is traceless \cite{B19, H17}, or equivalently that they each can be written as a linear combination of tensor products of Pauli operators, excluding the identity operator.  Unfortunately, this case is still general enough to capture problems with no constant factor approximation algorithms \cite{A05} (under complexity theoretic assumptions), although Bravyi, Gosset, K{\"o}nig, and Temme give an approximation algorithm for traceless $2$-Local Hamiltonian with guarantee that depends inverse logarithmically on the problem size~\cite{B19}. Harrow and Montanaro give an approximation algorithm for traceless $k$-Local Hamiltonian with respect to the maximum degree and size of the interaction hypergraph~\cite{H17}. Other authors make assumptions which force particular physically relevant forms for the terms $H_e$ \cite{G19}, or on the rank of the terms $H_e$ \cite{H20}.  

It is these latter works which our work is most easily compared to, so we provide descriptions of them here (the approximation guarantees are specified in \Cref{table:results}).  A unifying theme among them is that they rely upon a semi-definite program (SDP) to provide an upper bound on $OPT$ and then use generalization of a classical randomized rounding scheme to produce a product state~\cite{B19,G19,H20}.  Such an approach was first carried out by Brand\~ao and Harrow~\cite{B16}.

The work of Hallgren, Lee, and Parekh \cite{H20} assumes that the terms $H_e$ are all PSD, and that each one is some projector of fixed rank, which is also our primary problem of interest.  They solve an SDP relaxation of $OPT$ and use the solution they get to ``round'' to a valid quantum product state, say $\ket{\psi}\bra{\psi}$.  Employing results used in classical SDP rounding algorithms~\cite{B10, G95} in a black-box fashion, along with a rounding scheme designed to handle $1$-local terms, the authors are able to show that the quantum state from the rounding algorithm is within some fraction of the optimal state, or that $\bra{\psi} \sum_e H_e \ket{\psi} \geq \alpha  OPT$.  
%%% BEGIN COMMENTED OUT
\iffalse
This constitutes an approximation algorithm for the best {\it product state} but does not qualify as an approximation algorithm for the best generic quantum state, so to complete their work they use an existence result of Gharibian and Kempe \cite{G12} which shows that for a $2$-local qubit problem the best product state achieves energy at least $OPT/2$.  The resulting approximation factor is the product of the two, which is $\alpha/2$.  Naturally, $\alpha<1$ (otherwise they would be solving an NP-hard problem) so the approximation factor from their methods is at best $1/2$.
\fi
%%% END COMMENTED OUT
The approximation factors they obtain are worse than the maximally mixed state (random assignment) for all cases except when $rank(H_e)=1$, and here they are able to get the first approximation factor beating random assignment. They also provide an approximation algorithm when each $H_e$ is a product state, which is a QMA-hard class of $2$-Local Hamiltonian.

Another result which is particularly relevant to our work is that of Gharibian and Parekh \cite{G19}, who consider a QMA-hard $2$-Local Hamiltonian generalization of the classical Max Cut problem.  Here the authors assume a particular form for the $H_e$ by assuming that there are only specific terms in the Pauli decomposition.  
They use a simpler SDP relaxation than~\cite{H20}, and they round a solution of this SDP to a product state using a result of Bri{\"e}t, de Oliveira Filho, and Vallentin~\cite{BJ10} in a black box fashion to prove the approximation factor.  Improved approximation results for this problem, that go beyond using product states, have been recently obtained by Anshu, Gosset, and Morenz \cite{A20}.  

For traceless Hamiltonians, the main result of interest is that of Bravyi, Gosset, K{\"o}nig, and Temme~\cite{B19}.  This work builds off \cite{C04, H17} and provides an inverse logarithmic approximation factor for generic traceless Hamiltonians. Such Hamiltonians necessarily have a non-negative and a non-positive eigenvalue. Note that such a class generalizes all problems considered in this paper, since adding copies of the identity can only make the approximation factor better, however there is no reason to expect their analysis could be used to prove constant factor approximations for the classes we study.  The SDP relaxation is the same as \cite{G19}; however, the rounding scheme and analysis is a generalization of a classical approximation by Charikar and Wirth~\cite{C04}.
%They round to Bloch vectors with components in $\{\pm 1/\sqrt{3}\}$, much like in the standard Grothendieck problem \cite{B11}.  

\onote{Make the table more readable and better explain problems where appropriate}
\begin{table}
{\centering
\caption{\textbf{Summary of our and related results.}  The number of qubits or Boolean variables is $n$.  For readability, we omit weights $w_{ij} \geq 0$ that may be present in both $2$-local Hamiltonian ($2$-LH) and $2$-CSP problems.  \Cref{sec:classical-quantum} provides more details on the relationship between $2$-LH and $2$-CSP, as well as definitions for $X_i,Y_i,Z_i$.  An ``N'' denotes a numerical result, and the classical results are implicitly numerical. The abbreviation ``quad.''\ refers to strictly quadratic instances.}

\label{table:results}
\begin{tabular}[t]{@{}p{0.32\linewidth}p{0.24\linewidth}p{0.19\linewidth}p{0.22\linewidth}@{}}
\toprule
Max $2$-LH problem \newline
(QMA-hard)&
Max $2$-CSP \newline
specialization \newline
(NP-hard)&
Classical approx. \newline
for $2$-CSP&
Classical approx. \newline
for $2$-LH problem \newline
(product state)\\
\midrule
\textbf{Traceless}\newline
$\sum_{ij \in E} H_{ij}\otimes \mathbb{I}_{[n]\setminus\{ij\}}$\newline
$H_{ij}$ has no $\mathbb{I}$ terms&
\textbf{Classical Ising} \newline
max -$\sum_{ij \in E} z_iz_j$ \newline
$z_i \in \{\pm 1\}$&
$\Omega(\frac{1}{\log n})$\cite{C04}&
$\Omega(\frac{1}{\log n})^\dagger$\cite{B19}\\
&&&\\

\textbf{Bipartite Traceless}\newline
$\sum_{e \in E} H_{ij}\otimes \mathbb{I}_{[n]\setminus\{ij\}}$\newline
$H_{ij}$ has no $\mathbb{I}$ terms\newline
$E$ bipartite&
\textbf{Grothendieck}\newline
max -$\sum_{ij \in E} z_iz_j$ \newline
$z_i \in \{\pm 1\}$\newline
$E$ bipartite&
$0.561+\varepsilon$\newline\cite{B13}&
\textbf{$0.187^\dagger$ (quad.)}\\
&&&\\

\textbf{Positive/Rank 1}\newline
$\sum_{ij \in E} H_{ij}\otimes \mathbb{I}_{[n]\setminus\{ij\}}$\newline
$\mathbb{I} \succeq H_{ij} \succeq 0$\newline
($\equiv H_{ij}$ rank 1 projector)&
\textbf{Max $2$-CSP}\newline
($\equiv$ 1 satisfying\newline
assignment\newline
per clause)&
0.874~\cite{L02}&
0.25 (random)\newline
0.328~\cite{H20}\newline
\textbf{0.387}\newline
\textbf{0.467 (quad.)}\newline
\textbf{0.498 (quad., N)}\newline
0.5 (upper bound)\\
&&&\\

\textbf{Max Heisenberg}\newline
$\sum_{ij\in E} \mathbb{I} - X_iX_j - Y_iY_j - Z_iZ_j$\newline
(special case of above)&
\textbf{Max Cut}\newline
max $\sum_{ij \in E} 1-z_iz_j$\newline
$z_i \in \{\pm 1\}$&
0.878~\cite{G95}&
0.25 (random)\newline
0.498~\cite{G19}\newline
0.5 (upper bound)\newline
0.53*~\cite{A20}\\
&&&\\

\textbf{Rank 2}\newline
$\sum_{ij \in E} H_{ij}\otimes \mathbb{I}_{[n]\setminus\{ij\}}$\newline
$H_{ij}$ rank 2 projector&
\textbf{Max $2$-CSP}\newline
with 2 satisfying\newline 
assignments/clause&
0.874\cite{L02}&
0.5 (random)\newline
\textbf{0.565}\newline
\textbf{0.639 (quad.)}\newline
\textbf{0.653 (quad., N)}\newline
0.667 (upper bound)\\
&&&\\

\textbf{$2$-QSAT}\newline
$\sum_{ij \in E} H_{ij}\otimes \mathbb{I}_{[n]\setminus\{ij\}}$\newline
$H_{ij}$ rank 3 projector&
\textbf{Max $2$-SAT}\newline
($\equiv$ 3 satisfying\newline
assignments/clause)&
0.940~\cite{L02}&
0.75 (random)\newline
\textbf{0.764}\newline
\textbf{0.805 (quad.)}\newline
\textbf{0.821 (quad., N)}\newline
0.834 (upper bound)\\

\bottomrule
\end{tabular}}
*This exceeds the product-state upper bound because it is achieved by a classical approximation algorithm that rounds to a non-product state.\\
$^\dagger$For any traceless $2$-LH problem, we obtain a product-state approximation ratio that is $\frac{1}{3}$ of an approximation ratio for a related classical CSP, using the appropriate classical approximation algorithm as a black box (see \Cref{sec:21}).
\end{table}

\subsection{Overview of Our Work}\label{sec:6}
\onote{use \textbackslash paragraph's to enhance organization and readability}
\onote{we need to make this consistent with the rest}
There are two problems of interest to us.  The first is the general problem of finding the largest eigenvalue of a $2$-local traceless Hamiltonian on a bipartite interaction graph, and the second problem is finding the largest eigenvalue of a $2$-local Hamiltonian where all the local terms are projectors (eigenvalues are $0$ or $1$).  

\paragraph{Traceless Hamiltonians.}
For the traceless case, we consider Hamiltonians on bipartite interaction graphs that are ``strictly quadratic.'' Informally, the latter means that the $2$-local terms of the Hamiltonian do not contain any implicit $1$-local terms (see \Cref{def:2} in \Cref{sec:formal-problem-statement}).  The classical analog of a strictly quadratic traceless Hamiltonian is a multilinear quadratic polynomial that does not contain linear terms (see \Cref{sec:classical-quantum} for connections between Hamiltonians and multilinear polynomials).     
%Qualitatively, a Hamiltonian is bipartite if the observables present in the Hamiltonian can be partitioned into two sets such that an observable in one set only ever ``sees'' observables in the other set through the Hamiltonian (see \Cref{sec:21} for details).  A Hamiltonian is strictly quadratic if all its composite terms correspond to the interaction of two observables.  In other words, no observable ever contributes ``on its own'' to the objective.  

A natural classical analog of this problem is the symmetric Grothendieck problem \cite{BJ10, F20}.  In this problem the objective is to maximize a quadratic form of a set of variables, $\mathbf{z}^T A \mathbf{z}$, subject to the constraint that each of the variables $\mathbf{z}_i \in \{\pm 1\}$, and where we assume diagonal elements of $A$ are $0$.  The strictly quadratic nature of the problem is apparent, since the objective is a quadratic form.  It is also apparent that $A$ is traceless, but this is not the real reason the analogy is appropriate.  Note that, since $\mathbf{z}_i \in \{\pm 1\}$, we will always pick up the diagonal elements of $A$:  $\mathbf{z}_i^2 A_{ii}=A_{ii}$.  Hence, finding the exact solution to the case where $\text{Tr}[A]\neq 0$ is equivalent to finding the exact solution when $\text{Tr}[A]=0$.  On the quantum side, there is a standard decomposition for local Hamiltonians such that $H=\alpha \mathbb{I} +\beta \mathcal{O}$, where $\mathcal{O}$ is some traceless operator.  Since we are trying to find $\max \bra{\psi} (\alpha \mathbb{I} +\beta \mathcal{O}) \ket{\psi}$ for normalized $\ket{\psi}$, we will always pick up the constant $\alpha$ independent of our choice of $\ket{\psi}$, hence we get an equivalent optimization problem for any value of $\alpha$.  
%The argument is the same in the $\text{Tr}[H]=0$ and the $\text{Tr}[H] \neq 0$ case.  
This is why the analogy is appropriate, we are optimizing over a quadratic form with the extra identity contribution subtracted off.

%Indeed, the analogy is so appropriate that 
The rounding algorithm we use is a simple modification of a known algorithm \cite{BJ10} for solving a variant of the symmetric Grothendieck problem (which also uses the bipartite assumption).  Since we are able to use \cite{BJ10} in a black-box fashion, the technical details of that algorithm are not needed for this paper.  Essentially, we use this result to obtain $\{\pm 1\}$ variables such that the objective upper bounds the optimal quantum objective. Rounding to a quantum state is then easily accomplished by dividing these variables by a large enough constant that they can be taken to be Bloch vectors for a valid quantum state.  In the end, we obtain a $\frac{2\ln(1+\sqrt{2})}{3\pi}$-approximation algorithm.  Note that the best previously known result is a $\Omega(\frac{1}{\log(n)})$-approximation by Bravyi, Gosset, K{\"o}nig, and Temme~\cite{B19} on general graphs.  More generally, our approach allows one to obtain a product-state approximation algorithm for traceless instaces by using an approximation algorithm for a related classical CSP as a black box, losing a factor of $3$ in the approximation ratio.  This also gives a more direct means of obtaining the result of Bravyi, Gosset, K{\"o}nig, and Temme~\cite{B19}. Since this result is disparate from our main results, we present both the formal statements and analysis in \Cref{sec:21}.

\paragraph{$2$-Local projectors.}The main contribution of this work is the second problem we mentioned, finding the largest eigenvalue of a $2$-local Hamiltonian where each of the local terms are projectors.  In order to understand our rounding algorithm, we must first understand the local nature of the $2$-local Hamiltonian problem.  Let $H=\sum_e H_e$ be the $2$-local Hamiltonian where each $H_e$ is a local term affecting only two qubits, say qubits $i$ and $j$.  Let $\rho \in \mathbb{C}^{2^n \times 2^n}$ be the optimal density matrix.  The value of the objective can be calculated as $\text{Tr}[\rho H]= \sum_e \text{Tr}[\rho H_e]=\sum_{e} \text{Tr}[\rho_{ij} \widetilde{H_e}]$, where $\rho_{ij} \in \mathbb{C}^{4 \times 4}$ is the marginal density matrix on qubits $i,j$, and $\widetilde{H_e} \in \mathbb{C}^{4 \times 4}$ is the $2$-local part of the term $H_e \in \mathbb{C}^{2^n \times 2^n}$ that acts on $i,j$.  Here $\widetilde{H_e}$ plays a role analogous to a classical Boolean constraint on 2 variables. Hence, if we were given the set of marginals $\{\rho_{ij}\}$, and descriptions of the $\widetilde{H_e}$, we could calculate the objective on a polynomial sized classical computer since $\rho_{ij}, \widetilde{H_e}\in \mathbb{C}^{4\times 4}$.  This implies that if we were able to optimize the set $\{\rho_{ij}\}$ subject to the constraint that the $\rho_{ij}$ were valid marginals of a global density matrix, we would be able to solve the local Hamiltonian problem.  Indeed, the issue here is that deciding if a set of marginal density matrices is globally consistent is itself a QMA-complete problem \cite{L06, Br19}.  

A natural question is then whether or not, in polynomial time, we can impose some global constraint which is weaker than consistency.  Optimizing with this constraint would provide a relaxation, hence a polynomial-time-computable upper bound on the optimal objective.  Then one might use generalizations of standard classical recipes~\cite{G95} for deriving an approximation algorithm from the relaxation.  In the classical case, a problem of this form can be constrained using a  semidefinite constraint on a ``moment matrix'' (e.g.\ \cite{R12}).  A moment matrix ``tracks'' low order statistics of a global probability distribution.  These statistics have the property that they are only a function of marginal distributions and that the moment matrix is guaranteed to be PSD if the marginals are consistent (although the converse does hold in general).  Hence, defining the moment matrix and forcing it to be PSD gives a weaker condition than global consistency, which can easily be checked.  We will adopt the same approach, except we will be tracking local {\it quantum} statistics.  Our approach is related to existing hierarchies of quantum moment matrices~\cite{D08,P10}. We will have, as variables in our optimization problem, a set of marginal distributions $\{\rho_{ij}\}$, as well as an overall moment matrix $M$, where each entry of $M$ can be evaluated using at most one specific $\rho_{ij}$.  Just as in the classical case, the matrix $M$ will have the property that $M\semigeq 0$ for a consistent set of marginals.  

Once we have the relaxation, we can efficiently solve it (it will be a polynomially-large SDP) to obtain an optimal moment matrix $M^*$ and optimal marginals $\{\rho_{ij}^*\}$ for the relaxation.  The solution will have the property that $\{\rho_{ij}^*\}$ likely represent a globally inconsistent set of marginals, and the objective will be larger than the objective for the optimal quantum state (recall we are applying a condition that is weaker than consistency).  The task for us is to then generate a set of consistent density matrices $\{\rho_{ij}\}$ with quantifiable loss in objective.  The loss of objective will correspond to an approximation factor, as is the theme in many works.  The approach we take is to (randomly) generate single qubit marginals $\{\rho_i\}$ from the set of two qubit marginals $\{\rho_{ij}^*\}$ and output the consistent density matrix $\bigotimes_i \rho_i$.  Note that, just as in the classical case, non-overlapping marginals can always be assumed consistent.  Understanding the loss in objective then reduces to understanding the loss of objective due to the random rounding procedure. 

\paragraph{Analysis.}The analysis proceeds as one might expect, by linearity we can reduce the expectation 

\noindent ${\mathbb{E}[\text{Tr}[\bigotimes_i \rho_i\ \sum_e H_e]]=\sum_e \mathbb{E}[\text{Tr}[\bigotimes_i \rho_i\ H_e]]}$.  Then, a bound on the ``worst-case'' edge provides a bound on the expectation overall.  The issue with accomplishing this directly is the number of parameters involved.  Arguments concerning classical problems do not have to contend with this.  For example, Max Cut has all terms proportional to $(1-z_i z_j)$ for $z_i, z_j$ scalar variables.  Max $2$-SAT has $4$ kinds of clauses depending on negation of the variables; it has nowhere near the variability of a generic $2$-local projector.  Naively, it is determined by $16$ parameters, since $\widetilde{H_e}$ is a $4\times 4$ matrix.  Therefore the first task in our analysis is to reduce the terms $\widetilde{H_e}$ to a standard form, specified by a small number of parameters.  This is accomplished with some singular value decompositions, as well as exploiting the rotational invariance of standard multivariate normal distributions, and applying some results concerning $2$-qubit density matrices or projectors \cite{G16}.  Given the standard form, calculating the expected objective reduces to calculating the expectations of certain functions of multivariate normal variables.  This is analyzed using expansions in Hermite polynomials, and the resulting expressions are bounded or given in terms of special functions.  The analysis itself is partially inspired by \cite{BJ10, B18}, and we are able to obtain a result of \cite{BJ10} as a special case of our analysis.  The final results we obtain are summarized here informally:

\begin{theorem}[Informal]\label{thm:8}
Given a $2$-local Hamiltonian problem $\{H_e\}$ where all $H_e$ are proportional to $2$-local projectors with $\widetilde{H_e}$ of rank $k\in \{1, 2, 3\}$, we give a classical randomized polynomial-time algorithm with approximation ratio $\alpha(k)$ where
\begin{align*}
\alpha(k) =\begin{cases}
0.387 \text{ if $k=1$}\\
0.565 \text{ if $k=2$}\\
0.764 \text{ if $k=3$}.
\end{cases}
\end{align*}
\end{theorem}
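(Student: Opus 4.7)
The plan is to follow the SDP-plus-rounding paradigm sketched in the overview and push it through quantitatively for each of the three rank regimes. First, I would formulate a semidefinite relaxation whose variables are the two-qubit marginals $\{\rho_{ij}\}$ together with a global quantum moment matrix $M$ indexed by single-qubit Pauli observables on the $n$ qubits. The entries of $M$ of the form $\mathrm{Tr}[\rho_{ij}(P_i\otimes Q_j)]$ must be consistent with the two-qubit marginal variables, and we impose $M\semigeq 0$ as a weaker surrogate for global consistency. The SDP objective is $\sum_e \mathrm{Tr}[\rho_{ij}\widetilde{H_e}]$; an optimal solution $(\{\rho_{ij}^*\},M^*)$ gives an upper bound on $OPT$ that can be computed in polynomial time.

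Next, I would define the rounding algorithm. Writing $M^* = B^T B$ via a Cholesky-type factorization yields, for every single-qubit Pauli $P$ on qubit $i$, a real unit vector $u_{i,P}$ living in a common Euclidean space, with the property that $\langle u_{i,P}, u_{j,Q}\rangle = \mathrm{Tr}[\rho_{ij}^*(P_i\otimes Q_j)]$. To generate a product state $\bigotimes_i \rho_i$, I sample a single standard multivariate Gaussian vector $g$ in that space once, set $r_{i,P} := \langle g, u_{i,P}\rangle$ for $P\in\{X,Y,Z\}$, and use the resulting (rescaled and appropriately shifted) three-vector as the Bloch vector for $\rho_i$; the rescaling constant ensures the vector lies in the Bloch ball. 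Linearity of expectation reduces the analysis to an edge-by-edge bound $\mathbb{E}\,\mathrm{Tr}[\rho_i\otimes\rho_j\,\widetilde{H_e}]\geq \alpha(k)\,\mathrm{Tr}[\rho_{ij}^*\widetilde{H_e}]$, which can be proved independently for each $e$.

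The heart of the proof is the per-edge inequality, which I would attack in two stages. First, a normalization step: using a singular value decomposition of the Pauli-coefficient block of $\widetilde{H_e}$, together with the local $SU(2)\otimes SU(2)$ invariance of the problem and the rotational invariance of the Gaussian, I would reduce a generic rank-$k$ two-qubit projector to a canonical form parameterized by a small number of scalars (the Schmidt-type coefficients of the projector, together with the one-qubit reduced-state parameters of $\rho_{ij}^*$). Standard facts about rank-$k$ two-qubit projectors, as catalogued in Gharibian-Yard-style results, make this reduction clean. In the resulting coordinates, the expected rounded energy becomes an explicit function of a few inner products $\langle u_{i,P}, u_{j,Q}\rangle$ and of the Gaussian, and it can be expanded in Hermite polynomials of the Gaussian variables; the orthogonality of Hermite polynomials collapses the expression to a low-dimensional optimization.

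The hard part, and the main obstacle, is establishing the worst-case ratio $\alpha(k)$ over this reduced parameter space. The SDP value and the expected rounded value are both explicit, but not obviously comparable, and the ratio is minimized by some non-obvious edge configuration (likely corresponding, for the strictly quadratic subcase, to the maximally entangled projector). I would therefore: (i) derive a closed-form lower bound on the rounded energy in terms of the canonical parameters, possibly via integral identities for the arcsine-like kernels that appear in the Hermite expansion (in the spirit of Briet--de Oliveira Filho--Vallentin and Briet--Naor); (ii) write the ratio as a function over a compact semialgebraic set; and (iii) verify the claimed constants $0.387$, $0.565$, $0.764$ either by identifying the minimizing configuration analytically and evaluating the ratio there, or by a verified one-dimensional optimization once the non-trivial parameters have been eliminated. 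The fact that the paper also reports slightly stronger numerical values $0.467$, $0.639$, $0.821$ for the strictly quadratic case suggests the analytic bound is proved via an auxiliary inequality (chosen to be provably tight at one extreme) that loses a little on the generic instance but admits a rigorous closed-form argument.
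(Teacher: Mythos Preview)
Your plan is essentially the paper's approach: the same SDP (two-qubit marginals plus a level-1 Pauli moment matrix), the same Gaussian rounding from the Cholesky vectors, the same per-edge SVD reduction to a canonical form indexed by a handful of scalars, and the same Hermite-expansion analysis of the resulting expectation. Two places deserve to be sharpened, because they are exactly where the quantitative constants come from.

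First, the rounding: the paper does \emph{not} use a fixed ``rescaling constant'' to land in the Bloch ball. Each qubit's Bloch vector is normalized by the random norm $Q_i=\|V_i^T\mathbf{r}\|$ of its own Gaussian triple, producing a pure product state, and then globally multiplied by $\mathrm{sign}(\mathbf{v}_0\cdot\mathbf{r})$ to handle the $1$-local part of a generic projector. A fixed constant (\`a la Bravyi--Gosset--K\"onig--Temme) would yield different and weaker bounds; the per-sample normalization is precisely what makes the expected quadratic contribution equal to $\mathbb{E}\bigl[\mathbf{z}^T C\mathbf{z}'/(\|\mathbf{z}\|\,\|\mathbf{z}'\|)\bigr]$ and feeds directly into the Hermite analysis.

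Second, the endgame is not a one-dimensional optimization. After the SVD and the positivity constraints reduce $(C,M)$ to diagonal triples $[p,q,r]\in\mathcal{P}_k$ and $[a,b,c]\in\mathcal{S}$ (explicit simplices determined by rank $k$), a linear-fractional argument lets you pin $[p,q,r]$ to a single extreme point of $\mathcal{P}_k$. The Hermite series for the expectation is then truncated at low order with an explicit uniform remainder bound (using $\sum\hat f^2=1/3$), and the resulting cubic-over-linear ratio in $(a,b,c)$ is minimized over $\mathcal{S}$ by elementary multivariable calculus. For general (non strictly-quadratic) projectors, the $1$-local contribution is evaluated exactly as a ${}_2F_1$ and then sandwiched between two linear functions; after coarse-graining to three scalar variables the ratio is minimized over a polytope by checking vertices. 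These are the steps that actually produce $2/\pi-\tfrac14$, $16/(9\pi)$, and $\tfrac38+11/(9\pi)$. (Minor note: the rigorous strictly quadratic constants in the paper are $0.467,\,0.639,\,0.805$; the numbers $0.498,\,0.653,\,0.821$ are the separately stated numerical conjecture.)
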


\begin{theorem}[Informal]
If in addition to the assumptions of \Cref{thm:8}, the terms $H_e$ are strictly quadratic, we give a classical randomized polynomial-time algorithm with approximation ratio $\alpha(k)$, where
\begin{align*}
\alpha(k) =\begin{cases}
0.467 \text{ if $k=1$}\\
0.639 \text{ if $k=2$}\\
0.805 \text{ if $k=3$}.
\end{cases}
\end{align*}
\end{theorem}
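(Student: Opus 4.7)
The plan is to apply the same moment-matrix SDP relaxation plus randomized rounding framework used in \Cref{thm:8}, but with the analysis specialized to strictly quadratic terms. Writing each $H_e$ in the Pauli basis, the strictly quadratic assumption removes all $1$-local Pauli components from $\widetilde{H_e}$, leaving only the quadratic $\sigma_\alpha\otimes\sigma_\beta$ coefficients ($\alpha,\beta\in\{X,Y,Z\}$) and a multiple of $\mathbb{I}\otimes\mathbb{I}$; this restriction is precisely what should buy a tighter edge-wise analysis. Concretely, I would first solve the same SDP as in \Cref{thm:8} to obtain optimal two-qubit marginals $\{\rho_{ij}^*\}$ and PSD moment matrix $M^*$, then invoke the same Gaussian-based rounding producing single-qubit marginals $\{\rho_i\}$. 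By linearity of expectation the task reduces edge-by-edge to bounding
\[
\mathbb{E}\bigl[\mathrm{Tr}[(\rho_i\otimes\rho_j)\,\widetilde{H_e}]\bigr] \ge \alpha(k)\,\mathrm{Tr}[\rho_{ij}^*\,\widetilde{H_e}]
\]
for every $e=ij$.

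Next I would apply the SVD-based reduction developed for \Cref{thm:8} to bring $\widetilde{H_e}$ into a canonical form. Under the strictly quadratic assumption the $3\times 3$ Pauli-coefficient matrix can be diagonalized by local unitaries, and since the rounding distribution is the standard multivariate normal it is invariant under the corresponding orthogonal changes of basis. This leaves a standard form parameterized by at most three diagonal entries, constrained so that adding back the appropriate multiple of $\mathbb{I}$ yields a rank-$k$ projector (using the $2$-qubit projector characterization of \cite{G16}). Relative to the general case this eliminates exactly the parameters coupling the linear and quadratic Pauli blocks, which is the source of the weaker ratios $0.387, 0.565, 0.764$ in \Cref{thm:8}.

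With the standard form fixed, the expected rounded objective is computed by expanding the scalar rounding function in Hermite polynomials and applying Gaussian moment calculus, mirroring the core computation from \Cref{thm:8} and inspired by \cite{BJ10, B18}. Because only even-degree Pauli correlators appear in strictly quadratic $\widetilde{H_e}$, only even Hermite coefficients contribute, and the edge ratio collapses to a closed-form function of the three standard-form parameters together with the sign pattern of the SDP's two-qubit correlators. The approximation factor $\alpha(k)$ for each $k\in\{1,2,3\}$ is then the minimum of this ratio over the compact feasible set of rank-$k$ projectors in standard form.

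The main obstacle is establishing the explicit constants $0.467$, $0.639$, and $0.805$. This amounts to a low-dimensional but nonconvex minimization whose stationarity conditions involve derivatives of a Hermite-series expression, so rigorously certifying the bounds requires either exhibiting a worst-case standard-form projector and matching it with an analytic upper bound or certifying the global minimum via a Lipschitz-based discretization of the feasible region. The strictly quadratic assumption is what keeps the parameter count small enough for such a certificate to go through; each additional gain over the corresponding entry of \Cref{thm:8} ($0.387\mapsto 0.467$, $0.565\mapsto 0.639$, $0.764\mapsto 0.805$) is a direct consequence of the reduced dimensionality of the optimization.
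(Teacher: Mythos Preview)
Your high-level plan matches the paper: same SDP, same Gaussian rounding (\Cref{alg:2}), edge-wise reduction, SVD to a diagonal standard form, and Hermite analysis of the resulting expectation. Two points deserve correction, one minor and one substantive.

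First, the sentence ``only even Hermite coefficients contribute'' misidentifies what strict quadraticity buys. The function being expanded is $f(z_1,z_2,z_3)=z_1/\sqrt{z_1^2+z_2^2+z_3^2}$; its Hermite coefficients $\hat f_{i,jk}$ vanish unless $i$ is odd and $j,k$ are even, regardless of the Hamiltonian. What the strictly quadratic hypothesis actually removes is the two \emph{linear} expectations involving $sign(\mathbf{v}_0\cdot\mathbf{r})$ (cf.\ \Cref{lem:18} and \Cref{lem:19}), so you are left with the single quadratic expectation of \Cref{lem:6} rather than the three-term sum of \Cref{lem:21}. That is the simplification, not a parity collapse of the Hermite series.

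Second, and more importantly, your plan for extracting the numbers $0.467$, $0.639$, $0.805$ is not the paper's and, as written, would not yield those closed forms. You propose either a matching upper/lower bound at a conjectured worst projector or a Lipschitz discretization. The paper instead proceeds analytically in two steps. (i) For fixed $(a,b,c)$ the ratio is linear-fractional in $(p,q,r)$ with nonnegative numerator and denominator, so \Cref{lem:10} sends the minimum to the extreme points of $\mathcal{P}_k$; a symmetry argument (\Cref{lem:17}) reduces further to a single canonical vertex per rank, e.g.\ $(p,q,r)=(-1,-1,-1)$ for $k=1$. (ii) The remaining expectation in $(a,b,c)\in\mathcal{S}$ is expanded via \Cref{lem:11} and then \emph{truncated} at cubic order with an explicit remainder bound (\Cref{lem:12}/\Cref{lem:13}); the resulting rational function of $(a,b,c)$ is minimized over $\mathcal{S}$ by a hands-on calculus argument (region splitting, boundary parametrizations, checking that derivatives do not vanish in the interior; see \Cref{lem:9}--\Cref{lem:7}). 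This is what produces the exact values $22/(15\pi)$, $1/3+24/(25\pi)$, and $1/2+388/(405\pi)$. Without the vertex reduction in $(p,q,r)$ and the truncation-plus-remainder device, a discretization would at best give numerical certificates, not these constants.
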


Projectors on $2$ qubits have rank at most $4$, so there are three cases of interest for this problem, rank $1$, rank $2$, and rank $3$.  We obtain novel approximation factors for each case.  The decision version of the problem we consider is known as Quantum-SAT and was introduced in 2006 by Bravyi \cite{Br11}, and the approximability of Max Quantum-SAT was first considered in 2011 by Gharibian and Kempe \cite{G12}, who observed that the maximally mixed state trivially achieves an approximation ratio of $k/4$, where $k \in \{1,2,3\}$ is the rank of the projectors $\widetilde{H_e}$.  The only nontrivial result previously known is a 0.328-approximation for the $k=1$ case by Hallgren, Lee, and Parekh \cite{H20}.  The main difference between our work and the previous works \cite{G19, H20} is that we are able to directly analyze the expectation rather than appealing to other works in a black-box fashion.  Both of these works appealed to approximation results of Bri{\"e}t, de Oliveira Filho, and Vallentin \cite{B10, BJ10}.  Our analysis may be seen as a generalization of the result of Bri{\"e}t, de Oliveira Filho, and Vallentin~\cite{BJ10} employed in \cite{G19}.  \knote{COMMENTED OUT:  Along these lines there are two observations that we use.  The first (\Cref{lem:6}) is that the expected value of an edge under the rounding algorithm can be reduced to a simple form with some elementary considerations and a singular value decomposition.  The second, inspired by work on hyperplane rounding \cite{B18, Ba16}, is that this simple form for the expectation can be expanded in the Hermite polynomials.  Under this expansion, some simple calculus allows us to bound the value of the expectation and get good bounds on the approximation factor.  I THINK ITS REPETITIVE GIVEN THE PARAGRAPH BEFORE THE INFORMAL THEOREMS}

%These results apply to the more general Quantum-SAT problem in addition to the quadratic case we consider.  Our main result is captured in the following informal version of Theorem~\ref{thm:1}.  For our first result, we will take only the assumption taht the $H_e$ terms are projectors.  Under this assumption, we provide the following 

\subsection{Outlook and Conjectures}

\paragraph{Strictly quadratic instances.} We believe the strictly quadratic case is an interesting special case for several reasons.  As noted, one of the difficulties in analyzing rounding schemes for $2$-Local Hamiltonian is the sheer number of parameters involved.  The quadratic case reduces the number of parameters to consider, while still including physically relevant QMA-hard instances such as the Max Heisenberg model that serves as a quantum generalization of Max Cut~\cite{G19}.  Indeed we believe that quadratic instances allow one to glean insights and develop techniques that might otherwise be obscured in more general instances.  One of the first rigorous approximation algorithms for a $2$-local Hamiltonian that goes beyond product states was developed for a quadratic instance, and one that has considerably fewer parameters than those we consider here~\cite{A20}.   Moreover, maximally entangled instances are quadratic, and we conjecture these are the hardest cases to approximate.

Additionally, we believe that the analysis we provide is tighter for this case.  Since there are no linear terms in the objective, we can focus on one particular Hermite expansion and carefully bound it.  For the general case we must contend with understanding the linear and quadratic terms simultaneously, which makes the analysis much more difficult.  We conjecture that the true performance of our algorithm for the general case is:
\begin{conjecture}\label{conj:performance:informal}(Informal) Our rounding algorithm achieves approximation ratio:
\begin{align*}
\alpha(k) =\begin{cases}
0.498 \text{ if $k=1$}\\
0.653 \text{ if $k=2$}\\
0.821 \text{ if $k=3$}.
\end{cases}
\end{align*}
\end{conjecture}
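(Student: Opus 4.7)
The plan is to tighten the analysis of the rounding algorithm from \Cref{thm:8} rather than to change the algorithm. Two facts drive the approach. First, the conjectured values $0.498,\ 0.653,\ 0.821$ coincide exactly with the \emph{strictly quadratic} numerical ratios reported in \Cref{table:results}, while the rigorously proved strictly quadratic ratios $0.467,\ 0.639,\ 0.805$ already exceed the rigorously proved general ratios $0.387,\ 0.565,\ 0.764$. The proof must therefore accomplish two things: (i) push the rigorous strictly quadratic bound up to its numerical value, and (ii) prove that the worst-case edge in the general analysis is in fact a strictly quadratic instance.

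For (i), I would carry the per-edge analysis through the standard-form reduction (SVD normalization plus rotational invariance of the Gaussian rounding distribution) and then write both the expected rounded objective $f(\sigma,q)$ and the SDP objective $g(\sigma,q)$ on a single edge as explicit functions of the rank-$k$ projector parameters $\sigma$ and the $n=2$ Hermite-mode coefficient $q$ produced by the Gaussian rounding expansion in the style of \cite{BJ10, B18}. Where the current proof loses constants, it is at the step that bounds a full Hermite series by its leading term; I would instead retain the entire series and minimize $f/g$ analytically using sign/monotonicity properties of the higher-order Hermite coefficients, verifying via interval arithmetic that the exact minimum matches the conjectured constant for each $k\in\{1,2,3\}$.

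For (ii), I would parametrize general edge configurations by the pair $(\ell,q)$, where $\ell$ collects the Hermite-$n=1$ (linear) moment data and $q$ the Hermite-$n=2$ (quadratic) moment data, so that the strictly quadratic slice is $\ell=0$. The plan is to differentiate the ratio $f(\sigma,\ell,q)/g(\sigma,\ell,q)$ along a linear perturbation $\ell \mapsto t\ell_0$ about $\ell=0$ and show, via the parity of the Hermite polynomials, that the first derivative vanishes identically. A symmetrization over the involution $\ell \mapsto -\ell$, which preserves SDP-feasibility and the projector, should then express the ratio at a generic $(\ell,q)$ as an average of its values at $(\ell,q)$ and $(-\ell,q)$, so that a concavity estimate in $\ell$ places the minimum on the $\ell=0$ slice and reduces step (ii) to step (i). If direct concavity is unavailable, the fallback is to characterize the extreme rays of the SDP moment cone at fixed $q$ and $\sigma$ and to check the bound only on these finitely many rays.

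The hard part will be step (ii): parity cleanly controls the first-order behavior in $\ell$, but the coupling of $\ell$ to $q$ through positivity of the moment matrix $M$ and of the single-qubit marginals produces cross terms that are not obviously concave in $\ell$. Ruling these out appears to require a joint characterization of the feasible SDP slice and of the rounding distribution that is absent in the current proof; the authors' framing of this statement as a conjecture rather than a theorem suggests this coupling is the true analytic obstacle, and closing it will likely demand either a new invariant of the rounding algorithm or a computer-assisted sum-of-squares certificate on the reduced standard-form parameter space.
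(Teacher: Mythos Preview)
The statement you are addressing is a \emph{conjecture} in the paper, not a theorem: the paper contains no proof of it. The authors explicitly frame it as open and explain why. Their supporting evidence is purely numerical (sampling and numerical integration), and they identify the obstacle: the Hermite series for the relevant expectation converges slowly near the discontinuity of the normalization $z \mapsto z/\|z\|$, so that pushing the truncation order high enough to recover the conjectured constants is ``intractable'' by their own account. There is therefore no paper proof to compare your proposal against.

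On the substance of your plan: step (i) is exactly the direction the paper gestures at (higher-order Hermite), but your proposal to ``retain the entire series and minimize $f/g$ analytically using sign/monotonicity properties'' is where the authors say the method breaks down; you would need a genuinely new summation or positivity argument, not just more terms. Step (ii) --- reducing the general case to the strictly quadratic slice via the parity involution $\ell \mapsto -\ell$ --- is a natural idea and is consistent with the paper's belief that the two cases share the same worst-case constants, but note that the paper's standard-form reduction for the general case (\Cref{lem:18}, \Cref{lem:21}) does \emph{not} decouple the linear and quadratic moment data: the $1$-local vectors $\mathbf{u},\mathbf{v}$ of the projector and the identity vector $\mathbf{v}_0$ enter through separate Gaussian blocks with their own covariances $d_i,d_j$, and the constraints \eqref{eq:73} couple $ap+bq+cr$ with $t_id_i+t_jd_j$. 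Your symmetrization does not obviously preserve these coupled constraints, and the ``concavity in $\ell$'' you invoke is precisely the unproved step. You correctly flag this yourself at the end; the honest summary is that your proposal is a reasonable research outline, but it is not a proof and the paper offers none either.
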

%For classical $2$-CSPs, some of the approximation algorithms offering the best-known approximation guarantees, based on rounding solutions of SDP relaxations, were initially verified only by numerical means, while more recent efforts have attempted to mitigate this \cite{SJ09}.  For $2$-CSPs, such numerical results are accepted by the community as the numerical optimization to find a worst-case example typically only involves one parameter, as one may focus on constructing a single worst-case edge.  In the case of $2$-Local Hamiltonians, numerical verification becomes a riskier prospect as a n{\"a}ive analysis of a single edge involves 15 SDP variables and 15 parameters from the objective function.  The quadratic case of $2$-Local Hamiltonian we consider involves 9 SDP variables and 9 objective-function parameters for an edge.  One of our main contributions is a means to to perform such an analysis.

We have indeed attempted to verify the approximation ratios in Conjecture~\ref{conj:performance:informal} numerically, and via both sampling and integration methods, we observe that the worst-case instances for our algorithm match the values stated in the conjecture.  The difficulty in taking these encouraging results as fact is that sampling is of course not exhaustive, and the integrals exhibit poor convergence.  We do give upper bounds on $\alpha(k)$ in Theorem~\ref{thm:gap-examples} in the Appendix.  These are derived by furnishing instances on 2 qubits that demonstrate a gap between the maximum eigenvalue and the maximum objective value achieved by a product state.  These bounds are $\frac{1}{2}$ ($k=1$), $\frac{2}{3}$ ($k = 2$), and $\frac{5}{6}$ ($k = 3$), which are fairly close the values in Conjecture~\ref{conj:performance:informal}.  We suspect that one may be able to prove that the approximation ratios in Conjecture~\ref{conj:performance:informal} are best possible under the unique games conjecture.

%Our methods are most similar to \cite{G19}, but we make some important changes in the analysis and the rounding algorithm which allows us to prove approximation factors for less constrained problems.  The output of the SDP is a ``moment matrix'' which is some object which may or may not correspond to a quantum state.  It is exactly the task of the rounding algorithm to use this object to produce a valid quantum state with objective not that much worse.  We alter the SDP to force the moment matrix which we get out of the SDP to have valid ``marginal'' distributions on all pairs of qubits, as well as alter the way in which the moment matrix is converted to a product state.  

\paragraph{Significance of our work.} We give the first approximation algorithm beating random assignment for Max $2$-QSAT (and related problems).  We show how to move beyond numerical evaluation of approximation ratios for $2$-local Hamiltonian problems, which is not as critical in the classical case that enjoys only a handful of parameters.  This is accomplished by bringing analysis employing Hermite polynomials to bear on $2$-local Hamiltonian problems.   Moreover, our rounding scheme is a simple and natural generalization of hyperplane rounding.

\section{Semidefinite Relaxation and Rounding Approach} \label{sec:7}

In this section we present a rigorous but high-level overview of our approach, with technical lemmas deferred to later sections.  We define the main problems considered and our semidefinite relaxation and rounding algorithm.  We conclude by motivating the analysis that will occur in subsequent sections. 

\subsection{Quantum Information Notation}\label{sec:quantum-notation}
We adopt some standard notations used in quantum information~\cite{N10}.  The \emph{kets} $\ket{0}:=[0,1]$ and $\ket{1}:=[1,0]$ represent the standard basis vectors for $\mathbb{C}^2$, while the \emph{bras} $\bra{0}$ and $\bra{1}$ represent their conjugate transposes.  The $d \times d$ identity matrix is denoted by $\mathbb{I}_d$, and the subscript will be omitted when redundant.  We obtain the standard bases for $\mathbb{C}^{2^n}$ as $\ket{b_1b_2\ldots b_n} := \ket{b_1}\ket{b_2}\ldots\ket{b_n} := \ket{b_1}\otimes\ket{b_2}\otimes \ldots \otimes \ket{b_n}$, with $b_i \in \{0,1\}$. The Pauli matrices will have the usual definition:
\begin{equation}
\label{eq:paulis}
 \sigma^0=\mathbb{I}=\begin{bmatrix}
1 & 0 \\
0 & 1
\end{bmatrix},
\,\,\,\,\,\,
\sigma^1=\begin{bmatrix}
0 & 1 \\
1 & 0
\end{bmatrix},
\,\,\,\,\,\,
\sigma^2=\begin{bmatrix}
0 & -i \\
i & 0
\end{bmatrix}, \,\text{and}
\,\,\,\,\,\,
\sigma^3=\begin{bmatrix}
1 & 0 \\
0 & -1
\end{bmatrix}.
\end{equation}
\noindent We will generally use subscripts to indicate quantum subsystems.  If $\rho$ is a density matrix on $n$ qubits, for instance, $\rho_{ij}$ will correspond to the marginal density matrix on qubits $i$ and $j$, i.e. the partial trace $\rho_{ij}=\text{Tr}_{[n]\setminus \{i, j\}}[\rho]$ (e.g.~\cite{N10}, Section 2.4.3).  Similarly, $\sigma_i^j$ corresponds to Pauli $j$ on qubit $i$.  Subscripts will supercede position in many cases in the paper, for instance $\sigma_i^1 \otimes \sigma_j^2 \otimes \mathbb{I}_{[n]\setminus \{i, j\}}$ is meant as $\mathbb{I}\otimes \mathbb{I} \otimes ... \otimes \sigma^1\otimes ... \otimes \mathbb{I} \otimes \sigma^2 \otimes ... \otimes \mathbb{I}$ where $\sigma^1$ is at the $i$th position and $\sigma^2$ is at the $j$th position.  We encourage readers familiar with classical constraint satisfaction problems to consult \Cref{sec:classical-quantum}, which casts such problems as quantum local Hamiltonian problems.

\subsection{Formal Problem Statement} \label{sec:formal-problem-statement}
In the $2$-Local Hamiltonian problem we are given as input the problem size, $n$, and a list of classical descriptions of $2$-local terms, $\{H_e\}_{e \in E}$.  We allow multiedges, i.e.\ distinct edges $e$ and $e'$ on the same pair of qubits $i,j$; we will use the notation $e_1$ and $e_2$ to refer to the qubits on which $e$ acts.
In this context a term $H_e$ is $2$-local if it can be written in the form $H_e=\mathcal{O}_e\otimes \mathbb{I}_{[n]\setminus\{e_1, e_2\}}$, using the subscript notation from \Cref{sec:quantum-notation}.  Local Hamiltonians have polynomially-sized descriptions which can be given in terms of the local operators $\mathcal{O}_e$, but for our purposes the details of the description will not be important.  It is important to note at this point that we will use $rank(H_e)$ to mean $rank(\mathcal{O}_e)$.  The actual rank of $H_e$ is $rank(\mathcal{O}_e)2^{n-2}$, but for ease of exposition we will say that the ``rank'' of a $2$-local term is equal to the rank of its non-trivial part.  We are tasked with determining the largest eigenvalue of the Hamiltonian $H=\sum_{e \in E} H_e$:
\begin{problem}[QLH($n$, $\{H_e\}_{e \in E}$)]\label{prob:QLH}
Given a problem size, $n$, as well as a classical description of a set of $2$-local terms $\{H_e\}$ with $H_e \in \mathbb{C}^{2^n \times 2^n}$ Hermitian, find:

\begin{equation*}
\lambda_{max}(H):= \max_{\ket{\phi}\in [\mathbb{C}^2]^{\otimes n}} \text{Tr} \left[\sum_{e \in E} H_e \ \ket{\phi}\bra{\phi} \right]=\max_{\substack{\rho \in \mathbb{C}^{2^n}\times \mathbb{C}^{2^n}\\Tr(\rho)=1, \ \rho \semigeq 0}} \text{Tr} \left[\sum_{e \in E} H_e\ \rho \right].
\end{equation*}
\end{problem}

The main problems of interest to us are instances of QLH where each term is a projector and the special case where each projector is strictly quadratic.  The strictly quadratic case precludes non-identity $1$-local terms (i.e. $\mathcal{O}_i \otimes \mathbb{I}_{[n]\setminus\{i\}}$ with $\mathcal{O}_i \not= \mathbb{I}$) that may be implicit in a $2$-local term.  
\begin{definition}[Strictly Quadratic]\label{def:2}
Let $H_e$ be a $2$-local term on $n$ qubits.  Write $H_e=\mathcal{O}_e\otimes \mathbb{I}_{[n]\setminus \{e_1, e_2\}}$ for some nontrivial operator $\mathcal{O}_e$.  Express $\mathcal{O}_e$ in the Pauli basis as:
\begin{equation}\label{eq:strictly-quadratic}
\mathcal{O}_e=\sum_{k, l=0}^3 \alpha_{k, l} \sigma^k \otimes \sigma^l.
\end{equation}
We say that $H_e$ is a strictly quadratic if $\alpha_{k, 0}=0$ for all $k\neq 0$, and $\alpha_{0, l}=0$ for all $l\neq 0$.
\end{definition}
\noindent Note that the coefficients in \Cref{eq:strictly-quadratic} may be obtained as $\alpha_{k, l} = \text{Tr}[\sigma^k \otimes \sigma^l\ \mathcal{O}_e]/4$ and are real since $\mathcal{O}_e$ is Hermitian.
%To illustrate the meaning of ``strictly quadratic'' in this context, we may express each term $H_{ij}$ in the Pauli basis by writing $H_e=\mathcal{O}_{i, j}\otimes \mathbb{I}_{[n]\setminus \{i, j\}}$ where $\mathcal{O}_{ij}=\sum_{k, l=0}^3 \alpha_{k, l}\sigma^k\otimes \sigma^l$.  We say that $H_{ij}$ is {\it strictly quadratic} if $\alpha_{k, l}=0$ when exactly one of $k$, $l$ is non-zero.

We focus our attention on QLH restricted to projectors and strictly quadratic projectors, both of which remain QMA-hard~\cite{P15}. In this case $\mathcal{O}_e = w_e P_e$, where $P_e$ is a 2-qubit projector, and $w_e \geq 0$ is a weight. There are three interesting cases, depending on the rank of $P_e$. We will obtain approximation factors for each.  
\begin{problem}[QLHP($n$, $k$, $\{H_e\}_{e \in E}$)]\label{prob:QLHP}
Given a problem size, $n$, as well as a classical description of a set of $2$-local terms $\{H_e  = w_e P_e \otimes \mathbb{I}_{[n]\setminus\{e_1,e_2\}}\}$ with $w_e \geq 0$ and $P_e \in \mathbb{C}^{4 \times 4}$ a $2$-qubit projector of rank at least $k$, find $\lambda_{max}(\sum_{e \in E} H_e)$.
\end{problem}
\noindent It is worth mentioning that since any $\mathcal{O}_e \semigeq 0$ can be written as a positive combination of rank 1 projectors, QLHP with $k=1$ captures instances of QLH where each $H_e \semigeq 0$. 
  
\subsection{Semidefinite Relaxation}  

We employ a semidefinite programming relaxation for QLH (\Cref{prob:QLH}) that is a refinement of the now standard SDP relaxation that has been used in designing approximation algorithms~\cite{B16,B19,G19}.  Our relaxation is related to one used by Hallgren, Lee, and Parekh~\cite{H20} and may be viewed as a specialization of noncommutative Lasserre hierarchies proposed for quantum information applications~\cite{D08,P10}.

In this section we assume, for the sake of exposition, that there is a single edge $ij$ on any pair of qubits $i,j \in [n]$; however, the relaxation and rounding algorithm may be readily extended to handle general instances of QLH with multiedges. Suppose we have an instance of QLH on $n$ qubits.  As previously stated, the first set of variables in our SDPs will be marginal density matrices $\{\rho_{ij}\}$.  Since there are $n$ qubits, there are $\binom{n}{2}$ many of these, and each of them is a $4\times 4$ Hermitian matrix.  While we cannot impose global consistency, we can force each $\rho_{ij}$ to be a valid density matrix on its own: $\text{Tr}[\rho_{ij}]=1$ and $\rho_{ij}\semigeq 0$ for all $i,j \in [n]$.  We could also explicitly force overlapping marginals to be consistent on single qubit density matrices, however this will be implicit through our use of moment matrices.  

\paragraph{Moment matrices.} Suppose we have a quantum state on $n$ qubits $\ket{\psi}\in \mathbb{C}^{2^n}$. Consider the $1$-local Pauli operators, $\mathcal{M} = \{\mathbb{I}\} \cup \{\sigma^k_i \otimes \mathbb{I}_{[n]\setminus\{j\}} \mid k \in [3], i \in [n]\}$. We apply each of the $3n+1$ Pauli operators $\mathcal{O} \in \mathcal{M}$ on $\ket{\psi}$ to obtain columns of a matrix $V = \{\mathcal{O}\ket{\psi}\}_{\mathcal{O} \in \mathcal{M}} \in \mathbb{C}^{2^n \times (3n+1)}$. We call $M := V^\dagger V \in \mathbb{C}^{(3n+1)\times(3n+1)}$ the \emph{moment matrix} of $\ket{\psi}$ with respect to $\mathcal{M}$; note that $M$ is Hermitian and $M \semigeq 0$ by construction.  The notation $M(\mathcal{O}, \mathcal{P})$ refers to the entry of $M$ at the row and column corresponding to $\mathcal{O},\mathcal{P} \in \mathcal{M}$ respectively.  We have $M(\mathcal{O}, \mathcal{P}) = \bra{\psi}\mathcal{O}\mathcal{P}\ket{\psi}$, for all $\mathcal{O},\mathcal{P} \in \mathcal{M}$, so that $M$ captures all the $2$-local Pauli statistics of $\ket{\psi}$.  In particular the quantity $\bra{\psi}H\ket{\psi}$ is a linear function of the entries of $M$ for a $2$-local Hamiltonian $H$.  If we let $\mathcal{M}_k$ consist of all the $k$-local tensor products of Paulis, instead of just the $1$-local ones, the corresponding moment matrix $M_k$, of size $O(n^k)$ by $O(n^k)$, includes all the $2k$-local Pauli statistics.  We may obtain SDP relaxations for QLH problems by constructing a relaxed $M_k \semigeq 0$ that satisfies linear constraints of the form $\text{Tr}[A M_k] = b$, that a true moment matrix would satisfy.  This corresponds to the $k$th level of noncommutative Lasserre hierarchies introduced for quantum information~\cite{D08,P10}.  Our approach relaxes $M_1$ and adds additional constraints enforcing positivity of $2$-local marginals; the relaxation we obtain sits between the $k=1$ and $k=2$ levels of the noncommutative Lasserre hierarchy.

\paragraph{SDP Relaxation.} We define a (relaxed) moment matrix $M$, which will track local statistics of the set of marginals $\{\rho_{ij}\}$.    Let $M$ be a symmetric, $(3n+1)\times (3n+1)$ real matrix whose rows and columns correspond to operators in $\mathcal{M}$.  Entries of $M$ will correspond to coefficients of the marginal density matrices $\{\rho_{ij}\}$ in the Pauli basis.  We use the notation $M(\sigma_i^k,\sigma_j^l)$ to refer to entries of $M$ for $i,j \in [n]$ and $k,l \in [3]$; in addition we have a row and column of $M$ indexed by $\mathbb{I}$. We set $M(\sigma_i^k,\sigma_j^l)=\text{Tr} [\sigma^k \otimes  \sigma^l \ \rho_{ij}]$ for $(i, k)$, $(j, l)$ in $[n]\times [3]$.  In addition we set $M(\mathbb{I},\mathbb{I}) = 1$, and $M(\sigma_i^k,\mathbb{I})=\text{Tr}[\sigma^k \otimes \mathbb{I} \ \rho_{ij}]$ for all $(i, k) \in [n]\times [3]$ and $j \in [n]$.  Note that this constraint forces consistent single-qubit marginals since 
\begin{equation*}
\text{Tr}_u[\rho_{iu}]=\text{Tr}_v [\rho_{iv}] \Leftrightarrow \text{Tr}[\sigma^l \otimes \mathbb{I}\ \rho_{iu}]=\text{Tr}[\sigma^l \otimes \mathbb{I}\ \rho_{iv}] \ \forall l.
\end{equation*}

Since $M$ contains all local information of $\{\rho_{ij}\}$, we can use $M$ to evaluate the objective of our SDP.  In this direction, we will define a weight matrix for each edge $H_{ij}=w_{ij}\mathcal{O}_{ij}\otimes \mathbb{I}_{n\setminus \{i, j\}}$, where $\mathcal{O}_{ij} \in \mathbb{C}^{4 \times 4}$, and $w_{ij}$ is a scalar weight. We define the $(3n+1) \times (3n+1)$ matrix $C_{ij}$, which contains the coefficients of $\mathcal{O}_{ij}$ in the Pauli basis:
\begin{gather}\label{eq:52}
    C_{ij}(\sigma_i^k,\sigma_j^l)=C_{ij}(\sigma_j^l, \sigma_i^k)=Tr[\sigma^k \otimes \sigma^l \ \mathcal{O}_{ij}]/8 \quad \forall k,l \in [3],\\
    \notag C_{ij}(\sigma_i^k, \mathbb{I})=C_{ij}(\mathbb{I}, \sigma_i^k)=Tr[\sigma^k \otimes \mathbb{I} \ \mathcal{O}_{ij}]/8 \quad \forall k \in [3],\\
    \notag C_{ij}(\sigma_j^l, \mathbb{I})=C_{ij}(\mathbb{I}, \sigma_j^l)=Tr[\mathbb{I} \otimes \sigma^l \ \mathcal{O}_{ij}]/8 \quad \forall l \in [3],
\end{gather}
and all other entries of $C_{ij}$ are 0.
%\begin{equation}\label{eq:9}
%C^e_{ik, jl}=
%\text{Tr}[ \sigma_i^k \otimes \sigma_j^l \otimes \mathbb{I}_{[n]\setminus\{i, j\}} \mathcal{O}_{ij} ]/8 
%\end{equation}
%Note that the matrix $C^e$ can be partitioned into $3\times 3$ blocks, $C_{i, j}^e$, corresponding to the submatrix of $C^e$ containing $C_{ij, kl}^e$ for all $k$, $l$.  For an edge of the form $H_e=\mathcal{O}_{ij} \otimes \mathbb{I}_{[n]\setminus \{i, j\}}$, only two of the blocks will be nonzero, namely $C_{i, j}^e$ and $C_{j, i}^e$.  As an illustration, consider a term of the form $H_e=\mathcal{O}_{ij}\otimes \mathbb{I}_{2^{n-2}} $.  Let $\mathcal{O}_{ij}$ and the marginal quantum state $\rho_{ij}$ have Pauli decompositions:
To illustrate application of the matrix $C_{ij}$, suppose $\mathcal{O}_{ij}$ and the marginal density matrix $\rho_{ij}$ have Pauli decompositions:
\begin{equation*}
\mathcal{O}_{ij} = \sum_{k, l=0}^3 \alpha_{kl} \sigma^k \otimes \sigma^l \quad \text{and} \quad \rho_{ij}=\frac{1}{4} \sum_{k, l=0}^3 \beta_{kl} \sigma^k \otimes \sigma^l.
\end{equation*}
Since, for $k,l\geq 0$, $(\sigma^k)^2=\mathbb{I}$ and $\text{Tr}[\sigma^k \sigma^l]=0$ when $k\neq l$, the value we gain from edge $ij$, ignoring the weight $w_{ij}$, is written as:
\begin{equation}\label{eq:SDP-objective-value}
\text{Tr} [\mathcal{O}_{ij} \rho_{ij}]=\alpha_{00}\beta_{00}+\sum_{\mathclap{\substack{k, l:\\(k\neq 0) \lor (l \neq 0)}}}  \alpha_{kl}\beta_{kl} = \frac{\text{Tr}[\mathcal{O}_{ij}]}{4}+\text{Tr}[C_{ij} M].
\end{equation}
With these facts in hand, we may finally give the main SDP relaxation in this work:

%\noindent Hence, just as in \cite{G19, B19}, the value of an edge can be evaluated as\knote{Ref didnt like following equation since it is technically redundant, but I like the emphasis it gives}:
%\begin{equation}
%\text{Tr} [H_e \rho]=\frac{rank(\mathcal{O}_{ij})}{4}+\sum_{\substack{k, l:\\(k\neq 0) \lor (l \neq 0)}}  \gamma_{kl}\eta_{kl}=\frac{rank(\mathcal{O}_{ij})}{4}+\text{Tr}[ M C^e].
%\end{equation}

%We need to describe one more set of constraints before we can give the main semidefinite relaxation in our work.  In addition to the variables already described, we will also include a marginal density matrix $\rho_{ij}$ for each pair of qubits in the problem.  This will be used to force any submatrix of $M$ which corresponds to a particular pair of qubits to represent a valid quantum state.  In the relaxation, these marginals will likely be globally inconsistent\footnote{If we were able to impose consistency in our semi definite program we would be able to solve the (QMA-hard) problem.  Obviously we cannot expect to do this and, indeed, determining consistency of density matrices is a well-known QMA-complete problem \cite{L06}}. 

\begin{problem}\label{prob:5}
Given an instance of QLH (\Cref{prob:QLH}) on $n$ qubits with local terms $\{H_{ij} = w_{ij}\mathcal{O}_{ij}\otimes \mathbb{I}_{[n]\setminus \{i, j\}}\}$, let $C_{ij}$ be defined according to \Cref{eq:52} for each $ij \in E$.  Solve the following SDP:
\begin{alignat}{2}
\label{sdp-relax:obj} \max \sum_{ij\in E} &w_{ij} \mathrlap{\left(\frac{\text{Tr}[\mathcal{O}_{ij}]}{4}+\text{Tr}[C_{ij} M]\right)}\\
s.t.\qquad
 \label{sdp-relax:diag-id} M(\mathbb{I},\mathbb{I}) &= 1,\\
 \label{sdp-relax:diag} M(\sigma_i^k,\sigma_i^k) &= 1 \quad && \forall i\in [n] \text{ and } k \in [3],\\
 \label{sdp-relax:diag-block-zero} M(\sigma_i^k,\sigma_i^l) &= 0 \quad && \forall i\in [n] \text{ and } k\not=l \in [3],\\
 \label{sdp-relax:2-marginals} M(\sigma_i^k,\sigma_j^l) &= \text{Tr}[\sigma^k \otimes \sigma^l\ \rho_{ij}] \quad &&\forall ij \in E \text{ and }k,l \in [3],\\
 \label{sdp-relax:1-marginals-i} M(\sigma_i^k, \mathbb{I}) &= \text{Tr}[\sigma^k \otimes \mathbb{I}\ \rho_{ij}] \quad &&\forall ij\in E \text{ and } k\in [3],\\
 \label{sdp-relax:1-marginals-j} M(\sigma_j^l, \mathbb{I}) &= \text{Tr}[\mathbb{I} \otimes \sigma^l\ \rho_{ij}] \quad &&\forall ij\in E \text{ and } l\in [3],\\
 \label{sdp-relax:p_ij-trace} \text{Tr}[\rho_{ij}] &= 1 \quad &&\forall ij \in E,\\
 \label{sdp-relax:p_ij-pos} \mathcal{H}(\mathbb{C}^{4 \times 4}) \ni \rho_{ij} &\semigeq 0 \quad &&\forall ij \in E,\\
 \label{sdp-relax:M-pos} \mathcal{S}(\mathbb{R}^{(3n+1)\times(3n+1)})\ni M &\semigeq 0,
\end{alignat}
where $\mathcal{S}(\cdot)$ and $\mathcal{H}(\cdot)$ refer to the symmetric and Hermitian matrices, respectively.
\end{problem}

%For the case of strictly quadratic projectors, we will use a similar relaxation, the main difference is that we no longer need a special index $0$, since all the terms in the objective are strictly $2-$local.  For this problem, $C^e\in \mathbb{R}^{3n\times 3n}$ is defined by:
%\begin{align}\label{eq:9}
%C^e_{ik, jl}=
%\text{Tr}[ \sigma_i^k \otimes \sigma_j^l \otimes \mathbb{I}_{[n]\setminus\{i, j\}} \mathcal{O}_{ij} ]/8\text{ if $i\neq j$}\\
%\nonumber C^e_{ik, il}=0
%\end{align}
%Similarly, we will only need the moment matrix $M$ to be $3n\times 3n$, since the objective is only a function of the quadratic terms.  The relaxation we obtain is:
%\begin{problem}\label{prob:2}
%Given an instance of \Cref{prob:1} on $n$ qubits with local terms $\{H_e\}$ let $C^e$ be defined according to \Cref{eq:9} for each $e$.  Solve the following SDP:
%\begin{gather}
%\max \,\,\, \sum_{e\in E} c_e \left(\frac{rank(H_e)}{4}+\text{Tr}[C^e M]\right)\\
%\nonumber s.t. \,\,\,\, M_{ij, kl}=\text{Tr}[\sigma_i^j \otimes \sigma_k^l \rho_{ik}]\text{  for all $(i, k) \in [n] \times [n]$ and $(j, l) \in [3]\times [3]$}\\
%\nonumber \text{Tr}[\rho_{jl}]=1\text{  for all $(j, l) \in [n]\times [n]$}\\
%\nonumber M_{ij, ik}=0 \text{   and    } M_{ij, ij}=1\text{ for all $i\in [n]$, $(j, k)\in [3]\times [3]$}\\
%\nonumber M\semigeq 0, \,\,\, \rho_{jl}\semigeq 0 \text{  for all $(j, l) \in [n]\times [n]$}\\
%\nonumber M \in \mathbb{R}^{3n\times 3n}, \text{  symmetric} \\
%\nonumber \rho_{ij}\in \mathbb{C}^{4\times 4}, \text{  Hermitian}
%\end{gather}

%\end{problem}

\begin{theorem}
The mathematical program of \Cref{prob:5} is an efficiently computable semidefinite program that provides an upper bound on $\lambda_{max}(\sum_{ij\in E} H_{ij})$.
\end{theorem}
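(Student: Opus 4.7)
The plan is to establish the two claims separately: first that \Cref{prob:5} has polynomial size (so interior-point methods solve it to any desired precision in polynomial time), and second that it is a valid relaxation of \Cref{prob:QLH}.

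For the efficiency claim, I would simply count variables and constraints. The matrix $M$ has size $(3n+1) \times (3n+1)$, and there are $\binom{n}{2}$ marginal matrices $\rho_{ij}$ of size $4 \times 4$, giving $O(n^2)$ scalar variables. The constraints \eqref{sdp-relax:diag-id}--\eqref{sdp-relax:p_ij-trace} are $O(n^2)$ linear equations with coefficients computable in polynomial time from the input descriptions of $\{H_{ij}\}$, while \eqref{sdp-relax:p_ij-pos}--\eqref{sdp-relax:M-pos} are a constant number of PSD constraints on matrices of size polynomial in $n$. This is a standard SDP that can be solved to additive precision $\eps$ in time $\mathrm{poly}(n, \log(1/\eps))$ via the ellipsoid or interior-point method.

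For the upper-bound claim, the plan is to exhibit, for any $n$-qubit pure state $\ket{\psi}$, a feasible SDP solution whose objective value equals $\bra{\psi} H \ket{\psi}$. Given $\ket{\psi}$, set $\rho_{ij} := \text{Tr}_{[n]\setminus\{i,j\}}[\ket{\psi}\bra{\psi}]$ for each $ij \in E$, and define $M$ entrywise by $M(\sigma_i^k,\sigma_j^l) := \mathrm{Re}\bra{\psi}\sigma_i^k\sigma_j^l\ket{\psi}$, $M(\sigma_i^k,\mathbb{I}) := \bra{\psi}\sigma_i^k\ket{\psi}$, and $M(\mathbb{I},\mathbb{I}) := 1$. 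I then need to verify each SDP constraint. The trace and PSD conditions \eqref{sdp-relax:p_ij-trace}--\eqref{sdp-relax:p_ij-pos} hold because the $\rho_{ij}$ are genuine partial traces. The linking constraints \eqref{sdp-relax:2-marginals}--\eqref{sdp-relax:1-marginals-j} hold by definition of partial trace together with the identity $\text{Tr}[\sigma^k \otimes \sigma^l \rho_{ij}] = \bra{\psi}\sigma_i^k \sigma_j^l \ket{\psi}$ (which is already real for $i \neq j$ since $\sigma_i^k$ and $\sigma_j^l$ commute and are Hermitian). The diagonal constraints \eqref{sdp-relax:diag}--\eqref{sdp-relax:diag-block-zero} follow from $(\sigma^k)^2 = \mathbb{I}$ and the fact that $\sigma^k\sigma^l = i\epsilon^{klm}\sigma^m$ is anti-Hermitian for $k\neq l$, so $\mathrm{Re}\bra{\psi}\sigma_i^k\sigma_i^l\ket{\psi} = 0$.

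The main technical point is showing $M \semigeq 0$ in \eqref{sdp-relax:M-pos}. Here I would use the moment-matrix construction from the Moment matrices paragraph: let $V \in \mathbb{C}^{2^n \times (3n+1)}$ have columns $\mathcal{O}\ket{\psi}$ for $\mathcal{O} \in \mathcal{M}$, so that $V^\dagger V$ is Hermitian PSD with $(V^\dagger V)(\mathcal{O},\mathcal{P}) = \bra{\psi}\mathcal{O}\mathcal{P}\ket{\psi}$. Then $M = \mathrm{Re}(V^\dagger V)$, and for any real vector $x$ we have $x^T M x = \mathrm{Re}(x^\dagger V^\dagger V x) = \|Vx\|_2^2 \geq 0$. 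Finally, by \Cref{eq:SDP-objective-value} the SDP objective $\sum_{ij}w_{ij}(\text{Tr}[\mathcal{O}_{ij}]/4 + \text{Tr}[C_{ij}M])$ evaluated at this solution equals $\sum_{ij} w_{ij}\text{Tr}[\mathcal{O}_{ij}\rho_{ij}] = \bra{\psi}H\ket{\psi}$. Taking $\ket{\psi}$ to be a maximum eigenvector then yields the desired upper bound. The one subtlety worth emphasizing is the passage from the complex PSD matrix $V^\dagger V$ to the real symmetric $M$, and this is the only step requiring care; everything else is bookkeeping.
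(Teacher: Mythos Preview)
Your proposal is correct and follows essentially the same route as the paper: construct the feasible solution from a maximum eigenvector by taking true $2$-qubit marginals and the real part of the moment matrix $V^\dagger V$, then verify each constraint and invoke \Cref{eq:SDP-objective-value} for the objective. Your treatment of the passage from the complex PSD matrix to its real part (via $x^T M x = \|Vx\|^2$ for real $x$) is actually slightly more explicit than the paper's, which simply asserts $\mathrm{Re}(M)\semigeq 0$ whenever $M\semigeq 0$.
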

\begin{proof}
Constraints \eqref{sdp-relax:diag-id}--\eqref{sdp-relax:p_ij-trace} are linear equalities on the entries of PSD matrices $M$ and $\rho_{ij}$ $\forall ij \in E$, hence we do indeed have an SDP.  Since there are polynomially many variables of polynomial size, the usual considerations show computational efficiency, i.e. the program can be solved to arbitrary additive precision in polynomial time \cite{B04}. 

A larger matrix $X \semigeq 0$, consisting of $M$ and the $\rho_{ij}$ as its diagonal blocks may be used to put the SDP into a more standard form (e.g.~\cite{B04}, Section 4.6.2).  Although the $\rho_{ij}$ are complex, the SDP may be solved as a real SDP by appealing to the standard approach of tracking the real and imaginary parts separately and observing $X \semigeq 0$ if and only if
\begin{equation*}
\begin{bmatrix}
\text{Re}(X) & -\text{Im}(X)\\
\text{Im}(X) & \text{Re}(X)
\end{bmatrix}\semigeq 0.
\end{equation*}

Let $\ket{\psi}$ be an eigenvector corresponding to $\lambda_{max}(\sum_{ij\in E} H_{ij})$, and let $\rho_{ij}^*$, $\forall ij\in E$, be the 2-qubit marginal density matrices of $\rho=\ket{\psi}\bra{\psi}$, so that Constraints \eqref{sdp-relax:p_ij-trace} and \eqref{sdp-relax:p_ij-pos} are satisfied for the $\rho^*_{ij}$.  In addition consider the moment matrix $M$ for $\ket{\psi}$ with respect to $\mathcal{M}$, as described above.  The matrix $M$ satisfies Constraints \eqref{sdp-relax:diag-id}, \eqref{sdp-relax:diag}, \eqref{sdp-relax:2-marginals}--\eqref{sdp-relax:1-marginals-j}, and \eqref{sdp-relax:M-pos} by the definition of a moment matrix, since 
\begin{equation}\label{eq:psi-rho-moments}
\bra{\psi}\sigma_i^k \otimes \sigma_j^l \otimes \mathbb{I}_{[n]\setminus\{i,j\}} \ket{\psi} = \text{Tr}[\sigma^k \otimes \sigma^l \ \rho_{ij}^*]\text{, for }0 \leq k,l \leq 3. 
\end{equation} 

Constraint~\eqref{sdp-relax:diag-block-zero} is the only one that remains. Note that the real part of $M$, $M^* := \text{Re}(M) \semigeq 0$ since $M \semigeq 0$. By \Cref{eq:psi-rho-moments}, for any $j \in [n]$ and $k\not= l \in [3]$, $M(\sigma_i^k, \sigma_i^l) = \pm i\text{Tr}[\sigma^m \otimes \mathbb{I} \ \rho_{ij}^*]$, where $m \in [3]\setminus\{k,l\}$.  The quantities in \Cref{eq:psi-rho-moments} are real since tensor products of Pauli operators are Hermitian.  This implies that $M(\sigma_i^k, \sigma_i^l)$ for $k \not= l \in [3]$ is imaginary and more generally that $M^*$ and the $\rho_{ij}^*$ satisfy all the constraints. 

Consider the objective value for this solution, $\sum_{ij\in E} w_{ij} (\text{Tr}[\mathcal{O}_{ij}]/4 + \text{Tr}[C_{ij}M^*])=$
\begin{equation*}
\sum_{ij \in E} w_{ij} \text{Tr}[\mathcal{O}_{ij} \rho_{ij}^*] = 
\text{Tr}\left[\sum_{ij \in E} H_{ij}\ \rho\right] =
\lambda_{max}\left(\sum_{ij\in E} H_{ij}\right),
\end{equation*}
where the first equality follows from \Cref{eq:SDP-objective-value}. It follows that the optimal solution to \Cref{prob:5} has value at least that of the optimal solution of QLH.  
%%% BEGIN COMMENTED OUT
\iffalse
Further, we can impose $ M_{ij, ik}=0$, $M_{ij, ij}=1$ and $M\semigeq 0$ without loss of generality from \cite{G19}\footnote{Technically, \cite{G19} applies to a slightly different relaxation, but it is easy to extend to this case.}\knote{Reason matrix is PSD sounds cryptic, revise}.  The first condition holds essentially because those entries are imaginary for a generic quantum state (and hence do not contribute to the real objective), the second condition holds because $M_{ij, ij}=Tr(\mathbb{I}\rho)$, and the third condition holds because $v^T M v=Tr(S^2 \rho)$ for some Hermitian $S$ and for some globally consistent $\rho$ (see \cite{G19} for details).  We can also assume $M_{0,0}=Tr(\mathbb{I} \rho)=1$ for obvious reasons.

We can see that \Cref{prob:5} is a relaxation of \Cref{prob:1}: Given optimal $\rho^*$ for \Cref{prob:1}, we can set $\rho_{ij}=\rho_{ij}^*$, $M_{ij, kl}=\text{Tr}[\sigma_i^j \otimes \sigma_k^l\rho_{ik}^*]$ for $i\neq k$, $M_{ij, ik}=\delta_{j, k}$, and $M_{0, ij}=Tr[\sigma_i^j \rho]=M_{ij, 0}$ to obtain a valid solution to \Cref{prob:5} (\cite{G19} implies resulting $M$ will be PSD).  
\fi
%%% END COMMENTED OUT
\end{proof}

%Note also that for the strictly quadratic case we opted to not include constraints forcing the marginals to have consistent single qubit density matrices, i.e. $Tr_j(\rho_{ij})=Tr_k (\rho_{ik})$.  We could have included this constraint, it is linear, but our analysis makes no use of it.  The reason for this is clear once we understand that constraints of this form track $1$-moments of the density matrices.  Indeed,
%$$
%Tr_j(\rho_{ij})=Tr_k (\rho_{ik}) \Leftrightarrow Tr(\sigma^l_i\otimes \mathbb{I}_j \,\,\rho_{ij})=Tr(\sigma_i^l \otimes \mathbb{I}_k \,\, \rho_{ik}) \,\,\, \forall l.
%$$
%The quadratic nature of the objective allows us to effectively ignore the $1$-Local terms, which is why these constraints can be disregarded.  For the same reasons we could have removed the constraint $M_{ij, ik}=0$ for $j\neq k$ from this relaxation, however we have left them in so the reader realizes the Cholesky vectors corresponding to a single qubit can be taken as orthonormal, without loss of generality.  Our analysis makes no use of this, but we believe it is a potentially very useful constraint.  Extensions of these ideas may crucially require it.  \Cref{prob:5} does implicitly force consistency of single qubit density matrices, since the relaxation forces $M_{kl, 0}$ terms to correspond to $1-$local moments.  

\subsection{Rounding Approach and Formal Statement of Results}

\paragraph{Overview.} In classical SDP-based rounding schemes, one typically seeks to randomly ``round'' unit vectors $\mathbf{v}_i \in \mathbb{R}^d$ to scalars $z_i \in \{\pm 1\}$ so that the expected value of $z_i z_j$ approximates $\mathbf{v}_i\cdot\mathbf{v}_j$.  The seminal hyperplane rounding scheme of Goemans and Williamson~\cite{G95} achieves this by selecting a random unit vector $\mathbf{r} \in \mathbb{R}^d$ and setting $z_i = \mathbf{r} \cdot \mathbf{v}_i / |\mathbf{r} \cdot \mathbf{v}_i |$.

Rounding solutions from SDP relaxations for QLH to product states generalizes this approach.  Recall that a product state has the form $\ket{\psi} = \ket{\psi_1}\otimes \ldots \otimes \ket{\psi_n}$ where each $\ket{\psi_i} \in \mathbb{C}^2$ is a local state on qubit $i$.  We obtain a density matrix $\rho = \ket{\psi}\bra{\psi} = \ket{\psi_1}\bra{\psi_1}\otimes \ldots \otimes \ket{\psi_n}\bra{\psi_n}$, which is a tensor product of single-qubit density matrices $\rho_i := \ket{\psi_i}\bra{\psi_i}$.  Any such $\rho_i$ may be expressed in the Pauli basis as $\rho_i = \frac{1}{2}(\mathbb{I} + \theta_{i1}\sigma^1 + \theta_{i2}\sigma^2 + \theta_{i3}\sigma^3)$, where $\theta_{ik} = \text{Tr}[\sigma^k \rho_i] \in \mathbb{R}$ and $\sum_{k \in [3]} \theta_{ik}^2 = 1$. In particular, product states with $\theta_{i1} = \theta_{i2} = 0$ and $\theta_{i3}^2 = 1$ correspond precisely to the classical setting (see \Cref{sec:classical-quantum} for an explicit connection between the two).  Product states exhibit no entanglement, and we may specifying $\theta_{ik}$ independently for each qubit $i$.  However, instead of producing a single $z_i^2 = 1$ as in the classical case, we must produce a unit vector $\mathbf{\theta_i} = [\theta_{i1}, \theta_{i2}, \theta_{i3}] \in \mathbb{R}^3$ for each $i \in [n]$.  Bri\"et, de Oliveira Filho, and Vallentin were the first to consider such generalizations of scalars to unit vectors, in the context of the Grothendieck problem~\cite{B10,BJ10}, and their analysis has fueled recent approximation algorithms for QLH~\cite{G19,H20}.

The classical Goemans-Williamson rounding scheme obtains the unit vectors $\mathbf{v}_i$ from a Cholesky decomposition of a PSD matrix $V^T V = R \semigeq 0$.  Taking the $\mathbf{v}_i$ to be columns of $V$ yields $R_{i,j} = \mathbf{v}_i \cdot \mathbf{v}_j$. Recent approximation algorithms~\cite{B19,G19,H20} for QLH have mimicked this approach, as do we.  Let $M^* \semigeq 0$ be an optimal solution to \Cref{prob:5} (the $\rho_{ij}^*$ are not necessary to describe the rounding algorithm).  We find a Cholesky decomposition $V^T V = M^*$, and let $\mathbf{v}_{ik} \in \mathbb{R}^d$ be the column of $V$ associated with $\sigma_i^k$ for $i \in [n], k \in [3]$; we may assume $d \leq 3n+1$.
In addition we let $\mathbf{v}_0$ be the column of $V$ corresponding to $\mathbb{I}$.  These are unit vectors as a consequence of Constraints \eqref{sdp-relax:diag-id} and \eqref{sdp-relax:diag}.

We will employ the same rounding algorithm for both the general and strictly quadratic cases.  While previous related works~\cite{B19,G19,H20} have in some cases had to rely on more sophisticated rounding schemes because they have been amenable to analysis, we are able to shed light on what is arguably the most natural generalization of the Goemans-Williamson approach.
%The main idea is to use the Cholesky vectors of $M^*$ to round to Bloch vectors as in \cite{G19}.  There the authors break up the Cholesky vectors up into sets of $3$, where each set corresponds to a particular qubit, i.e.
%\begin{equation}
%[\mathbf{v}_{11}, \mathbf{v}_{12}, \mathbf{v}_{13}], \,\,\, %[\mathbf{v}_{21}, \mathbf{v}_{22}, \mathbf{v}_{23}],...
%\end{equation}
We draw $\mathbf{r}\sim \mathcal{N}(0, \mathbb{I}_d)$, i.e. a multivariate distribution over $d$ independent and standard Gaussian variables. For each qubit, we obtain the desired vector $\mathbf{\theta}_i = [\theta_{i1},\theta_{i2},\theta_{i3}]$ as:
\begin{equation*}
    [\mathbf{v}_{i1}\cdot \mathbf{r},\ \mathbf{v}_{i2}\cdot \mathbf{r},\ \mathbf{v}_{i3}\cdot \mathbf{r}]/Q_i,
\end{equation*}
where $Q_i:=\sqrt{(\mathbf{v}_{i1}\cdot{} \mathbf{r})^2+(\mathbf{v}_{i2}\cdot{} \mathbf{r})^2+(\mathbf{v}_{i3}\cdot{} \mathbf{r})^2}$ is a normalization. 

The classical Max Cut problem corresponds to a strictly quadratic Hamiltonian (see \Cref{sec:classical-quantum} for justification); however, classical Max $2$-SAT and more general Max $2$-CSP have $1$-local terms (i.e., linear terms in $\{\pm 1\}$ variables).  In contrast, strictly quadratic instances of QLHP serve as a quantum generalization of Max $2$-SAT and Max $2$-CSP that have no $1$-local terms.  In order to obtain effective classical approximations in the presence of $1$-local terms, an additional vector $\mathbf{v}_0$ is necessary, representing (scalar) identity. Generally, the vector $\mathbf{v}_0$ is used in conjunction with more sophisticated rounding schemes (e.g. \cite{L02}) to obtain positive expectation from the $1$-local terms.  For the quantum case, relatively simple approaches suffice to get good approximations~\cite{H20}.  Using the vector $\mathbf{v}_0$ is not necessary for the strictly quadratic case, and including it does not affect its approximation.

\paragraph{Rounding algorithm.} The rounding approach described above produces single-qubit density matrices:
\begin{equation*}
\rho_i=\frac{1}{2}\left(\mathbb{I}+\frac{\mathbf{v}_{i1}\cdot \mathbf{r}}{Q_i}\sigma^1+\frac{\mathbf{v}_{i2}\cdot \mathbf{r}}{Q_i}\sigma^2+\frac{\mathbf{v}_{i3}\cdot \mathbf{r}}{Q_i}\sigma^3\right).
\end{equation*}
Hence, on any $1$-local term, $\mathbb{E}[\text{Tr}[\sigma^k \rho_i]]=\mathbb{E} [\mathbf{v}_{ik} \cdot \mathbf{r}/Q_i]=0$ since $Q_i$ is an even function and $\mathbf{v}_{ik} \cdot{} \mathbf{r}$ is an odd function in each entry of $\mathbf{r}$.  Thus, in order to get a nontrivial approximation on $1$-local terms, we will use the vector $\mathbf{v}_0$ to globally flip the sign of the $\mathbf{\theta}_i$ vectors of all qubits, i.e. $\mathbf{v}_{ik}\cdot \mathbf{r}/Q_i \rightarrow sign(\mathbf{v}_0 \cdot \mathbf{r}) (\mathbf{v}_{ik}\cdot \mathbf{r}/Q_i)$.  Since $sign(\mathbf{v}_0 \cdot \mathbf{r})\in \{\pm 1\}$, for quadratic objective terms this factor will cancel out, but for $1$-local terms we will gain additional objective from the correlation of $\mathbf{v}_0 \cdot \mathbf{r}$ and $\mathbf{v}_{ik} \cdot \mathbf{r}$.  Formally, we can state the rounding algorithm, which applies to any instance of QLH, as follows.

\begin{algorithm}Hyperplane rounding for $2$-Local Hamiltonian:\label{alg:2} 
\begin{enumerate}
\item Given some instance of \Cref{prob:QLH} formulate and solve the corresponding instance of \Cref{prob:5}.  Let $M^*$ be the optimal moment matrix obtained from \Cref{prob:5}.  
\item  Find the Cholesky decomposition of $M^*$, obtaining Cholesky vectors $\mathbf{v}_0 \in \mathbb{R}^d$ and $\{\mathbf{v}_{ik} \in \mathbb{R}^d\}$ such that $M^*(\sigma_i^k,\sigma_j^l)=\mathbf{v}_{ik} \cdot \mathbf{v}_{jl}$ and $M^*(\mathbb{I},\sigma_i^k)=\mathbf{v}_0 \cdot \mathbf{v}_{ik}$ for $i,j \in [n]$ and $k,l \in [3]$.
\item  Let $\mathbf{r}$ be a random vector with $\mathbf{r}\sim \mathcal{N}(0, \mathbb{I}_d)$.
\item  For each qubit $i$, set $Q_i=\sqrt{(\mathbf{v}_{i1}\cdot \mathbf{r})^2+(\mathbf{v}_{i2}\cdot \mathbf{r})^2+(\mathbf{v}_{i3} \cdot \mathbf{r})^2}$, and set $\theta_{ik}=sign(\mathbf{v}_0\cdot  \mathbf{r})(\mathbf{v}_{ik}\cdot \mathbf{r}/Q_i)$ for $k \in [3]$.
\item  Output the (pure) state:
\begin{equation*}
\rho =\bigotimes_{i=1}^n \frac{1}{2}(\mathbb{I} +\theta_{i1} \sigma^1 +\theta_{i2} \sigma^2+\theta_{i3}\sigma^3).
\end{equation*}
\end{enumerate}
\end{algorithm}

%In the strictly quadratic case we use essentially the same algorithm, with the main difference being that we do not have to cover the $1-$local terms, and hence do not have to use an additional index $0$:

%\begin{algorithm}\label{alg:1}
%\begin{enumerate}
%\item  Given some instance of \Cref{prob:1} formulate and solve the corresponding instance of \Cref{prob:2}.  Let $M^*$ be the optimal moment matrix obtained from \Cref{prob:2}.  
%\item  Find the Cholesky decomposition of $M^*$, obtaining Cholesky vectors $\{\mathbf{v}_{ij}\}$ such that $M_{ij, kl}^*=\mathbf{v}_{ij}^T \mathbf{v}_{kl}$.
%\item  Let $\mathbf{r}$ be a random vector with i.i.d entries distributed according to $\mathbf{r}_i\sim \mathcal{N}(0, 1)$, with size matching the vectors $\mathbf{v}_{ij}$.  
%\item  For each qubit, $i$, set $Q_i=\sqrt{(\mathbf{r}\cdot{} \mathbf{v}_{i1})^2+(\mathbf{r}\cdot{} \mathbf{v}_{i2})^2+(\mathbf{r}\cdot{} \mathbf{v}_{i3})^2}$ and set $\theta_{i1}=(\mathbf{r}\cdot{} \mathbf{v}_{i1})/Q_i$, $\theta_{i2}=(\mathbf{r}\cdot{} \mathbf{v}_{i2})/Q_i$, $\theta_{i3}=(\mathbf{r}\cdot{} \mathbf{v}_{i3})/Q_i$.
%\item  Output the (pure) state:
%\begin{equation}
%\rho =\bigotimes_{i=1}^n \left(\frac{\mathbb{I} +\theta_{i1} \sigma^1 +\theta_{i2} \sigma^2+\theta_{i3}\sigma^3}{2} \right)
%\end{equation}
%\end{enumerate}
%\end{algorithm}
We will give the following approximation guarantees for QLHP:

\begin{theorem}\label{thm:7}
Fix $k\in \{1, 2, 3\}$.  Suppose we are given an instance of QLHP (\Cref{prob:QLHP}), $\{H_e\}$ where $H_e= w_e P_e\otimes \mathbb{I}_{[n]\setminus\{e_1,e_2\}}$ for $w_e\geq 0$ and $P_e$ a projector of rank at least $k$, for all $e \in E$. Let $M^*$ be the optimal moment matrix for the corresponding SDP relaxation, \Cref{prob:5}, and let $\rho$ be the random output of \Cref{alg:2}.  Then,
\begin{equation*}
\mathbb{E} \left[ \text{Tr} \left[ \sum_{e \in E} H_e\ \rho\right]\right]\geq \alpha(k) \left( \sum_{e \in E} w_e\left(\frac{rank(P_e)}{4}+\text{Tr}[C_e M^*]\right)\right)
\geq \alpha(k)\, \lambda_{max} \left(\sum_{e \in E}  H_e \right),\\
\end{equation*}
where
\begin{alignat*}{1}
\alpha(k) =\begin{cases}
2/\pi-1/4 \approx 0.387 &\text{if $k=1$}\\
16/(9\pi) \approx 0.565 &\text{if $k=2$}\\
3/8+11/(9\pi) \approx 0.764 &\text{if $k=3$}.
\end{cases}
\end{alignat*}

\end{theorem}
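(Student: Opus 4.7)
The plan is to reduce the first inequality of the theorem to an edge-by-edge argument; the second inequality is the SDP relaxation property already established above. By linearity of expectation and nonnegativity of $w_e$, it suffices to show for each edge $e = ij$ with projector $P_e$ of rank $r \geq k$ that
\[
\mathbb{E}[\text{Tr}[P_e\,\rho_i\otimes\rho_j]] \geq \alpha(k)\bigl(r/4 + \text{Tr}[C_e M^*]\bigr).
\]
Since $\alpha(1) < \alpha(2) < \alpha(3)$ is monotone increasing and rank-$4$ projectors on two qubits are just $\mathbb{I}_4$ (trivially giving ratio $1$), it is enough to prove, separately for each rank $r \in \{1,2,3\}$, the stronger bound with $\alpha(k)$ replaced by $\alpha(r)$. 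Because $\rho$ is a product state, expanding $\rho_i = \tfrac{1}{2}(\mathbb{I} + \sum_{a}\theta_{ia}\sigma^a)$ and $P_e$ in the Pauli basis decomposes the left side into a constant piece $r/4$ from the identity term of $P_e$, a sum of $1$-local contributions proportional to $\mathbb{E}[\theta_{ia}]$, and a sum of $2$-local contributions proportional to $\mathbb{E}[\theta_{ia}\theta_{jb}]$. The global sign $\mathrm{sgn}(\mathbf{v}_0\cdot\mathbf{r})$ squares to $1$ inside the $2$-local terms and is precisely what produces nonzero expectation in the $1$-local ones.

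Next I would reduce $P_e$ to a canonical form with only a handful of parameters. Conjugating $P_e$ by local unitaries $U_i\otimes U_j$ rotates the Bloch vectors $(\theta_{i1},\theta_{i2},\theta_{i3})$ and $(\theta_{j1},\theta_{j2},\theta_{j3})$ by independent elements of $SO(3)$. Since $\mathbf{r}\sim \mathcal{N}(0,\mathbb{I}_d)$ is rotationally invariant, such rotations can be absorbed into the Cholesky vectors $\mathbf{v}_{ia}$ without changing the joint distribution of the $\theta_{ia}$, while $\text{Tr}[C_e M^*]$ is invariant by the same token. The two-qubit projector classification of \cite{G16} then lets us assume $P_e$ takes a Schmidt-like standard form governed by a short list of real parameters that depend on $r$. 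With $P_e$ in canonical form the expectation becomes an explicit low-dimensional Gaussian integral, which I would evaluate by expanding the rounding functions $g_{ia}(\mathbf{r}) := (\mathbf{v}_{ia}\cdot \mathbf{r})/Q_i$ and $\mathrm{sgn}(\mathbf{v}_0\cdot \mathbf{r})\,g_{ia}(\mathbf{r})$ in Hermite polynomials. Orthogonality collapses every $2$-local moment $\mathbb{E}[g_{ia}g_{jb}]$ to a power series in $\mathbf{v}_{ia}\cdot \mathbf{v}_{jb} = M^*(\sigma_i^a,\sigma_j^b)$ and every $1$-local moment to a series in $\mathbf{v}_0\cdot \mathbf{v}_{ia} = M^*(\mathbb{I},\sigma_i^a)$, so both sides of the desired inequality become explicit analytic functions of the same SDP inner products, extending the Hermite-expansion analysis of \cite{BJ10, B18}.

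The final step is to minimize the resulting ratio uniformly over the admissible region of canonical-form parameters and SDP inner products consistent with $M^*\succeq 0$ and $\rho_{ij}^*\succeq 0$. I expect this worst-case minimization to be the main obstacle: one must identify the extremal configurations analytically and verify that the low-order Hermite coefficients dominate so that the resulting lower bound matches the stated constants. For $r=1$ the extremal calculation reduces essentially to the hyperplane-rounding Gaussian integral treated in \cite{BJ10}; for $r=2,3$ one carefully sums the contributions of the orthogonal rank-$1$ constituents of $P_e$ in standard form, after which $2/\pi - 1/4$, $16/(9\pi)$, and $3/8 + 11/(9\pi)$ fall out as closed-form evaluations of the relevant one- and two-dimensional Gaussian integrals.
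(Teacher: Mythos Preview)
Your high-level outline (edge-by-edge reduction, separate the identity/$1$-local/$2$-local pieces, Hermite analysis, minimize the resulting ratio) matches the paper, but two of the key steps are misidentified and would not go through as stated.

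First, the claim that ``orthogonality collapses every $2$-local moment $\mathbb{E}[g_{ia}g_{jb}]$ to a power series in $\mathbf{v}_{ia}\cdot\mathbf{v}_{jb}$'' is false. Since $g_{ia}=(\mathbf{v}_{ia}\cdot\mathbf r)/Q_i$ and $Q_i$ depends on all three vectors $\mathbf v_{i1},\mathbf v_{i2},\mathbf v_{i3}$, the expectation $\mathbb{E}[g_{ia}g_{jb}]$ depends on the entire $3\times 3$ block $V_i^TV_j$, not on a single entry. Putting $P_e$ into a Schmidt-like canonical form does not fix this: you are left with a diagonal $C$ but a general covariance, and the Hermite expansion does not collapse. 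The paper's actual reduction (\Cref{lem:6}) goes the other way: it takes an SVD of $V_i^TV_j$ so that the covariance becomes $\Sigma'(a,b,c)$ with \emph{diagonal} off-blocks, and only then do the off-diagonal entries of the rotated projector matrix vanish by a parity argument. After this, one needs the constraints $[a,b,c]\in\mathcal S$ and $[p,q,r]\in\mathcal P_k$ from \Cref{lem:14}; the ``Schmidt-like standard form of $P_e$'' is not what is used.

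Second, the endgame is not a matter of ``closed-form evaluations of one- and two-dimensional Gaussian integrals,'' and decomposing rank-$2$ or rank-$3$ projectors into rank-$1$ pieces is not how the constants arise. The $1$-local terms are handled via a separate reduction (\Cref{lem:18}) that produces the hypergeometric expectation of \Cref{lem:19}, which is then linearly bounded (\Cref{lem:20}). For the $2$-local part one truncates the Hermite series at first order (\Cref{lem:13} with $Q=\{(1,0,0)\}$). At that point the ratio is a rational function of six coupled parameters subject to the polytope constraints of \Cref{lem:14}; the paper ``coarse-grains'' by substituting $v_1=ap+bq+cr$, $v_2=t_id_i$, $v_3=t_jd_j$ and then checks the extreme points of the resulting three-dimensional polytope (\Cref{lem:21}). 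The numbers $2/\pi-1/4$, $16/(9\pi)$, and $3/8+11/(9\pi)$ come out of that extreme-point check, not from summing rank-$1$ contributions.
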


\begin{theorem}\label{thm:1}
If, in addition to the assumptions of \Cref{thm:7}, the $P_e$ are strictly quadratic projectors, then the random output of \Cref{alg:2} satisfies:
\begin{equation*}
\mathbb{E} \left[ \text{Tr} \left[ \sum_{ij \in E} H_{ij} \rho\right]\right]
\geq \alpha(k) \lambda_{max} \left(\sum_{ij \in E}  H_{ij} \right),
\end{equation*}
where 
\begin{alignat*}{1}
\alpha(k) =\begin{cases}
22/(15\pi) \approx 0.467 &\text{if $k=1$}\\
1/3+24/(25 \pi) \approx 0.639 &\text{if $k=2$}\\
1/2+388/(405 \pi) \approx 0.804 &\text{if $k=3$}.
\end{cases}
\end{alignat*}
\end{theorem}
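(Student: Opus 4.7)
\medskip

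The plan is to argue edge-by-edge against the SDP value: by linearity of expectation,
\[
\mathbb{E}\Big[\text{Tr}\Big[\rho\sum_{ij\in E}H_{ij}\Big]\Big]
=\sum_{ij\in E} w_{ij}\,\mathbb{E}\big[\text{Tr}[P_{ij}\,\rho_i\otimes\rho_j]\big],
\]
so it suffices to show that each summand is at least $\alpha(k)$ times the SDP contribution $w_{ij}(\text{Tr}[P_{ij}]/4+\text{Tr}[C_{ij}M^*])$. First I would exploit the strictly quadratic assumption: expanding $P_{ij}=\frac{\text{Tr}[P_{ij}]}{4}\mathbb{I}\otimes\mathbb{I}+\sum_{k,l\in[3]}\alpha_{kl}\sigma^k\otimes\sigma^l$ and using $\text{Tr}[\sigma^a\rho_i]=\theta_{ia}$, the per-edge expectation becomes
\[
\frac{\text{Tr}[P_{ij}]}{4}+\sum_{k,l\in[3]}\alpha_{kl}\,\mathbb{E}[\theta_{ik}\theta_{jl}].
\]
Because each product $\theta_{ik}\theta_{jl}$ carries two copies of $\mathrm{sign}(\mathbf{v}_0\cdot\mathbf{r})$, the global sign cancels, and we are left with
$\mathbb{E}[\theta_{ik}\theta_{jl}]=\mathbb{E}\big[(\mathbf{v}_{ik}\cdot\mathbf{r})(\mathbf{v}_{jl}\cdot\mathbf{r})/(Q_iQ_j)\big]$
under $\mathbf{r}\sim\mathcal{N}(0,\mathbb{I}_d)$. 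Thus $\mathbf{v}_0$ plays no role here, in agreement with the remark preceding the algorithm.

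The second step is a reduction to a low-parameter standard form for the ratio. Since the $SU(2)$ action on each qubit induces an $SO(3)$ rotation on the local Pauli basis, applying local unitaries to both qubits diagonalizes $[\alpha_{kl}]_{k,l\in[3]}$ by SVD; correspondingly the triples $\{\mathbf{v}_{ik}\}_{k\in[3]}$ and $\{\mathbf{v}_{jl}\}_{l\in[3]}$ undergo the same rotations. The SDP constraints \eqref{sdp-relax:diag} and \eqref{sdp-relax:diag-block-zero} imply that each triple is orthonormal, and rotational invariance of $\mathcal{N}(0,\mathbb{I}_d)$ lets me further fix one triple, so the only remaining data is the $3\times3$ Gram matrix $G$ between the two triples together with the (at most three) singular values $s_1,s_2,s_3$ of $\alpha$ and the trace $\text{Tr}[P_{ij}]=k$. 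For a rank-$k$ projector, a standard constraint on these singular values (e.g.\ from a parameterization of two-qubit projectors such as \cite{G16}) pins down the admissible $(s_1,s_2,s_3)$.

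The third step is to evaluate, or at least sharply bound, the Gaussian expectations $\mathbb{E}[\theta_{ik}\theta_{jl}]$ as explicit functions of $G$. I would write the normalized Gaussian ratios $\mathbf{v}_{ik}\cdot\mathbf{r}/Q_i$ in the Hermite orthonormal basis for $\mathcal{N}(0,\mathbb{I}_d)$; the zeroth level vanishes (these are odd) and the first level contributes a term proportional to $\mathbf{v}_{ik}\cdot\mathbf{v}_{jl}=G_{kl}$, with proportionality constant $\mathbb{E}[\mathbf{n}_1/|\mathbf{n}|]^2$ where $\mathbf{n}\sim\mathcal{N}(0,\mathbb{I}_3)$. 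After accounting for the diagonal structure of $\alpha$ and the orthonormality of each triple, the per-edge ratio reduces to a function only of $(s_1,s_2,s_3)$ and of a single scalar parameter controlling the alignment of the two triples. Matching moments then yields the numerical constants $22/(15\pi)$, $1/3+24/(25\pi)$, and $1/2+388/(405\pi)$ as the minima over the rank-$k$ admissible parameter region for $k=1,2,3$ respectively.

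The main obstacle will be the last optimization: carrying out the Hermite expansion beyond the first level (the higher-degree terms contribute the $\pi^{-1}$ corrections) and showing that the resulting ratio is minimized at the extremal point that produces the stated constant. I expect the minimum to occur when all three Pauli components are ``fully used'' (i.e.\ the triples are rotations of one another and the singular values are as balanced as the rank constraint permits), analogous to the maximally entangled/symmetric worst cases conjectured in Conjecture~\ref{conj:performance:informal}. The technical work is a careful tail bound on the higher Hermite levels, mirroring the approach of \cite{BJ10,B18} adapted to the three-dimensional vector generalization used here.
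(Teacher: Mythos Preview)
Your high-level plan --- per-edge analysis, cancellation of the global sign, reduction to a few parameters via SVD and rotational invariance, then Hermite analysis --- matches the paper. But there is a genuine gap in the reduction step, and a missing structural ingredient in the optimization step.

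\textbf{Which matrix to diagonalize.} You propose to SVD the projector's quadratic block $[\alpha_{kl}]$ and then track the Gram matrix $G=V_i^TV_j$. The paper does the opposite: it applies SVD to $V_i^TV_j=L\Sigma N^T$, rotates to $W_i=V_iL,\ W_j=V_jN$, and transports $C\mapsto C'=L^TCN$. The reason this is the right choice is that once the covariance of $(W_i^T\mathbf r,\,W_j^T\mathbf r)$ is block-diagonal with off-diagonal $\Sigma=\mathrm{diag}(a,b,c)$, the off-diagonal entries of $C'$ contribute \emph{zero} to the expectation by odd parity, so only $(p,q,r)=\mathrm{diag}(C')$ survives. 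Your route leaves a general $3\times 3$ Gram matrix $G$ in the covariance, and the Hermite expansion then genuinely depends on all of $G$ at higher levels (the series is $\sum \hat f_{i,jk}^2\,a^ib^jc^k$ in the paper's coordinates, which collapses only because the covariance is diagonal). Your claimed reduction to ``a single scalar parameter controlling the alignment of the two triples'' is therefore incorrect: in the paper's standard form there are three covariance parameters $(a,b,c)\in\mathcal S$ and three projector parameters $(p,q,r)\in\mathcal P_k$, for six in total.

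\textbf{Missing polytope reduction.} After reaching the six-parameter ratio, the paper does not minimize directly. It first uses a linear-fractional/extreme-point argument (\Cref{lem:10}) to pin $(p,q,r)$ to an extreme point of $\mathcal P_k$, then a symmetry argument (\Cref{lem:17}) to reduce to a single canonical extreme point, so that the minimization is over $(a,b,c)\in\mathcal S$ alone. Only then does the truncated Hermite expansion with a tail bound (\Cref{lem:12,lem:13}) plus elementary calculus (\Cref{lem:9,lem:8,lem:7}) close the argument and produce the constants $22/(15\pi)$, $1/3+24/(25\pi)$, $1/2+388/(405\pi)$. Your sketch (``matching moments\dots minima over the rank-$k$ admissible region'') does not supply this structure. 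Note also that your intuition about the minimizer is off in the $k=2$ case: the minimum is at the \emph{unbalanced} vertex $(p,q,r)=(2,0,0)$ with $(a,b,c)=(1,0,0)$, not a balanced point.
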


The above results are rigorous, but non-optimal.  The quadratic analysis depends crucially on an expansion of a particular expectation in Hermite polynomials.  One can consider a higher order Hermite series to more accurately capture the expectation and achieve a better approximation factor.  We have such results, but opt to not include them in the interest of the reader.  Higher orders bring increased tedium, and our technique should be clear enough at the end of the paper that an interested reader could do the higher order calculation.  

One can ask, why not include a high enough order that the result becomes essentially optimal?  The issue is that polynomial expansions often converge slowly in the presence of discontinuities \cite{P19}.  Indeed, computationally we have determined that to get essentially optimal results one would need to go to high enough order that the polynomial expansion would become intractable.  One can determine the optimal approximation factor by using a high order expansion and numerically optimizing or simply by randomly sampling over some ``net'' of the parameter space.  As stated in the introduction, our observed approximation factors under these approaches are:
\begin{conjecture}
\Cref{alg:2} constitutes a randomized $\beta(k)$ approximation algorithm for strictly quadratic projectors where
\begin{align}
\beta(k) =\begin{cases}
0.498 \text{ if $k=1$}\\
0.653 \text{ if $k=2$}\\
0.821 \text{ if $k=3$}.
\end{cases}
\end{align}
\end{conjecture}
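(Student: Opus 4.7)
The plan is to sharpen the strictly-quadratic analysis underlying \Cref{thm:1} by carrying the Hermite-series expansion to higher order and bounding the resulting remainder rigorously. As in that theorem, linearity of expectation reduces the approximation guarantee to a per-edge ratio, so I would fix an edge $e$ on qubits $i,j$ with a rank-$k$ projector $P_e$ and reduce the triple $(P_e, \{\mathbf{v}_{im}\}_{m=1}^{3}, \{\mathbf{v}_{jm}\}_{m=1}^{3})$ to a canonical form. Applying the SVD of the $3\times 3$ real Pauli-coefficient matrix of the traceless part of $P_e$, together with the rotational invariance of $\mathbf{r}\sim\mathcal{N}(0,\mathbb{I}_d)$, one can absorb the local unitary degrees of freedom into the choice of Pauli frames on qubits $i$ and $j$, leaving the per-edge expectation as a function of just the singular values and the Gram matrix of the projected Cholesky vectors.

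In that canonical form, the key quantity is $\mathbb{E}[(\mathbf{v}_{im}\cdot\mathbf{r})(\mathbf{v}_{jm'}\cdot\mathbf{r})/(Q_iQ_j)]$, which, following \cite{BJ10,B18}, admits a Hermite-series expansion as a power series in the inner products $\mathbf{v}_{im}\cdot\mathbf{v}_{jm'}$ whose coefficients are explicit Gaussian integrals. \Cref{thm:1} retains only the first nontrivial term; to reach $\beta(k)$ one must keep further terms, evaluate the associated integrals in closed form, and minimize the resulting expectation-to-SDP ratio over the bounded parameter space. The extremal configuration should match the examples identified numerically by the authors, and comparison with the two-qubit gap instances of \Cref{thm:gap-examples} would confirm tightness once the lower bound is proved.

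The hard part will be the tail bound. The authors explicitly observe that convergence is slow because $1/Q$ is singular at the origin, so the Hermite coefficients of the rounding kernel decay only polynomially rather than exponentially. To make a truncated expansion rigorous I would either (i) split $1/Q_i$ into a smooth part and a radially singular part, integrate the singular part analytically in polar coordinates so that only the smooth remainder is expanded in Hermite polynomials, or (ii) derive explicit polynomial decay rates for the Hermite coefficients via repeated integration by parts on $\mathbb{R}^{3}$, combined with moment estimates for $1/Q$ obtained from standard chi-squared bounds. Both routes target a tail bound small enough to be absorbed into the gap $\beta(k)-\alpha(k)$ at a manageable truncation level.

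Once the per-edge ratio is written as an explicit function of finitely many parameters (three singular values plus the restricted Gram matrix entries), the remaining constrained minimization is low-dimensional. I would verify the minimum equals $\beta(k)$ by proving monotonicity or convexity in each coordinate in closed form whenever possible, and by computer-assisted interval arithmetic elsewhere. I expect the extremal configurations to be low-rank and highly symmetric, matching the two-qubit instances of \Cref{thm:gap-examples}, which would then certify that $\beta(k)$ cannot be improved by any product-state rounding of \Cref{prob:5}.
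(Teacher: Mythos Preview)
The statement you are attempting to prove is stated in the paper as a \emph{conjecture}, not a theorem; the paper offers no proof and explicitly flags proving it as future work. In particular, the authors write that ``to get essentially optimal results one would need to go to high enough order that the polynomial expansion would become intractable,'' and suggest that techniques from algebraic geometry may be required rather than simply pushing the Hermite expansion further. So there is no paper proof to compare your proposal against; what you have written is a research plan for an open problem.

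As a plan it is reasonable in outline, but two points deserve caution. First, the paper's remark about intractability is a direct warning against exactly the route you sketch: the Hermite coefficients of the rounding kernel decay slowly (polynomially, as you note), and the authors found empirically that the truncation order needed to close the gap $\beta(k)-\alpha(k)$ is too large to handle by the elementary calculus bounds of Lemmas~\ref{lem:9}--\ref{lem:7}. Your suggested fixes (splitting off the radial singularity, or integration-by-parts decay estimates) are plausible but unproven; you would need to show that one of them actually yields a tail bound small enough at a tractable order, which is the crux of the difficulty and not addressed in your proposal. Second, your final paragraph conflates the conjectured ratios $\beta(k)$ with the product-state upper bounds of \Cref{thm:gap-examples}: those bounds are $\tfrac{1}{2},\tfrac{2}{3},\tfrac{5}{6}$, strictly larger than $0.498,0.653,0.821$, so the extremal instances for your per-edge minimization will \emph{not} coincide with the gap instances, and matching them would not certify tightness of $\beta(k)$.
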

\noindent Proving an approximation factor as large as the observed performance of our algorithm (say with techniques from algebraic geometry) is the subject of future work.  

\subsection{Analysis Overview} \label{sec:analysis-overview}
We present an overview of our analysis for the strictly quadratic case, which will also carry over to the general case with additional bookkeeping and bounding for the $1$-local terms.  Suppose we are given an instance of QLHP (\Cref{prob:QLHP}) on which we execute \Cref{alg:2} to produce a random solution $\rho$.  
For $i,j \in [n]$, the $2$-qubit marginals of $\rho$ are 
\begin{equation}
\rho_{ij} = \frac{1}{4}(\mathbb{I}+\theta_{i1} \sigma^1 +\theta_{i2}\sigma^2 +\theta_{i3}\sigma^3)\otimes(\mathbb{I}+\theta_{j1} \sigma^1 +\theta_{j2}\sigma^2 +\theta_{j3}\sigma^3),
\end{equation}
and the objective value of $\rho$ is $\sum_{e \in E} \text{Tr}[H_e \rho] = \sum_{e \in E} w_e APX_e$, where $APX_e := \text{Tr}[P_e \rho_{e_1e_2}]$ is the unweighted contribution to the objective value from edge $e$. Let $M^*$ and $\rho_{e_1e_2}^*$ for $e \in E$ be the SDP solution obtained by \Cref{alg:2}, and let $SDP_e = \text{Tr}[P_e \rho_{e_1e_2}^*]$ be the unweighted contribution to the SDP objective value from edge $e$. The approximation ratio, which we seek to bound from below, is consequently:
\begin{equation*}
\alpha = \mathbb{E} \left[ \frac{\sum_{e \in E} w_e APX_e}
  {\sum_{e \in E} w_e SDP_e} \right]
= \frac{\sum_{e \in E} w_e \mathbb{E}[APX_e]}
  {\sum_{e \in E} w_e SDP_e }.
\end{equation*}
Observe that $APX_e \geq 0$ and $SDP_e \geq 0$ since $P_e$, $\rho_{e_1e_2}$, and $\rho_{e_1e_2}^*$ are all PSD.  
Since all the terms in the denominator are nonnegative, it follows from an elementary argument that
\begin{equation*}
\alpha \geq \frac{\sum_{e \in E} w_e \mathbb{E}[APX_e]}{\sum_{e \in E} w_e SDP_e} \geq \min_{e \in E} \frac{\mathbb{E}[APX_e]}{SDP_e}.
\end{equation*}
Thus it suffices to bound the approximation ratio for the worst case occurring on a single edge.

\paragraph{Bounding a worst-case edge.} We now focus our attention on a single edge $e=12$ on qubits $1,2$.  
We collect the vectors $\mathbf{v}_{ik}$, obtained from a Cholesky decomposition of the SDP solution $M^*$, into matrices $V_i = [\mathbf{v}_{i1}, \mathbf{v}_{i2}, \mathbf{v}_{i3}] \in \mathbb{R}^{d \times 3}$, for $i = 1,2$.  We define an objective matrix $C \in \mathbb{R}^{3 \times 3}$, containing scaled $2$-local Pauli-basis coefficients of $P_{12}$, in the vein of \Cref{eq:52}: $C(\sigma_1^k,\sigma_2^l) := \text{Tr}[\sigma^k \otimes \sigma^l\ P_{12}]$, for $k,l \in [3]$ (note that $C$ is not symmetric).  With these definitions in hand, observe that $V_i^T V_i = \mathbb{I}_3$, by the SDP constraints \eqref{sdp-relax:diag} and \eqref{sdp-relax:diag-block-zero}, and that $4\text{Tr}[C_{12}M^*] = \text{Tr}[V_1CV_2^T]$.  The hyperplane rounding produces unit vectors $\mathbf{\theta}_i^T = [\theta_{i1},\theta_{i2},\theta_{i3}] = \mathbf{r}^T V_i/||V_i^T\mathbf{r}||$. In terms of these variables, we have:
\begin{gather*}
    SDP_e = \text{Tr}[P_{12} \rho_{12}^*] = \frac{1}{4}\left( rank(P_{12}) + \text{Tr}[V_1CV_2^T] \right), \text{ and }\\
    \mathbb{E}[APX_e] = \mathbb{E}[\text{Tr}[P_{12} \rho_{12}]] = \frac{1}{4}\left( rank(P_{12}) + \mathbb{E}  \left[ \mathbf{\theta}_1^T C \mathbf{\theta}_2 \right]\right) = \frac{1}{4}\left( rank(P_{12}) + \mathbb{E}_{\mathbf{r}}  \left[ \frac{\mathbf{r}^T V_1 C V_2^T\mathbf{r}}{||V_1^T \mathbf{r}|| \ ||V_2^T \mathbf{r}||}\right] \right),
\end{gather*}
by \Cref{eq:SDP-objective-value} and because $\text{Tr}[P_{12}] = rank(P_{12})$, since $P_{12}$ is a projector. Thus, setting $k = rank(P_{12})$, the quantity we seek to bound is
\begin{equation*}
    \alpha \geq \min_{V_1,V_2,C} \frac{k + \mathbb{E}_{\mathbf{r}}  \left[ \frac{\mathbf{r}^T V_1 C V_2^T\mathbf{r}}{||V_1^T \mathbf{r}|| \ ||V_2^T \mathbf{r}||}\right]} {k + \text{Tr}[V_1CV_2^T]}.
\end{equation*}
The bulk of our analysis lies in (i) simplifying the above to reduce the number of parameters in the minimization and expectation (\Cref{sec:22}), and (ii) deriving analytical bounds on the expectation (\Cref{sec:23}).

\paragraph{Simplifying the Gaussian expectation.}
The first simplification follows from observing that $V_i^T \mathbf{r} \in \mathbb{R}^3$ are multivariate Gaussians for $i=1,2$ since they are linear combinations of Gaussians, $\mathbf{r}\sim \mathcal{N}(0,\mathbb{I})$. If we let $\mathbf{z}^T = [z_1, z_2, z_3] = \mathbf{r}^T V_1$ and $(\mathbf{z'})^T = [z_1', z_2', z_3'] = \mathbf{r}^T V_2$, then $[\mathbf{z},\mathbf{z'}]\sim \mathcal{N}(0,\Sigma)$, where
\begin{equation*}
\Sigma = \begin{bmatrix}
\mathbb{I} & V_1^T V_2\\
V_2^T V_1 & \mathbb{I}
\end{bmatrix} \in \mathbb{R}^{6 \times 6}.
\end{equation*}
The Gaussians $z_i$ are mutually independent as well as the $z_i'$, and the covariance between $\mathbf{z}$ and $\mathbf{z'}$ is given by $M = V_1^T V_2 \in \mathbb{R}^{3 \times 3}$. Our bound now depends on a constant number of parameters, the 18 entries of $C$ and $M$:
\begin{equation}\label{eq:overview:gaussian-approx-bound}
    \alpha \geq \min_{M,C} \frac{k + \mathbb{E}_{\mathbf{z},\mathbf{z'}}  \left[ \frac{\mathbf{z}^T C \mathbf{z'}}{||\mathbf{z}|| \ ||\mathbf{z'}||}\right]} {k + \text{Tr}[C^T M]}.
\end{equation}
For classical hyperplane rounding algorithms, $C$ and $M$ simply reduce to scalars, and one may resort to a numerical argument to furnish the desired bound. However, in the case of QLH above, numerical bounds exhibit poor precision or convergence due to the number of parameters.  Thus we press on, and our next observation is that only the singular values of $M$ matter for the analysis. The above arguments are detailed in \Cref{lem:6} in \Cref{sec:16}, which also shows that we may assume:
\begin{equation}\label{eq:overview:diag-C-M}
C = \begin{bmatrix}
p & 0 & 0\\
0 & q & 0\\
0 & 0 & r\\
\end{bmatrix}\text{ and } 
M = \begin{bmatrix}
a & 0 & 0\\
0 & b & 0\\
0 & 0 & c\\
\end{bmatrix},
\end{equation}
where $a,b,c$ are the singular values of $V_1^T V_2$.  This reduction to 6 parameters puts analysis of $\alpha$ within reach.  The special case when $a=b=c$ turns out to be equivalent to the recently studied quantum analog of Max Cut related to the quantum Heisenberg model~\cite{G19,A20}.  For this case, a representation of the expectation, 
\begin{equation}\label{eq:overview:expectation}
\mathbb{E}_{\mathbf{z},\mathbf{z'}}  \left[ \frac{\mathbf{z}^T C \mathbf{z'}}{ ||\mathbf{z}|| \ ||\mathbf{z'}||} \right]
\end{equation}
as a hypergeometric function follows from work of Bri\"et, de Oliveira Filho, and Vallentin (the expectation ends up being equivalent to the one in Lemma 2.1 from~\cite{BJ10}, when $\mathbf{u} \cdot \mathbf{v}$ in the lemma equals $a=b=c$). To the best of our knowledge, no elementary representation is known when $a,b,c$ may be distinct.  We appeal to Hermite analysis to express the expectation~\eqref{eq:overview:expectation} as a polynomial that we are subsequently able to bound; this is carried out in Lemmas \ref{lem:26}-\ref{lem:21}/\Cref{sec:23}.

\paragraph{Introducing constraints from positivity.}  The matrices $C$ and $M$ from~\eqref{eq:overview:diag-C-M} are related to the quadratic Pauli-basis coefficients of $P_{12}$ and $\rho_{12}^*$, respectively. The other ingredient of our analysis of the bound~\eqref{eq:overview:gaussian-approx-bound} is restricting $C$ and $M$ based on the facts that $P_{12} \semigeq 0$ and $\rho_{12}^* \semigeq 0$, which is undertaken in \Cref{sec:15} and \Cref{lem:14}.  The bound we obtain is
\begin{align}\label{eq:84}
\alpha \geq \min_{\substack{[a, b, c]\in \mathcal{S}\\ [p, q, r]\in \mathcal{P}_k}}\frac{k+\mathbb{E}_{\mathbf{z},\mathbf{z'}} \left[ \frac{p z_1 z_1' +q z_2 z_2' +r z_3 z_3'}{\sqrt{(z_1^2+z_2^2+z_3^2)((z_1')^2+(z_2')^2+(z_3')^2)}}\right]}{k+a p+bq+cr},
\end{align}
where $\mathcal{S}$ and $\mathcal{P}_k$ are specific polytopes ($\mathcal{S}$ is a simplex as is $\mathcal{P}_k$ for $k\not=2$) derived from the positivity of $P_{12}$ and $\rho_{12}^*$.  Finally, Lemmas \ref{lem:26} and \ref{lem:21} in \Cref{sec:23} derive the bounds in the main theorems \ref{thm:7} and \ref{thm:1}, respectively.

\paragraph{Analysis roadmap.}  First, we describe the notations which will be used throughout the paper in \Cref{sec:notation}.  Following this, in \Cref{sec:22}, we will describe our reduction of the expectation of an edge to a standard form (\Cref{sec:16}), as well as the tools required for the proof of these lemmas (primarily \Cref{lem:14} and singular value decomposition).  In \Cref{sec:23} we will cover the main technical lemmas in this work, \Cref{lem:26} and \Cref{lem:21}.  These are simply lower bounds on a constrained minimization of a particular expectation, e.g.\ \Cref{eq:84}.  Proving these lemmas is the brunt of the technical work in this paper, and will require tools from orthogonal polynomials (\Cref{sec:9}), optimization (\Cref{sec:10}), and some elementary calculus (\Cref{sec:11}).

\section{Notations}\label{sec:notation}
We will adopt several mostly standard notations throughout the paper.  Note that $[n]:=\{1, 2, ..., n\}$ takes its standard definition, as well as the double factorial: $a!! := a(a-2)...4\cdot{}2$ if $a$ is even and $a!!:= a(a-2)(a-4)...3\cdot{} 1$ if $a$ is odd.  By convention, $(-1)!!=1=0!!$.  The notation $\mathbf{z} \sim \mathcal{N} (\mu, \Sigma)$ will be used to indicate that $\mathbf{z}$ is a multivariate normal distribution with mean $\mu$ and covariance matrix $\Sigma$.  Consistent with the notation from \cite{O14}, we define $L^2(\mathbb{R}^n, \gamma)$ as the set of multivariate functions which are square integrable under the Gaussian measure.  More formally $f:\mathbb{R}^n \rightarrow \mathbb{R}$ is an element of $L^2(\mathbb{R}^n, \gamma)$ if:
\begin{equation}
\int_{\mathbb{R}^n} |f(\mathbf{x})|^2 d\gamma(\mathbf{x}) := \int_{\mathbb{R}^n} |f(\mathbf{x})|^2  \frac{e^{-||\mathbf{x}||^2/2}}{(2\pi)^{n/2}} d\mathbf{x} < \infty.
\end{equation}
\noindent We define an inner product on this space as:
\begin{equation}\label{eq:8}
\langle f, g \rangle =\int_{\mathbb{R}^n} f(\mathbf{x}) g(\mathbf{x}) d\gamma(\mathbf{x}).
\end{equation}
\noindent (Not to be confused with ``Bra-ket'' notation, meaning will always be obvious from context)

We will make use of the Cholesky decomposition \cite{V13}.  Let $M$ be a real, symmetric, $n\times n$ PSD matrix.  The Cholesky decomposition guarantees the existence of efficiently-computable vectors $\{\mathbf{v}_i\}_{i=1}^n$, each of which has dimension at most $n$, such that $M_{ij}=\mathbf{v}_i^T \mathbf{v}_j$.  We will refer to the $\mathbf{v}_i$ simply as the Cholesky vectors of $M$.  

The Gamma and Hypergeometric functions will take their standard definitions:
\begin{equation}
\Gamma(z):= \int_{0}^\infty x^{z-1} e^{-x} dx, \,\,\,\,\,\,\,\text{and}\,\,\,\,\,\,  \,_2F_1 \left[\begin{matrix} a, \,\, b\\ c\end{matrix} ; \,z\right]=\sum_{n=0}^\infty \frac{(a)_n (b)_n}{(c)_n} \frac{z^n}{n!},
\end{equation}
where the rising factorial $(x)_n = 1$ if $n = 0$, or $(x)_n = x(x+1)\ldots(x+n-1)$ if $n > 0$.

We will need to define a convex hull.  Given some set $S=\{\mathbf{v}_1, ..., \mathbf{v}_m\}$ of points in $\mathbb{R}^p$, we denote the convex hull of $S$ as $conv(S)$ or $conv \,\,S$:
\begin{equation}
conv(S)=\{\lambda_1 \mathbf{v}_1+...+\lambda_m \mathbf{v}_m: \sum_i \lambda_i=1, \lambda_i \geq 0 \,\forall\, i\}.
\end{equation}
\noindent We will reserve $\mathcal{S}$, $-\mathcal{S}$ and $\mathcal{T}$ for specific convex hulls:
\begin{gather}\label{eq:71}
\mathcal{S}=conv\,\,\{[-1, -1, -1], [-1, 1, 1], [1, -1, 1], [1, 1, -1]\}\\
\nonumber -\mathcal{S}=conv\,\,\{[1, 1, 1], [1, -1, -1], [-1, 1, -1], [-1, -1, 1]\}\\
\nonumber \mathcal{T}=conv\{[2, 0, 0], [0, 2, 0], [0, 0, 2], [-2, 0, 0], [0, -2, 0], [0, 0, -2]\}
\end{gather}

We reserve $\Sigma': \mathbb{R}^n \rightarrow \mathbb{R}^{2n\times 2n}$ for the matrix-valued function:
\begin{equation}\label{eq:44}
\Sigma'(a_1, a_2, ..., a_n)= \begin{bmatrix}
\mathbb{I}_n & \begin{matrix}a_1 & 0 & \hdots & 0\\0 & a_2 &  & \vdots\\ \vdots &  & \ddots & 0 \\ 0 & \hdots & 0 & a_n\end{matrix} \\
\begin{matrix}a_1 & 0 & \hdots & 0\\0 & a_2 &  & \vdots\\ \vdots &  & \ddots & 0 \\ 0 & \hdots & 0 & a_n\end{matrix}& \mathbb{I}_n
\end{bmatrix}.
\end{equation}
The vector-valued function $diag: \mathbb{R}^{n\times n}\rightarrow \mathbb{R}^n$ has its standard definition: $diag(M)=[M_{11}, M_{22}, ..., M_{nn}]$.

\section{Standard Form for the Expectation of an Edge}\label{sec:22}

\paragraph{Overview.}  Recall we will use linearity of expectation to reduce the expected approximation factor for the algorithm as a whole to the expected approximation factor for the ``worst-case'' edge.  The next step is to reduce the number of parameters we need to consider by converting a generic edge to a ``standard'' form.  Roughly speaking, this is accomplished by applying a singular value decomposition to the moment matrix, followed by the application of some well-known facts concerning the structure of $2$-qubit density matrices and projectors.  In this section we will first give formal statements for the properties of $2$-qubit states we will need (\Cref{sec:15}), then integrate these facts with our SVD argument to provide the main lemmas for this reduction (\Cref{sec:16}).  

\subsection{Moment Matrix Description of Two Qubit Quantum States} \label{sec:15}
Moment matrices for $2$-qubit states are well studied objects \cite{H96, K01, S11, J14, G16} (see \cite{G16} for an extensive list of references), and the lemmas in this section are all presented in some form in these previous works.  It is a well-known fact that tensor product of Pauli matrices form a complete basis for Hermitian operators.  Hence, any Hermitian matrix is completely described by it's coefficients in this basis, and we can associate this Hermitian matrix with some real matrix containing the coefficients.  As an example let us consider a state, $\rho$, or a projector $P$, on two qubits.  Let them have Pauli decompositions: 
\begin{equation}\label{eq:34}
\rho=\frac{1}{4} \sum_{l, m=0}^3 \Gamma_{lm} \sigma^l \otimes \sigma^m\,\,\,\,\,\,\text{or}\,\,\,\,\,\, P=\frac{1}{4} \sum_{l, m=0}^3 \Gamma_{lm} \sigma^l \otimes \sigma^m.
\end{equation}
\noindent We will think of $\Gamma_{lm}$ as comprising a real matrix, which gives a complete description of the corresponding quantum state/projector.
\begin{equation}\label{eq:5}
\Gamma=\begin{bmatrix}
1 & \Gamma_{01} & \Gamma_{02} & \Gamma_{03}\\
\Gamma_{10} & \Gamma_{11} & \Gamma_{12} & \Gamma_{13} \\
\Gamma_{20} & \Gamma_{21} & \Gamma_{22} & \Gamma_{23} \\
\Gamma_{30} & \Gamma_{31} & \Gamma_{32} & \Gamma_{33} 
\end{bmatrix}
=\begin{bmatrix}
1 & \mathbf{v}^T\\
\mathbf{u} & R
\end{bmatrix}.
\end{equation}
By the above we mean that the matrix $R$ corresponds to the submatrix of $\Gamma$ with indices in $\{1, 2, 3\}$, $\mathbf{u}=[\Gamma_{10}, \Gamma_{20}, \Gamma_{30}]^T$ and $\mathbf{v}=[\Gamma_{01}, \Gamma_{02}, \Gamma_{03}]^T$.  We will refer to the vectors $\mathbf{u}$ and $\mathbf{v}$ as the $1$-local parts and $R$ as the quadratic part.  

On several occasions in this work, we will be given an arbitrary $4\times 4$ real matrix $\Gamma$ or a $3\times 3$ matrix $R$.  We will say $\Gamma$ (resp. $R$) is a {\it valid $2$-moment ( resp. valid quadratic $2$-moment)} if it corresponds to a physical density matrix or projector as in \Cref{eq:34}.  

\begin{definition}
\begin{enumerate}
\item  Given some density matrix $\rho$ or projector $P$ as in \Cref{eq:34} we will say that $\Gamma$ (resp. $R$) as in \Cref{eq:5} is it's $2$-moment (resp. quadratic $2$-moment).
\item  Similarly if $\Gamma$ (resp. $R$) is a given $4\times 4$ (resp. $3\times 3$) real matrix, we say $\Gamma$ (resp. $R$) is a {\it valid $2$-moment (resp. valid quadratic $2$-moment)} for a density matrix or projector if it is the $2$-moment (resp. quadratic $2$-moment) of a (physical) density matrix or projector.
\end{enumerate} 
\end{definition}

This notation will be important because moments are more restricted than an arbitrary $3\times 3$ or $4\times 4$ matrix, and these restrictions will imply simplifications which will be crucial in the analysis.

\begin{lemma}\label{lem:14}
Let $P=1/4 \sum_{l, m=0}^3 \Gamma_{lm} \sigma^l \otimes \sigma^m$ be some $2$-qubit projector of rank $k$, and let $\rho=1/4 \sum_{l, m=0}^3 \Delta_{lm} \sigma^l \otimes \sigma^m$ be a density matrix on $2$ qubits.  Let $G$ be the quadratic $2$-moment for $P$, and let $D$ be the quadratic $2$-moment for $\rho$ with $D=L\Sigma N^T$ for $\Sigma$ diagonal and $L$, $N\in SO(3)$.  
\begin{enumerate}
\item  (\cite{H96}, equation $8$)  Let $P$, $Q$ $\in SO(3)$.  Then, $G'=P G Q^T$ (resp. $D'=P D Q^T$) is a valid quadratic $2$-moment for state $U_1 \otimes U_2 \rho U_1^\dagger \otimes U_2^\dagger$ (resp. projector $U_1 \otimes U_2 P U_1^\dagger \otimes U_2^\dagger$)  for some local unitaries $U_1$, $U_2$.
\item  (\cite{H96}, Proposition $1$)  The diagonal elements of $\Sigma$ reside in $\mathcal{S}$ (recall $\mathcal{S}$ was defined in \Cref{eq:71}).

\item \begin{align}\label{eq:59}
\text{If  } \begin{cases}
k=1\\
k=2\\
k=3
\end{cases}
\text{ then }  [\Gamma_{11}, \Gamma_{22}, \Gamma_{33}] \in 
\begin{cases}
\mathcal{S}\\
\mathcal{T}\\
-\mathcal{S}
\end{cases},
\end{align}
\noindent where $\mathcal{S}$, $\mathcal{T}$ and $-\mathcal{S}$ are defined in \Cref{eq:71}.  
\item Further, it holds that 
\begin{equation}\label{eq:60}
    -k \leq 4 \text{Tr}[P \rho] -k \leq 4-k,
\end{equation}

\item \begin{equation}\label{eq:63}
    -k \leq \Sigma_{11}\Gamma_{11}+\Sigma_{22}\Gamma_{22}+\Sigma_{33}\Gamma_{33} \leq 4-k,
\end{equation}

\item  \begin{equation}\label{eq:65}
    ||[\Gamma_{10}, \Gamma_{20}, \Gamma_{30}]||, \,\, ||[\Gamma_{01}, \Gamma_{02}, \Gamma_{03}]|| \leq \begin{cases} 1 \text{ if $k=1$ or $3$}\\ 2 \text{ if $k=2$}\end{cases},
\end{equation}\\
\item  and 
\begin{equation}\label{eq:64}
    ||[\Delta_{10}, \Delta_{20}, \Delta_{30}]||, \,\, ||[\Delta_{01}, \Delta_{02}, \Delta_{03}]|| \leq 1.
\end{equation}
\end{enumerate}
\end{lemma}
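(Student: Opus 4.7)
The plan is to prove the seven claims in sequence, using Part 1 as the workhorse that lets the later parts be checked in a convenient diagonal frame. For Part 1 I invoke the $SU(2)\to SO(3)$ double cover: conjugation by $U\in SU(2)$ acts on Paulis as $U\sigma^k U^\dagger=\sum_\ell R_{k\ell}\sigma^\ell$ with $R\in SO(3)$, and the induced map $U\mapsto R$ is surjective. Tensoring shows that conjugation of $\rho$ (or $P$) by $U_1\otimes U_2$ transforms the Pauli coefficients via $\Delta_{lm}\mapsto\sum_{l'm'}R^{(1)}_{ll'}R^{(2)}_{mm'}\Delta_{l'm'}$, which is precisely $R^{(1)}D(R^{(2)})^T$ on the quadratic block; surjectivity then realizes any prescribed pair in $SO(3)$.

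For Part 2, I apply Part 1 to bring $D$ into diagonal form $\Sigma$ via its SVD, absorbing one reflection into $\Sigma$ (hence allowing negative entries) to keep both singular-vector matrices in $SO(3)$. I then twirl the state by the Pauli subgroup $\{\sigma^k\otimes\sigma^k\}_{k=0}^{3}$: a direct sign-count using $\sigma^k\sigma^l\sigma^k=\pm\sigma^l$ shows this twirl preserves the diagonal quadratic block while annihilating both linear blocks $\mathbf{u},\mathbf{v}$ and all off-diagonal quadratic entries. The result is a Bell-diagonal state whose four Bell-basis eigenvalues take the form $\tfrac14(1\pm\Sigma_{11}\pm\Sigma_{22}\pm\Sigma_{33})$, and requiring all four to be nonnegative is exactly $[\Sigma_{11},\Sigma_{22},\Sigma_{33}]\in\mathcal{S}$. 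For Part 3 I diagonalize $G$ via Part 1 and split by rank. For $k=1$, $P$ is a pure state, so writing it in Schmidt form $\cos\theta\ket{00}+\sin\theta\ket{11}$ gives quadratic diagonal $(\sin 2\theta,-\sin 2\theta,1)$, manifestly a convex combination of the vertices $(1,-1,1)$ and $(-1,1,1)$ of $\mathcal{S}$. For $k=3$, $\mathbb{I}$ has vanishing quadratic moment, so $P=\mathbb{I}-P'$ with $P'$ rank $1$ yields $G=-G'$ and diagonal in $-\mathcal{S}$. For $k=2$, I would decompose $P=P_1+P_2$ into orthogonal rank-$1$ projectors and combine the $k=1$ diagonal information for each $P_j$ with the orthogonality constraint $P_1P_2=0$, which at the level of Pauli coefficients forces the $\ell_1$-bound $|\Gamma_{11}|+|\Gamma_{22}|+|\Gamma_{33}|\leq 2$ characterizing the octahedron $\mathcal{T}$.

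Parts 4--7 then reduce to elementary bookkeeping. Part 4 follows from $0\semileq P\semileq\mathbb{I}$ and $\rho\semigeq 0$, $\text{Tr}\,\rho=1$, giving $0\leq\text{Tr}[P\rho]\leq 1$. For Part 5, Pauli orthogonality yields $4\text{Tr}[P\rho]=k+\sum_{(l,m)\neq(0,0)}\Gamma_{lm}\Delta_{lm}$; the local-unitary reduction used in \Cref{lem:6} lets me simultaneously diagonalize $D$ to $\Sigma$ and $G$ to $\text{diag}(\Gamma_{11},\Gamma_{22},\Gamma_{33})$, collapsing the cross sum to $\sum_\ell\Sigma_{\ell\ell}\Gamma_{\ell\ell}$, whereupon Part 4 gives the bounds in \eqref{eq:63}. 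For Part 7, $[\Delta_{10},\Delta_{20},\Delta_{30}]$ is precisely the Bloch vector of the single-qubit marginal $\text{Tr}_B[\rho]$, which has norm at most $1$. Part 6 then follows by applying Part 7 to $P$ itself when $k=1$ (a pure-state density matrix), to the density matrix $P/2$ when $k=2$ (producing the factor of $2$), and to $P'=\mathbb{I}-P$ when $k=3$ (under which the linear blocks flip sign).

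The main obstacle is Part 3 at $k=2$: unlike the $k=1$ and $k=3$ cases, two-dimensional subspaces of $\mathbb{C}^2\otimes\mathbb{C}^2$ admit no single canonical form under local unitaries and interpolate between product-state-spanned and maximally-entangled subspaces, so identifying the diagonal feasible region as precisely $\mathcal{T}$ requires the orthogonality argument sketched above rather than a direct Schmidt reduction. Part 5 is the other delicate step, since the simultaneous diagonalization must be set up to respect the $SO(3)$ constraint on both sides in order to apply Part 4 in the right frame.
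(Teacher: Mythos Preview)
Your argument for Part 3 has a genuine gap. You propose to ``diagonalize $G$ via Part 1'' and then check the Schmidt form, but applying local unitaries replaces $G$ by $G'=PGQ^T$ and hence changes its diagonal: the entries $[\Gamma_{11},\Gamma_{22},\Gamma_{33}]$ in the statement are the diagonal of the \emph{original} $G$, not of the diagonalized $G'$. Proving that the Schmidt-form state has $\text{diag}(G')=(\sin 2\theta,-\sin 2\theta,1)\in\mathcal{S}$ only establishes the claim for one representative of each local-unitary class, and the property ``$\text{diag}(G)\in\mathcal{S}$'' is not a priori local-unitary-invariant. (Indeed, where this lemma is invoked in \Cref{lem:6}, the transformed matrix $C'=L^T C N$ is in general \emph{not} diagonal, so one really needs Part~3 for an arbitrary rank-$k$ projector.) The paper avoids diagonalization entirely: it expands an arbitrary pure state $\ket{\phi}=\sum_i \alpha_i\ket{\text{Bell}_i}$ in the Bell basis and computes directly that $[\Gamma_{11},\Gamma_{22},\Gamma_{33}]=\sum_i|\alpha_i|^2 v_i$, where the $v_i$ are exactly the four vertices of $\mathcal{S}$. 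The $k=2$ case then proceeds by summing two such expressions, and the crucial step is the bound $|\alpha_i|^2+|\beta_i|^2\leq 1$ obtained by completing the two orthonormal vectors to a basis of $\mathbb{C}^4$; merely citing $P_1P_2=0$ does not by itself give the $\ell_1$ bound you claim.

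Your Part 5 ``simultaneous diagonalization'' is also wrong as written: the rotations $L,N$ that diagonalize $D$ send $G$ to $L^T G N$, which is not $\text{diag}(\Gamma_{11},\Gamma_{22},\Gamma_{33})$. A correct variant of your idea is to \emph{construct} a new Bell-diagonal state $\rho'=\frac{1}{4}(\mathbb{I}+\sum_\ell\Sigma_{\ell\ell}\,\sigma^\ell\otimes\sigma^\ell)$ (valid by Part~2), keep $P$ unchanged, and observe $4\,\text{Tr}[P\rho']-k=\sum_\ell\Sigma_{\ell\ell}\Gamma_{\ell\ell}$ so that Part~4 finishes. The paper instead argues combinatorially: since $\text{diag}(\Sigma)\in\mathcal{S}$ and $[\Gamma_{11},\Gamma_{22},\Gamma_{33}]\in\mathcal{P}_k$ by Parts~2 and~3, the bilinear form attains its extrema at vertex pairs of the two polytopes, which are then checked directly. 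Your Parts 1, 2, 4, 6, and 7 are fine and essentially match the paper's (cited or supplied) arguments.
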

\begin{proof}
See Appendix, page \pageref{page:1}.
\end{proof}

\subsection{SVD Argument}\label{sec:16}

%We will set up some definitions, and then proceed to state and prove the main lemma.  To fix notations say we are interested in a particular edge $e=(i, j, \psi_{ij})$ and let $C^e$ be the corresponding value matrix.  We can imagine breaking up $C^e$ into $3\times 3$ blocks by grouping rows/columns which correspond to the same qubit.  For instance, the $(i, j)th$ block would correspond to the $3\times 3 $ matrix:
%$$
%\begin{bmatrix}
%C_{iX, jX}^e & C_{iX, jY}^e & C_{iX, jZ}^e\\
%C_{iY, jX}^e & C_{iY, jY}^e & C_{iY, jZ}^e\\
%C_{iZ, jX}^e & C_{iZ, jY}^e & C_{iZ, jZ}^e
%\end{bmatrix}
%$$
%\noindent The main lemma follows:
\begin{mdframed}
\begin{lemma}[Quadratic Part of Expectation]\label{lem:6}
Let $V_1=[\mathbf{v}_{11}, \mathbf{v}_{12}, \mathbf{v}_{13}]\in \mathbb{R}^{d\times 3}$ and $V_2=[\mathbf{v}_{21}, \mathbf{v}_{22}, \mathbf{v}_{23}]\in \mathbb{R}^{d\times 3}$ be real matrices with normalized columns ($|| \mathbf{v}_{ij}||^2=1$) such that $V_1^T V_2$ is a valid quadratic $2$-moment for some density matrix.  Let $C$ be a quadratic $2$-moment for some projector of rank $k$,  and let $\mathbf{r}\sim \mathcal{N}(0, \mathbb{I_d})$ be the same size as the vectors $\mathbf{v}_{ij}$.  Then,
\begin{align}\label{eq:22}
\mathbb{E}_{\mathbf{r}}\left[\frac{ \mathbf{r}^T V_1 C V_2^T \mathbf{r}}{||V_1^T \mathbf{r}|| \,\, ||V_2^T \mathbf{r}||}\right]=\mathbb{E} \left[ \frac{p z_1 z_1' +q z_2 z_2' +r z_3 z_3'}{\sqrt{(z_1^2+z_2^2+z_3^2)((z_1')^2+(z_2')^2+(z_3')^2)}}\right]\\
{\rm and }\,\, Tr\left( V_1 C V_2^T \right)=a p+bq+cr\label{eq:23}.
\end{align}
\noindent with $[z_1, z_2, z_3, z_1', z_2', z_3'] \sim \mathcal{N}(0, \Sigma'(a, b, c))$ for some constants $(a, b, c, p, q, r)$.  Further, we can assume without loss of generality that $[a, b, c]\in \mathcal{S}$ and $[p, q, r] \in \mathcal{P}_k$ where 
\begin{equation}\label{eq:66}
\mathcal{P}_k=\begin{cases}
\mathcal{S} \text{ if } k=1\\
\mathcal{T} \text{ if } k=2\\
-\mathcal{S} \text{ if } k=3
\end{cases}.
\end{equation}
\end{lemma}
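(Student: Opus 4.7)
The plan is to reduce the $d$-dimensional Gaussian $\mathbf{r}$ to a pair of $3$-dimensional Gaussians, apply an SVD to diagonalize the relevant covariance, and then exploit a sign-flip symmetry to eliminate the off-diagonal entries of the transformed $C$.

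First I would set $\mathbf{z} := V_1^T \mathbf{r}$ and $\mathbf{z}' := V_2^T \mathbf{r}$. Since $\mathbf{r} \sim \mathcal{N}(0, \mathbb{I}_d)$ and the columns of $V_1, V_2$ are orthonormal (as produced by Cholesky of the SDP solution, via constraints \eqref{sdp-relax:diag} and \eqref{sdp-relax:diag-block-zero}), the joint vector $[\mathbf{z},\mathbf{z}'] \in \mathbb{R}^6$ is Gaussian with mean zero and covariance $\begin{bmatrix} \mathbb{I}_3 & A \\ A^T & \mathbb{I}_3 \end{bmatrix}$, where $A := V_1^T V_2$. Then $\|V_1^T \mathbf{r}\| = \|\mathbf{z}\|$, $\|V_2^T \mathbf{r}\| = \|\mathbf{z}'\|$, and $\mathbf{r}^T V_1 C V_2^T \mathbf{r} = \mathbf{z}^T C \mathbf{z}'$, so the expectation depends only on $A$ and $C$. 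I would then take an SVD $A = L D_A N^T$ with $L, N \in SO(3)$ and $D_A = \mathrm{diag}(a,b,c)$; after possibly flipping one column, $\det L = \det N = +1$ can be achieved by allowing a signed entry in $D_A$. Under the substitution $\mathbf{w} := L^T \mathbf{z}$, $\mathbf{w}' := N^T \mathbf{z}'$, the covariance of $[\mathbf{w},\mathbf{w}']$ becomes exactly $\Sigma'(a,b,c)$ from \Cref{eq:44}, the norms are preserved, and the bilinear form becomes $\mathbf{z}^T C \mathbf{z}' = \mathbf{w}^T C' \mathbf{w}'$ with $C' := L^T C N$. Since $C$ is a valid quadratic $2$-moment of a projector of rank $k$, \Cref{lem:14}(1) guarantees that $C'$ is again a valid quadratic $2$-moment of a projector of rank $k$.

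The crucial observation is that $\Sigma'(a,b,c)$ is invariant under the simultaneous sign flip $(w_i, w_i') \mapsto (-w_i, -w_i')$ for any single $i \in [3]$: the only nonzero cross-covariance touching either coordinate is $\mathrm{Cov}(w_i, w_i') \in \{a,b,c\}$, which picks up two sign flips, while every other cross-covariance between flipped and unflipped coordinates is already zero. Since the denominator $\|\mathbf{w}\|\,\|\mathbf{w}'\|$ is also invariant under this flip but $w_i w_j'$ for $i \neq j$ picks up exactly one sign, taking expectations forces $\mathbb{E}[w_i w_j'/(\|\mathbf{w}\|\,\|\mathbf{w}'\|)] = 0$ whenever $i \neq j$. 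Consequently
\begin{equation*}
\mathbb{E}\left[\frac{\mathbf{w}^T C' \mathbf{w}'}{\|\mathbf{w}\|\,\|\mathbf{w}'\|}\right] = \sum_{i=1}^{3} C'_{ii}\, \mathbb{E}\left[\frac{w_i w_i'}{\|\mathbf{w}\|\,\|\mathbf{w}'\|}\right],
\end{equation*}
which is precisely the right-hand side of \Cref{eq:22} with $(p,q,r) := (C'_{11}, C'_{22}, C'_{33})$. The trace identity \Cref{eq:23} drops out by direct computation: $\text{Tr}(V_1 C V_2^T) = \text{Tr}(A^T C) = \text{Tr}(D_A L^T C N) = \text{Tr}(D_A C') = ap + bq + cr$.

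The ``without loss of generality'' claims will follow directly from the structural lemma: $[a,b,c] \in \mathcal{S}$ by \Cref{lem:14}(2) applied to the valid density-matrix $2$-moment $A$, and $[p,q,r] = \mathrm{diag}(C') \in \mathcal{P}_k$ by \Cref{lem:14}(3) applied to the valid projector $2$-moment $C'$. The main conceptual obstacle is identifying the sign-flip symmetry that collapses the nine entries of $C'$ to just its three diagonal entries; once this is in hand, the rest is routine bookkeeping with the SVD and the structural constraints on $2$-qubit moments from \Cref{lem:14}.
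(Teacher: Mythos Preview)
Your proposal is correct and follows essentially the same approach as the paper's proof: reduce to six Gaussians via $V_i^T\mathbf{r}$, diagonalize the cross-covariance with an SVD $V_1^TV_2=LD_AN^T$ (keeping $L,N\in SO(3)$), and then kill the off-diagonal entries of $C'=L^TCN$ by a parity argument, invoking \Cref{lem:14} for the polytope constraints on $(a,b,c)$ and $(p,q,r)$. Your simultaneous sign-flip $(w_i,w_i')\mapsto(-w_i,-w_i')$ is in fact a slightly more careful articulation of the parity step than the paper's ``odd function with respect to $z_1$'' phrasing, and your explicit note that $V_i^TV_i=\mathbb{I}_3$ (from constraints \eqref{sdp-relax:diag}--\eqref{sdp-relax:diag-block-zero}) is needed for the covariance to be $\Sigma'(a,b,c)$ is a point the lemma statement leaves implicit but the paper's proof also relies on.
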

\end{mdframed}

\begin{proof}
\noindent Take a singular value decomposition of $V_1^T V_2=L \Sigma N^T$.  Let us assume that $\det(L)=1=\det(N)$ by possibly absorbing signs into 
\begin{equation}\label{eq:3}
\Sigma=\begin{bmatrix} a & 0 & 0\\
0 & b & 0\\
0 & 0 & c\end{bmatrix}.
\end{equation}
\noindent Since $L$ and $N$ are special orthogonal matrices we can re-write \Cref{eq:22} (resp. \Cref{eq:23}) as:
\begin{gather}\label{eq:24}
\mathbb{E}_{\mathbf{r}}\left[\frac{ \mathbf{r}^T V_1 C V_2^T \mathbf{r}}{||V_1^T \mathbf{r}|| \,\, ||V_2^T \mathbf{r}||}\right]=\mathbb{E}_\mathbf{r}\left[ \frac{ \mathbf{r}^T (V_1 L) (L^T C N) (N^T V_2^T) \mathbf{r}}{||L ^T V_1^T \mathbf{r}|| \,\, ||N^T V_2^T \mathbf{r}||}\right],\\
Tr\left[V_1 P V_2^T\right]=\text{Tr}\left[ (V_1 L) (L^T C N) (N^T V_2^T)\right].\label{eq:25}
\end{gather}
\noindent Setting 
\begin{equation}\label{eq:6}
W_1=V_1 L, \,\,\,\,\,\, W_2=V_2 N, \,\,\,\text{and}\,\,\, C'= L^T C N,
\end{equation}
\noindent we can re-write \Cref{eq:24} (resp. \Cref{eq:25}) as:
\begin{gather}\label{eq:26}
=\mathbb{E}_{\mathbf{r}}\left[\frac{ \mathbf{r}^T W_1 C' W_2^T \mathbf{r}}{||W_1^T \mathbf{r}|| \,\, ||W_2^T \mathbf{r}||}\right]\\
=\text{Tr}\left[ W_1 C' W_2^T\right]\label{eq:27}
\end{gather}
Defining the random variables $[z_1, z_2, z_3]^T :=W_1^T \mathbf{r}$ and $[z_1', z_2', z_3']^T :=W_2^T \mathbf{r}$, we can re-write \Cref{eq:26} (resp. \Cref{eq:27}) as:
\begin{gather}\label{eq:28}
=\mathbb{E} \left[ \frac{[z_1, z_2, z_3] C' [z_1', z_2', z_3']^T }{\sqrt{(z_1^2+z_2^2+z_3^2)((z_1')^2+(z_2')^2+(z_3')^2)}}\right],\\
=\text{Tr}\left[ W_1 C' W_2^T\right].\label{eq:29}
\end{gather}
Since the variable $\mathbf{r}$ is a multivariate Gaussian distribution, linear combinations of it's components are also distributed according to the multivariate Gaussian.  Since each of $(z_1, z_2, z_3, z_1', z_2', z_3')$ is the sum of normal random variables with zero mean, they must each have zero mean.  Observe that, by construction, $W_1^T W_2=\Sigma$.  So, if we let $\mathbf{w}_1^1$ be the first column of $W_1$ and $\mathbf{w}_1^2$ be the first column of $W_2$ then 
\begin{align*}
\mathbb{E}[z_1 z_1']=\mathbb{E}[ (\mathbf{r}\cdot{} \mathbf{w}_1^1) (\mathbf{r} \cdot{} \mathbf{w}_1^2)]=\mathbb{E}[\mathbf{r}^T \mathbf{w}_1^1 (\mathbf{w}_1^2)^T \mathbf{r}]=\mathbb{E}[\text{Tr}(\mathbf{r}^T \mathbf{w}_1^1 (\mathbf{w}_1^2)^T \mathbf{r})]=\mathbb{E}[\text{Tr}(\mathbf{r} \mathbf{r}^T \mathbf{w}_1^1 (\mathbf{w}_1^2)^T)]\\
\nonumber =\text{Tr}[\mathbb{E}(\mathbf{r} \mathbf{r}^T) \mathbf{w}_1^1 (\mathbf{w}_1^2)^T]=a.
\end{align*}
\noindent Similar calculations show that the covariance of these random variables are equal to the overlap of the corresponding vectors.  Hence, $[z_1, z_2, z_3, z_1', z_2', z_3']\sim \mathcal{N}(\mathbf{0}, \Sigma'(a, b, c))$ with $\Sigma'$ defined as in \Cref{eq:44}.

Observe that 
\begin{equation*}
\mathbb{E}\left[ \frac{z_1 C'_{1,2} z_2'}{\sqrt{(z_1^2+z_2^2+z_3^2)((z_1')^2+(z_2')^2+(z_3')^2)}} \right]=0,
\end{equation*}
\noindent since it is the expectation of an odd function with respect to $z_1$, and similarly for the other off-diagonal elements.  Let $(p, q, r)$ be the diagonal elements of $C'$.  Then we have shown that  \Cref{eq:28} is equal to 
\begin{equation*}
\mathbb{E} \left[ \frac{p z_1 z_1' +q z_2 z_2' +r z_3 z_3'}{\sqrt{(z_1^2+z_2^2+z_3^2)((z_1')^2+(z_2')^2+(z_3')^2)}}\right].
\end{equation*}
\noindent Additionally, it is easy to see $\text{Tr}\left[W_1 C' W_2^T\right]=a p+bq+cr$ by taking the trace starting with the columns of $W_1$.

To complete the lemma, we need to establish that we can restrict $(a, b, c, p, q, r)$ as described in the statement.  $[p, q, r]$ are the diagonal elements of a valid $2$-moment for a projector, by \Cref{lem:14}, since we obtained $C'$ by multiplying it on the left and right by special orthogonal matrices.  Hence, \Cref{eq:59} establishes $[p, q, r] \in \mathcal{P}_k$.  Since $\Sigma$ is obtained from $V_1^T V_2$ by multiplying on the right and left by special orthogonal matrices, \Cref{lem:14} item $2$ implies $diag(\Sigma) \in \mathcal{S}$.

\end{proof}
\begin{mdframed}
\begin{lemma}[Linear part of Expectation]\label{lem:18}
Let $V=[\mathbf{v}_1, \mathbf{v}_2, \mathbf{v}_3]\in \mathbb{R}^{d\times 3}$, $\mathbf{v}\in \mathbb{R}^d$  with $V V^T=\mathbb{I}$ and $\mathbf{v}^T \mathbf{v}=1$.  Further, let $\mathbf{p}\in \mathbb{R}^3$, $a=||V^T\mathbf{v}||$, $\mathbf{r}\sim \mathcal{N}(\mathbf{0}, \mathbb{I}_d)$.  Let $U\in SO(3)$ satisfy $U V^T\mathbf{v}=[a, 0, 0]^T$, and define $\tilde{\mathbf{p}}_1=(U \mathbf{p})_1$.   Then,
\begin{align}\label{eq:70}
    \mathbb{E}_\mathbf{r} \left[\frac{(\mathbf{r}^T V \mathbf{p}) (\mathbf{v}^T \mathbf{r})}{||\mathbf{r}^T V|| \cdot{} |\mathbf{r}^T \mathbf{v}|}\right]=\mathbb{E}_{X, Y, Z, S} \left[\frac{\tilde{\mathbf{p}}_1 X S}{\sqrt{X^2+Y^2+Z^2}|S|} \right],
\end{align}
where $[S, X, Y, Z]\sim \mathcal{N}(\mathbf{0}, \Sigma)$ with 
\begin{equation*}
\Sigma= \begin{bmatrix}
1 & a & 0 & 0\\
a & 1 & 0 & 0\\
0 & 0 & 1 & 0\\
0 & 0 & 0 & 1
\end{bmatrix}.
\end{equation*}
Additionally, $Tr[V \mathbf{p} \mathbf{v}^T]=a\tilde{\mathbf{p}}_1$.
\end{lemma}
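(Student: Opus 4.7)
The plan follows the template of \Cref{lem:6}: use rotational invariance of the standard multivariate Gaussian together with an orthogonal change of basis to reduce the integrand to a canonical form with fewer parameters. The first step is to absorb the rotation $U$ into $V$. Set $W := VU^T \in \mathbb{R}^{d\times 3}$; since $U \in SO(3)$, the columns of $W$ remain orthonormal, and by hypothesis $W^T \mathbf{v} = UV^T \mathbf{v} = [a,0,0]^T$. Writing $\tilde{\mathbf{p}} := U\mathbf{p}$, we have $V\mathbf{p} = W\tilde{\mathbf{p}}$ and $\|V^T \mathbf{r}\| = \|W^T \mathbf{r}\|$, so the left-hand side of \Cref{eq:70} becomes
\[
\mathbb{E}_{\mathbf{r}}\!\left[\frac{(\mathbf{r}^T W \tilde{\mathbf{p}})(\mathbf{v}^T \mathbf{r})}{\|W^T \mathbf{r}\| \cdot |\mathbf{v}^T \mathbf{r}|}\right].
\]

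Next I would introduce four scalar Gaussians: $S := \mathbf{v}^T \mathbf{r}$ and $[X,Y,Z]^T := W^T \mathbf{r}$. Each is a linear functional of $\mathbf{r}\sim\mathcal{N}(0,\mathbb{I}_d)$, so $(S,X,Y,Z)$ is jointly centered Gaussian, and its covariance entries read off directly: $\mathbb{E}[XY]=\mathbb{E}[XZ]=\mathbb{E}[YZ]=0$ and $\mathbb{E}[X^2]=\mathbb{E}[Y^2]=\mathbb{E}[Z^2]=1$ because $W$ has orthonormal columns; $\mathbb{E}[S^2]=\|\mathbf{v}\|^2=1$; and $[\mathbb{E}[SX],\mathbb{E}[SY],\mathbb{E}[SZ]]^T = W^T \mathbf{v} = [a,0,0]^T$. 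This reproduces exactly the $\Sigma$ in the statement. In these variables the integrand becomes
\[
\frac{(\tilde{\mathbf{p}}_1 X + \tilde{\mathbf{p}}_2 Y + \tilde{\mathbf{p}}_3 Z)\, S}{\sqrt{X^2+Y^2+Z^2}\;|S|}.
\]

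The key step is then to kill the $Y$ and $Z$ cross terms. Since $Y$ has zero covariance with each of $S$, $X$, and $Z$, joint Gaussianity promotes this to genuine independence; hence $(S,X,Y,Z)$ and $(S,X,-Y,Z)$ have the same joint law. This sign-flip negates the $\tilde{\mathbf{p}}_2 Y S$ contribution in the numerator but leaves the denominator unchanged (it depends on $Y^2$), so that contribution integrates to zero, and the identical argument eliminates the $\tilde{\mathbf{p}}_3 Z$ term. What remains is precisely the right-hand side of \Cref{eq:70}. The trace identity is then immediate: $\mathrm{Tr}[V\mathbf{p}\mathbf{v}^T] = \mathbf{v}^T V \mathbf{p} = (V^T \mathbf{v})^T \mathbf{p} = (U^T[a,0,0]^T)^T \mathbf{p} = [a,0,0]\, U\mathbf{p} = a\,\tilde{\mathbf{p}}_1$.

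No step involves serious technical machinery; the argument is essentially bookkeeping around a well-chosen change of basis. The only place where one must be careful is the symmetry argument used to eliminate the $Y$ and $Z$ terms: it relies on upgrading the vanishing covariances $\mathbb{E}[SY]$, $\mathbb{E}[XY]$, $\mathbb{E}[YZ]$ (and the analogous ones for $Z$) into full independence, which is valid precisely because the entire tuple $(S,X,Y,Z)$ is jointly Gaussian, so that the measure-preserving sign flip is legitimate.
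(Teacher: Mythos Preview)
Your proof is correct and follows essentially the same route as the paper's: rotate $V$ by $U^T$, introduce $(S,X,Y,Z)$ as the appropriate linear functionals of $\mathbf{r}$, and kill the $Y$- and $Z$-terms by parity. If anything, you are more explicit than the paper in computing the covariance structure and in justifying the vanishing of the cross terms via the sign-flip symmetry.
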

\end{mdframed}

\begin{proof}
Define $\tilde{V}=V U^T$.  Then,
\begin{align}\label{eq:69}
\mathbb{E}_\mathbf{r} \left[\frac{(\mathbf{r}^T V \mathbf{p}) (\mathbf{v}^T \mathbf{r})}{||\mathbf{r}^T V|| \cdot{} |\mathbf{r}^T \mathbf{v}|}\right]=\mathbb{E}_\mathbf{r} \left[\frac{(\mathbf{r}^T \tilde{V} \tilde{\mathbf{p}}) (\mathbf{v}^T \mathbf{r})}{||\mathbf{r}^T \tilde{V}|| \cdot{} |\mathbf{r}^T \mathbf{v}|}\right].
\end{align}
Let $\tilde{V}=[\mathbf{v}_X, \mathbf{v}_Y, \mathbf{v}_Z]$, and define $S=\mathbf{v}^T \mathbf{r}$, $X=\mathbf{v}_X^T \mathbf{r}$, $Y=\mathbf{v}_Y^T \mathbf{r}$ and $Z=\mathbf{v}_Z^T \mathbf{r}$.  \Cref{eq:69} can then be written as:
\begin{equation*}
    =\mathbb{E} \left[ \frac{[X, Y, Z] \tilde{\mathbf{p}} S}{\sqrt{X^2+Y^2+Z^2}|S|}\right],
\end{equation*}
with $\tilde{\mathbf{p}}=U \mathbf{p}$.  \Cref{eq:70} follows with the observation that
\begin{equation*}
    \mathbb{E} \left[\frac{Y S}{\sqrt{X^2+Y^2+Z^2}|S|}\right]=0=\mathbb{E} \left[\frac{Z S}{\sqrt{X^2+Y^2+Z^2}|S|}\right].
\end{equation*}

\noindent To finish the lemma, observe $Tr[V\mathbf{p} \mathbf{v}^T]=\mathbf{v}^T V \mathbf{p}=\mathbf{v}^T \tilde{V} \tilde{\mathbf{p}}=a \tilde{\mathbf{p}}_1$.
\end{proof}

\section{Proof of the Main Technical Lemmas}\label{sec:23}
In this section we give a proof and the requisite tools for the proofs of the main technical lemmas.  Recall we have some expectations from the rounding algorithm, and we are interested in finding lower bounds for these expectations over some set of the parameters.  We seek proofs of the following:
\begin{mdframed}
\begin{lemma}[Strictly Quadratic Case]\label{lem:26}
Let $\mathcal{S}$, $\mathcal{T}$, and $-\mathcal{S}$ be defined according to \Cref{eq:71}.  Further let $k\in \{1, 2, 3\}$ be fixed, and let 
\begin{equation}
    \mathcal{P}_k=\begin{cases}
    \mathcal{S} {\rm \text{ if $k=1$ }}\\
    \mathcal{T} {\rm \text{ if $k=2$ }}\\
    -\mathcal{S} {\rm \text{ if $k=3$ }}
    \end{cases}
\end{equation}
Let $[z_1, z_2, z_3, z_1', z_2', z_3']\sim \mathcal{N}(0, \Sigma'(a, b, c))$.  Then, we have the following lower bound:
\begin{align}\label{eq:72}
    \min_{\substack{[a, b, c]\in \mathcal{S}\\ [p, q, r]\in \mathcal{P}_k}}&\frac{k+\mathbb{E} \left[ \frac{p z_1 z_1' +q z_2 z_2' +r z_3 z_3'}{\sqrt{(z_1^2+z_2^2+z_3^2)((z_1')^2+(z_2')^2+(z_3')^2)}}\right]}{k+a p+bq+cr}  
\geq \begin{cases}
22/(15\pi) \approx 0.467 \text{ if $k=1$}\\
1/3+24/(25 \pi) \approx 0.639 \text{ if $k=2$}\\
1/2+388/(405 \pi) \approx 0.804 \text{ if $k=3$}
\end{cases}.
\end{align}

\end{lemma}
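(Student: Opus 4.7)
The plan is to reduce \eqref{eq:72} to a vertex-over-$\mathcal{P}_k$ optimization and then analyze the resulting Gaussian expectation via a Hermite-polynomial expansion in $L^2(\mathbb{R}^3,\gamma)$. Define
$$F_i(a,b,c) := \mathbb{E}\!\left[\frac{z_i z_i'}{\sqrt{(z_1^2+z_2^2+z_3^2)((z_1')^2+(z_2')^2+(z_3')^2)}}\right],$$
so the numerator of the ratio in \eqref{eq:72} is $k + pF_1 + qF_2 + rF_3$ and the denominator is $k+ap+bq+cr$, both affine in $(p,q,r)$. For fixed $(a,b,c)\in\mathcal{S}$, the minimum of such a linear-fractional objective over a convex polytope is attained at a vertex, so by \eqref{eq:71} one need only check four vertices for $k\in\{1,3\}$ and six for $k=2$. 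Simultaneously permuting $(a,b,c)$ and $(F_1,F_2,F_3)$ leaves the objective invariant, so a single representative per orbit under the symmetric group on three letters suffices; for instance, for $k=3$ one reduces to the vertex $\vec{p}_\star = (1,1,1)$, yielding the numerator $3 + F_1 + F_2 + F_3$ and the denominator $3+a+b+c$, recovering the symmetric Grothendieck/Max-Heisenberg case on the diagonal $a=b=c$.

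Next I would analyze $F_i$ itself. Let $\psi_i(\mathbf{z}) := z_i/\|\mathbf{z}\|$. Because $z_j$ and $z_\ell'$ are independent whenever $j\neq\ell$ under the covariance $\Sigma'(a,b,c)$, the multivariate Hermite product formula for correlated Gaussians (invoking the machinery of \Cref{sec:9}) expresses $F_i(a,b,c)$ as a power series in $(a,b,c)$ whose coefficients are products of multivariate Hermite coefficients of $\psi_i$. These Hermite coefficients can be evaluated in polar coordinates, since the radial factor is a chi distribution and the angular factor is a spherical-harmonic integral on the unit 2-sphere; they reduce to Gamma-function expressions, and the leading nontrivial coefficient recovers the classical hyperplane-rounding constant $2/(3\pi)$. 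With this in hand, for each representative vertex $\vec{p}_\star$ the remaining task is to lower bound
$$g_{\vec{p}_\star}(a,b,c) \;=\; \frac{k + \vec{p}_\star\cdot \vec{F}(a,b,c)}{k + \vec{p}_\star\cdot(a,b,c)}$$
over the tetrahedral simplex $\mathcal{S}$. A truncation of the Hermite expansion of $F_i$ to low order, combined with tail bounds and elementary calculus from \Cref{sec:11}, should yield a closed-form lower bound whose minimum over $\mathcal{S}$ (again attained at a vertex by convexity considerations on each face) is precisely $22/(15\pi)$, $1/3+24/(25\pi)$, or $1/2+388/(405\pi)$, respectively.

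The main obstacle I anticipate is controlling the Hermite tail rigorously: $\psi_i$ is smooth but non-polynomial, so its Hermite expansion has infinite support, and naive absolute-value truncation of $F_i$ loses too much. The crucial step, which the analysis overview in \Cref{sec:analysis-overview} alludes to, is that the signs of the higher-order coefficients of $\psi_i$ are dictated by the spherical-harmonic decomposition of the unit-vector map, and on the compact region $\mathcal{S}\times\mathcal{P}_k$ these signs conspire so that the tail can be bounded uniformly with the correct sign rather than merely in magnitude. The gap between the rigorous bounds and the numerical values of Conjecture~\ref{conj:performance:informal} --- e.g.\ $0.804$ versus the conjectured $0.821$ for $k=3$ --- strongly suggests that this tail bound is the dominant source of slack, and that pushing to a fully tight proof would require either a higher-order Hermite truncation (which the paper already notes becomes intractable) or a genuinely different analytical tool.
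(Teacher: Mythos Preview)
Your overall architecture matches the paper's proof: reduce the $(p,q,r)$-optimization to vertices of $\mathcal{P}_k$ via \Cref{lem:10}, use the symmetry of $\mathcal{S}$ to pick a single representative vertex (\Cref{lem:17}), expand the expectation in Hermite polynomials (\Cref{lem:11}), truncate, and minimize the resulting rational function over $\mathcal{S}$. So the skeleton is right.

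However, two of your key speculations are wrong, and they are precisely the places where the actual work happens. First, you write that ``naive absolute-value truncation of $F_i$ loses too much'' and that the crucial step must be a sign-conspiracy among the higher-order Hermite coefficients. This is not so: the paper uses exactly the naive magnitude bound (\Cref{lem:12}, \Cref{lem:13}), and it is sharp enough to yield the stated constants. The tail is controlled by $\sum_\mu \hat f_\mu^2 = \|\psi_i\|_{L^2(\gamma)}^2 = 1/3$, so the remainder after including terms in a finite set $Q$ is at most $3(1/3 - \sum_{Q} 2^{1-\delta_{jk}}\hat f_{i,jk}^2)$ uniformly on $|a|,|b|,|c|\le 1$. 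With the cubic-order truncation $Q=\{(1,0,0),(1,0,2),(3,0,0)\}$ this residual is small enough; no sign analysis is needed.

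Second, your claim that the minimum over $\mathcal{S}$ is ``again attained at a vertex by convexity considerations on each face'' is false, and this is where the proof is genuinely delicate. After truncation the objective is a ratio of a cubic polynomial in $(a,b,c)$ to an affine function, and its minimizer over $\mathcal{S}$ need not be a vertex: for $k=3$ the paper shows the minimum is at $(1/3,1/3,1/3)$ (interior to the face $a+b+c=1$), and for $k=2$ it is at $(1,0,0)$ (an edge midpoint). The paper handles this by brute-force calculus (\Cref{lem:9}--\Cref{lem:7}): partitioning $\mathcal{S}$ into subregions, computing partial derivatives, ruling out interior critical points, and checking edges and faces explicitly. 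This is the tedious but unavoidable core of the argument, and your proposal does not anticipate it.
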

\end{mdframed}

\begin{mdframed}
\begin{lemma}[General Case]\label{lem:21}
Let $k\in \{1, 2, 3\}$ be fixed.  Let $a$, $b$, $c$, $p$, $q$, $r$, $t_i$, $d_i$, $t_j$, $d_j$ be constants, such that $|a|, |b|, |c|, |d_i|, |d_j| \leq 1$, and which satisfy:
\begin{gather}\label{eq:73}
    -k \leq ap+bq+rc \leq 4-k,\\
    \nonumber -k \leq ap+bq+rc+u_id_i+u_j d_j \leq 4-k,\\
    \nonumber |t_i d_i| \leq l,\\
    \nonumber \text{ and }|t_j d_j| \leq l,
\end{gather}
where $l$ is $1$ if $k=1$ or $3$ and $l=2$ if $k=2$.  Further, let $[z_1, z_2, z_3, z_1', z_2', z_3']\sim \mathcal{N}(0, \Sigma'(a, b, c))$, let $[s_1, x_1, x_2, x_3] \sim \mathcal{N}(0, \Sigma_1)$ for 
\begin{equation*}
     \Sigma_1=\begin{bmatrix} 1 & d_i & 0 & 0 \\d_i & 1 & 0 & 0\\ 0 & 0 & 1 & 0\\0 & 0 & 0 & 1\end{bmatrix},
\end{equation*}
and let $[s_2, y_1, y_2, y_3]\sim \mathcal{N}(0, \Sigma_2)$ for 
\begin{equation*}
    \Sigma_2=\begin{bmatrix} 1 & d_j & 0 & 0 \\d_j & 1 & 0 & 0\\ 0 & 0 & 1 & 0\\0 & 0 & 0 & 1\end{bmatrix}.
\end{equation*}
\textbf{Then,}
\begin{align}\label{eq:74}
    \bigg(k + \mathbb{E} \left[\frac{p z_1 z_1'+q z_2 z_2'+r z_3 z_3'}{\sqrt{(z_1^2+z_2^2+z_3^2)((z_1')^2+(z_2')^2+(z_3')^2)}} \right] +\mathbb{E} \left[ \frac{t_i x_1 s_1}{\sqrt{x_1^2+x_2^2+x_3^2}|s_1|}\right]\\
    \nonumber +\mathbb{E} \left[\frac{t_j y_1 s_2}{\sqrt{y_1^2+y_2^2+y_3^2}|s_2|} \right] \bigg) /\bigg(k+ap +bq+cr+t_i d_i +t_j d_j \bigg)\\
    \nonumber \geq \begin{cases} 2/\pi-1/4 \approx 0.387 \text{ if $k=1$}\\   16/(9\pi) \approx 0.565 \text{ if $k=2$}\\  3/8+11/(9\pi) \approx 0.764 \text{ if $k=3$}\end{cases}
\end{align}
\end{lemma}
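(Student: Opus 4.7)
The plan is to bound the worst-case ratio in~\eqref{eq:74} by treating the quadratic and two linear expectations in tandem, exploiting that the positivity constraints from \Cref{lem:14} couple them only through the joint bound $-k\le ap+bq+cr+t_id_i+t_jd_j\le 4-k$. First, I would isolate the free parameters: \Cref{lem:6} has already put the quadratic expectation into canonical six-parameter form with $[a,b,c]\in\mathcal S$ and $[p,q,r]\in\mathcal P_k$, and \Cref{lem:18} applied at each endpoint does the same for the two linear expectations, yielding $|d_i|,|d_j|\le 1$ together with the weight bounds $|t_id_i|,|t_jd_j|\le l$, where $l=1$ if $k\in\{1,3\}$ and $l=2$ if $k=2$. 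After this reduction the whole problem becomes a constrained minimization over a compact semi-algebraic set in ten real variables.

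Next I would reduce each linear expectation to a manageable one-parameter function. Writing $x_1=d_is_1+\sqrt{1-d_i^2}\,w_1$ with $w_1$ independent of $s_1$, the component odd in $w_1$ vanishes because the denominator depends on $w_1$ only through $w_1^2$, so the linear expectation collapses to $t_id_i\cdot g(d_i)$ for a smooth even function $g$ that can either be written in closed form via a hypergeometric series or Hermite-expanded in $d_i$ with an explicit leading coefficient. For the quadratic expectation I would follow the blueprint of \Cref{lem:26}: Hermite-expand jointly in $a,b,c$, split off the linear-in-$(a,b,c)$ terms (which contribute a fixed multiple of $ap+bq+cr$), and control the higher-order remainder using the polytope inclusions $[a,b,c]\in\mathcal S$ and $[p,q,r]\in\mathcal P_k$.

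Summing the three contributions puts the numerator into the form $k+\alpha_k(ap+bq+cr)+\beta_k(t_id_i+t_jd_j)+R$, where $\alpha_k,\beta_k$ are leading Hermite coefficients computable in closed form and $R$ collects signed higher-order corrections. Rearranging against the denominator $k+ap+bq+cr+t_id_i+t_jd_j$ reduces the desired ratio lower bound to a nonnegativity statement for a polynomial in a few effective scalar variables on the feasible region. I expect the main obstacle to be bounding $R$ uniformly so that this nonnegativity survives at the extreme points: the high-order Hermite remainder in both the quadratic and the linear pieces has no obvious sign, and its minimum interacts nontrivially with the coupled constraint $-k\le ap+bq+cr+t_id_i+t_jd_j\le 4-k$. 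I would attack this using the orthogonal-polynomial machinery of \Cref{sec:9}, the optimization reductions of \Cref{sec:10}, and the scalar-calculus bounds of \Cref{sec:11} that are already earmarked in the analysis roadmap, then finish with a finite case analysis over the vertices of $\mathcal S$, $\mathcal T$, $-\mathcal S$ and the boundary of the joint constraint, separately for each $k\in\{1,2,3\}$. The strictly quadratic bound of \Cref{lem:26} is recovered in the limit $t_i=t_j=0$, providing a natural sanity check on the quadratic piece of the argument.
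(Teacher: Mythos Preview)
Your high-level strategy---Hermite expansion, truncation with a remainder bound, then a finite polytope check---is the same as the paper's. Two points, however, separate what you sketched from how the paper actually closes the argument.

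First, the paper does \emph{not} carry out a vertex analysis over $\mathcal S$, $\mathcal T$, $-\mathcal S$ here. After truncating the quadratic Hermite expansion at first order (applying \Cref{lem:13} with $Q=\{(1,0,0)\}$) and bounding each linear piece by \Cref{lem:20}, the resulting lower bound depends on the ten parameters \emph{only} through the three scalars $v_1=ap+bq+cr$, $v_2=t_id_i$, $v_3=t_jd_j$. The paper then ``coarse-grains'': it drops the defining relations, keeps only the constraints $-k\le v_1\le 4-k$, $-k\le v_1+v_2+v_3\le 4-k$, $|v_2|,|v_3|\le l$, and invokes \Cref{lem:10} on this three-variable polytope. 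Your plan to enumerate vertices of $\mathcal S$, $\mathcal T$, $-\mathcal S$ is not only more laborious, it is not actually licensed by the hypotheses of \Cref{lem:21} as stated (the lemma assumes only $|a|,|b|,|c|\le 1$ and the scalar inequalities, not the polytope memberships), and if you retain higher-order Hermite terms the numerator is nonlinear in $(a,b,c)$, so a pure vertex argument on $\mathcal S$ would fail.

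Second, your linear-term bookkeeping is slightly off. The function $g$ you isolate satisfies $4/(3\pi)\le g(d)\le 1/2$, so replacing $t_id_i\,g(d_i)$ by $\beta_k\,t_id_i$ with a single $\beta_k$ is not valid in both sign regimes. The paper handles this by introducing $c_i,c_j\in\{4/(3\pi),\,1/2\}$ (the two endpoints from \Cref{lem:20}) and checking all four combinations at each polytope vertex. Relatedly, your sanity check fails: setting $t_i=t_j=0$ in the paper's argument does \emph{not} recover the sharper constants of \Cref{lem:26}, because \Cref{lem:21} deliberately uses only the first-order truncation $Q=\{(1,0,0)\}$, whereas \Cref{lem:26} uses a third-order cutoff together with the calculus of Lemmas~\ref{lem:9}--\ref{lem:7}. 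So the quadratic piece here is strictly cruder by design; the payoff is that everything collapses to the three-scalar optimization.
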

\end{mdframed}
Demonstrating these lemmas will require several tools.  Roughly we will need some results in Hermite polynomial analysis (\Cref{sec:9}), some elementary considerations for optimizing rational functions over polytopes (\Cref{sec:10}), and some analytical bounds for specific functions of interest (\Cref{sec:11}).  Finally we conclude with the proofs of these two technical statements, (\Cref{sec:12} and \Cref{sec:13}).  We give rough outlines of the two proofs in this section, so that the reader may have context for the following results:

\paragraph{Quadratic Case}  Given an expectation of the form \Cref{eq:72}, the first task is to reduce the optimization of $[p, q, r]\in \mathcal{P}_k$ to the optimization of $[p, q, r]$ over only the extreme points of $\mathcal{P}_k$.  This is accomplished by fixing $[a, b, c]$ and minimizing with respect to $[p, q, r]$.  From this perspective, we are minimizing a rational function on $(p, q, r)$ and it will easily follow (using \Cref{sec:10}) that we may lower bound the expectation by fixing $[p, q, r]$ to be an extreme point of $\mathcal{P}_k$.  The next step is to lower bound the resulting expression over all choices $[a, b, c] \in \mathcal{S}$.  Using Hermite polynomials (\Cref{sec:9}), we may evaluate the expectation in terms of a convergent series.  Bounding the error obtained from a truncation of this series (\Cref{sec:14}), and using calculus to bound the truncation (\Cref{sec:11} and the Appendix) yield the final result.  

\paragraph{General Case}  Given an expectation of the form of \Cref{eq:74} with parameters restricted as in \Cref{eq:73}, we can once again evaluate the expectation using an expansion in the Hermite polynomials.  The resulting expression yields a convergent series for the quadratic part, as in the previous case, as well as a hypergeometric function for the linear part.  Cutting off the Hermite expansion to linear order, and bounding the special function to linear order we obtain a lower bound which is once again a rational function in the parameters (subject to the assumed constraints).  Then, we can ``coarse grain'' the optimization to get an optimization in fewer parameters as a lower bound, and apply  the same bounds for optimizing functions over polytopes (\Cref{sec:10} and \Cref{sec:11}) to complete the proof.

\subsection{Hermite Polynomials}\label{sec:9}
\subsubsection{Definitions and Expansions} 
Hermite polynomials \cite{O14, S72} are orthogonal polynomials which provide an orthonormal basis for $L^2(\mathbb{R}, \gamma)$ under the inner product introduced in \Cref{eq:8}.  One can obtain these polynomials by applying the Gram-Schmidt procedure to the set of standard monomials $\{1, z, z^2, z^3, ...\}$ using the inner product in \Cref{eq:8} or alternatively they can be defined using a generating function:
\begin{equation}\label{eq:7}
e^{tz -t^2/2}=\sum_{j=0}^\infty \frac{h_j(z) t^j}{\sqrt{j !}}.
\end{equation}
Hermite polynomials have a well-known explicit expression \cite{S72}, and since we will make use of the formula in many proofs, we define the Hermite polynomials according to this expression:
\begin{definition}\label{def:3}
The $nth$ Hermite polynomial is defined as:
\begin{gather*}
    h_n(z)=\frac{\sqrt{n!}}{2^{n/2}} \sum_{l=0}^{n/2} \frac{(-1)^{n/2-l} (\sqrt{2}z)^{2 l}}{(2l)! (n/2-l)!}
    {\rm \text{ if $n$ is even, }} \\ 
    \nonumber {\rm and} \,\,\, h_n(z)=\frac{\sqrt{n!}}{2^{n/2}} \sum_{l=0}^{(n-1)/2}\frac{(-1)^{(n-1)/2-l} (\sqrt{2} z)^{2 l+1}}{(2l+1)!((n-1)/2-l)!} \text{ if $n$ is odd.}
\end{gather*}
\end{definition}
The generalization of these polynomials to the multivariate case is simply found by taking products of single variable Hermite polynomials:
\begin{definition}
Given some multi-index $\mu \in[\mathbb{Z}_{\geq 0}]^n$, we define the multivariate Hermite polynomials as $h_{\mu} (\mathbf{z}) :=\prod_{i=1}^n h_{\mu_i} (\mathbf{z}_i)$.
\end{definition}

As stated previously, our interest in these polynomials stems from the fact that they are a basis for the space $L^2(\mathbb{R}^n, \gamma)$.  Hence, any function is this space can be expanded in the Hermite polynomials:
\begin{proposition}\label{prop:1}
\begin{enumerate}
\item  \cite{O14} Let $f\in L^2(\mathbb{R}^n, \gamma)$.  We can define $f(\mathbf{z})=\sum_{\mu \in [\mathbb{Z}_{\geq 0}]^n} \hat{f}_\mu h_\mu(\mathbf{z})$ where $\hat{f}_\mu =\langle f, h_\mu \rangle=\int f(\mathbf{z}) h_{\mu}(\mathbf{z}) d\gamma(\mathbf{z})$ and convergence is in $L^2(\mathbb{R}^n, \gamma)$ (converges in norm defined by \Cref{eq:8}).
%\item  This expansion satisfies the ``Plancharel formula'':
%\begin{equation}
%\forall f, g \in L^2(\mathbb{R}^n, \gamma), \langle f, g\rangle=\sum_{\mu} \hat{f}_\mu \hat{g}_\mu 
%\end{equation}
\item  \cite{O14} If $z$, $z'$ are bivariate normal with $\mathbb{E}[z z']=\rho$ and $|\rho| \leq 1$, then $\langle h_i(z), h_j(z') \rangle =\delta_{ij} \rho^i$ where $\delta_{ij}$ is the standard discrete delta function.  
\item  Let $f\in L^2(\mathbb{R}^3, \gamma)$ with Hermite expansion $f(\mathbf{z})=\sum_{i, j, k} \hat{f}_{i, j, k} h_i(z_1)h_j(z_2) h_k(z_3)$.  Let $(z_1, z_2, z_3, z_1', z_2', z_3') \sim \mathcal{N}(0, \Sigma'(a, b, c))$ with $|a|$, $|b|$, $|c|$ $\leq 1$.  Then, $\mathbb{E}[f(\mathbf{z}) f(\mathbf{z'})]=\sum_{i, j, k} \hat{f}_{i, j, k}^2 a^i b^j c^k $.
\end{enumerate}
\end{proposition}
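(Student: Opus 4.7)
Parts 1 and 2 come with explicit citations to O'Donnell, so I would invoke them as black boxes and concentrate the argument on part 3, which is the workhorse identity needed for the Hermite-expansion analysis in \Cref{lem:26} and \Cref{lem:21}.

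The first step is to unpack the covariance structure prescribed by $\Sigma'(a,b,c)$ in \Cref{eq:44}: reading off the blocks one sees $\mathbb{E}[z_s z_t'] = 0$ whenever $s\neq t$, while $\mathbb{E}[z_1 z_1'] = a$, $\mathbb{E}[z_2 z_2'] = b$, $\mathbb{E}[z_3 z_3'] = c$, and the marginals of $\mathbf{z}$ and $\mathbf{z}'$ are both $\mathcal{N}(0,\mathbb{I}_3)$. Reindexing the six Gaussians into the three pairs $(z_1,z_1'),(z_2,z_2'),(z_3,z_3')$, the joint covariance becomes block diagonal with uncorrelated blocks, hence (jointly Gaussian, so uncorrelated equals independent) the three pairs are mutually independent. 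This independence is the pivot of the whole proof.

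The second step is to evaluate moments of the tensor Hermite basis under this joint law. For multi-indices $\mu,\nu\in[\mathbb{Z}_{\geq 0}]^3$, the definition $h_\mu(\mathbf{z}) h_\nu(\mathbf{z}') = \prod_{s=1}^3 h_{\mu_s}(z_s) h_{\nu_s}(z_s')$ combined with independence factorizes the expectation, and part 2 then evaluates each factor:
\[
\mathbb{E}[h_\mu(\mathbf{z}) h_\nu(\mathbf{z}')] \;=\; \prod_{s=1}^{3} \mathbb{E}[h_{\mu_s}(z_s)\, h_{\nu_s}(z_s')] \;=\; \delta_{\mu,\nu}\, a^{\mu_1} b^{\mu_2} c^{\mu_3}.
\]
Substituting the Hermite expansion of $f$ guaranteed by part 1 for both $f(\mathbf{z})$ and $f(\mathbf{z}')$ and formally interchanging summation and expectation collapses the double sum to $\sum_{i,j,k} \hat f_{i,j,k}^2\, a^i b^j c^k$, which is the claim.

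The genuine obstacle is to justify that interchange, since convergence of the truncations $f_N := \sum_{|\mu|\leq N} \hat f_\mu h_\mu$ to $f$ in the $L^2(\mathbb{R}^3,\gamma)$ sense (part 1) does not immediately imply convergence of $\mathbb{E}[f_N(\mathbf{z}) f_N(\mathbf{z}')]$ to $\mathbb{E}[f(\mathbf{z}) f(\mathbf{z}')]$. My plan is to exploit that under the joint law both $\mathbf{z}$ and $\mathbf{z}'$ have standard Gaussian marginals, so the pushforward $L^2$ norms satisfy $\|g(\mathbf{z})\|_{L^2({\rm joint})} = \|g(\mathbf{z}')\|_{L^2({\rm joint})} = \|g\|_{L^2(\gamma)}$ for any $g\in L^2(\mathbb{R}^3,\gamma)$. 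Two applications of Cauchy--Schwarz under the joint law then give
\[
\bigl|\mathbb{E}[f(\mathbf{z})f(\mathbf{z}')] - \mathbb{E}[f_N(\mathbf{z})f_N(\mathbf{z}')]\bigr| \;\leq\; \|f-f_N\|_{L^2(\gamma)}\,\|f\|_{L^2(\gamma)} + \|f_N\|_{L^2(\gamma)}\,\|f-f_N\|_{L^2(\gamma)},
\]
which tends to $0$ since $\|f_N\|_{L^2(\gamma)} \leq \|f\|_{L^2(\gamma)} < \infty$. On the other hand, $\mathbb{E}[f_N(\mathbf{z})f_N(\mathbf{z}')]$ is a finite sum that equals $\sum_{|\mu|\leq N} \hat f_\mu^2\, a^{\mu_1}b^{\mu_2}c^{\mu_3}$ by the Hermite moment identity above, and the full series converges absolutely because $|a|,|b|,|c|\leq 1$ (a consequence of $\Sigma'(a,b,c)\semigeq 0$ with unit diagonal) while $\sum_\mu \hat f_\mu^2 = \|f\|_{L^2(\gamma)}^2$ by Parseval in part 1. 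Passing to the limit in both sides closes the argument.
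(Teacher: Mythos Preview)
Your proof is correct and follows essentially the same approach as the paper: truncate the Hermite expansion, use that the marginals of $\mathbf{z}$ and $\mathbf{z}'$ are standard Gaussian so that $L^2$ norms under the joint law coincide with $L^2(\gamma)$ norms, and bound the truncation error via Cauchy--Schwarz. The only cosmetic difference is in the algebra: you telescope $ff'-f_Nf_N' = (f-f_N)f' + f_N(f'-f_N')$ and apply Cauchy--Schwarz twice, while the paper exploits the exchangeability of $(\mathbf{z},\mathbf{z}')$ to write $\mathbb{E}[ff'-f_Nf_N'] = \mathbb{E}[(f-f_N)(f'+f_N')]$ and applies Cauchy--Schwarz once.
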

For completeness, we supply a short proof of item $3$ in the appendix on Page~\pageref{page:4} (although it is implicit in \cite{O14}).   

\subsubsection{Hermite Expansions of Interest}
We wish to apply the general theory of Hermite polynomials to two specific functions of interest:
\begin{equation*}
   g(z)=sign(z)=\begin{cases}  1 \text{ if $z\geq 0$}\\-1 \text{ if $z<0$}\end{cases}\,\,\,\,\,\text{and }\,\,\,\,\,  f(z_1, z_2, z_3)=\frac{z_1}{\sqrt{z_1^2+z_2^2+z_3^2}}.
\end{equation*}
First observe that the Hermite expansions of both of these functions are well defined, since both $\int g(z)^2 d\gamma(z)$ and $\int f(\mathbf{z})^2 d\gamma(\mathbf{z})$ are $ < \infty$\footnote{The latter case holds because $\int f(z_1, z_2, z_3)^2+f(z_2, z_1, z_3)^2+f(z_3, z_1, z_2)^2 d\gamma(\mathbf{z})=1$, so by symmetry $\int f(\mathbf{z})^2 d\gamma(\mathbf{z})=1/3$.}.  The first function has a well-known Hermite expansion:
\begin{lemma}[\cite{S72}]\label{lem:23}
Let $g(z)=sign(z)=\begin{cases}  1 \text{ if $z\geq 0$}\\-1 \text{ if $z<0$}\end{cases}$.  Then, $g(z)=\sum_i \hat{g}_i h_i(z)$ where
\begin{equation*}
    \hat{g}_i=\begin{cases} 0 \text{ if $i$ is even}\\ \frac{\sqrt{i!}(-1)^q }{2^{q-1/2} \sqrt{\pi} q! (1+2 q)} \text{ if $i$ is odd } \end{cases},
\end{equation*}
\noindent with $q=(i-1)/2$.
\end{lemma}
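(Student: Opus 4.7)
The plan is to compute the full generating series $\sum_{i} \hat g_i\,t^i/\sqrt{i!}$ in closed form rather than evaluating each coefficient separately. First, the even coefficients vanish by parity: inspection of \Cref{def:3} shows that $h_i$ is a purely even polynomial in $z$ when $i$ is even, so $g(z)\,h_i(z)$ is odd and $\hat g_i = \int g(z)h_i(z)\,d\gamma(z) = 0$ against the symmetric Gaussian measure. This immediately settles the first branch of the formula.

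For the odd case I would pair $g$ against the Hermite generating function from \Cref{eq:7}. Using Item~1 of \Cref{prop:1} together with the orthonormality of the $h_i$ (which makes the series $\sum_i h_i(z)\,t^i/\sqrt{i!}$ converge to $e^{tz-t^2/2}$ in $L^2(\mathbb{R},\gamma)$, so pairing with $g \in L^2$ may be done termwise), one obtains
\begin{equation*}
    \sum_{i=0}^\infty \frac{\hat g_i\,t^i}{\sqrt{i!}} \;=\; \int_{-\infty}^\infty sign(z)\,e^{tz - t^2/2}\,d\gamma(z).
\end{equation*}
Completing the square in the exponent rewrites the right-hand side as $\int sign(z)\,e^{-(z-t)^2/2}/\sqrt{2\pi}\,dz$, and a direct evaluation identifies this with $2\Phi(t)-1 = \mathrm{erf}(t/\sqrt{2})$, where $\Phi$ denotes the standard normal CDF.

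To extract the coefficients I would substitute the standard Taylor expansion $\mathrm{erf}(t/\sqrt{2}) = \sqrt{2/\pi}\sum_{q=0}^\infty (-1)^q\, t^{2q+1}/(q!\,(2q+1)\,2^q)$ and match coefficients of $t^{2q+1}$ on both sides. Setting $i=2q+1$, this yields $\hat g_i/\sqrt{i!} = (-1)^q/(2^{q-1/2}\sqrt{\pi}\,q!\,(2q+1))$, which is equivalent to the claimed formula after multiplying through by $\sqrt{i!}$. I do not anticipate any real obstacle; the proof is essentially a symbolic calculation, and the only care needed is in tracking the factors of $\sqrt{2}$ produced by the substitution $t \mapsto t/\sqrt{2}$ and in verifying that the normalization of the $h_i$ in \Cref{def:3} matches the inner product defined by \Cref{eq:8}.
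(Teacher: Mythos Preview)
Your proof is correct. The paper does not supply its own proof of this lemma; it simply cites it as a known result from the literature (the reference \cite{S72}), so there is no alternative argument to compare against. Your generating-function approach is clean and standard: the parity argument dispatches the even case, and for the odd case, pairing $g$ against the Hermite generating function \Cref{eq:7} via Parseval (both $g$ and $z\mapsto e^{tz-t^2/2}$ lie in $L^2(\mathbb{R},\gamma)$, the latter having Hermite coefficients $t^j/\sqrt{j!}$ with $\sum_j t^{2j}/j!=e^{t^2}<\infty$) legitimately yields $\sum_i \hat g_i t^i/\sqrt{i!}=\mathrm{erf}(t/\sqrt{2})$, after which coefficient matching gives the stated formula.
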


For the second function, let us define:
\begin{equation}\label{eq:75}
    f(z_1, z_2, z_3):=\sum_{(i, j, k)} \hat{f}_{i, jk}\,\, h_i(z_1)h_j(z_2)h_k(z_3)
\end{equation}
We have chosen the notation $\{\hat{f}_{i, jk}\}$ rather than $\{\hat{f}_{i, j, k}\}$ because $\hat{f}_{i, j, k}=\hat{f}_{i, k, j}$, which follows from the fact that $f(z_1, z_2, z_3)=f(z_1, z_3, z_2)$ and orthonormality of the Hermite polynomials.  Indeed,
\begin{align*}
    \hat{f}_{i, jk}=\int f(z_1, z_2, z_3) h_i(z_1) h_j(z_2) h_k(z_3) d\gamma(\mathbf{z})=\int f(z_1, z_3, z_2) h_i(z_1) h_j(z_2) h_k(z_3) d\gamma(\mathbf{z})=\hat{f}_{i, kj}.
\end{align*}
We can also see that $\hat{f}_{i, jk}=0$ if $i$ is even, $j$ is odd or $k$ is odd.  This follows from the explicit form of the Hermite polynomials (\Cref{def:3}) as well as the fact that the integral of an anti-symmetric function along the entire real line is zero.  Having stated some symmetries of the coefficients we now state their analytic values:
\begin{lemma}\label{lem:24}
Let $f$ have Hermite expansion $\{\hat{f}_{i, jk}\}$ as defined in \Cref{eq:75}.  If $i$ is odd, $j$ is even, and $k$ is even, the Hermite coefficients take values:
\begin{equation}\label{eq:50}
    \hat{f}_{i, jk}=2\sqrt{\frac{2}{\pi}} \frac{\sqrt{i! j! k!}(-1)^{p}}{(i-1)!! j!! k!! (1+2p)(3+2p)}
\end{equation}
for $p=(i+j+k-1)/2$.  Otherwise, the coefficients are zero.  
\end{lemma}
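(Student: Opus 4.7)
The plan is to reduce the three-dimensional Gaussian integral defining $\hat{f}_{i,jk}$ to three independent one-dimensional integrals by rewriting $1/r$ as a Laplace-like integral, and then to evaluate each piece in closed form using \Cref{def:3}. The first observation is that $f(z_1,z_2,z_3)=z_1/r$ is odd in $z_1$ and even in $z_2,z_3$, while $h_n$ has the same parity as $n$ (from \Cref{def:3}); a parity argument then forces $\hat{f}_{i,jk}=0$ unless $i$ is odd and $j,k$ are both even, handling the ``otherwise zero'' half of the claim. Assume henceforth these parities. I would next invoke the identity $1/r=\pi^{-1/2}\int_0^\infty t^{-1/2}e^{-tr^2}\,dt$ (a change of variables in $\Gamma(1/2)=\sqrt{\pi}$). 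Substituting into $\hat{f}_{i,jk}=\mathbb{E}[(z_1/r)h_i(z_1)h_j(z_2)h_k(z_3)]$ and interchanging the expectation with the $t$-integral, independence of the coordinates factors the integrand into three one-variable expectations $A_i(t):=\mathbb{E}[zh_i(z)e^{-tz^2}]$, $B_j(t):=\mathbb{E}[h_j(z)e^{-tz^2}]$, and $B_k(t)$.

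For even $n$, I would evaluate $B_n(t)$ by expanding $h_n$ through \Cref{def:3}, using the elementary Gaussian moment $\mathbb{E}[z^{2\ell}e^{-tz^2}]=(2\ell-1)!!/(1+2t)^{\ell+1/2}$, and recognizing the resulting sum over $\ell$ as the binomial expansion of $(-2t/(1+2t))^{n/2}$. The result is a clean rational expression
\begin{equation*}
B_n(t) = \frac{\sqrt{n!}\,(-1)^{n/2}\,t^{n/2}}{(n/2)!\,(1+2t)^{(n+1)/2}}.
\end{equation*}
For odd $i$, rather than working directly from the explicit polynomial, I would use the Hermite three-term recurrence $zh_i(z)=\sqrt{i+1}\,h_{i+1}(z)+\sqrt{i}\,h_{i-1}(z)$ to write $A_i(t)=\sqrt{i+1}\,B_{i+1}(t)+\sqrt{i}\,B_{i-1}(t)$, both of whose indices are even. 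Plugging in, factoring common terms, and invoking the fortuitous cancellation $(1+2t)-2t=1$ collapses the two contributions into
\begin{equation*}
A_i(t) = \frac{\sqrt{i!}\,(-1)^{(i-1)/2}\,t^{(i-1)/2}}{((i-1)/2)!\,(1+2t)^{(i+2)/2}}.
\end{equation*}

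Combining and setting $p=(i+j+k-1)/2$, the remaining $t$-integral becomes $\int_0^\infty t^{p-1/2}(1+2t)^{-(p+5/2)}\,dt$, which under $u=2t$ is a Beta integral $2^{-p-1/2}B(p+1/2,2)=2\sqrt{2}/[2^p(2p+1)(2p+3)]$. Converting double factorials via $2^{(i-1)/2}((i-1)/2)!=(i-1)!!$ (for $i$ odd) and $2^{m/2}(m/2)!=m!!$ (for $m$ even) consolidates all the powers of $2$ and factorials in the denominator into $(i-1)!!\,j!!\,k!!$, and the result matches \Cref{eq:50} exactly. The main obstacle I anticipate is the cancellation step in deriving $A_i(t)$: the recurrence produces two terms with different $(1+2t)$ denominators, and only the specific combination of coefficients coming from the recurrence causes them to align so that a single clean Beta integral emerges in the last step. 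Everything else is careful bookkeeping with double factorials.
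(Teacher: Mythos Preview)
Your argument is correct and genuinely different from the paper's. The paper derives \Cref{lem:24} as the $r=3$ case of \Cref{lem:25}: it first computes the raw moments $\mathbb{E}[z_1^{b_1}z_2^{b_2}z_3^{b_3}/r]$ via spherical coordinates (\Cref{lem:22}), then expands each $h_{\mu_i}$ using the explicit polynomial formula (\Cref{def:3}), substitutes the moments, and collapses the resulting multi-sum with generalized Vandermonde (\Cref{eq:82}) followed by the specialized identity \Cref{eq:81} and Gauss's theorem. Your route sidesteps all of that combinatorics by using the Schwinger-type representation $1/r=\pi^{-1/2}\int_0^\infty t^{-1/2}e^{-tr^2}\,dt$ to factor the three-dimensional expectation into a product of one-dimensional ones, each of which evaluates cleanly to a monomial in $t$ and $(1+2t)$; the final $t$-integral is then just a Beta function. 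The recurrence trick for $A_i(t)$ is a nice touch that avoids redoing the $B_n$ computation for the odd case.

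What each buys: the paper's approach is self-contained given the identities already catalogued in \Cref{sec:17} and makes the general-$r$ statement (\Cref{lem:25}) fall out with no extra work. Your approach is shorter, uses only the Hermite recurrence and the Beta integral, and would generalize to arbitrary odd $r$ just as easily (one simply picks up $r-1$ copies of $B_n$ and the exponent on $(1+2t)$ shifts accordingly). The only items worth noting explicitly in a write-up are that the Fubini interchange is justified by absolute integrability, and that the three-term recurrence $zh_i=\sqrt{i+1}\,h_{i+1}+\sqrt{i}\,h_{i-1}$ is not stated in the paper but is standard for normalized probabilists' Hermite polynomials.
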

\begin{proof}
This is a special case of \Cref{lem:25} found in the Appendix.  
\end{proof}

With these facts in hand, we can evaluate one of the expectations of interest to us:
\begin{lemma}\label{lem:19}
Let $(z, z_1, z_2, z_3) \sim \mathcal{N}(0, \Sigma)$ with 
\begin{equation*}
\Sigma=\begin{bmatrix}
1 & a & 0 & 0\\
a & 1 & 0 & 0\\
0 & 0 & 1 & 0\\
0 & 0 & 0 & 1
\end{bmatrix}.
\end{equation*}
Then,
\begin{equation}\label{eq:56}
    \mathbb{E} \left[ \frac{sign(z) z_1}{\sqrt{z_1^2+z_2^2+z_3^2}}\right]=\frac{4a}{3\pi} \, _2 F_1\left[\begin{matrix}1/2, \,\, 1/2\\5/2 \end{matrix};a^2 \right].
\end{equation}
\end{lemma}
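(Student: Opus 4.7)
The plan is to expand both $\mathrm{sign}(z)$ and $f(z_1,z_2,z_3) = z_1/\sqrt{z_1^2+z_2^2+z_3^2}$ in Hermite series using Lemmas~\ref{lem:23} and~\ref{lem:24}, respectively, and then exploit the independence structure of $(z,z_1,z_2,z_3)$ to collapse the double Hermite expansion to a single infinite series in $a$. Concretely, writing $\mathrm{sign}(z) = \sum_m \hat{g}_m h_m(z)$ and $f(z_1,z_2,z_3) = \sum_{i,j,k}\hat{f}_{i,jk} h_i(z_1)h_j(z_2)h_k(z_3)$, I would compute
\begin{equation*}
\mathbb{E}\!\left[\mathrm{sign}(z)\,f(z_1,z_2,z_3)\right] = \sum_{m,i,j,k}\hat{g}_m\hat{f}_{i,jk}\,\mathbb{E}[h_m(z)h_i(z_1)]\,\mathbb{E}[h_j(z_2)]\,\mathbb{E}[h_k(z_3)],
\end{equation*}
invoking the independence of $z_2,z_3$ from the remaining variables. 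Since $\mathbb{E}[h_j(z_2)] = \langle h_j, h_0\rangle = \delta_{j,0}$, and similarly for $z_3$, and since by Proposition~\ref{prop:1}(2) we have $\mathbb{E}[h_m(z)h_i(z_1)] = \delta_{m,i}a^i$, the expectation collapses to $\sum_i \hat{g}_i \hat{f}_{i,00}\,a^i$, with only odd $i$ contributing (the parity conditions in Lemmas~\ref{lem:23} and~\ref{lem:24} match consistently on odd $i$, even $j=k=0$).

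Next I would substitute $i = 2q+1$ and the explicit coefficient formulas. From Lemma~\ref{lem:23}, $\hat{g}_{2q+1} = \sqrt{(2q+1)!}(-1)^q/[2^{q-1/2}\sqrt{\pi}\,q!(2q+1)]$, and from Lemma~\ref{lem:24} with $j=k=0$ (so $p = q$), $\hat{f}_{2q+1,00} = 2\sqrt{2/\pi}\,\sqrt{(2q+1)!}(-1)^q/[(2q)!!(2q+1)(2q+3)]$. Using $(2q)!! = 2^q q!$ and $(2q+1)!/(q!)^2 = (2q+1)\binom{2q}{q}$, a short simplification yields
\begin{equation*}
\hat{g}_{2q+1}\hat{f}_{2q+1,00} = \frac{\binom{2q}{q}\,2^{2-2q}}{\pi(2q+1)(2q+3)},
\end{equation*}
so the expectation equals $\tfrac{4a}{\pi}\sum_{q\ge 0}\binom{2q}{q}(a^2/4)^q/[(2q+1)(2q+3)]$.

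Finally, to identify this series as the claimed hypergeometric, I would unpack ${}_2F_1[1/2,1/2;5/2;a^2]=\sum_q (1/2)_q^2 a^{2q}/[(5/2)_q q!]$ using $(1/2)_q = (2q)!/(4^q q!)$ and the telescoping identity $(5/2)_q = (1/2)_{q+2}/[(1/2)(3/2)]$. After cancelling factorials and powers of $2$, the $q$th coefficient becomes $3\binom{2q}{q}/[4^q(2q+1)(2q+3)]$, so that $(4a/(3\pi))\cdot{}_2F_1[\,\cdot\,]$ matches the series computed above term by term.

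The computation is essentially routine; the main care-points I foresee are (i) keeping parity and symmetry conventions straight so that only the $(j,k)=(0,0)$ contribution survives, and (ii) correctly manipulating the Pochhammer and double-factorial identities to match the hypergeometric series. Convergence and interchange of sum and expectation follow from $|a|\le 1$ (forced by PSD of $\Sigma$) and the $L^2(\mathbb{R}^n,\gamma)$ membership of both factors, which was already noted in Section~\ref{sec:9}.
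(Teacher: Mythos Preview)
Your proposal is correct and follows essentially the same route as the paper's proof: expand both factors in Hermite series, use independence and \Cref{prop:1}(2) to collapse to $\sum_{i \text{ odd}} \hat{g}_i \hat{f}_{i,00}\,a^i$, and then identify the resulting series as the claimed hypergeometric. The only cosmetic difference is that the paper quotes its pre-established identity \Cref{eq:83} for the last step, whereas you unpack the Pochhammer symbols directly; the two computations are equivalent.
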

\begin{proof}
See page \pageref{page:2} in the Appendix.  
\end{proof}

This falls out of the Hermite expansions we have so far, plus some elementary combinatorial identities.  For some intuition on how this fact is used in the analysis, note that the expectation in \Cref{eq:56} looks a lot like the component of one of the Bloch vectors in the rounding algorithm.  Indeed, analyzing that expectation will be exactly the purpose of this lemma.

\subsubsection{Bounds on Hermite Expansions}\label{sec:14}

While exact computations like \Cref{lem:19} would be desireable for all expansions of interest, in order to establish rigorous results we will need bounds in some cases.  The first lemma we describe concerns the symmetries of an inner product of random vectors with covariance as defined in \Cref{eq:44}:
\begin{lemma}\label{lem:11}
Let $[z_1, z_2, z_3, z_1', z_2', z_3'] \sim \mathcal{N}( 0, \Sigma'(a, b, c)) $ and let $\{\hat{f}_{i, jk}\}$ be the Hermite expansion of $f(\mathbf{z})=z_1/\sqrt{z_1^2+z_2^2+z_3^2}$.  Let $\mathbf{v}_1$, $\mathbf{v}_2$ be vectors with components defined as $(\mathbf{v}_1)_i:=z_i/\sqrt{z_1^2+z_2^2+z_3^2}$ and $(\mathbf{v}_2)_i=z_i'/\sqrt{(z_i')^2+(z_2')^2+(z_3')^2}$ for $i\in \{1, 2, 3\}$.  Further, define the following families of symmetric polynomials:
\begin{gather}\label{eq:12}
p_{i, jk}(a, b, c)=\begin{cases}
a^i b^j c^k+a^i b^k c^j +b^i a^j c^k +b^i a^k c^j+c^i a^j b^k+c^i a^k b^j \,\,\,\, \text{  if $k\neq j$ }\\
a^i b^j c^j+b^i a^j c^j +c^i a^j b^j \,\,\,\, \text{  if $k=j$}
\end{cases},\\
{\rm and} \,\, u_{i, jk}(a, b, c) =\begin{cases}   
a^i b^j c^k+a^i b^k c^j \,\,\,\, \text{  if $k\neq j$ }\\
a^i b^j c^j \,\,\,\, \text{  if $k=j$}
\end{cases}.
\end{gather}
Then, 
\begin{gather}\label{eq:30}
\mathbb{E}[\mathbf{v}_1\cdot{} \mathbf{v}_2]=\sum_{\substack{i, j, k\\ j \leq k}} \hat{f}_{i, jk}^2 p_{i, jk}(a, b, c),\\
{\rm and }\,\,\mathbb{E} \left[(\mathbf{v}_1)_1 (\mathbf{v}_2)_1\right]=\sum_{\substack{i, j, k\\ j \leq k}} \hat{f}_{i, jk}^2 u_{i, jk}(a, b, c)\label{eq:31}
\end{gather}
if $|a|$, $|b|$, $|c|$ $\leq 1$.
\end{lemma}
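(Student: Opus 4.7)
The plan is to expand everything in the Hermite basis and use the block-structure of $\Sigma'(a,b,c)$ together with the symmetry $\hat{f}_{i,j,k} = \hat{f}_{i,k,j}$ established just before the lemma. First, I would observe that the vector components are simply $f$ applied to permutations of the coordinates:
\begin{equation*}
(\mathbf{v}_1)_1 = f(z_1,z_2,z_3), \quad (\mathbf{v}_1)_2 = f(z_2,z_1,z_3), \quad (\mathbf{v}_1)_3 = f(z_3,z_1,z_2),
\end{equation*}
and similarly for $\mathbf{v}_2$ in terms of the primed variables. Since each such composition only permutes the arguments of $f$, its Hermite expansion uses the same coefficients $\hat{f}_{i,jk}$ applied to the correspondingly permuted Hermite factors.

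Second, I would exploit the very sparse structure of $\Sigma'(a,b,c)$: within each triple $(z_1,z_2,z_3)$ (resp.\ primed) the entries are independent standard Gaussians, and across the two triples only the ``diagonal'' pairs $(z_1,z_1')$, $(z_2,z_2')$, $(z_3,z_3')$ are correlated (with correlations $a,b,c$ respectively). Therefore
\begin{equation*}
\mathbb{E}\bigl[h_i(z_1)h_j(z_2)h_k(z_3)\, h_{i'}(z_1')h_{j'}(z_2')h_{k'}(z_3')\bigr] = \mathbb{E}[h_i(z_1)h_{i'}(z_1')]\,\mathbb{E}[h_j(z_2)h_{j'}(z_2')]\,\mathbb{E}[h_k(z_3)h_{k'}(z_3')],
\end{equation*}
which by Proposition~\ref{prop:1}(2) equals $\delta_{ii'}\delta_{jj'}\delta_{kk'}\,a^i b^j c^k$. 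Expanding $(\mathbf{v}_1)_1(\mathbf{v}_2)_1$ in Hermite polynomials and applying this identity term-by-term (this is essentially Proposition~\ref{prop:1}(3)) gives
\begin{equation*}
\mathbb{E}\bigl[(\mathbf{v}_1)_1(\mathbf{v}_2)_1\bigr] = \sum_{i,j,k} \hat{f}_{i,j,k}^{\,2}\, a^i b^j c^k.
\end{equation*}
Permuting the arguments of $f$ to get $(\mathbf{v}_\ell)_2$ and $(\mathbf{v}_\ell)_3$ permutes the roles of $a,b,c$: the second and third components pick up $\sum \hat{f}_{i,j,k}^{\,2}\, b^i a^j c^k$ and $\sum \hat{f}_{i,j,k}^{\,2}\, c^i a^j b^k$ respectively.

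Finally, I would collapse the sum to $j\le k$ using the symmetry $\hat{f}_{i,j,k}=\hat{f}_{i,k,j}$. For $j=k$ a single term survives, and for $j<k$ the two indices $(i,j,k)$ and $(i,k,j)$ in the unrestricted sum both carry $\hat{f}_{i,jk}^{\,2}$, producing exactly the two, respectively six, monomials appearing in $u_{i,jk}(a,b,c)$ and $p_{i,jk}(a,b,c)$ as defined in~\eqref{eq:12}. The bound $|a|,|b|,|c|\le 1$ guarantees that the Hermite series for $f$ converges absolutely in $L^2$ and that the term-by-term integration is justified. The only real obstacle is careful bookkeeping of the permutation symmetries when summing the three component expectations and rewriting the result as a sum over $j\le k$; once the correct orbit structure is identified, the polynomials $p_{i,jk}$ and $u_{i,jk}$ drop out mechanically.
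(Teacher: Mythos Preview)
Your proposal is correct and follows essentially the same approach as the paper: apply \Cref{prop:1} (or equivalently the explicit factorization you wrote down using the block structure of $\Sigma'(a,b,c)$) to obtain $\mathbb{E}[f(z_1,z_2,z_3)f(z_1',z_2',z_3')]=\sum_{i,j,k}\hat f_{i,jk}^{\,2}a^ib^jc^k$, permute variables to get the other two component-wise expectations, and then collapse to $j\le k$ using $\hat f_{i,jk}=\hat f_{i,kj}$. The paper's proof is slightly terser but the logic is identical.
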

\begin{proof}
By \Cref{prop:1},
\begin{align}
\mathbb{E}[f(z_1, z_2, z_3)  f(z_1', z_2', z_3')]=\sum_{i, j, k} \hat{f}_{i, jk}^2 a^i b^j c^k.
\end{align}
\noindent We can use the invariance $\hat{f}_{i, jk}=\hat{f}_{i, kj}$ to write this as:
\begin{equation}\label{eq:13}
\mathbb{E}[f(z_1, z_2, z_3)  f(z_1', z_2', z_3')]=\sum_i \left[\sum_{j < k} \hat{f}_{i, jk}^2 (a^i b^j c^k + a^i b^k c^j ) +\sum_j \hat{f}_{i, jj}^2 a^i b^j c^j \right].
\end{equation}
\noindent This establishes \Cref{eq:31}.  For \Cref{eq:30}, we wish to calculate:
\begin{align}\label{eq:21}
\mathbb{E} \left[ \frac{ z_1 z_1' + z_2 z_2' + z_3 z_3'}{\sqrt{(z_1^2+z_2^2+z_3^2)((z_1')^2+(z_2')^2+(z_3')^2)}}\right]=\mathbb{E}[f(z_1, z_2, z_3)f(z_1', z_2', z_3')+f(z_2, z_1, z_3)f(z_2', z_1', z_3')\\
\nonumber +f(z_3, z_1, z_2)f(z_3', z_1', z_2')].
\end{align}
We can permute the variables in \Cref{eq:13} to obtain analogous expressions for 

\noindent $\mathbb{E} [ f(z_2, z_1, z_3) f(z_2', z_1', z_3')]$ and $\mathbb{E} [ f(z_3, z_1, z_2) f(z_3', z_1', z_2')]$.  Substituting these expressions into \Cref{eq:21} gives \Cref{eq:30}.
\end{proof}

We will also need a remainder bound for finite Hermite expansions.  We will state it for our Hermite expansion of interest, $\{\hat{f}_{i, jk}\}$, although it clearly generalizes to others. 
\begin{lemma}\label{lem:12}
Let $Q$ be a finite subset of $\mathbb{Z}_{\geq 0} \times \mathbb{Z}_{\geq 0} \times \mathbb{Z}_{\geq 0}$ such that $j\leq k$ for all $(i, j, k)\in Q$.  Then, 
\begin{align*}
\left|\sum_{\substack{i, j, k\\j\leq k}}\hat{f}_{i, jk}^2 p_{i, jk}(a, b, c)-\sum_{(i, j, k)\in Q}\hat{f}_{i, jk}^2 p_{i, jk}(a, b, c)\right| \leq 3 \left( 1/3-\sum_{(i, j, k) \in Q}2^{1-\delta_{jk}}\hat{f}_{i, jk}^2\right),
\end{align*}
\noindent where $\delta_{jk}$ is the standard discrete delta function and $|a|$, $|b|$, $|c|$ $\leq 1$.
\end{lemma}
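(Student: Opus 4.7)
The plan is to convert the left-hand side of the claimed inequality into a tail of a nonnegative series whose total mass is exactly $1/3$, after bounding each polynomial $p_{i,jk}(a,b,c)$ uniformly on the cube $[-1,1]^3$. First I would observe that $p_{i,jk}(a,b,c)$ is a sum of at most six monomials of the form $\pm a^\alpha b^\beta c^\gamma$ (three when $j=k$, six when $j<k$), each of magnitude at most one when $|a|,|b|,|c|\le 1$. Hence
\[
|p_{i,jk}(a,b,c)|\;\le\; 3\cdot 2^{1-\delta_{jk}}\qquad\text{whenever }|a|,|b|,|c|\le 1.
\]

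Next I would pin down the total mass of the squared Hermite coefficients. By Parseval's identity applied to the Hermite basis of $L^2(\mathbb{R}^3,\gamma)$, $\sum_{i,j,k}\hat{f}_{i,j,k}^2 = \|f\|_{L^2}^2 = 1/3$, where the value $1/3$ is the one noted in the footnote preceding \Cref{lem:23}. Using the symmetry $\hat{f}_{i,jk}=\hat{f}_{i,kj}$, indices with $j<k$ are double-counted in the unrestricted sum while indices with $j=k$ are counted once, so
\[
\sum_{i,\,j\le k}2^{1-\delta_{jk}}\,\hat{f}_{i,jk}^2\;=\;\tfrac{1}{3}.
\]
Equivalently, this identity follows from \Cref{eq:30} in \Cref{lem:11} by taking $a=b=c=1$, where $\mathbf{v}_1=\mathbf{v}_2$ almost surely and the expectation equals $1$; the boundary value of the covariance is handled via a routine monotone-convergence argument, but the Parseval route avoids the degenerate distribution entirely and is the one I would write.

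Combining the two ingredients through the triangle inequality then yields the lemma in one line:
\[
\Bigl|\sum_{\substack{(i,j,k)\notin Q\\ j\le k}}\hat{f}_{i,jk}^2\,p_{i,jk}(a,b,c)\Bigr|\;\le\; 3\sum_{\substack{(i,j,k)\notin Q\\ j\le k}}2^{1-\delta_{jk}}\hat{f}_{i,jk}^2\;=\;3\Bigl(\tfrac{1}{3}-\sum_{(i,j,k)\in Q}2^{1-\delta_{jk}}\hat{f}_{i,jk}^2\Bigr),
\]
which is the claim. The only mildly subtle ingredient is the Parseval identification of $\|f\|_{L^2}^2$ together with the symmetrization that absorbs the factor $2^{1-\delta_{jk}}$; once those are in hand, the rest is a bare triangle-inequality estimate and no further analysis of the Hermite coefficients is required.
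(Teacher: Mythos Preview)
Your proof is correct and follows essentially the same approach as the paper: bound $|p_{i,jk}|\le 3\cdot 2^{1-\delta_{jk}}$ on the cube, apply the triangle inequality to the tail, and use the symmetry $\hat f_{i,jk}=\hat f_{i,kj}$ together with Parseval $\sum_{i,j,k}\hat f_{i,jk}^2=1/3$ to identify the total mass. Your aside about recovering the identity from \Cref{eq:30} at $a=b=c=1$ is a nice alternative observation, but the paper likewise works directly from Parseval.
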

\begin{proof}\footnote{There are some details here concerning convergence, etc. that we will not mention.  For reference, observe that the series $\sum_{i, j\leq k} \hat{f}_{i, jk}^2 p_{i, jk}(a, b, c)$ is absolutely convergent since $|p_{i, jk}(a, b, c)|$ is bounded and $\sum_{i, j, k}\hat{f}_{i, jk}^2=1/3$, so the convergence proofs are a simple exercise in real analysis\cite{R64}.  }
\begin{align}\label{eq:32}
\left|\sum_{\substack{i, j, k\\j\leq k}}\hat{f}_{i, jk}^2 p_{i, jk}(a, b, c)-\sum_{(i, j, k)\in Q}\hat{f}_{i, jk}^2 p_{i, jk}(a, b, c)\right| \leq \sum_{i, j\leq k} \hat{f}_{i, jk}^2 \delta_{(i, j, k)\in \neg Q} |p_{i, jk}(a, b, c)|,
\end{align}
\noindent where $\neg Q =\{(j, k, l)\in \mathbb{Z}_{\geq 0} \times \mathbb{Z}_{\geq 0} \times \mathbb{Z}_{\geq 0}: k\leq l \text{ and } (j, k, l)\notin Q\}$ and $\delta_{(i, j, k)\in \neg Q}$ is the standard discrete delta function which evaluates to $1$ if the condition is met and $0$ otherwise.  Observe that for $|a|$, $|b|$, $|c|$ $\leq 1$, $|p_{i, jk}(a, b, c)| \leq 6$ when $j\neq k$ and $|p_{i, jj}(a, b, c)| \leq 3$ so we can uniformly upper bound  $|p_{i, jk}(a, b, c)| \leq 3 \cdot{} 2^{1-\delta_{jk}}$.  Hence, we can upper bound \Cref{eq:32} as:
\begin{align}\label{eq:53}
\sum_{i, j\leq k} \hat{f}_{i, jk}^2 \delta_{(i, j, k)\in \neg Q} 3 \cdot{} 2^{1-\delta_{jk}}= \sum_{i, j\leq k} \hat{f}_{i, jk}^2 3 \cdot{} 2^{1-\delta_{jk}}-\sum_{(i, j, k)\in Q}  \hat{f}_{i, jk}^2 3 \cdot{} 2^{1-\delta_{jk}}\\
\nonumber =3\left(\sum_{i, j<k} \frac{\hat{f}_{i, jk}^2+\hat{f}_{i, kj}^2}{2}2^{1-\delta_{jk}}+\sum_{i, j} \hat{f}_{i, jj}^2-\sum_{(i, j, k) \in Q} \hat{f}_{i, jk}^2 2^{1-\delta_{jk}}\right),
\end{align}
\noindent where we used $\hat{f}_{i, jk}=\hat{f}_{i, kj}$.  We can now write \Cref{eq:53} as:
\begin{align*}
 =3\left(\sum_{i, j, k}\hat{f}_{i, jk}^2- \sum_{(i, j, k) \in Q} \hat{f}_{i, jk}^2 2^{1-\delta_{jk}}\right)=3\left(\frac{1}{3}- \sum_{(i, j, k) \in Q} \hat{f}_{i, jk}^2 2^{1-\delta_{jk}}\right).
\end{align*}

\end{proof}
We can use the same proof with the upper bound $|u_{i, jk}(a, b, c)| \leq 2^{1-\delta_{j, k}}$ to obtain:
\begin{lemma}\label{lem:13}
Let $Q$ be a finite subset of $\mathbb{Z}_{\geq 0} \times \mathbb{Z}_{\geq 0} \times \mathbb{Z}_{\geq 0}$ such that $k\leq l$ for all $(j, k, l)\in Q$.  Then, 
\begin{align*}
\left|\sum_{\substack{i, j, k\\j\leq k}}\hat{f}_{i, jk}^2 u_{i, jk}(a, b, c)-\sum_{(i, j, k)\in Q}\hat{f}_{i, jk}^2 u_{i, jk}(a, b, c)\right| \leq  1/3-\sum_{(i, j, k) \in Q}2^{1-\delta_{j, k}}\hat{f}_{i, jk}^2
\end{align*}
\noindent where $\delta_{j, k}$ is the standard discrete delta function and $|a|$, $|b|$, $|c|$ $\leq 1$.
\end{lemma}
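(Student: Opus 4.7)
The plan is to imitate the proof of \Cref{lem:12} verbatim, replacing only the uniform bound used on the symmetric polynomial. First, by the triangle inequality I would bound
\begin{equation*}
\left|\sum_{i,j\leq k}\hat{f}_{i,jk}^2 u_{i,jk}(a,b,c) - \sum_{(i,j,k)\in Q}\hat{f}_{i,jk}^2 u_{i,jk}(a,b,c)\right| \leq \sum_{i,j\leq k} \hat{f}_{i,jk}^2 \,\delta_{(i,j,k)\in \neg Q}\, |u_{i,jk}(a,b,c)|,
\end{equation*}
where $\neg Q$ denotes the complement of $Q$ inside $\{(i,j,k):j\leq k\}$, exactly as in the derivation of \Cref{eq:32}.

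Next I would establish the pointwise bound $|u_{i,jk}(a,b,c)| \leq 2^{1-\delta_{jk}}$ whenever $|a|,|b|,|c|\leq 1$. This is immediate from the definition of $u_{i,jk}$: when $j\neq k$ it is the sum of two monomials $a^ib^jc^k+a^ib^kc^j$, and when $j=k$ it is the single monomial $a^ib^jc^j$. In either case, every monomial is bounded in absolute value by $1$.

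Substituting this bound and rewriting $\delta_{(i,j,k)\in\neg Q} = 1 - \delta_{(i,j,k)\in Q}$ then gives
\begin{equation*}
\sum_{i,j\leq k} \hat{f}_{i,jk}^2 \, 2^{1-\delta_{jk}} \;-\; \sum_{(i,j,k)\in Q} \hat{f}_{i,jk}^2 \, 2^{1-\delta_{jk}}.
\end{equation*}
To handle the first (total) sum, I would split into the cases $j<k$ and $j=k$ and use the symmetry $\hat{f}_{i,jk}=\hat{f}_{i,kj}$ noted just before \Cref{lem:24} to fold the $j\leq k$ sum back into an unrestricted triple sum, mirroring the step in the proof of \Cref{lem:12}. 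That unrestricted sum equals $\sum_{i,j,k}\hat{f}_{i,jk}^2 = 1/3$ by Parseval's identity (\Cref{prop:1}, item 1) applied to the Hermite expansion of $f(z_1,z_2,z_3)=z_1/\sqrt{z_1^2+z_2^2+z_3^2}$, using the symmetry computation $\int f(\mathbf{z})^2 \, d\gamma(\mathbf{z})=1/3$ recorded in the footnote preceding \Cref{lem:23}. Combining yields the stated bound $1/3 - \sum_{(i,j,k)\in Q} 2^{1-\delta_{jk}} \hat{f}_{i,jk}^2$.

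There is no real obstacle here, since the argument is structurally identical to \Cref{lem:12}; the only point worth verifying carefully is the counting of monomials in $u_{i,jk}$ (giving the factor $2^{1-\delta_{jk}}$ in place of the factor $3\cdot 2^{1-\delta_{jk}}$ for $p_{i,jk}$), and keeping the summation convention $j\leq k$ consistent when applying the symmetry of the Hermite coefficients.
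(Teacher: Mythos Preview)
Your proposal is correct and matches the paper's approach exactly: the paper simply states that \Cref{lem:13} follows by the same proof as \Cref{lem:12} with the upper bound $|u_{i,jk}(a,b,c)| \leq 2^{1-\delta_{j,k}}$ in place of the bound on $p_{i,jk}$.
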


%=====================================================================

\subsection{Optimizing Rational Functions over Polytopes}\label{sec:10}

The next set of important tools concerns optimizing rational functions over polytopes.  These observations will be used to simplify analysis, by reducing the parameter space we need to examine for a lower bound on the approximation factor.  

\begin{lemma}\label{lem:10}
Let $\mathcal{P}$ be some polytope, and $(A, B, C, a, b, c, k)$ be constants such that 
\begin{gather*}
k+[p, q, r] \cdot{} [A, B, C] \geq 0\\
{\rm and } \,\,k+[p, q, r] \cdot{} [a, b, c] \geq 0
\end{gather*}
\noindent for all $[p, q, r] \in \mathcal{P}$.  Then,
\begin{align}
\min_{[p, q, r]\in \mathcal{P}}\frac{k+[p, q, r] \cdot{} [A, B, C]}{k+[p, q, r] \cdot{} [a, b, c] } \geq \min_{[p, q, r]\in \mathcal{B}}\frac{k+[p, q, r] \cdot{} [A, B, C]}{k+[p, q, r] \cdot{} [a, b, c] },
\end{align}
\noindent where $\mathcal{B}$ is the set of extreme points for which the objective is defined.  
\end{lemma}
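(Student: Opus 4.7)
The plan is to use the affine structure of the numerator and denominator together with the fact that any point in a polytope can be written as a convex combination of its extreme points. Since the constant $k$ in both numerator and denominator is preserved under convex combinations (because the coefficients sum to one), the rational function $\tfrac{k + [p,q,r]\cdot[A,B,C]}{k + [p,q,r]\cdot[a,b,c]}$ evaluated at a convex combination becomes a weighted mediant of the values at the extreme points, after which a standard mediant-style inequality yields the bound.

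In more detail, I would first let $\{v_i\}$ denote the extreme points of $\mathcal{P}$ and write any $[p,q,r]\in\mathcal{P}$ as $\sum_i \lambda_i v_i$ with $\lambda_i\geq 0$, $\sum_i \lambda_i = 1$. Setting $N_i := k + v_i\cdot [A,B,C]$ and $D_i := k + v_i\cdot [a,b,c]$, the hypotheses give $N_i, D_i \geq 0$. Using $\sum_i\lambda_i = 1$ to absorb the constant $k$, we obtain
\begin{equation*}
\frac{k + [p,q,r]\cdot [A,B,C]}{k + [p,q,r]\cdot [a,b,c]} = \frac{\sum_i \lambda_i N_i}{\sum_i \lambda_i D_i}.
\end{equation*}
Next I would partition the extreme points into those with $D_i > 0$ (which form the subset $\mathcal{B}$ where the objective is defined) and those with $D_i = 0$. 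Since $N_i \geq 0$ for the latter, dropping their contribution from the numerator can only decrease the ratio, yielding
\begin{equation*}
\frac{\sum_i \lambda_i N_i}{\sum_i \lambda_i D_i} \;\geq\; \frac{\sum_{v_i \in \mathcal{B}} \lambda_i N_i}{\sum_{v_i \in \mathcal{B}} \lambda_i D_i},
\end{equation*}
provided the denominator on the right is positive (otherwise the original denominator vanishes and the objective is undefined at $[p,q,r]$ in the first place, so we can exclude such points from the minimization on the left). Finally, the mediant inequality gives $\tfrac{\sum_{i\in\mathcal{B}}\lambda_i N_i}{\sum_{i\in\mathcal{B}}\lambda_i D_i} \geq \min_{v_i\in\mathcal{B}} \tfrac{N_i}{D_i}$, which is exactly the claimed bound.

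The mediant inequality itself follows immediately: letting $\mu := \min_{v_i\in\mathcal{B}} N_i/D_i$, we have $N_i \geq \mu D_i$ for every $v_i\in\mathcal{B}$, so $\sum \lambda_i N_i \geq \mu \sum \lambda_i D_i$, and dividing by the (positive) denominator gives the inequality. The only delicate point in the argument is the edge-case handling of extreme points where $D_i = 0$; this is not really hard, but it is the step where one must be careful to use the nonnegativity of $N_i$ and to interpret the domain of the minimization on the left-hand side correctly. Apart from that, the proof is a short and purely algebraic calculation.
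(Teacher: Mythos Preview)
Your proposal is correct and follows essentially the same approach as the paper: write the point as a convex combination of extreme points, use $\sum_i \lambda_i = 1$ to distribute the constant $k$, and apply the mediant inequality $\frac{\sum a_i}{\sum b_i} \geq \min_{i:b_i\neq 0} \frac{a_i}{b_i}$ for nonnegative $a_i, b_i$. You are somewhat more explicit than the paper about handling extreme points with vanishing denominator, but the underlying argument is identical.
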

\begin{proof}
By definition any $[p, q, r]$ can be written as a convex combination of the extreme points of $\mathcal{P}$: $[p, q, r] =\sum_i \lambda_i [p_i, q_i, r_i]$.  Re-write both the numerator and denominator under this decomposition:
\begin{equation*}
\frac{\sum_i \lambda_i(k+[p_i, q_i, r_i] \cdot{} [A, B, C])}{\sum_i \lambda_i(k+[p_i, q_i, r_i] \cdot{} [a, b, c] )}.
\end{equation*}
Apply the elementary fact that if $\{a_1, ..., a_n, b_1, ..., b_n\}$ are non-negative constants, then:
\begin{equation*}
\frac{a_1+...+a_n}{b_1+...+b_n}\geq \min_{i:b_i\neq 0} \frac{a_i}{b_i}.
\end{equation*}
\noindent The lemma follows. 
\end{proof}

The above lemma indicates that if we have a restriction $[p, q, r] \in \mathcal{P}$ we need only examine the extreme points for a lower bound.  We can take it a step further and use the symmetries of our specific problem to demonstrate that a single extreme point is sufficient in each case, for the quadratic part of the objective.  We obtain the following:

\begin{lemma}\label{lem:17}
Fix $k\in \{1, 2, 3\}$.  Let $\mathcal{B}_1$ be the set of extreme points of $\mathcal{S}$, let $ \mathcal{B}_2$ be the set of extreme points of $\mathcal{T}$, and let $\mathcal{B}_3$ be the set of extreme points of $-\mathcal{S}$.  Let us further define 
\begin{gather*}
A= \mathbb{E} \left[ \frac{z_1 z_1'}{\sqrt{(z_1^2+z_2^2+z_3^2)((z_1')^2+(z_2')^2+(z_3')^2)}}\right], \\
\nonumber B= \mathbb{E} \left[ \frac{z_2 z_2'}{\sqrt{(z_1^2+z_2^2+z_3^2)((z_1')^2+(z_2')^2+(z_3')^2)}}\right], \\
\nonumber {\rm and } \,\,C= \mathbb{E} \left[ \frac{z_3 z_3'}{\sqrt{(z_1^2+z_2^2+z_3^2)((z_1')^2+(z_2')^2+(z_3')^2)}}\right],
\end{gather*}
\noindent where $[z_1, z_2, z_3, z_1', z_2', z_3'] \sim\mathcal{N}(0, \Sigma'(a, b, c))$.  Then,
\begin{equation*} 
\min_{\substack{[a, b, c] \in \mathcal{S}\\ [p, q, r]\in \mathcal{B}_k}} \frac{k+[p, q, r]\cdot{} [A, B, C]}{k+[p, q, r]\cdot{} [a, b, c]} = \min_{[a, b, c]\in \mathcal{S}} \frac{k+[p_k, q_k, r_k]\cdot{} [A, B, C]}{k+[p_k, q_k, r_k]\cdot{} [a, b, c]},
\end{equation*}
where 
\begin{equation*}
[p_k, q_k, r_k] =\begin{cases}
[-1, -1, -1] \text{ if }k=1\\
[2, 0, 0] \text{ if } k=2\\
[1, 1, 1] \text{ if } k=3
\end{cases}.
\end{equation*}

\end{lemma}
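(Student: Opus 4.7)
The plan is to reduce the inner optimization from $\mathcal{P}_k$ to its vertex set using Lemma~\ref{lem:10}, and then to exploit a symmetry group acting jointly on the Gaussian parameters $(a,b,c)$ and on the vertex set $\mathcal{B}_k$ to show that the inner minimum $\min_{[a,b,c]\in\mathcal{S}}$ is the same at every vertex of $\mathcal{B}_k$. The outer minimum then collapses onto a single canonical vertex, for which we take $[p_k,q_k,r_k]$ as stated. The hypotheses of Lemma~\ref{lem:10} are met on both sides of the ratio: in the diagonalized setup of \Cref{lem:6}, the denominator equals $4\,\text{Tr}[P\rho^\star_{12}]\ge 0$ for a projector $P\semigeq 0$ and SDP marginal $\rho^\star_{12}\semigeq 0$, while the numerator equals $4\,\mathbb{E}[\text{Tr}[P\tilde\rho]]\ge 0$ for the random product state $\tilde\rho$ produced by \Cref{alg:2}.

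Next I would set up the equivariance. For any signed permutation $\sigma\in H:=\{\pm 1\}^3\rtimes S_3$, applying $\sigma$ simultaneously to the Gaussian vectors $(z_1,z_2,z_3)$ and $(z_1',z_2',z_3')$ preserves the joint law $\mathcal{N}(0,\Sigma'(a,b,c))$ while replacing $(a,b,c)$ by $\sigma(a,b,c)$. Since $[A,B,C]$ depends on $(a,b,c)$ only through this joint distribution, we obtain the equivariance $[A,B,C](\sigma(a,b,c)) = \sigma\,[A,B,C](a,b,c)$. Using orthogonality of $\sigma$ under the dot product, the ratio at $(\sigma(a,b,c),[p,q,r])$ equals the ratio at $((a,b,c),\sigma^{-1}[p,q,r])$. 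Thus whenever $\sigma$ also bijects $\mathcal{S}$, substitution gives
\[
\min_{[a,b,c]\in\mathcal{S}} \tfrac{k+[p,q,r]\cdot[A,B,C]}{k+[p,q,r]\cdot[a,b,c]} \;=\; \min_{[a,b,c]\in\mathcal{S}} \tfrac{k+\sigma^{-1}[p,q,r]\cdot[A,B,C]}{k+\sigma^{-1}[p,q,r]\cdot[a,b,c]}.
\]

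The crux is identifying the correct subgroup $G\le H$ preserving $\mathcal{S}$. Since the vertices of $\mathcal{S}$ are precisely the $\{\pm 1\}^3$ vectors whose coordinates multiply to $-1$, an element of $H$ lies in $G$ iff its sign-flip pattern contains an even number of $-1$'s; this gives $|G|=4\cdot 3! = 24$. The same $G$ preserves $\mathcal{B}_3=-\mathcal{B}_1$ (same product condition) and $\mathcal{B}_2=\{\pm 2e_i\}$ (trivially, as all of $H$ does). Moreover $G$ acts transitively on each $\mathcal{B}_k$: for $\mathcal{B}_1$ and $\mathcal{B}_3$, the stabilizer of a fixed vertex is the six coordinate permutations, so orbit--stabilizer yields orbit size $24/6=4$, matching $|\mathcal{B}_1|=|\mathcal{B}_3|$; for $\mathcal{B}_2$, coordinate permutations in $G$ move $2e_1$ to any $2e_j$, and an even sign-flip such as $(-,+,-)\in G$ sends $2e_1\mapsto -2e_1$.

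Putting it together, for any $[p,q,r]\in\mathcal{B}_k$ I pick $\sigma\in G$ with $\sigma[p_k,q_k,r_k]=[p,q,r]$, and the equivariance above yields equality of the two inner minima, proving the lemma. The main subtlety I anticipate is recognizing that not all of $H$ preserves $\mathcal{S}$—only the ``product-$+1$'' subgroup does—so isolating $G$ correctly and verifying that it transitively permutes each $\mathcal{B}_k$ is the key combinatorial step. A secondary technical point is the $k=2$ case, where the denominator vanishes at certain $\mathcal{S}$-boundary configurations (e.g., $[p,q,r]=[2,0,0]$ with $a=-1$); Lemma~\ref{lem:10}'s convention of restricting to vertices where the objective is defined handles this, and since $G$ permutes the excluded boundary configurations equivariantly, transitivity is inherited on the well-defined portion of the domain.
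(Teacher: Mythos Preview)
Your proposal is correct and follows essentially the same approach as the paper: both arguments exploit that $\mathcal{S}$ is invariant under coordinate permutations and even sign-flips (the paper singles out the generator $[a,b,c]\mapsto[-a,-b,c]$), that $[A,B,C]$ transforms equivariantly under these symmetries, and that this symmetry group acts transitively on each $\mathcal{B}_k$. Your version is simply more formal, packaging the symmetries as the index-$2$ subgroup $G\le\{\pm1\}^3\rtimes S_3$ and invoking orbit--stabilizer, whereas the paper works directly with generators; note also that your first paragraph invoking Lemma~\ref{lem:10} is superfluous, since the lemma as stated already restricts $[p,q,r]$ to $\mathcal{B}_k$.
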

\begin{proof}
It is easy to see that the region $\mathcal{S}$ is invariant under permutation of coordinates, as well as the linear transformation $[a, b, c] \rightarrow [-a, -b, c]$.  Hence,
\begin{align*}
\min_{[a, b, c] \in \mathcal{S}} \frac{k+[p, q, r] \cdot{} [A, B, C]}{k+[p, q, r] \cdot{} [a, b, c]}=\min_{[a, b, c] \in \mathcal{S}} \frac{k+[-p, -q, r] \cdot{} [-A, -B, C]}{k+[-p, -q, r] \cdot{} [-a, -b, c]}\\
=\min_{[-a, -b, c] \in \mathcal{S}} \frac{k+[-p, -q, r] \cdot{} [A, B, C]}{k+[-p, -q, r] \cdot{} [a, b, c]}
=\min_{[a, b, c] \in \mathcal{S}} \frac{k+[-p, -q, r] \cdot{} [A, B, C]}{k+[-p, -q, r] \cdot{} [a, b, c]}.
\end{align*}
For every $k$, all the points in $\mathcal{B}_k$ are related by a permutation of coordinates followed by $[p, q, r] \rightarrow [-p, -q, r]$.  Hence, all points in $\mathcal{B}_k$ result in the same value when we minimize and we are free to pick an arbitrary extreme point. 

\end{proof}

\subsection{Technical Estimates for Functions of Interest}\label{sec:11}
\knote{Double check I didnt mess anything up when I changed def of q functions.}
There are several functions we will need bounds on.  The first set of functions (\Cref{lem:9} - \Cref{lem:7}) come from truncations of Hermite expansions, and the second set (\Cref{lem:20}) is related to the (exact) expansion for the linear part of the objective.  In this direction let us define the following three functions:
\begin{align*}
q_1(a, b, c)=\hat{f}_{1, 00}^2 p_{1, 00}(a, b, c) +\hat{f}_{1, 02}^2 p_{1, 02}(a, b, c) +\hat{f}_{3, 00}^2 p_{3, 00}(a, b, c)\\
\nonumber +3 (1/3-\hat{f}_{1, 00}^2-2\hat{f}_{1, 02}^2-\hat{f}_{3, 00}^2),\\
q_2(a, b, c)=\hat{f}_{1, 00}^2 a+\hat{f}_{1, 02}^2 a(b^2+c^2)+\hat{f}_{3, 00}^2 a^3\\
\nonumber -  (1/3 - \hat{f}_{1, 00}^2 - 2\hat{f}_{1, 02}^2 -\hat{f}_{3, 00}^2),\\
\nonumber \text{ and  }q_{3}(a, b, c)=\hat{f}_{1, 00}^2 p_{1, 00}(a, b, c) +\hat{f}_{1, 02}^2 p_{1, 02}(a, b, c) +\hat{f}_{3, 00}^2 p_{3, 00}(a, b, c)\\
\nonumber -3 (1/3-\hat{f}_{1, 00}^2-2\hat{f}_{1, 02}^2-\hat{f}_{3, 00}^2).
\end{align*}
\noindent Each of these functions will correspond to a different rank, $k$.  Since $[p, q, r]$ changes in each case, we will need different functions to lower bound the expectation, i.e. we may need to add or subtract the remainder term based on the signs of $[p, q, r]$.  For completeness, note the following analytic values of Hermite coefficients from \Cref{eq:50}:
\begin{equation*}
    \hat{f}_{1, 00}^2=\frac{8}{9 \pi},\,\,\,\,\,\,\,\,\,\,\, \hat{f}_{1, 02}^2=\frac{4}{225 \pi}, \,\,\,\,\,\,\text{and}\,\,\,\,\, \hat{f}_{3, 00}^2=\frac{4}{75 \pi}. 
\end{equation*}
With these definitions in hand, we can give the lower bounds we use for functions of interest:
\begin{lemma}[Rank $1$ Lemma]\label{lem:9}
It holds that 
\begin{align*}
\min_{[a, b, c] \in \mathcal{S}}\frac{1-\sum_{i, j\leq k}\hat{f}_{i, jk}^2 p_{i, jk} (a, b, c)}{1-a-b-c} \geq \frac{1-q_{1}(-1, -1, -1)}{4}=\frac{22}{15\pi}.
\end{align*}
\end{lemma}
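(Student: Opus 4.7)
Invoke \Cref{lem:12} with the index set $Q = \{(1,0,0), (1,0,2), (3,0,0)\}$ (these are precisely the triples with $j \leq k$ whose Hermite coefficients appear in the definition of $q_1$). Using the analytic values of $\hat{f}_{i,jk}^2$ from \Cref{lem:24}, the bound yields
\begin{equation*}
\sum_{i,\, j\leq k}\hat{f}_{i,jk}^2\, p_{i,jk}(a,b,c) \;\leq\; \hat{f}_{1,00}^2\, p_{1,00} + \hat{f}_{1,02}^2\, p_{1,02} + \hat{f}_{3,00}^2\, p_{3,00} + 3\bigl(\tfrac{1}{3} - \hat{f}_{1,00}^2 - 2\hat{f}_{1,02}^2 - \hat{f}_{3,00}^2\bigr) \;=\; q_1(a,b,c).
\end{equation*}
Consequently, $1 - \sum \geq 1 - q_1(a,b,c)$, so the ratio in the statement is bounded below by $\frac{1-q_1(a,b,c)}{1-a-b-c}$.

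\textbf{Step 2: Handle the degenerate denominator.} On $\mathcal{S}$ one has $a+b+c \leq 1$, with equality only at the three vertices $[-1,1,1], [1,-1,1], [1,1,-1]$. At each of these vertices a direct computation (using $\mathbb{E}[\mathbf{v}_1\cdot\mathbf{v}_2] = 1/3$ from the rotational invariance of $\mathcal{N}(0,\mathbb{I}_3)$) shows $1 - \sum = 2/3 > 0$, so the ratio is $+\infty$ there. It therefore suffices to prove the inequality on $\mathcal{S}' := \{(a,b,c)\in\mathcal{S} : a+b+c < 1\}$.

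\textbf{Step 3: Reduce to a polynomial inequality.} The target inequality on $\mathcal{S}'$ is equivalent to the non-negativity of the cubic symmetric polynomial
\begin{equation*}
F(a,b,c) \;:=\; 1 - q_1(a,b,c) - \tfrac{22}{15\pi}\bigl(1 - a - b - c\bigr) \;\geq\; 0 \quad \text{on } \mathcal{S}.
\end{equation*}
Using $p_{1,00} = s$, $p_{3,00} = s^3-3st+3u$, and $p_{1,02} = st - 3u$ (where $s,t,u$ denote the elementary symmetric functions in $a,b,c$), $F$ may be written purely in terms of $s,t,u$. A direct substitution gives $F(-1,-1,-1) = \tfrac{88}{15\pi} - \tfrac{22}{15\pi}\cdot 4 = 0$, so the bound is tight at this vertex, matching the extreme point of $\mathcal{P}_1 = \mathcal{S}$ selected in \Cref{lem:17}.

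\textbf{Step 4 (the main obstacle): Verify $F\geq 0$ on $\mathcal{S}$.} This is the heart of the proof. Because $F$ is a cubic and $\mathcal{S}$ is a three-dimensional polytope, one must rule out negative values in the interior as well as on the three 2-dimensional faces incident to $(-1,-1,-1)$. The plan is to exploit the $S_3$ symmetry of both $F$ and $\mathcal{S}$ (permutations of coordinates), which restricts interior critical points to satisfy $\nabla F \parallel \mathbf{1}$; combined with the cubic structure this limits candidate minimizers to a finite set that can be explicitly enumerated. On each 2-dimensional face, parameterized by setting one barycentric coordinate to zero, $F$ restricts to a symmetric bivariate cubic, and its minimization reduces by symmetry to a one-variable boundary analysis. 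Checking $F \geq 0$ at each vertex ($F = 0$ at $(-1,-1,-1)$ and $F = \tfrac{88}{45\pi} > 0$ at the other three by the Step 2 computation), at the interior stationary points (obtained by solving $\partial_a F = \partial_b F = \partial_c F$), and on the edges (one-variable cubics that can be handled by elementary calculus) completes the argument. The technical difficulty lies purely in bookkeeping the several cases; deferring the detailed algebra to the appendix, as the authors appear to do, is natural since the choice of constants was engineered precisely so that $(-1,-1,-1)$ is the unique global minimizer of $F$ on $\mathcal{S}$.
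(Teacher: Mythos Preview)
Your Steps~1--3 are correct and align with the paper's setup: apply \Cref{lem:12} with $Q=\{(1,0,0),(1,0,2),(3,0,0)\}$, handle the three vertices with $a+b+c=1$ separately (your computation $1-\sum=2/3$ there is right), and reduce to $F\geq 0$ on $\mathcal{S}$.

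The paper takes a genuinely different route in the verification step. Rather than proving $F\geq 0$ on the entire tetrahedron, it splits $\mathcal{S}$ into a small corner
\[
\mathcal{S}_1=\operatorname{conv}\{(-1,-1,-1),\;(-\tfrac12,-\tfrac12,-1),\;(-\tfrac12,-1,-\tfrac12),\;(-1,-\tfrac12,-\tfrac12)\}
\]
around the tight vertex and its complement $\mathcal{S}_2$. On $\mathcal{S}_2$ the \emph{first-order} truncation ($Q=\{(1,0,0)\}$) already suffices: the resulting lower bound depends only on $x=a+b+c$, is monotone in $x$, and exceeds $0.47$ once $x\geq -2$. Only on the small region $\mathcal{S}_1$ does the paper carry out the detailed calculus (partial derivatives of the third-order ratio, parameterizing four edges/lines and one face, checking a handful of candidate points). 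This buys a much smaller case analysis than your proposal, since the delicate behavior is confined near $(-1,-1,-1)$.

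One correction to your Step~4: $S_3$-symmetry of $F$ does \emph{not} force interior critical points to satisfy $\nabla F\parallel\mathbf{1}$. Symmetry only says the critical set is permutation-invariant; an interior minimizer of $F$ on the open tetrahedron must satisfy $\nabla F=0$. (The paper instead shows $\partial q/\partial a$ has no zero on the relevant region by bounding the numerator's constant term, which rules out interior critical points directly.) Your plan is salvageable, but the actual enumeration of stationary points on the four faces and six edges is the real work, and the region-splitting trick is what makes it manageable.
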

\begin{lemma}[Rank $2$ Lemma]\label{lem:8}
It holds that 
\begin{align*}
\min_{[a, b, c] \in \mathcal{S}}\frac{1+\sum_{i, j\leq k}\hat{f}_{i, jk}^2 u_{i, jk} (a, b, c)}{1+a} \geq \frac{1+q_2(1, 0, 0)}{1+1}=\frac{1}{3}+\frac{24}{25\pi}.
\end{align*}
\end{lemma}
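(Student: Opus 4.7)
The strategy combines the Hermite-truncation bound of \Cref{lem:13}, an explicit description of the cross-sections of $\mathcal{S}$ at fixed $a$, and elementary one-variable cubic analysis.

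First, I would apply \Cref{lem:13} with $Q = \{(1,0,0),\,(1,0,2),\,(3,0,0)\}$. Using the definition of $u_{i,jk}$ one finds $u_{1,00}(a,b,c) = a$, $u_{1,02}(a,b,c) = a(b^2+c^2)$, $u_{3,00}(a,b,c) = a^3$, with weights $2^{1-\delta_{jk}}$ equal to $1,2,1$ respectively. The remainder bound then gives $\sum_{i,j\leq k}\hat{f}_{i,jk}^2 u_{i,jk}(a,b,c) \geq q_2(a,b,c)$ on all of $\mathcal{S}$, so (writing $\alpha := 1/3 + 24/(25\pi)$) it suffices to prove
$$ 1 + q_2(a,b,c) \geq \alpha\,(1+a) \quad \text{for every } (a,b,c) \in \mathcal{S}. $$
Note $1+a \geq 0$ on $\mathcal{S}$, and direct substitution shows equality at $(a,b,c) = (1,0,0) \in \mathcal{S}$, so the bound is tight and any proof must respect this.

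The key observation is that $q_2$ depends on $(b,c)$ only through the rotationally symmetric quantity $b^2+c^2$, and the coefficient of $b^2+c^2$ equals $\hat{f}_{1,02}^2\, a$. Using barycentric coordinates on the four vertices of $\mathcal{S}$ (letting $\mu_1 = \lambda_2-\lambda_1$, $\mu_2 = \lambda_4-\lambda_3$, so that $|\mu_1|\leq (1-a)/2$, $|\mu_2|\leq (1+a)/2$, and $b = \mu_1+\mu_2$, $c = \mu_1-\mu_2$), the cross-section at level $a$ yields $b^2+c^2 = 2\mu_1^2+2\mu_2^2$, which ranges over the interval $[0,\,1+a^2]$. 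The minimum $0$ is attained at $(b,c)=(0,0)$, and the maximum $1+a^2$ at, e.g., $(b,c)=(1,-a)$. Thus the cross-sectional minimum of $1 + q_2$ occurs at $b^2+c^2 = 0$ when $a \geq 0$ and at $b^2+c^2 = 1+a^2$ when $a \leq 0$.

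This reduces everything to verifying nonnegativity of two univariate cubics on their respective intervals:
\begin{align*}
G_+(a) &:= \left(\tfrac{2}{3}+\hat{f}_{1,00}^2 + 2\hat{f}_{1,02}^2 + \hat{f}_{3,00}^2 - \alpha\right) + (\hat{f}_{1,00}^2 - \alpha)\,a + \hat{f}_{3,00}^2\,a^3 \quad \text{on } [0,1],\\
G_-(a) &:= G_+(a) + \hat{f}_{1,02}^2\,a(1+a^2) \quad \text{on } [-1,0].
\end{align*}
A short arithmetic check gives $G_+(1) = 0$ (the tight point), $G_+(0) = G_-(0) = 1/3 + 4/(225\pi) > 0$, and $G_-(-1) = 2/3 > 0$. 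The derivatives $G_\pm'(a)$ are upward-opening parabolas in $a$ with vertex at $a=0$ (the quadratic coefficient is $3\hat{f}_{3,00}^2$ for $G_+'$ and $3(\hat{f}_{1,02}^2 + \hat{f}_{3,00}^2)$ for $G_-'$). Evaluating at the endpoints reduces to checking that $\alpha$ strictly exceeds both $\hat{f}_{1,00}^2 + 3\hat{f}_{3,00}^2$ and $\hat{f}_{1,00}^2 + 4\hat{f}_{1,02}^2 + 3\hat{f}_{3,00}^2$, a routine computation using $\hat{f}_{1,00}^2 = 8/(9\pi)$, $\hat{f}_{1,02}^2 = 4/(225\pi)$, $\hat{f}_{3,00}^2 = 4/(75\pi)$. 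Both $G_\pm'$ are therefore strictly negative on their intervals, so $G_\pm$ are monotone decreasing, and hence $G_+(a) \geq G_+(1) = 0$ on $[0,1]$ and $G_-(a) \geq G_-(0) > 0$ on $[-1,0]$.

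The conceptual key is that truncating the Hermite expansion at $q_2$ yields a $(b^2+c^2)$-only dependence on $(b,c)$, which collapses the three-parameter minimization over $\mathcal{S}$ to two one-dimensional cubic problems (one per sign of $a$). The main obstacle is not a deep optimization argument -- indeed \Cref{lem:10} is not applicable here because the objective is cubic rather than linear in $a$ -- but rather the arithmetic bookkeeping required to verify the signs of $G_\pm'$ at the interval endpoints in terms of the explicit values of $\hat{f}_{1,00}^2,\hat{f}_{1,02}^2,\hat{f}_{3,00}^2$ and $\alpha$.
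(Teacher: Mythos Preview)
Your proof is correct and follows essentially the same route as the paper: apply \Cref{lem:13} with $Q=\{(1,0,0),(1,0,2),(3,0,0)\}$, eliminate the $(b,c)$-dependence, and verify by one-variable calculus that the minimum occurs at $a=1$. You are in fact more careful than the paper on the elimination step: the paper simply discards the term $\hat{f}_{1,02}^2\,a(b^2+c^2)$ citing ``since $b^2+c^2\geq 0$,'' which is only a valid lower bound when $a\geq 0$; your sign-of-$a$ case split, using the explicit cross-sectional range $b^2+c^2\in[0,1+a^2]$ on $\mathcal{S}$, handles both signs rigorously and closes that gap.
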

\begin{lemma}[Rank $3$ Lemma]\label{lem:7}
It holds that 
\begin{align*}
\min_{[a, b, c] \in \mathcal{S}}\frac{3+\sum_{i, j\leq k}\hat{f}_{i, jk}^2 p_{i, jk} (a, b, c)}{3+a+b+c} \geq \frac{3+q_3(1/3, 1/3, 1/3)}{3+1}=\frac{1}{2}+\frac{388}{405\pi}.
\end{align*}
\end{lemma}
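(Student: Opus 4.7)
} The first step is to reduce the infinite Hermite sum in the numerator to a simple polynomial lower bound. I would apply \Cref{lem:12} with the truncation set $Q = \{(1,0,0),\,(1,0,2),\,(3,0,0)\}$, the three small-index triples $(i,j,k)$ with $j\le k$ whose squared Hermite coefficients appear in the definition of $q_3$. Since $q_3$ is constructed precisely so that it equals the partial sum $\sum_{(i,j,k)\in Q}\hat f_{i,jk}^2\,p_{i,jk}(a,b,c)$ minus the remainder bound $3\bigl(1/3-\hat f_{1,00}^2-2\hat f_{1,02}^2-\hat f_{3,00}^2\bigr)$ supplied by \Cref{lem:12}, this immediately yields $\sum_{i,j\le k}\hat f_{i,jk}^2\,p_{i,jk}(a,b,c)\ge q_3(a,b,c)$ for every $(a,b,c)\in\mathcal{S}$. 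Since $3+a+b+c\ge 0$ on $\mathcal{S}$, with equality only at the vertex $(-1,-1,-1)$ where the numerator is $3+q_3(-1,-1,-1)=2>0$ and so the ratio is $+\infty$, the lemma reduces to the polynomial optimization
\[
\min_{(a,b,c)\in\mathcal{S}}\frac{3+q_3(a,b,c)}{3+a+b+c}\;=\;\frac{3+q_3(1/3,1/3,1/3)}{4}\;=\;\tfrac12+\tfrac{388}{405\pi},
\]
where the right-hand equality is a direct computation using $\hat f_{1,00}^2=8/(9\pi)$, $\hat f_{1,02}^2=4/(225\pi)$, and $\hat f_{3,00}^2=4/(75\pi)$. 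Setting $\alpha=1/2+388/(405\pi)$, this is equivalent to showing that the totally symmetric cubic $P(a,b,c):=3+q_3(a,b,c)-\alpha(3+a+b+c)$ is nonnegative on $\mathcal{S}$, with equality at $(1/3,1/3,1/3)$.

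The key geometric observation is that $(1/3,1/3,1/3)$ is the centroid of the unique face $F:=\mathcal{S}\cap\{a+b+c=1\}$ of $\mathcal{S}$ on which $a+b+c$ is maximized, with vertices $(-1,1,1),(1,-1,1),(1,1,-1)$. On $F$ the denominator is constant ($=4$), so minimizing the ratio on $F$ amounts to minimizing the symmetric cubic $q_3|_F$ on a triangle. Because $q_3$ and $F$ are both invariant under the natural $S_3$ action on $(a,b,c)$, the gradient of $q_3|_F$ at the centroid is necessarily $S_3$-fixed on the two-dimensional tangent space to $F$ and hence zero, so the centroid is automatically a critical point; a symmetric Hessian computation shows it is a strict local minimum, and direct evaluation of $P$ at the three vertices of $F$ (where by symmetry $P$ takes a single common positive value) confirms it is the global minimum of $q_3|_F$.

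It remains to rule out a smaller value of the ratio at points of $\mathcal{S}$ off the face $F$. Intuitively, moving from $F$ toward $(-1,-1,-1)$ shrinks the denominator $3+a+b+c$ faster than it changes the numerator, so the ratio grows. I plan to make this precise along the symmetric chord $a=b=c=s/3$ for $s\in[-3,1]$: on this chord both numerator and denominator reduce to univariate cubics in $s$ that I can compare using elementary calculus on the explicit derivatives of $q_3$, showing the ratio is nonincreasing in $s$ and hence minimized at $s=1$, which is exactly the centroid of $F$. The three faces of $\mathcal{S}$ other than $F$ are obtained from $F$ by a single coordinate sign flip, and by the $S_3$-symmetry and sign-flip symmetry of $q_3$ established in \Cref{lem:17} they are handled by the same argument.

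The main obstacle is the off-chord, off-$F$ interior analysis---formally certifying $P\ge 0$ on the full three-dimensional tetrahedron. The cleanest route, paralleling the appendix proofs of \Cref{lem:9} and \Cref{lem:8}, is to exhibit a weighted sum-of-squares decomposition of the form $P(a,b,c)=\sum_i \sigma_i(a,b,c)\,b_i(a,b,c)$, where the $b_i$ are the four barrier polynomials $1\mp a\mp b\mp c$ defining the four faces of $\mathcal{S}$ and the $\sigma_i$ are explicit sums of squares. In this low-degree (cubic) three-variable setting with full $S_3$ symmetry, one can restrict the search to an $S_3$-invariant ansatz for the $\sigma_i$ and solve the resulting linear system in the coefficients by hand or by a small semidefinite feasibility check, completing the proof.
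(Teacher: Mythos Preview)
Your first step---applying \Cref{lem:12} with $Q=\{(1,0,0),(1,0,2),(3,0,0)\}$ to lower-bound the Hermite sum by $q_3$---is correct and matches the paper (where this function is called $q$). The lemma then reduces to showing the minimum of $(3+q_3)/(3+a+b+c)$ over $\mathcal{S}$ is at $(1/3,1/3,1/3)$.

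From here your plan has a concrete error and an unfinished step. The error: $q_3$ is \emph{not} invariant under the double sign-flip $(a,b,c)\mapsto(-a,-b,c)$; the terms $p_{1,00}=a+b+c$ and $p_{3,00}=a^3+b^3+c^3$ change sign in the first two slots, so your claim that the three faces other than $F$ are ``handled by the same argument'' via sign-flip symmetry fails. \Cref{lem:17} establishes a symmetry of the full expectation together with the domain $\mathcal{S}$, not of the truncated polynomial $q_3$ alone. Even if the symmetry held, the denominator $3+a+b+c$ is not constant on the other faces, so the ``minimize $q_3$ on a triangle'' reduction would not transfer. The unfinished step: the appendix proofs of \Cref{lem:9} and \Cref{lem:8} do \emph{not} use sums of squares---they are direct critical-point analyses---so ``paralleling the appendix proofs'' is a misreading, and your SOS certificate is only proposed, not produced.

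What the paper actually does is split $\mathcal{S}$ into a region $\mathcal{S}_1$ near the singular vertex $(-1,-1,-1)$ and its complement $\mathcal{S}_2$. On $\mathcal{S}_1$ the first-order truncation $Q=\{(1,0,0)\}$ already yields a one-variable function of $a+b+c$ exceeding $0.9$. On $\mathcal{S}_2$ the paper uses your $q_3$ and carries out an explicit critical-point analysis: it shows $\partial q/\partial a$ has no interior zeros, parameterizes each face and edge of $\mathcal{S}_2$, computes derivatives along those pieces, and compares the finitely many candidate extrema to confirm the minimum at $(1/3,1/3,1/3)$. The region split is what lets the calculus stay elementary by keeping the analysis away from the vanishing denominator.
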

\begin{proof}
See page \pageref{page:3} in the Appendix.  
\end{proof}

The final bound we need before proving the main technical lemmas follows:
\begin{lemma}\label{lem:20}
Let $a\in \mathbb{R}$ with $|a| \leq 1$, and let $b(a)=\frac{4a}{3\pi} \,_2F_1 \left[\begin{matrix}1/2, \,\, 1/2\\ 5/2 \end{matrix}; a^2 \right]$.  Then,
\begin{align*}
    b(a) \geq \begin{cases} (1/2) a \text{ if $a\leq 0$}\\ 4/(3\pi) a \text{ if $a\geq 0$}\end{cases}
\end{align*}
and
\begin{align*}
    b(a) \leq \begin{cases} 4/(3\pi) a \text{ if $a\leq 0$}\\ (1/2) a \text{ if $a\geq 0$}\end{cases}.
\end{align*}
\end{lemma}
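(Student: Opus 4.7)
The plan is to reduce everything to a statement about the function $g(x):={}_2F_1\!\left[\begin{smallmatrix}1/2,\,1/2\\5/2\end{smallmatrix};x\right]$ on $x\in[0,1]$. First I would factor out the $a$ in front and write
\[
\frac{b(a)}{a}=\frac{4}{3\pi}\,g(a^2),
\]
so that the four inequalities in the statement are all equivalent, after handling signs, to the two-sided bound $1\le g(a^2)\le \tfrac{3\pi}{8}$ for $a^2\in[0,1]$.

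Next I would establish monotonicity and endpoint values of $g$. From the hypergeometric series
\[
g(x)=\sum_{n\ge 0}\frac{(1/2)_n(1/2)_n}{(5/2)_n\,n!}\,x^n,
\]
every coefficient is positive, so $g$ is strictly increasing on $[0,1]$; in particular $g(0)=1$ is the lower endpoint. For the upper endpoint I would apply Gauss's summation theorem (valid since $c-a-b=3/2>0$):
\[
g(1)=\frac{\Gamma(5/2)\,\Gamma(3/2)}{\Gamma(2)\,\Gamma(2)}=\frac{(3\sqrt{\pi}/4)(\sqrt{\pi}/2)}{1}=\frac{3\pi}{8}.
\]
Combining these gives $1\le g(a^2)\le 3\pi/8$ throughout $|a|\le 1$, hence
\[
\frac{4}{3\pi}\ \le\ \frac{b(a)}{a}\ \le\ \frac{1}{2}\qquad(a\ne 0).
\]

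Finally I would translate back by cases on the sign of $a$. For $a\ge 0$, multiplying the two-sided bound by $a\ge 0$ preserves the direction and yields $\tfrac{4}{3\pi}a\le b(a)\le\tfrac{1}{2}a$, matching the two ``$a\ge 0$'' clauses of the lemma. For $a\le 0$, multiplying by $a\le 0$ reverses the inequalities and gives $\tfrac{1}{2}a\le b(a)\le \tfrac{4}{3\pi}a$, matching the two ``$a\le 0$'' clauses. The edge case $a=0$ is immediate since $b(0)=0$.

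There is no real obstacle in this proof: the monotonicity step is trivial from positivity of the series coefficients, and the only nontrivial ingredient is recognizing that $g(1)$ has a closed form given by Gauss's theorem, which handily equals $3\pi/8$ and cancels the $3\pi$ in the denominator to produce the clean constant $1/2$.
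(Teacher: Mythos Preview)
Your proof is correct and follows essentially the same idea as the paper's: both arguments exploit that the hypergeometric series has all positive coefficients and then pin down the endpoint values $g(0)=1$ and $g(1)=3\pi/8$ (the paper phrases the intermediate step as convexity/concavity of $b$ on each half-line rather than monotonicity of $b(a)/a$, but this is a cosmetic difference). Your explicit use of Gauss's summation theorem to get $g(1)=3\pi/8$ makes transparent what the paper records only as the evaluation $b(1)=\tfrac12$.
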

\begin{proof}
If $a\geq 0$ it is easy to see that the first and second derivatives of $b(a)$ are positive, so the function is increasing and convex in this region.  this implies that $b'(0)a \leq b(a) \leq (b(1)-b(0))a$ in this region.  Similarly, it is easy to see that if $a\leq 0$ the first derivative is positive and the second is negative.  Hence, $b(a)$ is increasing and concave so the rest of the lemma follows.
\end{proof}

\subsection{Proof of \Cref{lem:26}}\label{sec:12}
Recall we are interested in evaluating
\begin{align*}
    \xi:=\min_{\substack{[a, b, c]\in \mathcal{S}\\ [p, q, r]\in \mathcal{P}_k}}&\frac{k+\mathbb{E} \left[ \frac{p z_1 z_1' +q z_2 z_2' +r z_3 z_3'}{\sqrt{(z_1^2+z_2^2+z_3^2)((z_1')^2+(z_2')^2+(z_3')^2)}}\right]}{k+a p+bq+cr}  .
\end{align*}

Observe that both the numerator and denominator correspond to $Tr(\mathcal{O}\phi)$ for some PSD observable $\mathcal{O}$ and some valid density matrix $\phi$ in all cases.  Hence, for fixed $[a, b, c]$ the numerator and denominator are non-negative and we can apply \Cref{lem:10}:
\begin{align*}
\xi \geq \min_{\substack{[a, b, c]\in \mathcal{S}\\ [p, q, r]\in \mathcal{B}_k}}\frac{k+\mathbb{E} \left[ \frac{p z_1 z_1' +q z_2 z_2' +r z_3 z_3'}{\sqrt{(z_1^2+z_2^2+z_3^2)((z_1')^2+(z_2')^2+(z_3')^2)}}\right]}{k+a p+bq+cr},
\end{align*}
where $\mathcal{B}_k$ is the set of extreme points of $\mathcal{P}_k$.  Now let us apply \Cref{lem:17}, which implies that we need only consider a particular $[p_k, q_k, r_k]$ for each case.  We obtain:
\begin{align*}
\xi \geq \min_{[a, b, c]\in \mathcal{S}}\frac{k+\mathbb{E} \left[ \frac{p_k z_1 z_1' +q_k z_2 z_2' +r_k z_3 z_3'}{\sqrt{(z_1^2+z_2^2+z_3^2)((z_1')^2+(z_2')^2+(z_3')^2)}}\right]}{k+a p_k+bq_k+cr_k},
\end{align*} 
for 
\begin{equation*}
    [p_k, q_k, r_k]=
    \begin{cases}
        [-1, -1, -1] \text{ if $k=1$}\\
        [2, 0, 0] \text{ if $k=2$}\\
        [1, 1, 1] \text{ if $k=3$} 
    \end{cases}.
\end{equation*}
Now we can expand in the Hermite polynomials according to \Cref{lem:11}:
\begin{align*}
\xi \geq \begin{cases} \min_{[a, b, c] \in \mathcal{S}} \frac{1-\sum_{i, j \leq k} \hat{f}_{i, jk}^2 p_{i, jk}(a, b, c)}{1-a-b-c} \text{ if $k=1$}\\
\min_{[a, b, c] \in \mathcal{S}} \frac{2+2\sum_{i, j \leq k} \hat{f}_{i, jk}^2 u_{i, jk}(a, b, c))}{2+2a} \text{ if $k=2$}\\
 \min_{[a, b, c] \in \mathcal{S}} \frac{3+\sum_{i, j \leq k} \hat{f}_{i, jk}^2 p_{i, jk}(a, b, c)}{3+a+b+c} \text{ if $k=3$}\end{cases}.
\end{align*}
%And apply \Cref{lem:12} and \Cref{lem:13} with $Q=\{(1, 0, 0), (1, 0, 2), (3, 0, 0)\}$ to obtain:
%\begin{align*}
%\min_{(a, b, c) \in \mathcal{S}} \frac{1-\sum_{i, j \leq k} \hat{f}_{i, jk}^2 p_{i, jk}(a, b, c)}{1-a-b-c} \geq \min_{(a, b, c) \in \mathcal{S}} \frac{1-q_{-}(a, b, c)}{1-a-b-c} \text{ if $k=1$} \\
%\min_{(a, b, c) \in \mathcal{S}} \frac{2+2\sum_{i, j \leq k} \hat{f}_{i, jk}^2 u_{i, jk}(a, b, c)}{2+2a} \geq \min_{(a, b, c) \in \mathcal{S}} \frac{1+s(a, b, c)}{1+a} \text{ if $k=2$}\\
%\min_{(a, b, c) \in \mathcal{S}} \frac{3+\sum_{i, j \leq k} \hat{f}_{i, jk}^2 p_{i, jk}(a, b, c)}{3+a+b+c} \geq \min_{(a, b, c) \in \mathcal{S}} \frac{3+q_{+}(a, b, c)}{3+a+b+c} \text{ if $k=3$}
%\end{align*}
Finally, we can apply Lemmas \ref{lem:12}-\ref{lem:13} and Lemmas \ref{lem:9}-\ref{lem:7}.

\subsection{Proof of \Cref{lem:21}}\label{sec:13}
Recall we are interested in finding a lower bound for:
\begin{align}\label{eq:58}
     \xi:= \min \bigg(k + \mathbb{E} \left[\frac{p z_1 z_1'+q z_2 z_2'+r z_3 z_3'}{\sqrt{(z_1^2+z_2^2+z_3^2)((z_1')^2+(z_2')^2+(z_3')^2)}} \right] +\mathbb{E} \left[ \frac{t_i x_1 s_1}{\sqrt{x_1^2+x_2^2+x_3^2}|s_1|}\right]\\
    \nonumber +\mathbb{E} \left[\frac{t_j y_1 s_2}{\sqrt{y_1^2+y_2^2+y_3^2}|s_2|} \right] \bigg) /\bigg(k+ap +bq+cr+t_i d_i +t_j d_j \bigg)
    \end{align}
    \vspace{-25 pt}
    \begin{gather}
    \nonumber s.t. -k \leq ap+bq+rc \leq 4-k,\\
    \nonumber -k \leq ap+bq+rc+t_id_i+t_j d_j \leq 4-k,\\
    \nonumber |t_id_i| \leq l,\\
    \nonumber \text{ and }|t_j d_j| \leq l.
\end{gather}

We can evaluate the expectations according to \Cref{lem:19} and \Cref{lem:11} to write the objective of \Cref{eq:58} as:
\begin{align*}
    \xi=\min \bigg(k + p \sum_{i, j\leq k} \hat{f}_{i, jk}^2 u_{i, jk}(a, b, c)+q \sum_{i, j\leq k} \hat{f}_{i, jk}^2 u_{i, jk}(b, a, c)+r \sum_{i, j\leq k} \hat{f}_{i, jk}^2 u_{i, jk}(c, a, b)\\
    \nonumber +t_i \frac{4 d_i}{3 \pi}\,_2 F_1 \left[ \begin{matrix} 1/2, \, 1/2\\ 5/2 \end{matrix}; d_i^2 \right]+t_j \frac{4 d_i}{3 \pi}\,_2 F_1 \left[ \begin{matrix} 1/2, \, 1/2\\ 5/2 \end{matrix}; d_i^2 \right] \bigg)/\bigg(k+ap +bq+cr+t_i d_i +t_j d_j \bigg)\\
    \end{align*}
    \vspace{-45 pt}
    \begin{align*}
    s.t. \,\,... \,.
\end{align*}

We can apply \Cref{lem:13} with $Q=\{(1, 0, 0)\}$\footnote{Note that the remainder term, $3(1/3-8/(9\pi))$, can be taken negative for a lower bound becuase the denominator is always non-negative by the constraints.}, and \Cref{lem:20} to obtain the minimization:
\begin{gather*}
    \xi \geq \min \frac{k+\frac{8}{9\pi}(ap+bq+rc)-3(\frac{1}{3}-\frac{8}{9 \pi})+c_i t_i d_i +c_j t_j d_j}{k+ap+bq+rc+t_i d_i+t_j d_j} \\
    s.t. -k \leq ap+bq+rc \leq 4-k,\\
    -k \leq ap+bq+rc+t_id_i+t_j d_j \leq 4-k,\\
    c_i, c_j \in \left\{\frac{1}{2}, \frac{4}{3\pi}\right\},\\
    |t_i d_i| \leq l,\\
    \text{ and }|t_j d_j| \leq l.
\end{gather*}
Let us ``coarse grain'' the optimization by setting $v_1=ap+bq+rc$, $v_2=t_id_i$ and $v_3=t_jd_j$. The optimization we obtain by coarse graining is a lower bound on the original optimization by the following:  Introduce variables $v_1$, $v_2$ and $v_3$ into the program with constraints $v_1=ap+bq+rc$, $v_2=t_i d_i$ and $v_3=t_j d_j$.  We can rewrite the objective and the original constraints so that they are written entirely in $(v_1, v_2, v_3)$, while maintaining an equivalent problem (we have simply renamed things).  Then, disregard the constraints $v_1=ap+bq+rc$, $v_2=t_i d_i$ and $v_3=t_j d_j$.  Since we are de-constraining the problem the new objective is non-increasing, and the variables $(a, b, c, p, q, r, t_i, t_j)$ are redundant and can be ignored.  The program obtained from this is:
\begin{gather}\label{eq:54}
    \xi \geq \min \frac{k+\frac{8}{9\pi}v_1-3(\frac{1}{3}-\frac{8}{9 \pi})+c_i v_2 +c_j v_3}{k+v_1+v_2+v_3} \\
    \nonumber s.t. -k \leq v_1 \leq 4-k,\\
    \nonumber -k \leq v_1+v_2+v_3 \leq 4-k,\\
    \nonumber c_i, c_j \in \left\{\frac{1}{2}, \frac{4}{3\pi}\right\},\\
    \nonumber |v_2| \leq l,\\
    \nonumber \text{ and }|v_3| \leq l.
\end{gather}
We will proceed by solving the above optimization problem for any fixed choice of $(c_i, c_j)$.  For any fixed $(c_i, c_j)$, $[v_1, v_2, v_3]$ is constrained to a polytope, so we may apply \Cref{lem:10} (the denominator is non-negative by the constraints, and it is easy to check that the numerator is always non-negative) to lower bound the optimization problem as:
\begin{align*}
    \min ... \,\,s.t. [v_1, v_2, v_3] \in \mathcal{Q}_k \geq \min ... \,\,s.t. [v_1, v_2, v_3] \in \mathcal{B}_k,
\end{align*}
where $\mathcal{Q}_k$ is the polytope defined via the constraints of \Cref{eq:54}, and $\mathcal{B}_k$ is the corresponding set of extreme points.  Hence, for any fixed $(c_i, c_j)$, we have a lower bound on the optimization by checking extreme points of some polytope.  By checking these points and checking all possible values of $c_i$, $c_j$ we obtain the following minima:
\begin{equation*}
    \xi \geq \begin{cases} 2/\pi-1/4 \approx 0.387 \text{ if $k=1$}\\   16/(9\pi) \approx 0.565 \text{ if $k=2$}\\  3/8+11/(9\pi) \approx 0.764 \text{ if $k=3$} \end{cases}.
\end{equation*}

\section{Proof of Main Theorems}
 We will first provide a proof of \Cref{thm:1}, since techniques used in this proof will be similar to the general case.  

\subsection{Proof of \Cref{thm:1} (strictly quadratic case)}\label{sec:1}
The proof will follow the outline in \Cref{sec:analysis-overview}, but here we will point to the formal components needed in each step. Suppose we are given an instance of \Cref{prob:QLHP} and let 
\begin{equation}
\rho=\bigotimes_{i=1}^n \left( \frac{\mathbb{I}+\theta_{iX} \sigma^1 +\theta_{iY}\sigma^2 +\theta_{iZ}\sigma^3}{2}\right)
\end{equation}
\noindent be the (random) output of the approximation algorithm (\Cref{alg:2}).  The total expected cost is:
\begin{align}
\mathbb{E} \left[ \sum_{ij} Tr[H_{ij} \rho]\right]= \sum_{ij} \mathbb{E}\left[Tr[H_{ij} \rho]\right].
\end{align}
Now observe that, if we can find a worst case lower bound on 
\begin{equation}\label{eq:49}
\frac{\mathbb{E}[Tr[H_{ij} \rho]]}{rank(H_{ij})/4+\text{Tr}[M^* C_{ij}]},
\end{equation}
\noindent we would be able to uniformly bound each term, and hence be able to provide \Cref{thm:1}.  The majority of the technical work presented here is aimed at precisely such a goal\footnote{The careful reader will notice there is an edge case, where $rank(H_{ij})/4+\text{Tr}[M C_{ij}]=0$.  However, for this case the SDP is earning $0$, and since $H_{ij}\semigeq 0$, any $\rho $ has objective at least the SDP.}.  Note that we can WLOG assume that $H_{ij}$ is a strictly quadratic projector since the positive constant of proportionality will cancel from the numerator and denominator.

Let $D$ be a nonzero off-diagonal block of $C_{ij}$ corresponding to qubits $i$ and $j$.  Formally, $D\in \mathbb{R}^{3\times 3}$ such that $D_{k, l}=C_{ij}(\sigma_i^k, \sigma_j^l)$ for all $k,l$ in $[3]$.  Let $M^*$ be the optimal moment matrix with a Cholesky vectors  $\{\mathbf{v}_{ik}\}$ for all $i\in [n], k\in [3]$ and let $\mathbf{v}_0 $ be the Cholesky vector corresponding to index $\mathbb{I}$.  By this we mean that $\mathbf{v}_{ik}^T \mathbf{v}_{jl}=M^*(\sigma_i^k, \sigma_j^l)$ and $\mathbf{v}_{0}^T \mathbf{v}_{jl}=M^*(\mathbb{I}, \sigma_j^l)$.  Define $V_i=[\mathbf{v}_{i1}, \mathbf{v}_{i2}, \mathbf{v}_{i3}]$ and $V_j=[\mathbf{v}_{j1}, \mathbf{v}_{j2}, \mathbf{v}_{j3}]$.  We can rewrite our expectation of interest as:
\begin{equation}
\mathbb{E}[Tr[H_{ij} \rho]]=\frac{rank(H_{ij})}{4}+2\mathbb{E}  \left[ \frac{\mathbf{r}^T V_i D V_j^T\mathbf{r}}{||V_i^T \mathbf{r}|| \,\, ||V_j^T \mathbf{r}||}\right],
\end{equation}
where the factor of $2$ is used to account for the symmetry of $C_{ij}$.  Note, as we have previously mentioned, $sign(\mathbf{v}_0^T \mathbf{r})$ is squared, so it has no effect on the objective.  Hence we can rewrite \Cref{eq:49}:
\begin{equation}
    \frac{\mathbb{E}[Tr[H_{ij} \rho]]}{rank(H_{ij})/4+\text{Tr}[M C_{ij}]}=\frac{k+\mathbb{E}  \left[ \frac{\mathbf{r}^T V_i (8 D) V_j^T\mathbf{r}}{||V_i^T \mathbf{r}|| \,\, ||V_j^T \mathbf{r}||}\right]}{k+Tr[V_i (8 D) V_j^T]}.
\end{equation}
Observe that $V_i^T V_j$ must correspond to a valid quadratic $2$-moment for density matrix $\rho_{i j}$ in the relaxation, since the SDP is constrained in that way (recall the SDP is defined so that the moment matrix is consistent with {\it local} density matrices).  Similarly, $8 D$ must correspond to a $2$-moment for a strictly quadratic projector of rank $k$ by our assumptions on the problem instance (compare with \Cref{eq:52}).  Hence, we can apply \Cref{lem:6} (note that $C$ in the context of that Lemma is the same as $8D$ in the context of this proof), which implies
\begin{align}
\frac{\mathbb{E}[Tr[H_{ij} \rho]]}{rank(H_{ij})/4 +\text{Tr} [M^* C_{ij}]}\geq \min_{\substack{[a, b, c]\in \mathcal{S}\\ [p, q, r]\in \mathcal{P}_k}}\frac{k+\mathbb{E} \left[ \frac{p z_1 z_1' +q z_2 z_2' +r z_3 z_3'}{\sqrt{(z_1^2+z_2^2+z_3^2)((z_1')^2+(z_2')^2+(z_3')^2)}}\right]}{k+a p+bq+cr},
\end{align}

\noindent where for fixed $[a, b, c]$, $[\mathbf{z}, \mathbf{z}'] \sim \mathcal{N}(0, \Sigma'(a, b, c))$ and $\mathcal{P}_k =\begin{cases}
-\mathcal{S} \text{ if } k=3\\
\mathcal{T} \text{ if } k=2\\
\mathcal{S} \text{ if } k=1
\end{cases}
$.  Note that we have met exactly the conditions of \Cref{lem:26}, so we may apply it to obtain the theorem.  

%
%Then apply \Cref{lem:2}:
%\begin{equation}
%\frac{\mathbb{E}[(\mathbb{I}_{ij}-\ket{\psi_{ij}}\bra{\psi_{ij}})\rho_{ij}]}{3/4 +\text{Tr} [M C^e]} \geq \min_{\substack{(a, b, c)\in \mathcal{S}\\ (p, q, r)\in B}}\frac{3+\mathbb{E} \left[ \frac{p z_1 z_1' +q z_2 z_2' +r z_3 z_3'}{\sqrt{(z_1^2+z_2^2+z_3^2)((z_1')^2+(z_2')^2+(z_3')^2)}}\right]}{3+a p+bq+cr}
%\end{equation}
%\noindent where $B=\{(1, 1, 1), (-1, -1, 1), (-1, 1, -1), (1, -1, -1)\}$.  Now apply the discussion at the end of \Cref{sec:2}, which shows that WLOG:
%\begin{equation}
%\frac{\mathbb{E}[(\mathbb{I}_{ij}-\ket{\psi_{ij}}\bra{\psi_{ij}})\rho_{ij}]}{3/4 +\text{Tr} [M C^e]} \geq \min_{\substack{(a, b, c)\in \mathcal{S}}}\frac{3+\mathbb{E} \left[ \frac{ z_1 z_1' + z_2 z_2' + z_3 z_3'}{\sqrt{(z_1^2+z_2^2+z_3^2)((z_1')^2+(z_2')^2+(z_3')^2)}}\right]}{3+a +b+c}
%\end{equation}
%
%Then expand in Hermite polynomials as described in \Cref{sec:3}:
%\begin{equation}
%\frac{\mathbb{E}[(\mathbb{I}_{ij}-\ket{\psi_{ij}}\bra{\psi_{ij}})\rho_{ij}]}{3/4 +\text{Tr} [M C^e]} \geq \min_{\substack{(a, b, c)\in \mathcal{S}}}\frac{3+\sum_{i, j\leq k} \hat{f}_{i, jk}^2 p_{i, jk}(a, b, c)}{3+a +b+c}
%\end{equation}
%
%\Cref{lem:5} and \Cref{lem:4} then imply 
%\begin{equation}
%\frac{\mathbb{E}[(\mathbb{I}_{ij}-\ket{\psi_{ij}}\bra{\psi_{ij}})\rho_{ij}]}{3/4 +\text{Tr} [M C^e]} \geq q(1/3, 1/3, 1/3) \approx 0.802
%\end{equation}
%\noindent where $q(a, b, c)$ is defined in \Cref{eq:18}.  The theorem follows.  

\subsection{Proof of \Cref{thm:7}}
Just as before, we will lower bound the expected approximation factor by arguing a worst case lower bound on the expected approximation factor of a single edge.  Let $H_{ij}=P_{ij}\otimes \mathbb{I}_{[n]\setminus\{i, j\}}$ be some projector of rank $k$ and let $C_{ij}$ be the corresponding value matrix defined by \Cref{eq:52}. 

Just as in the previous proof, let $D\in \mathbb{R}^{3\times 3}$ be the ``quadratic part'' of $C_{ij}$ with elements defined as $D_{k, l}=C_{ij}(\sigma_i^k, \sigma_j^l)$ for all $k$, $l\in [3]$.  Let $\mathbf{w}_i/\mathbf{w}_j$ be the 1-local part of $C_{ij}$ corresponding to qubit $i/j$.  Specifically, $\mathbf{w}_i$ and $\mathbf{w}_j$ are vectors in $\mathbb{R}^3$ satisfying 
\begin{align}
(\mathbf{w}_i)_k=C_{ij}(\sigma_i^k, \mathbb{I}),\\
\text{ and }(\mathbf{w}_j)_l=C_{ij}(\mathbb{I}, \sigma_j^l).
\end{align}
Let $M^*$ be the optimal moment matrix with a Cholesky vectors  $\{\mathbf{v}_{ik}\}$ for all $i\in [n], k\in [3]$ and let $\mathbf{v}_0 $ be the Cholesky vector corresponding to index $\mathbb{I}$.  By this we mean that $\mathbf{v}_{ik}^T \mathbf{v}_{jl}=M(\sigma_i^k, \sigma_j^l)$ and $\mathbf{v}_{0}^T \mathbf{v}_{jl}=M(\mathbb{I}, \sigma_j^l)$.  Lastly, we will define $V_i=[\mathbf{v}_{i1}, \mathbf{v}_{i2}, \mathbf{v}_{i3}]$ and $V_j=[\mathbf{v}_{j1}, \mathbf{v}_{j2}, \mathbf{v}_{j3}]$, matrices composed of Cholesky vectors for a single qubit.  

Just as in \Cref{thm:1}, we can write the expected approximation factor as the quotient of the expected objective from the rounding algorithm and the value that the relaxation obtains.  We can write:
\begin{align}
    & \frac{\mathbb{E} [Tr[H_{ij} \rho]]}{k/4+Tr[C_{ij} M]}\\
    \nonumber & =\bigg( \frac{k}{4}+2\mathbb{E}\left[ \frac{\mathbf{r}^T V_i D V_j^T \mathbf{r}}{||V_i^T\mathbf{r}||\,\, ||V_j^T \mathbf{r}||}\right]+2\mathbb{E}\left[\frac{(\mathbf{r}^T V_i \mathbf{w}_i)( \mathbf{v}_0^T \mathbf{r})}{||V_i^T \mathbf{r}|| \,\, |\mathbf{v}_0^T \mathbf{r}|} \right]+2\mathbb{E}\left[\frac{(\mathbf{r}^T V_j \mathbf{w}_j)( \mathbf{v}_0^T \mathbf{r})}{||V_j^T \mathbf{r}|| \,\, |\mathbf{v}_0^T \mathbf{r}|} \right]\bigg)\\
    \nonumber &\hspace{4.5 cm} / \bigg(\frac{k}{4}+2Tr[V_i D V_j^T] +2 Tr[V_i \mathbf{w}_i \mathbf{v}_0^T]+2Tr[V_j \mathbf{w}_j \mathbf{v}_0^T] \bigg).
\end{align}

Multiplying the numerator and denominator by $4$ we can find an expected approximation factor of 
\begin{align}\label{eq:57}
    =\frac{k+\mathbb{E}\left[ \frac{\mathbf{r}^T V_i (8 D) V_j^T \mathbf{r}}{||V_i^T\mathbf{r}||\,\, ||V_j^T \mathbf{r}||}\right]+\mathbb{E}\left[\frac{\mathbf{r}^T V_i (8 \mathbf{w}_i) \mathbf{v}_0^T \mathbf{r}}{||V_i^T \mathbf{r}|| \,\, |\mathbf{v}_0^T \mathbf{r}|} \right]+\mathbb{E}\left[\frac{\mathbf{r}^T V_j (8 \mathbf{w}_j) \mathbf{v}_0^T \mathbf{r}}{||V_j^T \mathbf{r}|| \,\, |\mathbf{v}_0^T \mathbf{r}|} \right]}{k+Tr[V_i (8 D) V_j^T] + Tr[V_i (8 \mathbf{w}_i) \mathbf{v}_0^T]+Tr[V_j (8 \mathbf{w}_j) \mathbf{v}_0^T]}.
\end{align}

We can apply \Cref{lem:6} and \Cref{lem:18} to convert \Cref{eq:57} to a standard form:
\begin{align}
    =\frac{k + \mathbb{E} \left[\frac{p z_1 z_1'+q z_2 z_2'+r z_3 z_3'}{\sqrt{(z_1^2+z_2^2+z_3^2)((z_1')^2+(z_2')^2+(z_3')^2)}} \right] +\mathbb{E} \left[ \frac{t_i x_1 s_1}{\sqrt{x_1^2+x_2^2+x_3^2}|s_1|}\right] +\mathbb{E} \left[\frac{t_j y_1 s_2}{\sqrt{y_1^2+y_2^2+y_3^2}|s_2|} \right] }{k+ap +bq+cr+t_i d_i +t_j d_j},
\end{align}
where random variables are distributed according to:\knote{At this point maybe remind the reader of physical interpretations of u's and d's.  }
\begin{gather*}
    [z_1, z_2, z_3, z_1', z_2', z_3']\sim \mathcal{N}(0, \Sigma'(a, b, c)),\\
    [s_1, x_1, x_2, x_3] \sim \mathcal{N}(0, \Sigma_1),\\
    \text{ and }[s_2, y_1, y_2, y_3] \sim \mathcal{N}(0, \Sigma_2)
\end{gather*}
with
\begin{align}
    \Sigma_1=\begin{bmatrix} 1 & d_i & 0 & 0 \\d_i & 1 & 0 & 0\\ 0 & 0 & 1 & 0\\0 & 0 & 0 & 1\end{bmatrix} \,\,\,\text{ and }\,\,\, \Sigma_2=\begin{bmatrix} 1 & d_j & 0 & 0 \\d_j & 1 & 0 & 0\\ 0 & 0 & 1 & 0\\0 & 0 & 0 & 1\end{bmatrix}.
\end{align}
Now let us note several constraints on the parameters.  Just as in the proof of \Cref{thm:1}, \Cref{lem:6} implies that $[a, b, c] \in \mathcal{S}$ and $[p, q, r] \in \mathcal{P}_k$ where $\mathcal{P}_k$ is defined in \Cref{eq:66}.  \Cref{eq:63} implies that $-k \leq ap+bq+rc \leq 4-k$.  Note further that $ap+bq+rc+t_i d_i +t_j d_j=4 Tr(P \rho)-k$ for some $2$ qubit projector $P$ of rank $k$ and some density matrix on two qubits (here we are crucially using the constraints that force the moment matrix to correspond to a valid $2$ qubits marginals), hence \Cref{eq:60} implies $-k \leq ap+bq+rc+t_i d_i +t_j d_j\leq 4-k$.  Finally, let us note that Equations \ref{eq:65}, \ref{eq:64} imply that $|t_i d_i|, |t_j d_j| \leq 1$ if $k=1$ or $3$ and $|t_i d_i|, |t_j d_j| \leq 2$ if $k=2$.  Let us say that $|t_i d_i|, |t_j d_j|\leq l$ where $l$ is $1$ if $k=1$ or $3$ and $l$ is $2$ if $k=2$.  Hence, we have met the conditions required for \Cref{lem:21} which completes the proof.  
\section{Conclusion}

In this work we have demonstrated several new approximation algorithms for interesting cases of the $2$-Local Hamiltonain problem.  As is the theme in many works \cite{B07, G12, B16}, we have given evidence that the geometry of $2$-Local interactions can drastically effect approximability for traceless Hamiltonians since we demonstrate the the bipartite case has a constant factor approximation algorithm and the unconstrained case is known to have no constant factor algorithm \cite{B19}.  In addition to this, we have given a novel approximation algorithm/analysis for the $2$-Local Hamiltonian with local terms which are also projectors.  This is especially interesting given the the scarcity of approximation algorithms for quantum problems.  Indeed, the rank $3$ case, has been open for some time \cite{G12, H20}.  Furthermore, we have provided new techniques for rounding to product states that we believe will have additional applications in quantum information.  Our rounding algorithm is quite ``natural'' given the solution of the SDP, and the ability to understand the expectation through Hermite polynomial analysis seems likely to extend to other Hamiltonians/problems.

Future work includes tightening our analysis with more sophisticated methods.  As stated in the introduction, we believe the strictly quadratic case is a very interesting special case of the general problem, and given the symmetry of the problem, the analysis may be amenable to techniques from algebraic geometry.  In particular, we hope to replace the tedious analysis from lemmas \ref{lem:9}-\ref{lem:7} with a more sophisticated approach that can handle higher order Hermite expansions.  Such a method would also probably allow for a better analysis of the approximation algorithm itself, potentially allowing us to prove what we believe to be the attained approximation factor for our rounding algorithm.  Another natural direction is removing the assumption we have that the terms are projectors, and replacing it with generic PSD terms.  Implicit in this work is a $0.387$-approximation to such problems, by writing a given PSD term as a positive combination of projectors, the question is whether or not a better approximation can be obtained, perhaps with additional assumptions.

\subsection{Acknowledgments }

Sandia National Laboratories is a multimission laboratory managed and operated by National Technology and Engineering Solutions of Sandia, LLC., a wholly owned subsidiary of Honeywell International, Inc., for the U.S. Department of Energy’s National Nuclear Security Administration under contract DE-NA-0003525.  This work was supported by the U.S. Department of Energy, Office of Science, Office of Advanced Scientific Computing Research, Accelerated Research in Quantum Computing and Quantum Algorithms Teams programs.
\bibliography{mybib}{}
\bibliographystyle{alpha}

\appendix 
\noindent {\huge \textbf{Appendix} }
\vspace{10 pt}

The Appendix covers a range of results.  First we present connections between classical constraint satisfaction problems and quantum local Hamiltonian problems.  Next, we list some combinatorial identities (\Cref{sec:17}) used in this work.  These identities are all standard themselves, of follow from standard techniques \cite{P97}.  Third, we cover some calculations relevant to the Hermite expansions used in the paper (\Cref{sec:18}).  In the paper we were primarily concerned with the $3$-dimensional function $f$ (\Cref{eq:75}), but here we give Hermite coefficients for the analogous $r$-dimensional function for arbitrary $r\in \mathbb{Z}_{\geq 0}$.  This calculation allows us to reproduce technical tools from another work \cite{B11}.  Then, in \Cref{sec:19}, we give our calculus-based lower bounds for the functions of interest.  The following section (\Cref{sec:20}) concerns some proofs for the properties of quantum states/projectors we used in the paper.  Lastly, in \Cref{sec:21} we give an approximation algorithm for the bipartite, traceless $2$-Local Hamiltonian problem which achieves a novel approximation factor.

\section{Connecting $2$-Local Hamiltonians and Classical $2$-CSPs}
\label{sec:classical-quantum}
Here we motivate why $2$-Local Hamiltonian is a generalization of classical $2$-CSPs.  We start with the Pauli matrices as defined in \Cref{eq:paulis} in \Cref{sec:quantum-notation}. We will use the notions of a qubit and classical Boolean variable interchangeably and assume we have $n$ such objects.  Here we favor different notation than in \Cref{sec:quantum-notation} to better emphasize the connection between boolean polynomials and Hamiltonians. 

We use the notation $P_i$ to denote a Pauli matrix $P \in \{X=\sigma^1,Y=\sigma^2,Z=\sigma^3\}$ acting on qubit $i$, i.e. $P_i := \mathbb{I} \otimes \mathbb{I} \otimes \ldots \otimes P \otimes \ldots \otimes \mathbb{I} \in \mathbb{C}^{2^n \times 2^n}$, where the $P$ occurs at position $i$.  As an example comparing notation, for $n=4$, $X_1Y_3 = \sigma_1^1 \otimes \sigma_3^2 \otimes \mathbb{I}_{2,4} = X \otimes \mathbb{I} \otimes Y \otimes \mathbb{I}$.  This notation will allow us to avoid explicitly using tensor-product notation, which is replaced with matrix multiplication; moreover, it will allow us to express Hermitian matrices as multilinear polynomials.

For $n=2$, consider the matrix $C = \frac{1}{2}(\mathbb{I}-Z_1Z_2) = \ket{01}\bra{01} + \ket{10}\bra{10}$:
\begin{equation*}
C =
    \begin{bmatrix}
    0 & 0 & 0 & 0\\
    0 & 1 & 0 & 0\\
    0 & 0 & 1 & 0\\
    0 & 0 & 0 & 0\\
    \end{bmatrix}
\text{ and }Q = 
    \begin{bmatrix}
    0 & 0 & 0 & 0\\
    0 & \frac{1}{2} & -\frac{1}{2} & 0\\
    0 & -\frac{1}{2} & \frac{1}{2} & 0\\
    0 & 0 & 0 & 0\\
    \end{bmatrix}.
\end{equation*}
Finding a maximum eigenvector of $C$ corresponds to solving the classical $2$-CSP Max $\frac{1}{2}(1-z_1z_2)$ over $z_1,z_2 \in \{\pm 1\}$.  This is because $C$ is diagonal matrix with Boolean entries, so that maximum eigenvectors correspond to $\ket{b_1b_2}$ where the corresponding diagonal entry of $C$ (i.e., $\bra{b_1b_2}C\ket{b_1b_2}$) is 1.   If we associate $\ket{0}$ with the value 1 and $\ket{1}$ with -1, then $\frac{1}{2}(1-z_1z_2)$ is the Boolean indicator function for the diagonal of $C$.  More generally, if we are given a \emph{density matrix} $\rho \in \mathbb{C}^{4 \times 4}$, satisfying $\rho \succeq 0$ and $\text{Tr}[\rho] = 1$, then $\text{Tr}[C\rho] \in [0,1]$ may be viewed as the expected value of $\frac{1}{2}(1-z_1z_2)$ over a probability distribution specified by the diagonal of $\rho$.  In this way, any rank $r$ projector in $\mathbb{C}^{4 \times 4}$ corresponds to Boolean clause on 2 variables, where $r$ indicates the number of satisfying assignments.  Such a clause may be extended to act on variables $i$ and $j$ among $n$ variables by taking $C_{ij} \otimes \mathbb{I}_{[n]\setminus\{i,j\}}$ ($C_{ij}$ denotes that $C$ acts on qubits $i$ and $j$).  Summing over such extended clauses yields a $2$-local Hamiltonian instance corresponding to a Boolean $2$-CSP.  Theorem~\ref{thm:6} demonstrates how Max $2$-QSAT generalizes Max $2$-SAT.

Multilinear polynomials over $\mathbb{I}$ and $Z_i$ of degree $k$ are diagonal $k$-local Hamiltonians encoding classical $k$-CSPs, while multilinear polynomials over $\mathbb{I}, X_i, Y_i, Z_i$ of degree $k$ are more general non-diagonal $k$-local Hamiltonians that are able to represent quantum phenomena. 

We obtain a quantum generalization of Boolean $2$-CSPs by considering non-diagonal matrices (in the standard basis).  Let $Q=\frac{1}{4}(\mathbb{I}-X_1X_2-Y_1Y_2-Z_1Z_2)$ as depicted above.  Here $C$ represents Max Cut on a single edge, while $Q$ is a quantum generalization of Max Cut~\cite{G19}. The unique maximum eigenvector of $Q$ is $\ket{\psi} = \frac{1}{\sqrt{2}}(\ket{01}-\ket{10})$, with corresponding density matrix $\rho = \ket{\psi}\bra{\psi}$, achieving an eigenvalue of $\text{Tr}[Q\rho]=1$.  This is a non-classical solution, which in general may require an exponential amount of information for representation.  In contrast \emph{product states}, which exhibit no quantum entanglement, may be represented as
\begin{equation*}
    \rho = \prod_i \frac{1}{2}(\mathbb{I}+\alpha_i X_i + \beta_i Y_i + \gamma_i Z_i).
\end{equation*}
Such $\rho$ has $\text{Tr}[\rho]=1$ by construction, and $\rho \succeq 0$ is guaranteed if $\alpha_i^2 + \beta_i^2 + \gamma_i^2 \leq 1$ for all $i$.  A product state maximizing $\text{Tr}[Q\rho]$ is $\rho = \frac{1}{4}(\mathbb{I}+Z_1)(\mathbb{I}-Z_2) = \ket{01} \bra{01}$, achieving a value of $\frac{1}{2}$.  Thus for $Q$ there is a gap of $\frac{1}{2}$ between the best product state and the best arbitrary quantum density matrix.

\section{Identities}\label{sec:17}

We will make use of several identities concerning the generalized Hypergeometric function.  The following standard technique will be used in proving our combinatorial identities:
\begin{theorem}[\cite{P97}]\label{thm:4}
\begin{enumerate}
    \item  Let $Q=\sum_{m=0}^p t_m$ be some finite sum.  If 
    \begin{equation}\label{eq:55}
    \frac{t_{m+1}}{t_m}=\frac{(m+a)(m+b)}{(m+c)(m+1)}
    \end{equation} 
    \noindent for $m=\{0, ..., p-1\}$, then $Q=t_0 \,\,_2F_1 \left[\begin{matrix} a, \,\, b\\ c\end{matrix} ; \,1\right]$.
    \item  Let  $Q=\sum_{m=0}^\infty t_m x^m$. If \Cref{eq:55} holds for all $m$, then $Q=t_0 \,\,_2F_1 \left[\begin{matrix} a, \,\, b\\ c\end{matrix} ; \,x\right]$.
\end{enumerate}
\end{theorem}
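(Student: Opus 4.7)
The plan is to directly compute a closed-form expression for $t_m$ by iterating the prescribed ratio, and then match the resulting series term-by-term against the definition of $_2F_1$ given earlier in the paper.

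First I would establish the closed form $t_m = t_0 \cdot \frac{(a)_m (b)_m}{(c)_m \, m!}$ by induction on $m$. The base case $m=0$ is immediate, since $(a)_0 = (b)_0 = (c)_0 = 0! = 1$. For the inductive step, the assumed recursion
\[
\frac{t_{m+1}}{t_m}=\frac{(m+a)(m+b)}{(m+c)(m+1)}
\]
yields $t_{m+1} = t_0 \cdot \frac{(a)_m(b)_m (m+a)(m+b)}{(c)_m \, m! \, (m+c)(m+1)}$, and the identities $(x)_{m+1} = (x)_m(x+m)$ and $(m+1)! = m!(m+1)$ collapse this to the claimed form with $m$ replaced by $m+1$.

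For part 2, I would substitute the closed form into $Q = \sum_{m \geq 0} t_m x^m$ to get
\[
Q \;=\; t_0 \sum_{m=0}^\infty \frac{(a)_m (b)_m}{(c)_m}\frac{x^m}{m!},
\]
which is $t_0 \cdot {}_2F_1\!\left[\begin{matrix} a,\, b \\ c \end{matrix};\, x\right]$ by the definition of the hypergeometric function stated in \Cref{sec:notation}. Part 1 is handled by the same induction: $t_m = t_0 \cdot \frac{(a)_m(b)_m}{(c)_m m!}$ for $0 \le m \le p$, so $Q = t_0 \sum_{m=0}^p \frac{(a)_m(b)_m}{(c)_m m!}$. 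The identification with $t_0 \cdot {}_2F_1\!\left[\begin{matrix} a,\, b \\ c \end{matrix};\, 1\right]$ then requires that the tail from $m = p+1$ onward vanishes, which is the typical situation in our applications: one of $a$ or $b$ is a non-positive integer (specifically $-p$), so that $(a)_m = 0$ or $(b)_m = 0$ for $m > p$ and the hypergeometric series terminates.

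The only potential obstacle is the convergence/termination subtlety for part 1, but this is a non-issue in our setting since the theorem is only invoked to repackage finite sums whose rising-factorial structure naturally truncates. All other steps are routine index-tracking with Pochhammer symbols.
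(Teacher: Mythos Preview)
Your argument is correct and is exactly the standard verification of this well-known recognition principle for hypergeometric series. Note, however, that the paper does not actually supply its own proof of \Cref{thm:4}; it is stated as a cited result from \cite{P97} and used as a black box in the proofs of \Cref{eq:81} and \Cref{eq:83}. So there is nothing in the paper to compare against beyond the statement itself.

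Your observation about part~1 is on point: as stated, the identification of the finite partial sum $\sum_{m=0}^{p} t_m$ with the full value ${}_2F_1[a,b;c;1]$ tacitly assumes the hypergeometric series terminates at $m=p$, i.e.\ that one of $a,b$ equals $-p$. This is precisely how the paper invokes part~1 (in the proof of \Cref{eq:81} it takes $a=-p$), so the implicit hypothesis is harmless in context, and you have correctly flagged it.
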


Now, we list several identities that will be used in proving the results in the paper:
\begin{lemma}\label{lem:16}
\,

\begin{enumerate}
\item \begin{equation}\label{eq:77}
\int_0^\pi \sin^a t \cos^b t \,\,dt=\begin{cases}
\frac{2(a-1)!! (b-1)!!}{(b+a)!!} \text{a is odd, b is even}\\
\frac{\pi(a-1)!! (b-1)!!}{(b+a)!!} \text{a is even, b is even}
\end{cases}.
\end{equation}

\item  \begin{equation}\label{eq:78}
\int_0^{2\pi}\sin^a t \cos^b t dt =\frac{2\pi (a-1)!! (b-1)!!}{ (a+b)!!} \text{for a and b even.}
\end{equation}.

\item  \cite{A48}  If $n$ is an integer:
\begin{equation}\label{eq:79}
\Gamma(n+1/2)=\sqrt{\pi} \frac{(2n-1)!!}{2^n}.
\end{equation}

%\item  \cite{S72}  \knote{Delete this, but be careful about the refs attached to it.  }The Hermite polynomials have an explicit description as:
%\begin{align}
%    h_n(z)=\frac{\sqrt{n!}}{2^{n/2}} \sum_{l=0}^{n/2} \frac{(-1)^{n/2-l} (\sqrt{2}z)^{2 l}}{(2l)! (n/2-l)!}
%    {\rm \text{ if $n$ is even, }} \\ 
%    \nonumber {\rm and} \,\,\, h_n(z)=\frac{\sqrt{n!}}{2^{n/2}} %\sum_{l=0}^{(n-1)/2}\frac{(-1)^{(n-1)/2-l} (\sqrt{2} z)^{2 l+1}}{(2l+1)!((n-1)/2-l)!} \text{ if $n$ is odd.}
%\end{align}

\item\cite{P97} (Gauss's Hypergeometric Theorem) If $c > a+b$ (and $a, b, c\in \mathbb{R}$):
\begin{equation}\label{eq:80}
         \,_2F_1 \left[\begin{matrix} a, \,\, b\\ c\end{matrix} ; \,1\right]=\frac{\Gamma(c) \Gamma(c-a-b)}{\Gamma(c-a) \Gamma(c-b)}.
\end{equation}

\item  If $r$ is odd:
\begin{equation}\label{eq:81}
    \sum_{m=0}^p \frac{(-1)^m (2m+r-1)!!}{(2m+r)!!} \binom{p}{m}=\frac{(r-1)!!}{(2p+r)(2p+r-2)...(2p+1)}.
\end{equation}
\item  Generalized Vandermonde's identitiy:
\begin{equation}\label{eq:82}
\sum_{k_1+...+k_p=m} \binom{n_1}{k_1} \binom{n_2}{k_2}... \binom{n_p}{k_p}=\binom{n_1+...+n_p}{m}.
\end{equation}
\item  
\begin{equation}\label{eq:83}
\sum_{q=0}^\infty a^{2q+1} \frac{(2q+1)!}{2^{2q+1}(1+2q)^2 (3+2q) (q!)^2}=\frac{a}{6} \,\,_2 F_1 \left[\begin{matrix} 1/2, \,\, 1/2 \\ 5/2\end{matrix}; a^2 \right].
\end{equation}
\end{enumerate}
\end{lemma}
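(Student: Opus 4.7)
The plan is to handle the seven parts of \Cref{lem:16} in three groups, relying on classical references for the textbook identities and on \Cref{thm:4} for the two combinatorial ones. Items 3 (the half-integer Gamma value), 4 (Gauss's hypergeometric theorem), and 6 (Vandermonde's convolution) are standard and I would simply cite \cite{A48, P97} without further comment.

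For items 1 and 2, the trigonometric integrals, I would first use the symmetry $t \mapsto \pi - t$ on $[0,\pi]$ (respectively $t\mapsto 2\pi - t$ on $[0,2\pi]$), which requires $b$ even so that $\cos^b t$ is preserved; this folds each integral into a constant times $\int_0^{\pi/2}\sin^a t\cos^b t\,dt$. This half-integral equals $\Gamma(\tfrac{a+1}{2})\Gamma(\tfrac{b+1}{2})/\bigl(2\Gamma(\tfrac{a+b+2}{2})\bigr)$ by the standard Beta-function formula. I then substitute item 3 for Gamma values at half-integers, use $\Gamma(n) = (n-1)!$ for integers, and apply $(2k)!! = 2^k k!$ to convert factorials to double factorials. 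Handling the parity cases separately yields both stated formulas.

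For items 5 and 7 I would apply \Cref{thm:4}. Setting $t_m = (-1)^m (2m+r-1)!!\binom{p}{m}/(2m+r)!!$ in item 5, a direct computation gives
\begin{equation*}
\frac{t_{m+1}}{t_m} = \frac{(m-p)\bigl(m+(r+1)/2\bigr)}{\bigl(m+(r+2)/2\bigr)(m+1)},
\end{equation*}
so by \Cref{thm:4}(1) the sum equals $t_0$ times a $\,_2F_1$ at $1$ with parameters $a=-p$, $b=(r+1)/2$, $c=(r+2)/2$. Since $r$ is odd and $p \geq 0$, the Gauss convergence condition $c-a-b = p+\tfrac{1}{2} > 0$ holds; applying item 4 and rewriting each of the four resulting Gamma factors via item 3 telescopes the expression back to double factorials, and a cancellation of $r!!$ and $(2p-1)!!$ yields the product $(2p+r)(2p+r-2)\cdots(2p+1)$ in the denominator. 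Item 7 is analogous and easier: letting $c_q$ denote the coefficient of $a^{2q+1}$, the ratio $c_{q+1}/c_q$ simplifies to $(q+1/2)^2/\bigl((q+5/2)(q+1)\bigr)$, so \Cref{thm:4}(2) identifies the series as $c_0 \cdot a \cdot \,_2F_1[1/2,1/2;5/2;a^2]$, and a direct check gives $c_0 = 1/6$.

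The main obstacle is the bookkeeping in item 5. Computing $t_{m+1}/t_m$ requires carefully tracking the interplay between the odd and even double factorials $(2m+r\pm 1)!!$, the alternating sign, and the binomial ratio in order to cast the result in the normalized form $(m+a)(m+b)/\bigl((m+c)(m+1)\bigr)$ required by \Cref{thm:4}(1); and the final cancellation into the stated product form relies on the parity assumption that $r$ be odd, so that $2p+r$ is odd and its double factorial telescopes cleanly against $(2p-1)!!$. Everything else either quotes a cited classical result or reduces to standard manipulations of the Beta function and hypergeometric series.
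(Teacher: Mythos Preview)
Your proposal is correct and matches the paper's proof almost exactly: items 3, 4, and 6 are cited, and items 5 and 7 are handled via \Cref{thm:4} with precisely the ratio computations and Gauss/half-integer-Gamma reductions the paper uses. The only difference is that for items 1 and 2 the paper remarks that the trigonometric integrals follow from integration by parts and induction, whereas you go through the Beta function; both are standard routes to these textbook identities and the distinction is immaterial.
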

\begin{proof}
    \Cref{eq:77} and \Cref{eq:78} are easy to establish with integration by parts and induction, and \Cref{eq:82} follows from a simple counting argument.  To prove \Cref{eq:81} we use \Cref{thm:4}, item $1$.  Let 
    \begin{equation}
    \sum_{m=0}^p \frac{(-1)^m (2m+r-1)!!}{(2m+r)!!} \binom{p}{m}=\sum_{m=0}^p t_m.
    \end{equation}
    We can calculate:
    \begin{equation}
    \frac{t_{m+1}}{t_m}=\frac{(m-p)(m+(r+1)/2)}{(m+(r+2)/2) (m+1)}.
    \end{equation}
    \Cref{thm:4} then implies that:
    \begin{equation}
    \sum_{m=0}^p t_m= t_0  \,\,_2F_1 \left[\begin{matrix} -p, \,\, (r+1)/2\\ (r+2)/2\end{matrix} ; \,1\right]=t_0\frac{\Gamma((r+2)/2) \Gamma(p+1/2) }{\Gamma((r+2)/2+p) \Gamma(1/2)} .
    \end{equation}
    where we applied \Cref{eq:80} for the last equality.  Now apply \Cref{eq:79} and our assumption that $r$ is odd to obtain:
    \begin{equation}
    \sum_{m=0}^p t_m=t_0 \frac{r!! (2p-1)!!}{(r+2p)!!}=\frac{(r-1)!!}{(2p+r)(2p+r-2)...(2p+1)}.
    \end{equation}
    We prove \Cref{eq:83} in the same way.  In the context of \Cref{thm:4}, item $2$ we  can take write: 
    \begin{equation}
    \sum_{q=0}^\infty a^{2q+1} \frac{(2q+1)!}{2^{2q+1}(1+2q)^2 (3+2q) (q!)^2}=a\sum_{q=0}^\infty t_q (a^2)^q,
    \end{equation}
    and calculate:
    \begin{align}
        \frac{t_{q+1}}{t_q}= \frac{\frac{(2(q+1)+1)!}{2^{2(q+1)+1} (1+2(q+1))^2(3+2(q+1))[(q+1)!]^2}}{\frac{(2q+1)!}{2^{2q+1} (1+2q)^2 (3+2q)(q!)^2}}=\frac{(q+1/2)^2}{(q+5/2)(q+1)},
    \end{align}
    where $t_0=a/6$.  \Cref{eq:83} follows.  
\end{proof}

\section{Calculation of Hermite Coefficients}\label{sec:18}
We will make use of these identities to determine the Hermite coefficients for our function of interest.  First, let us compute the moments:
\begin{lemma}\label{lem:22}
Let $\mathbf{z}\sim \mathcal{N}(0, \mathbb{I})$.  If $b_1, ..., b_r$ are even,
\begin{equation} 
\mathbb{E}\left[\frac{z_1^{b_1} z_2^{b_2}...z_r^{b_r}}{\sqrt{z_1^2+...+z_r^2}}\right]=c_r \frac{(b_1+...+b_r +r-3)!! (b_1-1)!! (b_2-1)!! ... (b_r-1)!!}{(b_1+b_2+...+b_r+r-2)!!},
\end{equation}
where $c_r$ is $\sqrt{2/\pi}$ if $r$ is odd and $\sqrt{\pi/2}$ if $r$ is even.
\end{lemma}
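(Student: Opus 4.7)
The plan is to reduce the Gaussian integral to a product of a radial integral and a surface integral on $S^{r-1}$ by switching to spherical coordinates, and then to convert the resulting ratio of Gamma functions to double factorials using identity~\eqref{eq:79} from \Cref{lem:16}. Writing $B = b_1 + \cdots + b_r$ (which is even, since each $b_i$ is), the integral in question is
\begin{equation*}
\frac{1}{(2\pi)^{r/2}} \int_{\mathbb{R}^r} \frac{z_1^{b_1}\cdots z_r^{b_r}}{\sqrt{z_1^2 + \cdots + z_r^2}}\, e^{-\|\mathbf{z}\|^2/2}\, d\mathbf{z}.
\end{equation*}

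First I would substitute $\mathbf{z} = \rho\,\boldsymbol{\omega}$ with $\rho = \|\mathbf{z}\|$ and $\boldsymbol{\omega} \in S^{r-1}$. The factor $\|\mathbf{z}\|$ in the denominator combines with the Jacobian $\rho^{r-1}$ to give $\rho^{B+r-2}$, so the integral factors cleanly:
\begin{equation*}
\frac{1}{(2\pi)^{r/2}} \left(\int_0^\infty \rho^{B+r-2} e^{-\rho^2/2}\, d\rho\right)\left(\int_{S^{r-1}} \omega_1^{b_1}\cdots \omega_r^{b_r}\, d\sigma(\boldsymbol\omega)\right).
\end{equation*}
The radial integral evaluates by the substitution $t=\rho^2/2$ to $2^{(B+r-3)/2}\,\Gamma\!\left(\frac{B+r-1}{2}\right)$. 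The angular integral is a standard identity (provable by introducing a Gaussian radial factor and matching with the full Gaussian integral): when every $b_i$ is even it equals $2\,\prod_i \Gamma((b_i+1)/2)/\Gamma((B+r)/2)$. Since $b_i = 2k_i$ is even, \Cref{eq:79} gives $\Gamma((b_i+1)/2) = \sqrt{\pi}\,(b_i-1)!!/2^{b_i/2}$, so $\prod_i \Gamma((b_i+1)/2) = \pi^{r/2}\,\prod_i(b_i-1)!! / 2^{B/2}$.

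Collecting everything, the powers of $2$ and $\pi$ collapse to a global factor of $1/\sqrt 2$, leaving
\begin{equation*}
\mathbb{E}\!\left[\frac{z_1^{b_1}\cdots z_r^{b_r}}{\|\mathbf z\|}\right] = \tfrac{1}{\sqrt 2}\,\prod_i (b_i-1)!!\cdot \frac{\Gamma((B+r-1)/2)}{\Gamma((B+r)/2)}.
\end{equation*}
The final step is to reduce the ratio of Gamma functions to the double-factorial expression in the statement. This is the only nontrivial bookkeeping, and it splits on the parity of $r$ (recall $B$ is even). If $r$ is even, then $(B+r-1)/2$ is a half-integer and $(B+r)/2$ is an integer; applying $\Gamma(n+1/2) = \sqrt{\pi}(2n-1)!!/2^n$ to the numerator and $k! = (2k)!!/2^k$ to the denominator gives the ratio $\sqrt{\pi}\,(B+r-3)!!/(B+r-2)!!$, producing the prefactor $\sqrt{\pi/2}$. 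If $r$ is odd the roles reverse: $(B+r-1)/2$ is an integer and $(B+r)/2$ a half-integer, and the same two conversions yield the ratio $2(B+r-3)!!/(\sqrt{\pi}(B+r-2)!!)$, producing $\sqrt{2/\pi}$.

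I expect no serious obstacle here; the main source of error is the parity-dependent conversion between $\Gamma$ and $!!$, which must be done twice in symmetric fashion to land on the claimed $c_r$. Apart from that, everything is a direct application of spherical coordinates and the identities already collected in \Cref{lem:16}.
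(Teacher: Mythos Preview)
Your proof is correct and follows the same overall strategy as the paper's: switch to polar/spherical coordinates to separate a radial integral from an angular one, then reduce everything to double factorials via \Cref{eq:79}. The execution differs in how the angular part is handled. The paper writes out full hyperspherical coordinates $\phi_1,\ldots,\phi_{r-1}$, evaluates each one-dimensional angular integral using \Cref{eq:77} and \Cref{eq:78}, and observes that the resulting product telescopes; the parity of $r$ enters through whether the telescoping chain contains more $\pi$-factors or $2$-factors. You instead invoke the closed-form moment formula on $S^{r-1}$, $\int_{S^{r-1}}\omega_1^{b_1}\cdots\omega_r^{b_r}\,d\sigma = 2\prod_i\Gamma((b_i+1)/2)/\Gamma((B+r)/2)$, and push all the bookkeeping into a single $\Gamma$-ratio at the end. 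Your route is shorter and cleaner; the paper's route has the virtue of relying only on the elementary identities already recorded in \Cref{lem:16}, whereas yours imports the sphere-moment formula from outside. Either way the parity split on $r$ is unavoidable and you handle it correctly.
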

\begin{proof}
From the definition of the multivariate Gaussian,
\begin{align*}
\xi:= \mathbb{E} \left[ \frac{z_1^{b_1} z_2^{b_2} ... z_d^{b_d}}{\sqrt{z_1^2+z_2^2+...+z_d^2}}\right]=\int  \frac{z_1^{b_1} z_2^{b_2} ... z_d^{b_d}}{\sqrt{z_1^2+z_2^2+...+z_d^2}} \left( \frac{1}{(2 \pi)^{d/2}} e^{-||z||^2/2}\right) dz.
\end{align*}
Let us re-write this using generalized spherical coordinates:
\begin{align*}
\xi = \frac{1}{(2 \pi)^{d/2}}  \int \frac{1}{r} \bigg(r\cos\phi_1\bigg)^{b_1} \bigg(r \sin\phi_1 \cos\phi_2\bigg)^{b_2} \bigg(r\sin \phi_1 \sin\phi_2 \cos\phi_3\bigg)^{b_3} ... \bigg(r \sin\phi_1 \\
\nonumber ... \sin \phi_{d-3} \cos \phi_{d-2}\bigg)^{b_{d-2}} \bigg(r \sin\phi_1 ... \sin \phi_{d-2}\cos\phi_{d-1}\bigg)^{b_{d-1}} \bigg(r\sin \phi_1 ... \sin\phi_{d-2} \sin \phi_{d-1}\bigg)^{b_d}\\
\nonumber e^{-r^2/2}\bigg(r^{d-1} \sin^{d-2}\phi_1 \sin^{d-3} \phi_2 ... \sin\phi_{d-2}\bigg) dr d\phi_1 ... d\phi_{d-1},
\end{align*}
where $\phi_1, ... \phi_{d-2} \in [0, \pi]$ and $\phi_{d-1}\in [0, 2 \pi]$.  Now we can break it up into smaller integrals:
\begin{align*}
\xi =\frac{1}{(2 \pi)^{d/2}}\int_0^\infty r^{b_1+b_2+...+b_d +d-2}e^{-r^2/2} dr \int_0^\pi \cos^{b_1} \phi_1 (\sin\phi_1)^{b_2+...+b_d+d-2} d\phi_1 \\
\nonumber \int_0^\pi\cos^{b_{2}} \phi_2 (\sin\phi_2)^{b_3+...+b_d +d-3} d\phi_2 \int_0^\pi \cos^{b_3} \phi_3 (\sin\phi_3)^{b_4+...+b_d+d-4}d\phi_3 \\
\nonumber ...\int_0^\pi \cos^{b_{d-3}}\phi_{d-3} (\sin \phi_{d-3})^{b_{d-2} +b_{d-1}+b_d +2} d\phi_{d-3}\int_0^\pi \cos^{b_{d-2}}\phi_{d-2} (\sin\phi_{d-2})^{b_{d-1}+b_d+1} d\phi_{d-2}\\
\nonumber \cdot{}\int_0^{2 \pi} \cos^{b_{d-1}}\phi_{d-1} \sin^{b_d} \phi_{d-1}d\phi_{d-1}.
\end{align*}
Now we will apply \Cref{eq:77} and \Cref{eq:78}.  There are two cases here depending on if $d$ is even or odd.  If d is even:
\begin{align*}
\xi =\frac{1}{(2 \pi)^{d/2}}\left[ 2^{(b_1+...+b_d+d-3)/2} \Gamma\left(\frac{b_1+...+b_d+d-1}{2}\right)\right] \left[\frac{\pi (b_1-1)!! (b_2+...+b_d+d-3)!!}{(b_1+b_2+...+b_d+d-2)!!} \right]\\
\nonumber \left[\frac{2 (b_{2}-1)!! (b_3+...+b_d+d-4)!!)}{(b_2+b_3+...+b_d+d-3)!!}\right]\left[\frac{\pi (b_3-1)!! (b_4+...+b_d+d-5)!!}{(b_3+...+b_d+d-4)!!}\right]\\
\nonumber ... \left[\frac{\pi (b_{d-3}-1)!! (b_{d-2}+b_{d-1}+b_d+1)!!)}{(b_{d-3}+...+b_d+2)!!}\right]\left[\frac{2(b_{d-2}-1)!! (b_{d-1}+b_d)!!}{(b_{d-2}+b_{d-1}+b_d+1)!!}\right]\\
\nonumber \left[\frac{2\pi (b_{d-1}-1)!!(b_d-1)!!}{(b_{d-1}+b_d)!!}\right].
\end{align*}
Note that the product telescopes.  Since $d$ is even, we can count a factor $(2\pi)^{(d-2)/2+1}=(2\pi)^{d/2}$.  Continuing, 
\begin{align*}
\xi =\left[ 2^{(b_1+...+b_d+d-3)/2} \Gamma\left(\frac{b_1+...+b_d+d-1}{2}\right)\right]\frac{(b_1-1)!! (b_2-1)!! ... (b_d-1)!!}{(b_1+b_2+...+b_d+d-2)!!}\\
\nonumber =\sqrt{\pi} 2^{(b_1+...+b_d+d-3)/2} \frac{(b_1+...+b_d+d-3)!!}{2^{(b_1+...+b_d+d-2)/2}}\frac{(b_1-1)!! (b_2-1)!! ... (b_d-1)!!}{(b_1+b_2+...+b_d+d-2)!!}\\
\nonumber =\sqrt{\frac{\pi}{2}}\frac{(b_1+...+b_d+d-3)!!(b_1-1)!! (b_2-1)!! ... (b_d-1)!!}{(b_1+b_2+...+b_d+d-2)!!}.
\end{align*}
In the odd case, the constant factor we get is $2 (2 \pi)^{(d-3)/2+1}$, so the overall constant is $\sqrt{2/\pi}$.  We end up getting the same thing with a different constant:
\begin{align*}
\xi=\sqrt{\frac{2}{\pi}}\frac{(b_1+...+b_d+d-3)!!(b_1-1)!! (b_2-1)!! ... (b_d-1)!!}{(b_1+b_2+...+b_d+d-2)!!}.
\end{align*}
\end{proof}
Given these moments we can explicitly calculate the Hermite coefficients:
\begin{lemma}\label{lem:25}
Let $f(\mathbf{z})=z_1/\sqrt{z_1^2+...+z_r^2}$ for $r$ odd, and let it have Hermite expansion $f(\mathbf{z})=\sum_\mu \hat{f}_{\mu} h_{\mu}(\mathbf{z})$.  Then, if $\mu_1$ is odd and $\mu_i$ is even for all $i\in \{2, ..., r\}$,
\begin{equation}
    \hat{f}_\mu =\sqrt{\frac{2}{\pi}} \frac{\sqrt{\mu_1! ... \mu_r!}(-1)^p}{(\mu_1-1)!! \mu_2!! ... \mu_r!!}\frac{(r-1)!!}{(2p+r)(2p+r-2)...(2p+1)}
\end{equation}
where $p=(\mu_1+...+\mu_r-1)/2$.  Otherwise, $\hat{f}_\mu=0$.
\end{lemma}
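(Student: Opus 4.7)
The plan is to compute $\hat{f}_\mu$ directly from the definition $\hat{f}_\mu = \langle f, h_\mu\rangle = \int f(\mathbf{z}) h_\mu(\mathbf{z})\, d\gamma(\mathbf{z})$ by expanding $h_\mu = \prod_i h_{\mu_i}(z_i)$ into monomials using \Cref{def:3} and then applying the moment formula of \Cref{lem:22}. The vanishing cases are immediate from parity: if $\mu_1$ is even then $h_{\mu_1}(z_1)$ is an even polynomial in $z_1$, so the integrand is odd in $z_1$ (since $f$ is odd in $z_1$) and integrates to $0$. Likewise, if any $\mu_i$ for $i \geq 2$ is odd, the integrand is odd in $z_i$ and vanishes. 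So assume throughout that $\mu_1$ is odd and $\mu_i$ even for $i\geq 2$, and set $p_1 = (\mu_1-1)/2$, $p_i = \mu_i/2$ for $i \geq 2$, so that $P := \sum_i p_i = p$ as defined in the lemma.

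After expanding $h_\mu$ as a sum over $(l_1,\ldots,l_r) \in [0,p_1]\times[0,p_2]\times\cdots\times[0,p_r]$, each term contributes a monomial $z_1^{2l_1+1} z_2^{2l_2}\cdots z_r^{2l_r}$, and multiplying by $f$ converts this into an expression $z_1^{2l_1+2}z_2^{2l_2}\cdots z_r^{2l_r}/\sqrt{z_1^2+\cdots+z_r^2}$. Applying \Cref{lem:22} with $r$ odd yields the constant $\sqrt{2/\pi}$ and an overall ratio of double factorials. The arithmetic then simplifies: using $(2k)! = 2^k k! (2k-1)!!$ and $(2k+1)! = 2^k k! (2k+1)!!$, the factors $(2l_i-1)!!$ (or $(2l_1+1)!!$) coming from \Cref{lem:22} cancel against the factorials in the denominators of \Cref{def:3}, leaving $1/(2^{l_i} l_i!)$. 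Pulling out common constants and using $1/(l_i!(p_i-l_i)!) = \binom{p_i}{l_i}/p_i!$ reduces the $r$-fold sum to
\begin{equation*}
\hat{f}_\mu = \sqrt{\frac{2}{\pi}}\cdot\frac{\sqrt{\mu_1!\cdots\mu_r!}\,(-1)^p}{2^P\, p_1!\cdots p_r!}\sum_{l_1,\ldots,l_r}(-1)^L\binom{p_1}{l_1}\cdots\binom{p_r}{l_r}\frac{(2L+r-1)!!}{(2L+r)!!},
\end{equation*}
where $L = l_1+\cdots+l_r$.

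The next step is to collapse the $r$-fold sum by grouping terms of fixed $L=m$: the generalized Vandermonde identity~\eqref{eq:82} gives $\sum_{l_1+\cdots+l_r = m}\prod_i\binom{p_i}{l_i} = \binom{P}{m}$, reducing the expression to a single sum $\sum_{m=0}^P (-1)^m \binom{P}{m}(2m+r-1)!!/(2m+r)!!$. Since $r$ is odd, this is exactly the sum evaluated in~\eqref{eq:81}, which equals $(r-1)!!/[(2P+r)(2P+r-2)\cdots(2P+1)]$. Finally, one identifies $2^P p_1!\cdots p_r! = (2p_1)!!(2p_2)!!\cdots(2p_r)!! = (\mu_1-1)!!\,\mu_2!!\cdots\mu_r!!$ and $P = p$, producing the claimed closed form.

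I expect no serious obstacle; the proof is a routine but careful combinatorial simplification. The one step requiring attention is the bookkeeping that lines up $(2l_i)!$ and $(2l_1+1)!$ against the double factorials returned by \Cref{lem:22} so that the signs, powers of $2$, and Vandermonde-ready binomials all emerge cleanly. Once that is done, identities \eqref{eq:82} and \eqref{eq:81} do all the remaining work, and the final rewriting of $2^P p_1!\cdots p_r!$ as a product of double factorials matches the form in the statement.
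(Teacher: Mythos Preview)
Your proposal is correct and follows essentially the same route as the paper's proof: expand $h_\mu$ via \Cref{def:3}, apply the moment formula \Cref{lem:22}, simplify the factorials into binomial coefficients, collapse the multi-sum with Vandermonde~\eqref{eq:82}, evaluate the remaining single sum with~\eqref{eq:81}, and rewrite $2^P p_1!\cdots p_r!$ as the product of double factorials. The paper carries out exactly these steps with the same intermediate identities, so there is no meaningful difference in approach.
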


\begin{proof}
The fact that $\hat{f}_\mu=0$ when $\mu_1$ is even or at least one of $\mu_i$ is odd for $i\in\{2, ..., r\}$ follows from the same idea in the paper, i.e. the integral of an anti-symmetric function is zero.  For the other values of $\mu$, use \Cref{def:3} to obtain:
\begin{align*}
    \hat{f}_\mu=\mathbb{E}\bigg[\frac{z_1}{\sqrt{z_1^2+...+z_r^2}} \frac{\sqrt{\mu_1 !}}{2^{\mu_1/2}}\sum_{l_1=0}^{(\mu_1-1)/2}\frac{(-1)^{(\mu_1-1)/2-l_1} (\sqrt{2}z_1)^{2l_1+1}}{(2l_1+1)! \left(\frac{\mu_1-1}{2}-l_1 \right)!}  \frac{\sqrt{\mu_2!}}{2^{\mu_2/2}}\sum_{l_2=0}^{\mu_2/2}\frac{(-1)^{\mu_2/2-l_2} (\sqrt{2}z_2)^{2l_2}}{(2l_2)! \left( \frac{\mu_2}{2}-l_2\right)!}   \\
    \nonumber ... \frac{\sqrt{\mu_r!}}{2^{\mu_r/2}}\sum_{l_r=0}^{\mu_r/2}\frac{(-1)^{\mu_r/2-l_r}(\sqrt{2}z_r)^{2 l_r}}{(2l_r)!\left( \frac{\mu_r}{2}-l_r\right)!}\bigg]
\end{align*}
\begin{align*}
    =\frac{\sqrt{\mu_1 !... \mu_r!}}{2^{(\mu_1+...+\mu_r)/2}} \sum_{l_1, ..., l_r} \frac{(-1)^{(\mu_1+...+\mu_r-1)/2-(l_1+...+l_r)}\sqrt{2}^{2(l_1+...+l_r)+1}}{\left[ (2 l_1+1)! (2l_2)! ... (2 l_r)!\right] \left[\left(\frac{\mu_1-1}{2}-l_1\right)! \left( \mu_2/2-l_2\right)! ... \left( \mu_r/2-l_r\right)!\right]}\\
    \nonumber \cdot{} \mathbb{E} \left[\frac{z_1^{2l_1+2} z_2^{2 l_2} ...z_r^{2 l_r}}{\sqrt{z_1^2+...+z_r^2}} \right].
\end{align*}
By \Cref{lem:22},
\begin{align*}
    \hat{f}_\mu=\frac{\sqrt{\mu_1 !... \mu_r!}}{2^{(\mu_1+...+\mu_r)/2}} \sum_{l_1, ..., l_r} \frac{(-1)^{(\mu_1+...+\mu_r-1)/2-(l_1+...+l_r)}\sqrt{2}^{2(l_1+...+l_r)+1}}{\left[ (2 l_1+1)! (2l_2)! ... (2 l_r)!\right] \left[\left(\frac{\mu_1-1}{2}-l_1\right)! \left( \mu_2/2-l_2\right)! ... \left( \mu_r/2-l_r\right)!\right]}\\
    \nonumber \cdot{} c_r \frac{(2l_1+2+2l_2+...+2l_r+r-3)!! (2l_1+1)!!(2l_2-1)!!...(2l_r-1)!!}{(2l_1+2+2l_2+...+2l_r+r-2)!!}
\end{align*}
\begin{align}\label{eq:76}
    =c_r\frac{\sqrt{\mu_1 !... \mu_r!}}{2^{(\mu_1+...+\mu_r)/2}}  (-1)^{(\mu_1+...+\mu_r-1)/2}\sum_{l_i}\frac{\sqrt{2}^{2(l_1+...+l_r)+1} (2l_1+...+2l_r+r-1)!!(-1)^{l_1+...+l_r}}{(2 l_1)!! (2l_2)!!... (2 l_r)!! \left(\frac{\mu_1-1}{2}-l_1 \right)! (\mu_2/2-l_2)! ... (\mu_r/2-l_r)!}\\
    \nonumber \cdot{} \frac{1}{(2l_1+...+2l_r+r)!!}.
\end{align}
Now let us apply the identity:
\begin{align*}
    (2 l)!! (q-l)!=2^l l! (q-l)!=2^l \frac{q!}{\binom{q}{l}},
\end{align*}
to rewrite \Cref{eq:76} as:
\begin{align*}
    \hat{f}_\mu=\frac{\sqrt{2} c_r\sqrt{\mu_1 !... \mu_r!}}{2^{(\mu_1+...+\mu_r)/2}}  (-1)^{(\mu_1+...+\mu_r-1)/2}\sum_{l_i}\frac{2^{l_1+...+l_r}(2l_1+...+2l_r+r-1)!! \binom{(\mu_1-1)/2}{l_1} \binom{\mu_2/2}{l_2}... \binom{\mu_r/2}{l_r}}{2^{l_1+...+l_r} \left( \frac{\mu_1-1}{2}\right)! (\mu_2/2)! ... (\mu_r/2)! (2l_1+...+2l_r+r)!!}\\
    \end{align*}
    \vspace{-1 cm}
    \begin{align*}
    =\frac{\sqrt{2} c_r\sqrt{\mu_1!...\mu_r!}(-1)^{(\mu_1+...+\mu_r-1)/2}}{2^{(\mu_1+...+\mu_r)/2} \left( \frac{\mu_1-1}{2}\right)!(\mu_2/2)!...(\mu_r/2)!}\sum_{m=0}^{(\mu_1+...+\mu_r-1)/2} (-1)^m \frac{(2m+r-1)!!}{(2m+r)!!} \\
    \nonumber \cdot{}\sum_{l_1+...+l_r=m} \binom{(\mu_1-1)/2}{l_1} \binom{\mu_2/2}{l_2}... \binom{\mu_r/2}{l_r}.
\end{align*}
Now apply the generalized Vandermonde's equality, \Cref{eq:82}:
\begin{align*}
    \hat{f}_\mu=\frac{\sqrt{2} c_r\sqrt{\mu_1!...\mu_r!}(-1)^{(\mu_1+...+\mu_r-1)/2}}{2^{(\mu_1+...+\mu_r)/2} \left( \frac{\mu_1-1}{2}\right)!(\mu_2/2)!...(\mu_r/2)!} \sum_{m} (-1)^m \frac{(2m+r-1)!!}{(2m+r)!!} \binom{(\mu_1-1)/2+\mu_2/2+...+\mu_r/2}{m}.
\end{align*}
Let us set $p=(\mu_1-1)/2+\mu_2/2+...+\mu_r/2$.  Using \Cref{eq:81},
\begin{align*}
    \hat{f}_\mu=\frac{\sqrt{2}c_r\sqrt{\mu_1!...\mu_r!}(-1)^{p}}{2^{(\mu_1+...+\mu_r)/2} \left( \frac{\mu_1-1}{2}\right)!(\mu_2/2)!...(\mu_r/2)!} \frac{(r-1)!!}{(2p+r)(2p+r-2)+...+(2p+1)}\\
    \nonumber =\frac{\sqrt{2} c_r \sqrt{\mu_1! ... \mu_r!} (-1)^p (r-1)!!}{2^{(\mu_1+...+\mu_r)/2}(\mu_1-1)!! (\mu_2)!!... (\mu_r)!!2^{-p}(2p+r)...(2p+1)}\\
    \nonumber =\frac{c_r \sqrt{\mu_1!... \mu_r!} (-1)^p (r-1)!!}{(\mu_1-1)!!\mu_2!!...\mu_r!!(2p+r)...(2p+1)}.
\end{align*}
Note that $c_r=\sqrt{2/\pi}$ since $r$ is odd.  
\end{proof}
We can do a ``sanity check'' for the calculations presented thus far by giving a proof sketch of a specialization of our result which yields a result from \cite{B11}:
\begin{proposition}
\begin{equation*}
    \sum_{i, j, k} \hat{f}_{i, jk}^2 a^{i+j+k}=\frac{8}{3\pi}  \,_2F_1 \left[\begin{matrix} 1/2, \,\, 1/2\\ 5/2\end{matrix} ; \,a^2\right]
\end{equation*}
\end{proposition}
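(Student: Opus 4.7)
The plan is to reduce the left-hand side to a single power series in $a^2$ and then recognize it as the Gauss hypergeometric series $_2F_1[\alpha,\beta;\gamma;z]=\sum_n\frac{(\alpha)_n(\beta)_n}{(\gamma)_n}\frac{z^n}{n!}$. First, I will use \Cref{lem:24} to parameterize the nonzero Hermite coefficients by $i=2u+1$, $j=2v$, $k=2w$ with $u,v,w\ge 0$, so that $p:=(i+j+k-1)/2=u+v+w$ and every surviving term contributes $a^{2p+1}$. Plugging the closed form of $\hat f_{i,jk}^2$ into the sum and applying the elementary collapses $i!/[(i-1)!!]^2=(2u+1)!!/(2^u u!)$ and $(2v)!/[(2v)!!]^2=(2v-1)!!/(2^v v!)$, I pull out the overall $a$ and regroup the triple sum by the level $p$.

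All $(u,v,w)$-dependence is then concentrated in the inner convolution $T(p):=\sum_{u+v+w=p}\tfrac{(2u+1)!!(2v-1)!!(2w-1)!!}{u!\,v!\,w!}$. The key observation is that $T(p)$ is a Cauchy product of three univariate sequences with classical ordinary generating functions $\sum_n\frac{(2n-1)!!}{n!}x^n=(1-2x)^{-1/2}$ and $\sum_n\frac{(2n+1)!!}{n!}x^n=(1-2x)^{-3/2}$ (the latter obtained by differentiating the former). Multiplying these yields $(1-2x)^{-5/2}$, so the generalized binomial theorem gives $T(p)=[x^p](1-2x)^{-5/2}=2^p(5/2)_p/p!$.

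Substituting back leaves an expression of the form $S(a)=\frac{8a}{\pi}\sum_{p\ge 0}\frac{(5/2)_p}{p!\,(2p+1)^2(2p+3)^2}a^{2p}$. The final step is cancellation: the Pochhammer identity $(5/2)_p=\tfrac{(2p+1)(2p+3)}{3}(1/2)_p$ (one line from the Gamma recurrence) removes two of the four quadratic factors, and applying the same identity a second time to rewrite $_2F_1[1/2,1/2;5/2;a^2]=\sum_p\tfrac{3(1/2)_p}{(2p+1)(2p+3)\,p!}a^{2p}$ lets me read off the answer by matching coefficients.

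The only real obstacle is careful bookkeeping --- in particular, honoring the convention $(-1)!!=1$ so the $v=0$ and $w=0$ terms contribute correctly to $T(p)$, and keeping the $2^p$ prefactors straight between the double-factorial sums and the expansion of $(1-2x)^{-5/2}$. A sanity check at $a=1$ pins down the overall constant unambiguously: $\sum\hat f_{i,jk}^2=\mathbb{E}[f(\mathbf z)^2]=\mathbb{E}[z_1^2/(z_1^2+z_2^2+z_3^2)]=\tfrac13$ by rotational symmetry, while Gauss's theorem gives ${}_2F_1[1/2,1/2;5/2;1]=\Gamma(5/2)\Gamma(3/2)/\Gamma(2)^2=3\pi/8$, and the prefactor produced by the calculation above has to link these two values exactly.
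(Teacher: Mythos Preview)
Your approach is sound and takes a genuinely different route from the paper's sketch. Both arguments group terms by level $p=(i+j+k-1)/2$ and must evaluate the same inner combinatorial sum; they diverge in how that sum is dispatched. The paper peels off one index at a time: after reindexing, it shows the inner sum over one even index equals $1$ via the ratio test (\Cref{thm:4}) together with Gauss's hypergeometric theorem, and then closes the remaining single sum $\sum_{q=0}^{l}\tfrac{(2q+1)!}{[(2q)!!]^2}=\tfrac{1}{3}\tfrac{(2l+3)!!}{(2l)!!}$ by a short induction. You instead evaluate the triple convolution $T(p)$ in one stroke by multiplying the ordinary generating functions $(1-2x)^{-1/2}$ and $(1-2x)^{-3/2}$ to get $(1-2x)^{-5/2}$ and reading off the coefficient. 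Your route is slicker---the generating-function product replaces both the Gauss evaluation and the induction---while the paper's route has the virtue of reusing the same hypergeometric toolkit (\Cref{thm:4}, \Cref{eq:80}) it develops for the other identities.

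One point to flag: your computation, carried through exactly as you describe, yields $S(a)=\tfrac{8a}{9\pi}\,{}_2F_1\!\left[\begin{smallmatrix}1/2,\,1/2\\5/2\end{smallmatrix};a^2\right]$, not the constant $\tfrac{8}{3\pi}$ printed in the proposition. Your own sanity check catches this: at $a=1$ the left side is $\|f\|^2=1/3$, whereas $\tfrac{8}{3\pi}\cdot\tfrac{3\pi}{8}=1$. The stated formula is off by a factor of $a/3$; the quantity that actually equals $\tfrac{8a}{3\pi}\,{}_2F_1$ is the full inner product $\mathbb{E}[\mathbf v_1\cdot\mathbf v_2]=3\sum_{i,j,k}\hat f_{i,jk}^2 a^{i+j+k}$ (compare \Cref{lem:11}). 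Your derivation is the internally consistent one.
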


\begin{proof}
{\it (sketch)}  It is sufficient to collect terms $\hat{f}_{i, jk}$ with fixed $i+j+k$ and compare them to the Taylor series given in the statement of Lemma 5.2.1. from \cite{B11}.  For this we must demonstrate that:
\begin{equation*}
    \sum_{i+j+k=2l+1} \frac{i! j! k!}{[(i-1)!! j!! k!!]^2}=\frac{1}{3} \frac{(2l-1)!!(1+2l)(3+2l)}{(2l)!!},
\end{equation*}
where we have canceled some common factors.  This can be rewritten as:
\begin{equation*}
    =\sum_{i=[0:2:2l]}\left( \frac{(i+1)!}{(i!!)^2 }\sum_{j=[0:2:2l-i]} \frac{j! (2l-i-j)!}{[j!! (2l-i-j)!!]^2}\right),
\end{equation*}
where we have redefined $i$ and used the (MATLAB) notation $i=[0:2:2l]$ to indicate a sum from $0$ to $2l$ by $2$.  The second sum can be proven to be $1$ using  \Cref{thm:4} and Gauss's theorem (\Cref{eq:80}).  The remainder of the proof follows from the identity
\begin{equation*}
\sum_{q=0}^l \frac{(2q+1)!}{[(2q)!!]^2}=\frac{1}{3}  \frac{(2l+3)!!}{(2l)!!},
\end{equation*}
which is simple to establish with induction.  
\end{proof}
\label{page:2}
\begin{proof}[Proof of \Cref{lem:19}]
We have two functions of interest here:
\begin{gather*}
	g(z)=sign(z)=\sum_i \hat{g}_i h_i(z)\\
	\nonumber \text{ and }f(z_1, z_2, z_3)=\frac{z_1}{\sqrt{z_1^2+z_2^2+z_3^2}}=\sum_{i, j, k}\hat{f}_{i, jk} h_i(z_1)h_j(z_2)h_k(z_3).
\end{gather*}
for $\{\hat{g}_i\}$ and $\{\hat{f}_{i, jk}\}$ Hermite expansions given in \Cref{lem:23} and \Cref{lem:24}.  We want:
\begin{align*}
\mathbb{E}[g(z) f(z_1, z_2, z_3)]=\sum_{i, j, k}\hat{f}_{i, jk} \hat{g}_i \mathbb{E}[h_i(z_1) h_i(z)] \mathbb{E} [h_j(z_2)] \mathbb{E}[h_k(z_3)]=\sum_{i \,\, odd} \hat{f}_{i, 00}\hat{g}_i a^i,
\end{align*}
where in the second equality we used that fact that $\delta_{0j}=\mathbb{E}[h_0(z_2)h_j(z_2)]$.  If we apply the identity $(i-1)!!=[(i-1)/2]! \,\,2^{\frac{i-1}{2}}$ and use the previously derived values of $\hat{f}_{i, 00}$ and $\hat{g}_i$, the sum is:
\begin{align*}
\sum_{i \,\, odd} a^i \hat{g}_i \,\hat{f}_{i, 00}=\sum_{i \,\, odd} a^i \left( \frac{\sqrt{i!} (-1)^{q}}{2^{i/2}} \frac{2}{\sqrt{\pi} q!(1+2 q)}\right)\left( \frac{4}{\sqrt{\pi}}\frac{\sqrt{i! }}{2^{(i)/2}} \frac{(-1)^{q}}{\left( \frac{i-1}{2}\right)! (1+2q)(3+2 q)} \right)\\
\nonumber =\frac{8}{\pi}\sum_{i\,\, odd} a^i\frac{i!}{2^i(1+2q)^2(3+2q)(q!)^2}.
\end{align*}
where $q=(i-1)/2$.  We can equivalently sum over $q$ to get a sum from 0 to infinity by 1, (i.e. not every other integer):
\begin{align*}
=\frac{8}{\pi}\sum_{q=0}^\infty a^{2q+1} \frac{(2q+1)!}{2^{2q+1}(1+2q)^2(3+2q)(q!)^2}=\frac{8}{\pi} \frac{a}{6}\,_2F_1 \left[\begin{matrix} 1/2, \,\, 1/2\\ 5/2\end{matrix} ; \,a^2\right] \\
\nonumber =\frac{4a}{3\pi}\,_2F_1 \left[\begin{matrix} 1/2, \,\, 1/2\\ 5/2\end{matrix} ; \,a^2\right],
\end{align*}
where we used \Cref{eq:83}.

\end{proof}

We also present a proof of this fact for completeness:
\label{page:4}
\begin{proof}[Proof of \Cref{prop:1}, Item $3$]

\begin{align}\label{eq:20}
\left| \mathbb{E}[f(\mathbf{z}) f(\mathbf{z}')]-\sum_{i, j, k \leq N} \hat{f}_{i, j, k}^2 a^i b^j c^k\right|^2\\
\nonumber = \left|\mathbb{E} \left[\left( f(\mathbf{z})-\sum_{i, j, k \leq N} \hat{f}_{i, j, k} h_i (\mathbf{z}_1) h_j(\mathbf{z}_2) h_k(\mathbf{z}_3)\right) \left( f(\mathbf{z}')+\sum_{i, j, k \leq N} \hat{f}_{i, j, k} h_i (\mathbf{z}_1') h_j(\mathbf{z}_2') h_k(\mathbf{z}_3')\right) \right]\right|^2\\
\nonumber \leq \mathbb{E} \left[ \left(f(\mathbf{z})-\sum_{i, j, k \leq N} \hat{f}_{i, j, k} h_i (\mathbf{z}_1) h_j(\mathbf{z}_2) h_k(\mathbf{z}_3) \right)^2\right] \mathbb{E} \left[\left( f(\mathbf{z}')+\sum_{i, j, k \leq N} \hat{f}_{i, j, k} h_i (\mathbf{z}_1') h_j(\mathbf{z}_2') h_k(\mathbf{z}_3')\right)^2 \right]
\end{align}
\noindent by Cauchy-Schwarz.  The second term is:
\begin{align*}
\mathbb{E} \left[ f(\mathbf{z}')^2+2 \sum_{i,j,k} \hat{f}_{i, j, k \leq N} h_i(\mathbf{z}_1') h_j(\mathbf{z}_2') h_k(\mathbf{z}_3') f(\mathbf{z}')\right]+\sum_{i, j, k \leq N}\hat{f}_{i, j, k}^2=\langle f, f \rangle +3\sum_{i, j, k \leq N} \hat{f}^2_{i, j, k}\\
\nonumber \leq 4 \langle f, f \rangle.
\end{align*}
\noindent So to get the RHS of \Cref{eq:20} less than $\epsilon$, take $N$ large enough that 
\begin{equation*}
\mathbb{E} \left[ \left(f(\mathbf{z})-\sum_{i, j, k \leq N} \hat{f}_{i, j, k} h_i (\mathbf{z}_1) h_j(\mathbf{z}_2) h_k(\mathbf{z}_3) \right)^2\right] \leq \frac{\epsilon}{4 \langle f,f \rangle }.
\end{equation*}
\noindent We know this can be done since $\sum_{i, j, k\leq N}\hat{f}_{i, j, k} h_i(\mathbf{z}_1) h_j(\mathbf{z}_2) h_k(\mathbf{z}_3) \rightarrow f$ in this norm.
\end{proof}

\section{Lower Bounds for the Expectation of an Edge}\label{sec:19}

For the result we have described, we crucially needed several lower bounds on the expectation of a single edge.  In this direction we present proofs of lemmas \ref{lem:9}-\ref{lem:7}:
\label{page:3}
\begin{proof}[Proof of \Cref{lem:9}]
\textbf{Rank 1 case}
We will divide the region $\mathcal{S}$ into two regions and use different order expansions in each region for a bound.  Define:
\begin{align*}
\mathcal{S}_1=conv \{[-1/2, -1/2, -1], [-1/2, -1, -1/2], [-1, -1/2, -1/2], [-1, -1, -1]\},\\
\text{ and }\mathcal{S}_2=\mathcal{S}\setminus \mathcal{S}_1.
\end{align*}
Let us apply \Cref{lem:12} with $Q=\{(1, 0, 0)\}$ to obtain 
\begin{equation*}
\frac{1-\sum_{i, j \leq k} \hat{f}_{i, jk}^2 p_{i, jk}(a, b, c)}{1-a-b-c} \geq \frac{1-\hat{f}_{1, 00}^2(a+b+c)-3(1/3-\hat{f}_{1, 00}^2)}{1-a-b-c}.
\end{equation*}
Define $x=a+b+c$ and observe that the RHS is a function of $x$, so define 
\begin{equation*}
l(x):=\frac{1-\hat{f}_{1, 00}^2 x-3(1/3-\hat{f}_{1, 00}^2)}{1-x}.
\end{equation*}
It is easy to see that this function is increasing as a function of $x$, and that $l(-2)\geq 0.47$, so we have the lower bound on $\mathcal{S}_2$:
\begin{equation}
\min_{(a, b, c) \in \mathcal{S}_2}\frac{1-\sum_{i, j \leq k} \hat{f}_{i, jk}^2 p_{i, jk}(a, b, c)}{1-a-b-c} \geq 0.47.
\end{equation}
It remains to bound the function over $\mathcal{S}_1$.  For this we will need a higher order Hermite expansion to get a good approximation factor.  Let us apply \Cref{lem:12} with $Q=\{(1, 0, 0), (1, 0, 2), (3, 0, 0)\}$ to obtain:
\begin{align}
\frac{1-\sum_{i, j \leq k} \hat{f}_{i, jk}^2 p_{i, jk}(a, b, c)}{1-a-b-c} \geq \bigg{(} 1-\hat{f}_{1, 00}^2 p_{1, 00}(a, b, c) -\hat{f}_{1, 02}^2 p_{1, 02}(a, b, c) -\hat{f}_{3, 00}^2 p_{3, 00}(a, b, c)\\
\nonumber -3 (1/3-\hat{f}_{1, 00}^2-2\hat{f}_{1, 02}^2-\hat{f}_{3, 00}^2)\bigg{)} / \bigg{(}1-a-b-c \bigg{)} =: q(a, b, c).
\end{align}
We will show that $q(a, b, c)$ is minimized at $[-1, -1, -1]$, demonstrating the lemma.  First calculate the partial derivative with respect to $a$:
\begin{equation*}
\frac{\partial q}{\partial a}= \frac{4 (115 + 6 a^3 - b^2 (1 + 2 b) + 2 a (-1 + b + c) (b + c) - 
   c^2 (1 + 2 c) + a^2 (-9 + 10 b + 10 c))}{225 (-1 + a + b + c)^2 \pi}.
\end{equation*}
We can easily check that this is never $0$ since the constant term is large enough that the numerator is always strictly positive:
\begin{align*}
115 + 6 a^3 - b^2 (1 + 2 b) + 2 a (-1 + b + c) (b + c) - 
   c^2 (1 + 2 c) + a^2 (-9 + 10 b + 10 c)\\
   \nonumber \geq 115 - 6 - (1 + 2 ) - 2  (3) (2) - (1 + 2 ) + (-9 - 10 - 10 )=62.
\end{align*}
Note that we applied the bounds $|a|\leq 1$, $|b| \leq 1$ and $|c|\leq 1$.  Hence there are no internal critical points.  
\begin{figure}
    \centering
    \includegraphics[scale=0.75]{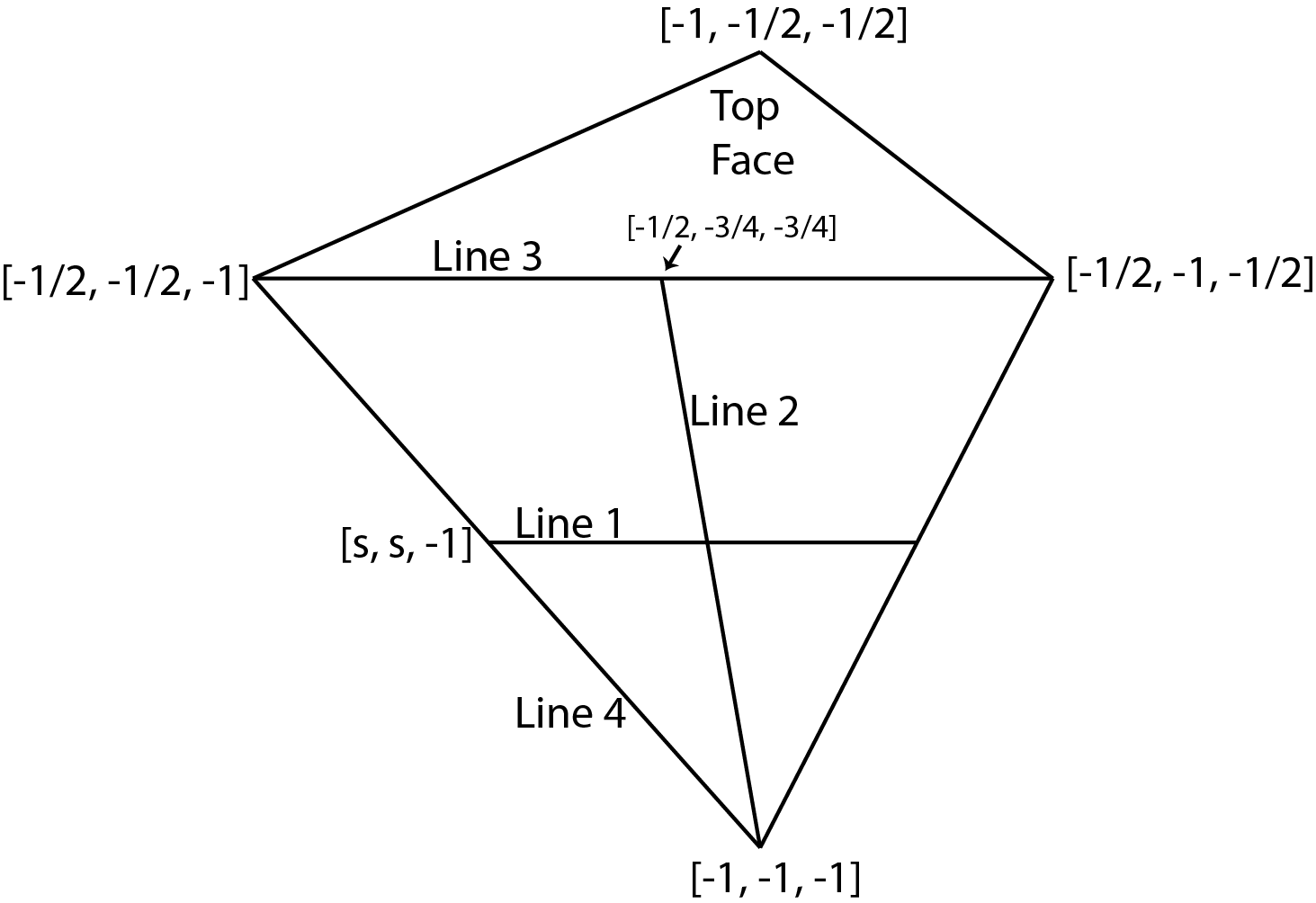}
    \caption{Relevant parameterizations for proof of \Cref{lem:9}, Region $\mathcal{S}_1$}
    \label{fig:1}
\end{figure}

Let us now consider a horizontal line  along one of the ``vertical faces'' (line $1$ in \Cref{fig:1}).  One such a line has the form $[a, b, c]=[s, s, -1]+t[0, -1, 1]$ for $s$ some fixed constant $s\in [-1/2, -1]$ and $t$ some parameter that varies along the line between $0$ and $s+1$.  We can calculate the derivative of $q$ along this line to obtain
\begin{align*}
\frac{d q(s, s-t, -1+t)}{dt}=-\frac{4 (-4 + 5 s) (1 + s - 2 t)}{225 \pi (-1 + s)}.
\end{align*}
The only potential critical points here are in the middle of the line when $t=(s+1)/2$ and when $s=4/5$.  The later point is outside of our region of interest, so we can ignore it.  For the first point it is clear we must examine the point which is in the middle of the line for each value of $s$.  For this consider the line which goes down the ``middle'' of a vertical face (line $2$ in \Cref{fig:1}).  This line can be parameterized as $[a, b, c]=[-1, -1, -1]+t[1/2, 1/4, 1/4]$ for $t$ between $0$ and $1$.  We can once again evaluate the derivative of $q$ along this line:
\begin{align*}
\frac{dq}{dt}=\frac{560 + t (344 + t (-109 + 11 t))}{450 \pi (-4 + t)^2}.
\end{align*}
We can again see that the constant term is large enough that the numerator is always strictly positive, so the only possible critical points are the endpoints.  

It remains to check the ``top face'', line $3$, and line $4$.  Observe that we can parameterize the top face as $[a, b, -2-a-b]$ under some suitable restriction of $a$ and $b$.  The gradient under this parameterization can be written as
\begin{align*}
\frac{\partial q(a, b, -2-a-b)}{\partial a}=\frac{8 (2 + 2 a + b) (8 + 3 b)}{675 \pi},\\
\frac{\partial q(a, b, -2-a-b)}{\partial b}=\frac{8 (8 + 3 a) (2 + a + 2 b)}{675 \pi}.
\end{align*}
The only possible critical point is $[-2/3, -2/3, -2/3]$.  

For the line $3$ parameterize as $[a, b, c]=[-1/2, -1/2, -1]+t[0, -1/2, 1/2]$ and calculate:
\begin{equation*}
\frac{d q(-1/2, -1/2(1+t), -1+t/2)}{dt}=\frac{13 (-1 + 2 t)}{675 \pi}.
\end{equation*}
Only critical point is in the middle of the line.  

Lastly, we need to check line $4$: $[a, b, c]=[-1, -1, -1]+t[1, 1, 0]$.  The derivative is:
\begin{equation*}
\frac{d q(-1+t, -1+t, -1)}{dt}=\frac{4 (35 + 52 t - 37 t^2 + 8 t^3)}{225 \pi (-2 + t)^2}.
\end{equation*}
Finding the roots of the numerator (which can be determined analytically by a well known formula), it is clear there are no critical points along the line, so we only need to check its endpoints.

The previous analysis as well as the symmetries of the function $q(a, b, c)$ imply that we need only check the following points to find a minimum:
\begin{align*}
q(-2/3, -2/3, -2/3)\approx 0.508,\\
\nonumber q(-1, -1, -1)\approx 0.467,\\
\nonumber q(-1/2, -1/2, -1)\approx 0.510,\\
\nonumber \text{ and }q(-1/2, -3/4, -3/4)\approx 0.509.
\end{align*}
The lemma follows.

\end{proof}

\begin{proof}[Proof of \Cref{lem:8}]
As before let us lower bound this expectation using \Cref{lem:13} with $Q=\{(1, 0, 0), (1, 0, 2), (3, 0, 0)\}$.
\begin{equation}
\frac{1+\sum_{i, j\leq k} \hat{f}_{i, jk}^2 u_{i, jk}(a, b, c)}{1+a} \geq \frac{1+\hat{f}_{1, 00}^2 a+\hat{f}_{1, 02}^2 a(b^2+c^2)+\hat{f}_{3, 00}^2 a^3-  (1/3 - \hat{f}_{1, 00}^2 - 2 \hat{f}_{1, 02}^2 -\hat{f}_{3, 00}^2)}{1+a}.
\end{equation}
Since $b^2+c^2\geq 0$, we can uniformly lower bound this expression as:
\begin{equation}
\frac{1+\hat{f}_{1, 00}^2 a+\hat{f}_{3, 00}^2 a^3-  (1/3 - \hat{f}_{1, 00}^2 - 2 \hat{f}_{1, 02}^2 -\hat{f}_{3, 00}^2)}{1+a}=:l(a).
\end{equation}
It is easy to check that the derivative is always negative, so the function is minimized at $a=1$.  The lemma follows since $l(1)\approx 0.639$.
\end{proof}

\begin{proof}[Proof of \Cref{lem:7}]
This proof is very similar to the proof of the rank $1$ case, however the function behaves a bit differently so different regions are needed.  Let 
\begin{align*}
\mathcal{S}_1= conv\{[-1, -1, -1], [r, r, -1], [r, -1, r], [-1, r, r]\},\\
\text{ and }\nonumber \mathcal{S}_2=\mathcal{S}\setminus \mathcal{S}_1,
\end{align*}
\noindent \textbf{where $r=0.6$ here and throughout the proof}.  Just as before, we will use a low order Hermite expansion for $\mathcal{S}_1$ and a higher order expansion along with some tedious analysis for $\mathcal{S}_2$.  Let us first apply \Cref{lem:12} with $Q=\{(1, 0, 0)\}$ to obtain:
\begin{equation*}
\frac{3+\sum_{i, j\leq k}\hat{f}_{i, jk}^2 p_{i, jk}(a, b, c))}{3+a+b+c} \geq \frac{3+\hat{f}_{1, 00}^2 (a+b+c)-3(1/3-\hat{f}_{1, 00}^2)}{3+a+b+c}.
\end{equation*}
The RHS is a function of $x:=a+b+c$, so we define:
\begin{equation*}
l(x):= \frac{3+\hat{f}_{1, 00}^2 x-3(1/3-\hat{f}_{1, 00}^2)}{3+x}.
\end{equation*}
In $\mathcal{S}_1$ $x$ ranges from $-3$ to $2r-1$.  It is easy to see $l$ is decreasing as a function of $x$, so we can obtain a lower bound on $\mathcal{S}_1$ of $l(2r-1)\approx 0.907$.

\begin{figure}
    \centering
    \includegraphics{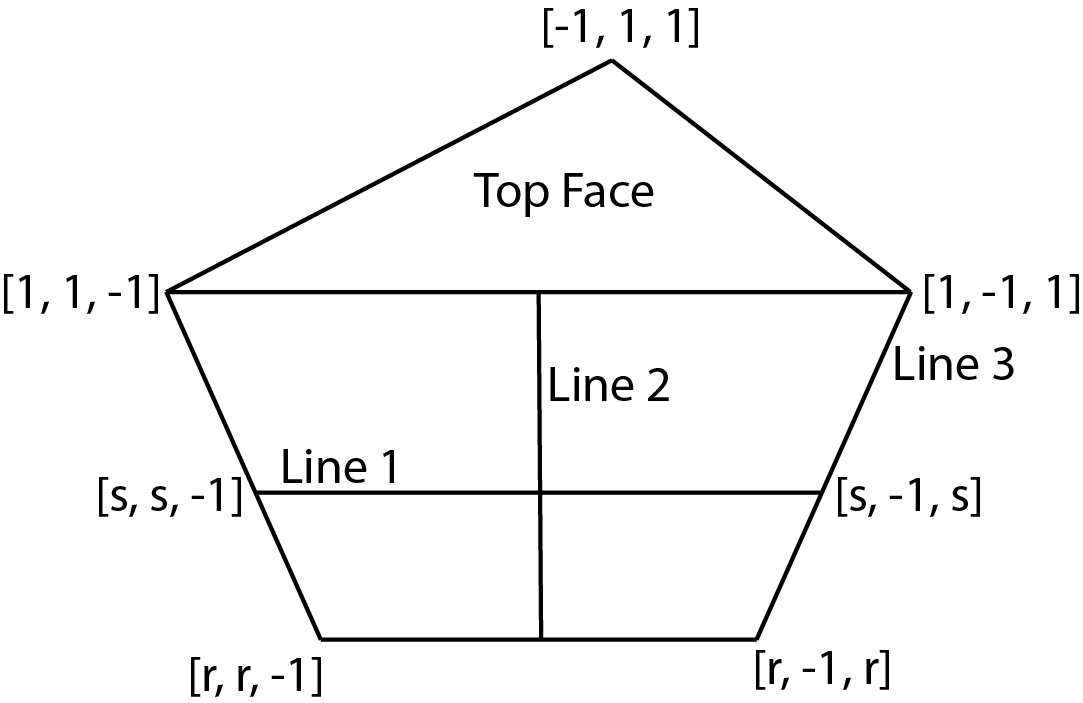}
    \caption{Relevant parameterizations for proof of \Cref{lem:7}, region $\mathcal{S}_2$}
    \label{fig:2}
\end{figure}

Now we turn to a bound on the function in $\mathcal{S}_2$.  First apply \Cref{lem:12} with $Q=\{(1, 0, 0), (1, 0, 2), (3, 0, 0)\}$ to obtain:
\begin{align*}
\frac{3+\sum_{i, j\leq k}\hat{f}_{i, jk}^2 p_{i, jk}(a, b, c))}{3+a+b+c} \geq \bigg{(} 3+\hat{f}_{1, 00}^2 p_{1, 00}(a, b, c) +\hat{f}_{1, 02}^2 p_{1, 02}(a, b, c) +\hat{f}_{3, 00}^2 p_{3, 00}(a, b, c)\\
\nonumber -3 (1/3-\hat{f}_{1, 00}^2-2\hat{f}_{1, 02}^2-\hat{f}_{3, 00}^2)\bigg{)} / \bigg{(}3+a+b+c \bigg{)} =: q(a, b, c).
\end{align*}
First, let us examine {\it inside} $\mathcal{S}_2$.  Just as before we can take a partial derivative:
\begin{align*}
\frac{\partial q}{\partial a}=\bigg{(} 2 (-30 + 12 a^3 + 6 b^2 - 4 b^3 + 6 c^2 - 4 c^3 + 
   a^2 (54 + 20 b + 20 c) + 4 a (b^2 + c (3 + c) \\ 
   + b (3 + 2 c)) - 225 \pi)\bigg{)}/\bigg{(}225 (3 + a + b + c)^2 \pi \bigg{)},
\end{align*}
\noindent and check that the constant term has large enough magnitude that the numerator is always negative:
\begin{align*}
\bigg{(} 2 (-30 + 12 a^3 + 6 b^2 - 4 b^3 + 6 c^2 - 4 c^3 + 
   a^2 (54 + 20 b + 20 c) + 4 a (b^2 + c (3 + c)  + b (3 + 2 c))\\ 
   \nonumber - 225 \pi)\bigg{)}\leq \bigg{(} 2 (-30 + 12  + 6  + 4  + 6  + 4  + (54 + 20  + 20 ) + 4  (1 +  (3 + 1) +  (3 + 2 ))
   \\ \nonumber  - 225 \pi)\bigg{)}\approx -1100.
\end{align*}
\noindent So, the numerator is always negative and we know the denominator is always positive ($\mathcal{S}_2$ does not contain $[-1, -1, -1]$) so their quotient is always negative.  

Consider a vertical face (see \Cref{fig:2}).  We can parameterize a horizontal line (line $1$) across the face as $[a, b, c]=[s, s-t, -1+t]$.  We can take the derrivative of $q(s, s-t, -1+t)$ and simplify to obtain:
\begin{align*}
\frac{dq(s, s-t, -1+t) }{dt}= -\frac{4 (-4 + 5 s) (1 + s - 2 t)}{225 \pi (1 + s)}.
\end{align*}
Critical points occur only when $t=(1+s)/2$ or $s=4/5$.  For $s=4/5$, this calculation shows that $q$ is constant along the line, hence minima can be taken to be the endpoints of the line.  The case $t=(1+s)/2$ corresponds to a vertical line through the center of the face (line $2$ in \Cref{fig:2}).  We can parameterize this vertical line as $[a, b, c]=[t+r(1-t), \frac{r-1}{2}(1-t), \frac{r-1}{2}(1-t)]$.  We calculate:
\begin{align*}
\frac{d q(t+r(1-t), \frac{r-1}{2}(1-t), \frac{r-1}{2}(1-t))}{d t}=\bigg{(} (-1 + r) (225 \pi + (21 + \\
\nonumber 22 r (-1 + t) - 22 t) (1 + r + t - r t)^2) \bigg{)}/\bigg{(}225 \pi (1 + r + t - r t)^2\bigg{)}.
\end{align*}
Just as before, the constant term in the numerator is large enough that the derivative never changes sign, so the function must be minimized at an endpoint.  

Now let us consider the ``top'' and ``bottom'' faces in the \Cref{fig:2}.  They are both parameterized as $[a, b, w-a-b]$ where $w=1$ for the top face and $w=2 r-1$ for the bottom face.  The two components of the gradient can be factored as:
\begin{gather*}
\frac{\partial q}{\partial a}\propto 8 (3 b - 4 w) (2 a + b - w)\\
\nonumber \text{ and  }\frac{\partial q}{\partial b}\propto 8 (3 a - 4 w) (a + 2 b - w)
\end{gather*}
\noindent On the top face, the only solution is $\{[1/3, 1/3, 1/3]\}$.  On the bottom face, we have critical points when 
\begin{equation*}
(a, b) \in \{(w/3, w/3), (4w/3, -5w/3),  (-5w/3, 4w/3), (4w/3, 4w/3)\}.
\end{equation*}

The only region of $\mathcal{S}_2$ which has not yet been covered explicitly, or implicitly by symmetry is line $3$ in the \Cref{fig:2}.  We can parameterize this line as $(a, b, c)=(r+t, r+t, -1)$ for $t\in [0, 1-r]$.  We take the derivative with respect to $t$ to obtain:
\begin{align*}
\frac{dq(r+t, r+t, -1)}{dt}=\frac{4(-5+8 r+8 t)(1+r+t)^2-225 \pi}{225 \pi (1+r+t)^2}.
\end{align*}
\noindent Once again the constants in the numerator are large enough that the derivative never changes sign.  Our analysis and symmetry of the function imply that we need to only check $5$ points:
\begin{align*}
q(1, 1, -1) \approx 0.811,\\
\nonumber q(w/3, w/3, w/3) \approx 0.934,\\
\nonumber q(4w/3, -5w/3, 4w/3) \approx 0.934,\\
\nonumber q(r, r, -1) \approx 0.933,\\
\nonumber q(1/3, 1/3, 1/3) \approx 0.805,\\
\nonumber q(r, (r-1)/2, (r-1)/2) \approx 0.935,\\
\nonumber \text{ and  }q(1, 0, 0) \approx 0.808.
\end{align*}

\end{proof}

\section{Technical Quantum Facts}\label{sec:20}
In this subsection we present the proofs for the quantum facts we used in the paper.  
\label{page:1}
\begin{proof}[Proof of \Cref{lem:14}]
For this proof we will have the Bell states take their usual definition:
\begin{gather}\label{eq:51}
\ket{\Phi^+}=\frac{\ket{00}+\ket{11}}{\sqrt{2}}, \,\,\,\,\,\,\,\,\,\,\,\,\, \ket{\Phi^-}=\frac{\ket{00}-\ket{11}}{\sqrt{2}},\\
\nonumber \ket{\Psi^+}=\frac{\ket{01}+\ket{10}}{\sqrt{2}}, \,\,\,\,\,\,\,\text{ and }\,\,\,\,\,\, \ket{\Psi^-}=\frac{\ket{01}-\ket{10}}{\sqrt{2}}.
\end{gather}
If $rank(P)=1$ then $P=\ket{\phi}\bra{\phi}$.  Write $\ket{\phi}$ in the Bell basis: $\ket{\phi}=\alpha_1 \ket{\Phi^+}+\alpha_2 \ket{\Phi^-}+\alpha_3 \ket{\Psi^+}+\alpha_4 \ket{\Psi^-}$.  Then,
\begin{gather*}
\Gamma_{11}=\text{Tr} [\sigma^1\otimes \sigma^1 \ket{\phi} \bra{\phi}]=|\alpha_1|^2-|\alpha_2|^2+|\alpha_3|^2-|\alpha_4|^2,\\
\nonumber \Gamma_{22}=\text{Tr} [\sigma^2\otimes \sigma^2 \ket{\phi} \bra{\phi}]=-|\alpha_1|^2+|\alpha_2|^2+|\alpha_3|^2-|\alpha_4|^2,\\
\nonumber  \text{ and }\Gamma_{33}=\text{Tr} [\sigma^3\otimes \sigma^3 \ket{\phi} \bra{\phi}]=|\alpha_1|^2+|\alpha_2|^2-|\alpha_3|^2-|\alpha_4|^2.
\end{gather*}
So, $[\Gamma_{11}, \Gamma_{22}, \Gamma_{33}]=|\alpha_1|^2 [1, -1, 1]+|\alpha_2|^2 [-1, 1, 1]+|\alpha_3|^2[1, 1, -1]+|\alpha_4|^2 [-1, -1, -1]$.  

If $rank(P)=2$, $P=\ket{\phi_1} \bra{\phi_1} +\ket{\phi_2} \bra{\phi_2}$.  Once again let us write these states in the Bell basis:
\begin{gather*}
\ket{\phi_1}=\alpha_1 \ket{\Phi^+}+\alpha_2 \ket{\Phi^-}+\alpha_3 \ket{\Psi^+}+\alpha_4 \ket{\Psi^-}\\
\text{ and }\ket{\phi_2}=\beta_1 \ket{\Phi^+}+\beta_2 \ket{\Phi^-}+\beta_3 \ket{\Psi^+}+\beta_4 \ket{\Psi^-},
\end{gather*}
where $\sum_i \alpha_i^* \beta_i=0$ since $\ket{\phi_1}$ and $\ket{\phi_2}$ must be orthogonal.  $[\Gamma_{11}, \Gamma_{22}, \Gamma_{33}]$ can be evaluated in the same way:
\begin{gather*}
\Gamma_{11}=\text{Tr} [\sigma^1\otimes \sigma^1 P]=|\alpha_1|^2-|\alpha_2|^2+|\alpha_3|^2-|\alpha_4|^2+|\beta_1|^2-|\beta_2|^2+|\beta_3|^2-|\beta_4|^2,\\
\nonumber \Gamma_{22}=\text{Tr} [\sigma^2\otimes \sigma^2 P]=-|\alpha_1|^2+|\alpha_2|^2+|\alpha_3|^2-|\alpha_4|^2-|\beta_1|^2+|\beta_2|^2+|\beta_3|^2-|\beta_4|^2,\\
\nonumber  \text{ and }\Gamma_{33}=\text{Tr} [\sigma^3\otimes \sigma^3 P]=|\alpha_1|^2+|\alpha_2|^2-|\alpha_3|^2-|\alpha_4|^2+|\beta_1|^2+|\beta_2|^2-|\beta_3|^2-|\beta_4|^2.
\end{gather*}
So, $[\Gamma_{11}, \Gamma_{22}, \Gamma_{33}]=(|\alpha_1|^2+|\beta_1|^2) [1, -1, 1]+(|\alpha_2|^2+|\beta_2|^2) [-1, 1, 1]+(|\alpha_3|^2+|\beta_3|^2)[1, 1, -1]+(|\alpha_4|^2+|\beta_4|^2) [-1, -1, -1]$.  Observe that the hull $\mathcal{T}$ is defined by the following inequalities:
\begin{align}
\label{eq:40}[x, y, z]\in \mathcal{T} &\Leftrightarrow -2 \leq x+y+z \leq 2,\\
\label{eq:41}& \,\,\,\,\,\,\,\,   -2 \leq x-y+z \leq 2,\\
\label{eq:42}& \,\,\,\,\,\,\,\,   -2 \leq -x+y+z \leq 2,\\
\label{eq:43}& \,\,\,\,\,\,\,\,   -2 \leq x+y-z \leq 2.
\end{align}
We will prove this case by demonstrating our derived expression for $[\Gamma_{11}, \Gamma_{22}, \Gamma_{33}]$ satisfies all these equations.  

\textbf{\Cref{eq:40}}
\begin{align*}
\Gamma_{11}+\Gamma_{22}+\Gamma_{33}=(|\alpha_1|^2+|\beta_1|^2) +(|\alpha_2|^2+|\beta_2|^2)+(|\alpha_3|^2+|\beta_3|^2)-3(|\alpha_4|^2+|\beta_4|^2)\\
=(1-|\alpha_4|^2)+(1-|\beta_4|^2)-3(|\alpha_4|^2+|\beta_4|^2)=2(1-2(|\alpha_4|^2+|\beta_4|^2)).
\end{align*}
The case follows since $0 \leq |\alpha_4|^2+|\beta_4|^2 \leq 1$.  To see this, complete the vectors $\boldsymbol{\alpha}$ and $\boldsymbol{\beta}$ to a basis for $\mathbb{C}^4$: $\{\boldsymbol{\alpha}, \boldsymbol{\beta}, \boldsymbol{\gamma}, \boldsymbol{\eta}\}$.  Also let $e_4=[0, 0, 0, 1]$.  Then, $Tr((\ket{\boldsymbol{\alpha}}\bra{\boldsymbol{\alpha}}+\ket{\boldsymbol{\beta}}\bra{\boldsymbol{\beta}}+\ket{\boldsymbol{\gamma}}\bra{\boldsymbol{\gamma}}+\ket{\boldsymbol{\eta}}\bra{\boldsymbol{\eta}})\ket{e_4}\bra{e_4})=1$.

\textbf{\Cref{eq:41}}
\begin{align*}
\Gamma_{11}-\Gamma_{22}+\Gamma_{33}=3(|\alpha_1)^2+|\beta_1|^2-(|\alpha_2|^2+|\beta_2|^2)-(|\alpha_3|^2+|\beta_3|^2)-(|\alpha_4|^2+|\beta_4|^2)\\
=3(|\alpha_1|^2+|\beta_1|^2-(2-|\alpha_1|^2-|\beta_1|^2))=2(2(|\alpha_1|^2+|\beta_1|^2)-1).
\end{align*}
This case follows since $0 \leq |\alpha_1|^2 +|\beta_1|^2 \leq 1$ by the same argument.

\textbf{\Cref{eq:42}}
\begin{align*}
-\Gamma_{11}+\Gamma_{22}+\Gamma_{33}=(-1)(|\alpha_1)^2+|\beta_1|^2)+3(|\alpha_2|^2+|\beta_2|^2)-(|\alpha_3|^2+|\beta_3|^2)-(|\alpha_4|^2+|\beta_4|^2)\\
=2(2(|\alpha_2|^2+|\beta_2|^2)-1).
\end{align*}
This case follows since $0 \leq |\alpha_2|^2 +|\beta_2|^2 \leq 1$ by the same argument.  

\textbf{\Cref{eq:43}}
\begin{align*}
\Gamma_{11}+\Gamma_{22}-\Gamma_{33}=(-1)(|\alpha_1)^2+|\beta_1|^2)-(|\alpha_2|^2+|\beta_2|^2)+3(|\alpha_3|^2+|\beta_3|^2)-(|\alpha_4|^2+|\beta_4|^2)\\
=2(2(|\alpha_3|^2+|\beta_3|^2)-1).
\end{align*}
This case follows since $0 \leq |\alpha_3|^2 +|\beta_3|^2 \leq 1$ by the same argument.

If $\text{rank}(P)=3$, then $P=\mathbb{I}-\ket{\phi}\bra{\phi}$ so if the moment matrix for $\ket{\phi}\bra{\phi}$ has values $[\Gamma_{11}, \Gamma_{22}, \Gamma_{33}]$, then the moment matrix for $P$ has values $[\Gamma_{11}', \Gamma_{22}', \Gamma_{33}']=-[\Gamma_{11}, \Gamma_{22}, \Gamma_{33}]$.  It is easy to verify that $\{-[\Gamma_{11}, \Gamma_{22}, \Gamma_{33}]: [\Gamma_{11}, \Gamma_{22}, \Gamma_{33}]\in \mathcal{S}\}=-\mathcal{S}$ as defined so \Cref{eq:59} is proven.  To prove \Cref{eq:60}, note:
\begin{equation*}
0 \leq Tr(P \rho) \leq 1 \Rightarrow 0 \leq 4 Tr(P \rho) \leq 4 \Rightarrow -k \leq 4Tr(P \rho)-k \leq 4-k.
\end{equation*}

To prove \Cref{eq:63} set $[\Sigma_{11}, \Sigma_{22}, \Sigma_{33}]=[a, b, c]$ and $[\Gamma_{11}, \Gamma_{22}, \Gamma_{33}]=[p, q, r]$.  We will show that if $[a, b, c]$ is restricted to some polytope, $\mathcal{P}_1$, and $[p, q, r]$ is restricted to some polytope, $\mathcal{P}_2$, then
\begin{equation}\label{eq:61}
    \min_{\substack{ [a, b, c] \in\mathcal{P}_1 \\  [p, q, r] \in \mathcal{P}_2}} [a, b, c] \cdot{} [p, q, r] \geq \min_{\substack{ [a, b, c] \in\mathcal{B}_1\\  [p, q, r] \in \mathcal{B}_2}} [a, b, c] \cdot{} [p, q, r],
\end{equation}
where $\mathcal{B}_i$ is the set of extreme points of $\mathcal{P}_i$.  Similarly, we will show \begin{equation}\label{eq:62}
    \max_{\substack{  [a, b, c] \in\mathcal{P}_1\\ [p, q, r] \in \mathcal{P}_2}} [a, b, c] \cdot{} [p, q, r] \leq \max_{\substack{  [a, b, c] \in\mathcal{B}_1\\ [p, q, r] \in \mathcal{B}_2}} [a, b, c] \cdot{} [p, q, r].
\end{equation}
For \Cref{eq:61} fix $[p, q, r] \in \mathcal{P}_2$ and write it as a convex combination of extreme points $[p, q, r] =\sum_i \lambda_i [p_i, q_i, r_i]$.  
\begin{equation}
 \min_{ [a, b, c] \in\mathcal{P}_1} \sum_i \lambda_i [a, b, c] \cdot{} [p_i, q_i, r_i].
 \end{equation}
Since $\sum_i \lambda_i=1$ we may lower bound this quantity by 
\begin{equation*}
    \geq \min_{\substack{  [a, b, c] \in\mathcal{P}_1\\  [p', q', r'] \in \mathcal{B}_2}} [a, b, c] \cdot{} [p', q', r']. 
\end{equation*}
Since this argument holds for all $(p, q, r)$ we can uniformly lower bound:
\begin{equation*}
     \min_{\substack{ [a, b, c] \in\mathcal{P}_1\\ [p, q, r] \in \mathcal{P}_2}} [a, b, c] \cdot{} [p, q, r] \geq \min_{\substack{ [a, b, c] \in\mathcal{P}_1\\ [p', q', r'] \in \mathcal{B}_2}} [a, b, c] \cdot{} [p', q', r'].
\end{equation*}
For fixed $[p, q, r] \in \mathcal{B}_2$ apply the same argument to $[a, b, c]$ to get \Cref{eq:61}.  The argument for \Cref{eq:62} is similar.  Applying \Cref{eq:61} and \Cref{eq:62} to the relevant polytopes provides \Cref{eq:63}.

\Cref{eq:64} follows from \cite{G16}.  To prove \Cref{eq:65} if $k=1$ or $3$ note that in this case the $1$-local parts of the projector are proportional to the $1$-local parts for a pure state, hence \cite{G16} provides this case.  For $k=2$ observe that the $1$-local part of the $2$-moment is the sum of $1$-local parts from two pure states.  Hence, by the triangle inequality the length of the $1$-local part is at most $2$.  

\end{proof}

We also give the map between standard (classical)  and Max 2-QSAT:
\begin{theorem}\label{thm:6}
 Max 2-QSAT generalizes Max 2-SAT.
\end{theorem}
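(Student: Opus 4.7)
The plan is to exhibit an explicit polynomial-time reduction that embeds any Max 2-SAT instance as a Max 2-QSAT (i.e., QLHP with $k=3$) instance preserving both the optimum value and the value of every assignment. Given a Max 2-SAT instance over Boolean variables $x_1,\ldots,x_n$ with clauses $C_1,\ldots,C_m$, I would instantiate QLHP on $n$ qubits, using the standard computational-basis identification $x_i = 0 \leftrightarrow \ket{0}$, $x_i = 1 \leftrightarrow \ket{1}$ established in \Cref{sec:classical-quantum}. For each clause $C_k$ on variables $x_i, x_j$ with given polarities, the clause is falsified by exactly one of the four assignments to $(x_i,x_j)$, say $(b_i^\star, b_j^\star)$. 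I would set $P_k := \mathbb{I}_4 - \ket{b_i^\star b_j^\star}\bra{b_i^\star b_j^\star}$, which is a rank-3 projector onto the $3$-dimensional subspace of satisfying assignments, and $H_k := P_k \otimes \mathbb{I}_{[n]\setminus\{i,j\}}$ with weight $w_k = 1$.

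Next I would verify that this reduction preserves objective values. The key observation is that each $P_k$ is diagonal in the computational basis, so $H := \sum_k H_k$ is simultaneously diagonal, and $H$ admits an eigenbasis of computational-basis states $\ket{b_1\cdots b_n}$. For any such state, $\bra{b_1\cdots b_n} H_k \ket{b_1\cdots b_n}$ equals $1$ if the assignment $(b_1,\ldots,b_n)$ satisfies $C_k$ and $0$ otherwise; summing yields exactly the number of clauses satisfied. Conversely, since $H$ is diagonal, $\lambda_{\max}(H)$ is attained on a computational-basis state, and so equals the maximum number of simultaneously satisfiable clauses. This gives an equality between the optimal Max 2-SAT value and $\lambda_{\max}\!\bigl(\sum_k H_k\bigr)$, and the product state $\ket{b_1}\otimes\cdots\otimes\ket{b_n}$ corresponding to any assignment achieves the classical objective, so approximation ratios transfer as well.

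Finally I would note that the reduction is polynomial-time and preserves instance size: each $P_k$ is specified by $4$ computational-basis indices, and there is one local term per clause, so the output QLHP instance has a description of size $O(m + n)$. There is no substantive obstacle here; the only thing to be careful about is handling negated literals cleanly, which is absorbed into the choice of the falsifying basis vector $\ket{b_i^\star b_j^\star}$. Altogether, this establishes that every Max 2-SAT instance is an instance of strictly-diagonal rank-3 QLHP with identical optimum, which is the content of the theorem; the generalization is strict because QLHP permits non-diagonal rank-3 projectors, exhibiting genuinely quantum instances (e.g.\ projectors involving off-diagonal Pauli terms) that have no classical analog.
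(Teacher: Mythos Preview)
Your proposal is correct and follows essentially the same approach as the paper: construct, for each clause, the diagonal rank-$3$ projector $\mathbb{I}_4-\ket{b_i^\star b_j^\star}\bra{b_i^\star b_j^\star}$ onto the three satisfying assignments, then argue that the resulting Hamiltonian is diagonal in the computational basis so that $\lambda_{\max}$ is attained on a basis state and equals the Max 2-SAT optimum. The only cosmetic difference is that the paper writes out this projector explicitly in the Pauli basis (using the $\sigma^3$ expansion with sign flips for negated literals), whereas you describe it directly via the falsifying basis vector.
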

\begin{proof}
 An instance of Max 2-SAT corresponds to a set of variables $\{x_i\}_{i=1}^n$ as well as a multiset of clauses $E=\{(y_i, y_j)\}$ where each $y_i$ is either $x_i$ or $\neg x_i$.  Let $OPT$ be the maximum number of clauses which are satisfiable with some Boolean assignment to the variables $\{x_i\}$.  Define the following function:
\begin{equation*}
b(y_i)=\begin{cases}
1 \text{ if $y_i=\neg x_i$}\\
0 \text{ if $y_i=x_i$}
\end{cases},
\end{equation*}
and consider the following mapping between clauses and Hamiltonians, 
\begin{equation*}
(y_i, y_j) \leftrightarrow \frac{3}{4}\mathbb{I}_{ij}-(-1)^{b(y_i)}\frac{\sigma^3_i \otimes \mathbb{I}_j}{4} -(-1)^{b(y_j)} \frac{\mathbb{I}_i \otimes \sigma^3_j}{4}  -(-1)^{b(y_i)+b(y_j)} \frac{\sigma^3 \otimes \sigma^3}{4}.
\end{equation*}
Denote the $2$-Local term on the RHS as $H_e(y_i, y_j)$.  Then, we claim 
\begin{equation*}
OPT=\lambda_{max}\left( \sum_{(y_i, y_i) \in E} H_e(y_i, y_j)\right).
\end{equation*}
This holds because the Hamiltonian is diagonal in the computational basis, hence its largest eigenstate can be assumed to be a computational basis state.  Such a state has $0$ and $1$ values corresponding to $False$ and $True$ assignements.  

\end{proof}

The next result we present is that our constants are nearly tight for quadratic Hamiltonians, in that our approximation factors are about as good as we can expect for approximation algorithms which yield product states.

\begin{theorem}\label{thm:gap-examples}
There exist instances of the $2$-Local Hamiltonian problem, $\{H_e\}$ with strictly quadratic projectors where each $H_e$ has rank $k$ such that
\begin{equation}
\max_{\ket{\phi_1}\in \mathbb{C}^2, \ket{\phi_2}\in \mathbb{C}^2} \bra{\phi_1} \otimes \bra{\phi_2} \left( \sum_e H_e\right) \ket{\phi_1}\otimes \ket{\phi_2} \leq \beta(k) \lambda_{max} \left( \sum_e H_e\right),
\end{equation}
where $
\beta(k)=\begin{cases}
1/2 \text{ if $k=1$}\\
2/3 \text{ if $k=2$}\\
5/6 \text{ if $k=3$}\end{cases}
$.
\end{theorem}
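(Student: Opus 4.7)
The plan is to exhibit, for each $k \in \{1,2,3\}$, an explicit two-qubit instance (with multiedges on the single pair of qubits permitted) whose summed Hamiltonian collapses to the form $c_k\mathbb{I} + d_k\ket{\psi_k}\bra{\psi_k}$ for a maximally entangled Bell state $\ket{\psi_k}$, with $(c_k,d_k)$ chosen so that $(c_k + d_k/2)/(c_k+d_k) = \beta(k)$. The theorem then reduces to the well-known fact that for any maximally entangled two-qubit $\ket{\psi}$, $\max_{\ket{\phi_1},\ket{\phi_2}} |\langle\phi_1\phi_2|\psi\rangle|^2 = 1/2$, which follows from the Schmidt decomposition of $\ket{\psi}$ (or equivalently Hadamard's inequality applied to the $2\times 2$ matrix of product-state amplitudes). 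Under this collapse, $\lambda_{\max}(H) = c_k+d_k$ while the best product state achieves $c_k + d_k/2$, yielding exactly $\beta(k)$.

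I would proceed case by case. For $k=1$, take the single-edge instance $H = \ket{\Psi^-}\bra{\Psi^-}$, whose Pauli decomposition $\tfrac14(\mathbb{I} - XX - YY - ZZ)$ manifestly has no $1$-local terms; here $(c,d)=(0,1)$ and the ratio is $1/2$. For $k=3$, take the three multiedges $H = T^{\Psi^-} + T^{\Phi^+} + T^{\Phi^-}$ with $T^{\psi} := \mathbb{I} - \ket{\psi}\bra{\psi}$; each summand is rank $3$ and strictly quadratic, since its Pauli expansion is $\tfrac14(3\mathbb{I} \pm XX \pm YY \pm ZZ)$ with the sign pattern determined by the Bell label. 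Bell-basis completeness $\sum_{b \in \mathrm{Bell}} \ket{b}\bra{b} = \mathbb{I}$ telescopes $H$ to $2\mathbb{I} + \ket{\Psi^+}\bra{\Psi^+}$, giving $(c,d) = (2,1)$ and ratio $5/6$. For $k=2$, take $H = R_1 + R_2 + R_3$ with $R_i := \ket{\Psi^-}\bra{\Psi^-} + \ket{b_i}\bra{b_i}$ as $b_i$ ranges over the three non-$\Psi^-$ Bell states; each $R_i$ is rank $2$ and strictly quadratic because it is a sum of two maximally entangled rank-$1$ projectors and therefore has both single-qubit marginals equal to $\mathbb{I}$ (concretely $R_1 = \tfrac12(\mathbb{I}-ZZ)$, $R_2 = \tfrac12(\mathbb{I}-YY)$, $R_3 = \tfrac12(\mathbb{I}-XX)$). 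Telescoping again produces $H = \mathbb{I} + 2\ket{\Psi^-}\bra{\Psi^-}$, yielding $(c,d)=(1,2)$ and ratio $2/3$.

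There is no real obstacle here—the argument is essentially bookkeeping. The only substantive checks are (i) the explicit Pauli decompositions of the four Bell projectors $\tfrac14(\mathbb{I} \pm XX \pm YY \pm ZZ)$, which deliver strict quadraticity of every summand at a glance, and (ii) Bell-basis completeness, which drives each telescoping. Since the upper bound $1/2$ on product-state overlap with a maximally entangled state is attained (e.g.\ at $\ket{01}$ for $\ket{\Psi^\pm}$ and at $\ket{00}$ for $\ket{\Phi^\pm}$), each of the ratios $1/2$, $2/3$, $5/6$ is exactly realized by the construction, so these instances serve as tight witnesses for $\beta(k)$.
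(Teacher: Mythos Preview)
Your proof is correct and essentially identical to the paper's: for each $k$ you build the Hamiltonian on two qubits out of Bell projectors (or their complements) so that the sum collapses to $c_k\mathbb{I}+d_k\ket{\psi}\bra{\psi}$ for a Bell state $\ket{\psi}$, and then invoke the product-state overlap bound $1/2$. The only cosmetic differences are that the paper includes weights $1/3$ on each term so that $\lambda_{\max}=1$, and for $k=3$ it omits $\Psi^-$ rather than $\Psi^+$ from the three Bell complements; neither affects the argument or the ratios.
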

\begin{proof}
Let the Bell states be defined as in \Cref{eq:51}.  
If $k=1$ consider the Hamiltonian $H=\ket{\Psi^-}\bra{\Psi^-}$.  From, e.g., \cite{G19}, we know that if we optimize $\ket{\phi}$ over product states we can get at most $1/2$ for $\bra{\phi} H\ket{\phi}$, while it is apparent that  $\lambda_{max}=1$ since we can take the singlet, $\ket{\Phi^-}$.

If $k=2$ consider the Hamiltonian (recall the problem formulation allows multi-edges):
\begin{align}
H=\frac{1}{3}(\ket{\Psi^- }\bra{\Psi^-}+\ket{\Phi^+}\bra{\Phi^+})+\frac{1}{3}(\ket{\Psi^- }\bra{\Psi^-}+\ket{\Phi^-}\bra{\Phi^-})\\
\nonumber
+\frac{1}{3}(\ket{\Psi^- }\bra{\Psi^-}+\ket{\Psi^+}\bra{\Psi^+})
\end{align}
\vspace{-0.9 cm}
\begin{align}
=\frac{1}{3}\mathbb{I}+\frac{2}{3}\ket{\Psi^-}\bra{\Psi^-}.
\end{align}
Similarly, we can expect objective at most $1/2$ for the singlet if we optimize over product states, hence $\max_{\ket{\phi}\in PROD} \bra{\phi} H\ket{\phi}=2/3$.  Observe that we can achieve $\lambda_{max}=1$ by taking the singlet once again.

If $k=3$ consider:
\begin{align}
    H=\frac{1}{3}\left(\mathbb{I}-\ket{\Phi^+}\bra{\Phi^+}\right)+\frac{1}{3}\left(\mathbb{I}-\ket{\Phi^-}\bra{\Phi^-}\right)+\frac{1}{3}\left(\mathbb{I}-\ket{\Psi^+}\bra{\Psi^+}\right)
    \end{align}
    \vspace{-0.75 cm}
    \begin{align}
    =\frac{2}{3}\mathbb{I}+\frac{1}{3}\ket{\Psi^-} \bra{\Psi^-}.
\end{align}
We get at most $5/6$ for product states in the same way, while $\lambda_{max}=1$.

\end{proof}

\section{Traceless, Bipartite, and Strictly Quadratic Hamiltonians}\label{sec:21}

The final result we present is an approximation algorithm for strictly quadratic (\Cref{def:2}) instances of traceless $2$-Local Hamiltonian on bipartite interaction graphs.  Since this result is relatively self-contained and very different from the main results of this paper, we present both the statements and the analysis here.  We demonstrate that this problem can be approximated using Krivine rounding \cite{K77}, with the same product-state ansatz as in \cite{B19}.  More generally we show how traceless instances of $2$-LH can be approximated using an approximation algorithm for a related classical $2$-CSP.

An instance, $\{H_e\}$, of QLH on $n$ qubits (\Cref{prob:QLH}) is $\emph{traceless}$ if $\text{Tr}[\sum_e H_e] = 0$.  For a traceless instance, we may assume that each term $H_e$ is traceless by replacing it with $H'_e = H_e  -\frac{1}{2^n}\text{Tr}[H_e]\mathbb{I}$, which preserves the maximum eigenvalue of $H=\sum_e H_e$.  
%the corresponding Hamiltonian problem is traceless if and only if finding the largest eigenvalue of $H=\sum_e H_e$ is equivalent to a $2$-local Hamiltonian problem where each $H_e$ is itself traceless (a particular $H_e$ in the input can be specified as the sum of two such terms, but then we would create local Hamiltonians $H_e$ and $H_{e'}$ to define an equivalent problem).  
We will also assume for our result that each $H_e$ is strictly quadratic, so that we may assume each $H_e=w_e \sigma_i^k \otimes \sigma_j^l \otimes \mathbb{I}_{[n]\setminus \{i, j\}}$ for $w_e \neq 0$ and $k,l \in [3]$.  Given such a Hamiltonian, we will create a ``Pauli'' interaction graph $G=(V, E)$ where the set of vertices have a one-to-one correspondence with nontrivial single qubit Pauli operators.  If there are $n$ qubits there would be $3n$ such vertices, one for each $\sigma_i^k$ for $i\in [n]$ and $k \in [3]$.  We place an edge between vertices if there is a term $H_e$ in the description of the local Hamiltonian problem which contains both Pauli matrices, i.e.\ vertex $(i, k)$ is connected to $(j, l)$ if there is an $H_e=w_e \sigma_i^k \otimes \sigma_j^l \otimes \mathbb{I}_{[n]\setminus \{i, j\}}$.  If the corresponding graph is bipartite, then we say that the corresponding (strictly quadratic) $2$-local Hamiltonian is traceless and bipartite.  For ease of reference, we provide a formal definition:
\begin{definition}[Pauli Interaction Graph]\label{def:1}
Let $\{H_e\}$ be a set of $2$-local terms on $n$ qubits where each $H_e=w_e \sigma_i^k \otimes \sigma_j^l\otimes \mathbb{I}_{[n]\setminus \{i, j\}}$ with $w_e \neq 0$ and $k,l\in [3]$.  Let $V=\{(i, k) \mid i\in [n] \text{ and } k\in [3]\}$.  Construct a set of pairs $E\subseteq [3n]\times [3n]$ with $(ik, jl) \in E$ if and only if there exists $e$ such that $H_e=w_e \sigma_i^k \otimes \sigma_j^l\otimes \mathbb{I}_{[n]\setminus \{i, j\}}$.  We say that $\{H_e\}$ is strictly quadratic, bipartite and traceless if $G=(V, E)$ is bipartite.
\end{definition}
\noindent Our general approach is to solve a classical CSP on the Pauli interaction graph of a $2$-local Hamiltonian.  In particular, we seek to maximize the weight earned from the edges $\sigma_i^k \otimes \sigma_j^l\otimes \mathbb{I}_{[n]\setminus \{i, j\}}$, where we now assign values in $\{\pm 1\}$ to each variable $\sigma_i^k$, represented by a vertex in $V$. Such an assignment is converted to a product state by treating the unit vector $[\sigma_i^1,\sigma_i^2,\sigma_i^3]/\sqrt{3}$ as a Bloch vector for qubit $i$. We note that this approach works for any traceless instance of $2$-LH, where a classical $\alpha$-approximation algorithm for the classical $2$-CSP on the Pauli interaction graph yields a classical $\frac{\alpha}{3}$-approximation for the original $2$-LH instance.

The strictly quadratic, bipartite, and traceless case of the $2$-Local Hamiltonian problem is still $QMA$-hard, since many (strictly quadratic) families of Hamiltonians retain $QMA$-hardness even when the edges are restricted to a $2$-$d$ lattice \cite{P15}.  For this problem, the relevant relaxation is a weakening of our main SDP, \Cref{prob:5}.  We opt to drop many of the constraints present in \Cref{prob:5} for consistency with the result we use as a black box~\cite{K77,BJ10}.  Given a local term of the form $H_e=w_e \sigma_i^k \otimes \sigma_j^l \otimes \mathbb{I}_{[n]\setminus \{i, j\}}$, where in the present context we allow $w_e$ to be negative, let $D_e\in \mathbb{R}^{3n \times 3n}$ be a symmetric matrix indexed by nontrivial single-qubit Pauli operators (i.e.\ $[n]\times [3]$), such that:
\begin{gather}\label{eq:67}
    D_e(\sigma_i^k, \sigma_j^l)=D_e(\sigma_j^l, \sigma_i^k)=w_e/2,\\
    \nonumber \text{and  } D_e=0 \text{ otherwise}.
\end{gather}

\begin{problem}\label{prob:4}
Given a strictly quadratic instance of QLH on $n$ qubits with terms $\{H_e\}$, for each $H_e$ let $D_e$ be defined as in \Cref{eq:67}.  Solve the following SDP:
\begin{alignat}{2}
    \max \sum_e &\text{Tr}\mathrlap{[D_e M]}\\
    \notag s.t.\qquad M(\sigma_i^k, \sigma_i^k)&=1 \quad && \forall (i,k)\in V,\\
    \notag \mathcal{S}(\mathbb{R}^{3n \times 3n}) \ni M &\semigeq 0,
\end{alignat}
where $\mathcal{S(\cdot)}$ refers to the class of symmetric matrices.
\end{problem}

Given this relaxation, which is equivalent to the relaxation used by Goemans and Williamson~\cite{G95}, we appeal to a classical rounding scheme, used as a black box, to obtain the result.  Given some graph $G=(V, E)$ with $|V|=n$, let $A$ be a symmetric matrix with rows and columns indexed by elements of $V$.  For each $v\in V$ we associate a $\mathbf{w}_v\in \mathbb{R}^n$.  The main observation is the following:

\begin{theorem}[\cite{K77} and Remark 2 on p.~95 in \cite{BJ10}]\label{thm:5}
If $G$ is a bipartite graph, then there is a polynomial-time randomized algorithm which outputs variables $\{z_v\}_{v\in V}$ where each $z_v \in \{\pm 1\}$ and 
\begin{equation*}
\mathbb{E} \left[\sum_{uv\in E} A_{uv}z_u z_v \right]\geq \frac{2(\ln(1+\sqrt{2}))}{\pi}\sum_{uv\in E} A_{uv}\mathbf{w}_u \cdot \mathbf{w}_v.
\end{equation*}
\end{theorem}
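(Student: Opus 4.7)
The plan is to follow Krivine's classical argument for the bipartite Grothendieck inequality. First, I would solve the SDP relaxation that maximizes $\sum_{uv \in E} A_{uv}\,\mathbf{w}_u \cdot \mathbf{w}_v$ over unit vectors $\mathbf{w}_v \in \mathbb{R}^d$ (where $d$ can be taken to be $|V|$ without loss); this is efficient by standard interior-point methods. Because $G$ is bipartite, write $V = L \cup R$ so every edge of $E$ joins $L$ to $R$. This bipartition is what will unlock the crucial sign trick below.

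Second, I would perform Krivine's transformation. Set $c := \ln(1+\sqrt{2})$, chosen so that $\sinh(c) = 1$. In the infinite direct sum $\bigoplus_{k \geq 0}(\mathbb{R}^d)^{\otimes(2k+1)}$ define
\begin{equation*}
\mathbf{w}'_v := \bigoplus_{k \geq 0}\sqrt{\tfrac{c^{2k+1}}{(2k+1)!}}\,\mathbf{w}_v^{\otimes(2k+1)} \text{ for } v \in L,
\qquad
\mathbf{w}'_v := \bigoplus_{k \geq 0}(-1)^k\sqrt{\tfrac{c^{2k+1}}{(2k+1)!}}\,\mathbf{w}_v^{\otimes(2k+1)} \text{ for } v \in R.
\end{equation*}
Because $\|\mathbf{w}_v\| = 1$, one checks $\|\mathbf{w}'_v\|^2 = \sum_{k \geq 0}\tfrac{c^{2k+1}}{(2k+1)!} = \sinh(c) = 1$, and for any $u \in L$, $v \in R$ the opposite signs collapse $\sinh$ into $\sin$:
\begin{equation*}
\mathbf{w}'_u \cdot \mathbf{w}'_v = \sum_{k \geq 0}\frac{(-1)^k c^{2k+1}}{(2k+1)!}(\mathbf{w}_u \cdot \mathbf{w}_v)^{2k+1} = \sin\bigl(c\,\mathbf{w}_u \cdot \mathbf{w}_v\bigr).
\end{equation*}
This is the single place the bipartite hypothesis is essential: without it we could not choose signs to turn $\sinh$ into $\sin$ and simultaneously keep unit norms.

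Third, I would apply hyperplane rounding. Draw $\mathbf{g}$ from a standard Gaussian in the ambient space and set $z_v := \mathrm{sign}(\mathbf{g} \cdot \mathbf{w}'_v)$. By the Goemans--Williamson identity and $|c\,\mathbf{w}_u \cdot \mathbf{w}_v| \leq c < \pi/2$,
\begin{equation*}
\mathbb{E}[z_u z_v] = \tfrac{2}{\pi}\arcsin(\mathbf{w}'_u \cdot \mathbf{w}'_v) = \tfrac{2}{\pi}\arcsin\bigl(\sin(c\,\mathbf{w}_u \cdot \mathbf{w}_v)\bigr) = \tfrac{2c}{\pi}\,\mathbf{w}_u \cdot \mathbf{w}_v.
\end{equation*}
Summing against $A_{uv}$ yields the claimed bound (as an equality). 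Since the SDP value upper bounds any $\{\pm 1\}$-assignment value, this also certifies the approximation ratio.

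The main obstacle is computational: the $\mathbf{w}'_v$ live in an infinite-dimensional space. I would handle this by never materializing them directly and instead computing the $|L|\times|R|$ Gram matrix $G_{uv} := \sin(c\,\mathbf{w}_u \cdot \mathbf{w}_v)$, which is PSD as a block of the Gram matrix of the $\mathbf{w}'_v$'s, and taking a Cholesky factorization to obtain an equivalent finite-dimensional representation of $\{\mathbf{w}'_v\}$ in $\mathbb{R}^{|V|}$. A small technical nuisance is that the series defining the $\mathbf{w}'_v$ must be truncated to a polynomial order when arguing rigorously about the exponential of unit vectors; this is absorbed by the additive precision already granted by the SDP solver, or avoided entirely by the Cholesky reformulation just described. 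Everything else is bookkeeping and an appeal to the Goemans--Williamson identity.
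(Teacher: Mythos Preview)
The paper does not give its own proof of this statement; it is quoted as a black-box result from Krivine~\cite{K77} and Bri\"et, de Oliveira Filho, and Vallentin~\cite{BJ10}, so there is nothing to compare against. Your argument is the standard Krivine rounding and is essentially correct. Two small corrections. First, the theorem as stated takes the unit vectors $\mathbf{w}_v$ as \emph{input}; solving an SDP is not part of the statement (though in the paper's application the $\mathbf{w}_v$ do come from an SDP). Second, the $|L|\times|R|$ matrix $G_{uv}=\sin(c\,\mathbf{w}_u\cdot\mathbf{w}_v)$ is not square in general, so ``PSD'' is meaningless for it. What you actually need is that the full $|V|\times|V|$ Gram matrix of the $\mathbf{w}'_v$---whose $L\times L$ and $R\times R$ blocks are $\sinh(c\,\mathbf{w}_u\cdot\mathbf{w}_v)$ and whose $L\times R$ block is $\sin(c\,\mathbf{w}_u\cdot\mathbf{w}_v)$---is PSD, which it is, being a Gram matrix; a Cholesky factorization of this $|V|\times|V|$ matrix gives the finite-dimensional representatives on which you sample the Gaussian hyperplane. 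With those fixes your sketch is complete.
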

The above gives an approximation algorithm for the Grothendieck problem if the underlying graph is bipartite.  The right hand side corresponds to the semidefinite relaxation of some NP-hard optimization problem and the left hand side corresponds to some feasible (but likely not optimal) solution to the optimization problem.  We obtain the following corollary:

\begin{corollary}
Let $H=\sum_e H_e$ be traceless, strictly quadratic, and bipartite in the sense of \Cref{def:1}.  Then, there is a randomized polynomial-time algorithm which produces a pure product state $\ket{\psi}$ such that:
\begin{equation}
\mathbb{E} \left[\bra{\psi} \sum_e H_e \ket{\psi} \right] \geq \frac{2 \ln(1+\sqrt{2})}{3\pi} \lambda_{max}  \left(\sum_e H_e \right).
\end{equation}
\end{corollary}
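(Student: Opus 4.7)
The plan is to chain three ingredients: (i) show that the SDP in Problem~\ref{prob:4} upper-bounds $\lambda_{\max}(H)$, (ii) apply Krivine-style rounding on the Pauli interaction graph (Theorem~\ref{thm:5}) to extract $\{\pm 1\}$ values $z_{(i,k)}$ from an optimal SDP solution, and (iii) convert these scalars into single-qubit Bloch vectors of length $1$ on $\mathbb{R}^3$, paying a factor of $3$ in the approximation ratio from the $\ell_2$ normalization.

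First I would verify that Problem~\ref{prob:4} is a relaxation. Given an eigenvector $\ket{\phi}$ realizing $\lambda_{\max}(\sum_e H_e)$, form the $(3n)\times(3n)$ real matrix $M$ with $M(\sigma_i^k,\sigma_j^l):=\mathrm{Re}\,\bra{\phi}\sigma_i^k\sigma_j^l\otimes \mathbb{I}_{[n]\setminus\{i,j\}}\ket{\phi}$. By construction $M$ is the real part of a moment matrix, hence symmetric and PSD, with $M(\sigma_i^k,\sigma_i^k)=1$; and since each $H_e=w_e\sigma_i^k\otimes\sigma_j^l\otimes\mathbb{I}$ is strictly quadratic, $\mathrm{Tr}[D_e M]=w_e M(\sigma_i^k,\sigma_j^l)=\bra{\phi}H_e\ket{\phi}$, so the SDP value is at least $\lambda_{\max}(H)$.

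Next, I would Cholesky-decompose the optimal $M^\star$ to obtain unit vectors $\{\mathbf{w}_{(i,k)}\}_{(i,k)\in V}$ with $M^\star(\sigma_i^k,\sigma_j^l)=\mathbf{w}_{(i,k)}\cdot \mathbf{w}_{(j,l)}$. Because $\{H_e\}$ is bipartite in the sense of Definition~\ref{def:1}, the Pauli interaction graph $G=(V,E)$ is bipartite; hence Theorem~\ref{thm:5} applies with edge weights $A_{uv}=w_e$ and yields, in randomized polynomial time, an assignment $z:V\to\{\pm 1\}$ satisfying
\begin{equation*}
\mathbb{E}\Bigl[\sum_{e} w_e\, z_{(i,k)}z_{(j,l)}\Bigr]\;\ge\;\frac{2\ln(1+\sqrt{2})}{\pi}\sum_{e} w_e\, \mathbf{w}_{(i,k)}\!\cdot\!\mathbf{w}_{(j,l)}\;\ge\;\frac{2\ln(1+\sqrt{2})}{\pi}\,\lambda_{\max}\!\Bigl(\sum_e H_e\Bigr).
\end{equation*}

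Finally, I would round to a product state by treating $[z_{(i,1)},z_{(i,2)},z_{(i,3)}]/\sqrt{3}$ as a Bloch vector: set
\begin{equation*}
\rho_i:=\tfrac{1}{2}\Bigl(\mathbb{I}+\tfrac{z_{(i,1)}}{\sqrt{3}}\sigma^1+\tfrac{z_{(i,2)}}{\sqrt{3}}\sigma^2+\tfrac{z_{(i,3)}}{\sqrt{3}}\sigma^3\Bigr),\qquad \ket{\psi}\bra{\psi}:=\bigotimes_{i=1}^n\rho_i.
\end{equation*}
Since $z_{(i,k)}\in\{\pm 1\}$, the Bloch vector has unit Euclidean length, so $\rho_i$ is a rank-$1$ density matrix and $\ket{\psi}$ is pure. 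For each strictly quadratic $H_e=w_e\sigma_i^k\otimes\sigma_j^l\otimes\mathbb{I}$, factoring the trace over the product state gives $\bra{\psi}H_e\ket{\psi}=w_e\,\mathrm{Tr}[\sigma^k\rho_i]\mathrm{Tr}[\sigma^l\rho_j]=\tfrac{1}{3}w_e z_{(i,k)}z_{(j,l)}$. Summing over $e$ and taking expectations,
\begin{equation*}
\mathbb{E}\bigl[\bra{\psi}{\textstyle\sum_e H_e}\ket{\psi}\bigr]\;=\;\tfrac{1}{3}\,\mathbb{E}\Bigl[\sum_{e} w_e z_{(i,k)}z_{(j,l)}\Bigr]\;\ge\;\frac{2\ln(1+\sqrt{2})}{3\pi}\,\lambda_{\max}\!\Bigl(\sum_e H_e\Bigr),
\end{equation*}
which is the claimed bound. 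The argument is essentially a black-box reduction, so there is no deep technical obstacle; the only subtle step is the relaxation verification in (i), where one must notice that retaining only the real part of the quantum moment matrix still yields a PSD matrix with the correct objective value because every edge $H_e$ contributes a Hermitian $1$-local-product observable whose expectation is real.
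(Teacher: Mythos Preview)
Your proposal is correct and takes essentially the same approach as the paper: solve the SDP relaxation of \Cref{prob:4}, apply Krivine rounding (\Cref{thm:5}) on the bipartite Pauli interaction graph to obtain $\pm 1$ values, and convert these to a product state via Bloch vectors scaled by $1/\sqrt{3}$, incurring the factor-of-$3$ loss. Your step (i) is slightly more explicit than the paper, which relies on \Cref{prob:4} being a weakening of \Cref{prob:5} (already shown to be a relaxation), but the argument is otherwise identical.
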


%The technique used in the proof is referred to as ``Krivine rounding'' in the literature REF.  In this algorithm, the vectors $\mathbf{w}_v$ are mapped to $\phi(\mathbf{w}_v)\in \mathbb{R}^N$ for $N>>n$ in such a way that $\phi(\mathbf{w}_v)^T \phi(\mathbf{w}_u)$ ``simulates'' some function using its power series.  This function is used to invert the expectation so that $[\mathbf{r}^T\phi(\mathbf{w}_v)][\mathbf{r}^T\phi(\mathbf{w}_u)]=c\mathbf{w}_v^T \mathbf{w}_u$ where $\mathbf{r}$ is multivariate normal and sufficiently large.  In this way, one can find an algorithm that produces 
%Imagine we have two fixed vectors $\mathbf{w}_v$ and $\mathbf{w}_u$.  If we take $x_v=sign(\mathbf{w}_v^T \mathbf{r})$ and $x_u=sign(\mathbf{w}_u^T \mathbf{r})$ then we obtain $\mathbb{E}[x_v x_u]=s(\theta)$, some function of the overlap between the two vectors $\mathbf{w}_v$ and $\mathbf{w}_u$.  Essentially the algorithm takes as input the vectors $\mathbf{w}_v$, and embeds them into a larger space to simulate taking the inverse of $s(\theta)$.

\begin{proof}
Construct a $3n\times3n$ matrix $A$ by taking
\begin{equation*}
A(\sigma_i^k, \sigma_j^l)=Tr\left[(\sigma_i^k \otimes \sigma_j^l \otimes \mathbb{I}_{[n]\setminus \{i, j\}})\sum_e H_e\right]/2^{n+1}.
\end{equation*}

Observe that the matrix $A$ is bipartite in the same sense as \Cref{thm:5}: Let $G=(V, E)$ be the graph described previously, where we have a vertex for each single qubit Pauli matrix and vertices are connected if they interact through $H$.  We have a partition of the vertices into two sets $B_1$ and $B_2$ where no edge links two vertices in the same $B_i$.  Then, $A(\sigma_i^k, \sigma_j^k)=0$ if [$(i, k) \in B_1$ and $(j, l) \in B_1$] or [$(i, k) \in B_2$ and $(j, l) \in B_2$].  

Now let us define the following optimization problem:
\begin{align}\label{eq:68}
    \max_{\substack{z_{ik}\in \{\pm 1\}\\\forall (i, k) \in V}} \sum_{(ik, jl) \in E} A(\sigma_i^k, \sigma_j^l) z_{ik}z_{jl}.
\end{align}
We can take its natural semidefinite relaxation \cite{G95} to obtain:
\begin{align*}
    \max_{\substack{M(\sigma_i^k, \sigma_i^k)=1\ \forall i, k\\\mathcal{S}(\mathbb{R}^{3n\times 3n}) \ni M \semigeq 0}}\ \sum_{(ik, jl)\in E }A(\sigma_i^k, \sigma_j^l) M(\sigma_i^k, \sigma_j^l),
\end{align*}
\noindent or written in terms of the Cholesky vectors:
\begin{align*}
     \max_{\substack{\mathbf{w}_{ik}\in \mathbb{R}^{3n}\\ ||\mathbf{w}_{ik}||=1\ \forall (i, k) \in V}}\ \sum_{(ik, jl) \in E} A(\sigma_i^k, \sigma_j^l) \mathbf{w}_{ik}\cdot \mathbf{w}_{jl}.
\end{align*}
Observe that \Cref{eq:68} is exactly the relaxation for the $2$-Local Hamiltonian problem, \Cref{prob:4}, since $\sum_e D_e=A$.  \Cref{thm:5}, Krivine rounding, provides a random set of variables $z_{ik}\in \{\pm 1\}$ such that:
\begin{equation*}
\mathbb{E}\left[ \sum_{(ik, jl) \in E} A(\sigma_i^k, \sigma_j^l) z_{ik}z_{jl} \right] \geq \frac{2 \ln(1+\sqrt{2})}{\pi} \lambda_{max}  \left(\sum_e H_e \right).
\end{equation*}
Dividing both sides by $3$, we obtain:
\begin{equation*}
\mathbb{E}\left[ \sum_{(ik, jl) \in E} A(\sigma_i^k, \sigma_j^l) \frac{z_{ik}}{\sqrt{3}}\frac{z_{jl}}{\sqrt{3}} \right] \geq \frac{2 \ln(1+\sqrt{2})}{3\pi} \lambda_{max}  \left(\sum_e H_e \right).
\end{equation*}

Note that the objective of the LHS corresponds to a valid quantum product state just as in \cite{B19}:
\begin{equation*}
\rho=\bigotimes_{i=1}^n \left( \frac{\mathbb{I} +(z_{i1}/\sqrt{3})\sigma^1 +(z_{i2}/\sqrt{3})\sigma^2 +(z_{i3}/\sqrt{3})\sigma^3 }{2}\right).
\end{equation*}
\end{proof}

\end{document}